\documentclass[11pt]{article}

\usepackage[a4paper,margin=1in]{geometry}
\usepackage{amsmath,amssymb,amsfonts,amsthm,mathtools,bm}
\usepackage{enumitem}
\usepackage{graphicx}
\usepackage{adjustbox}
\graphicspath{{./}{figures/}}
\usepackage{booktabs}
\usepackage{longtable}
\usepackage{float}
\usepackage{algorithm}
\usepackage{algpseudocode}
\usepackage[natbibapa]{apacite}
\usepackage{hyperref}
\providecommand{\hypersetup}[1]{}
\hypersetup{
  colorlinks=true,
  linkcolor=blue,
  citecolor=blue,
  urlcolor=blue
}

\newtheorem{theorem}{Theorem}[section]
\newtheorem{proposition}{Proposition}[section]
\newtheorem{lemma}{Lemma}[section]
\newtheorem{corollary}{Corollary}[section]
\newtheorem{assumption}{Assumption}[section]
\newtheorem{remark}{Remark}[section]

\newcommand{\R}{\mathbb{R}}
\newcommand{\Sphere}{\mathbb{S}}
\newcommand{\Unif}{\mathrm{Unif}}
\newcommand{\Corr}{\mathrm{Corr}}
\newcommand{\Var}{\mathrm{Var}}
\newcommand{\Cov}{\mathrm{Cov}}
\newcommand{\E}{\mathbb{E}}
\newcommand{\Prob}{\mathbb{P}}
\newcommand{\tr}{\mathrm{tr}}
\newcommand{\diag}{\mathrm{diag}}
\newcommand{\perpn}{\mathrel{\perp\mspace{-10mu}\perp}}

\newcommand{\iid}{\stackrel{\mathrm{i.i.d.}}{\sim}}
\newcommand{\norm}[1]{\left\lVert #1\right\rVert}

\newcommand{\wideportraitfigure}[1]{%
  \includegraphics[width=\textwidth,height=0.70\textheight,keepaspectratio]{#1}%
}
\title{High-Dimensional Tests for Elliptical Models via Radial--Directional Dependence}
\author{Haoran Zhang and Long Feng\\
Nankai University}
\date{}

\begin{document}
\maketitle

\begin{abstract}
We develop high-dimensional goodness-of-fit tests for elliptical models by testing radial--directional independence after affine standardization.  The method forms coordinatewise correlations between the log-radius and directional components, using a sum statistic for dense departures, a max statistic for sparse departures, and a Cauchy combination for adaptation.  We derive oracle null limits, prove asymptotic independence of the sum and max components under both the null and a balanced local alternative, and establish validity of high-dimensional Hettmansperger--Randles plug-in standardization under explicit perturbation rates.  Simulations and data analyses show stable size control, dense--sparse power complementarity, and interpretable coordinate-level diagnostics.

\noindent\textbf{Keywords:}
Asymptotic independence; Cauchy combination; elliptical symmetry; Hettmansperger--Randles standardization; high-dimensional inference; radial--directional dependence.
\end{abstract}

\section{Introduction}
\label{sec:introduction}

Elliptical models occupy a central position in multivariate analysis.  They retain the affine geometry of the Gaussian family while allowing the radial distribution to be unspecified, thereby covering many heavy-tailed and semiparametric models used in robust statistics and shape estimation \citep{Maronna1976,Tyler1987,FangKotzNg1990}, portfolio theory and financial modeling \citep{OwenRabinovitch1983,GuptaVargaBodnar2013}, chemometric outlier detection \citep{RousseeuwDebruyneEngelenHubert2006}, and radar, array and covariance problems in signal processing \citep{OllilaTylerKoivunenPoor2012,SunBabuPalomar2016}.  In the formulation considered here, a random vector $\bm{X}\in\R^p$ is elliptically distributed if, for some location vector $\bm{\mu}$ and positive definite shape matrix $\mathbf{\Sigma}$, the standardized vector $\bm{Y}=\mathbf{\Sigma}^{-1/2}(\bm{X}-\bm{\mu})$ can be decomposed as
\[
        \bm{Y}=R\bm{U},
        \qquad
        \bm{U}\sim \Unif(\Sphere^{p-1}),
        \qquad
        R\perpn \bm{U}.
\]
Thus the null hypothesis is not a fully parametric Gaussian hypothesis; it is a symmetry and independence hypothesis about the radial and angular parts after unknown affine standardization.  This distinction is important because a valid test should not reject merely because the radial density is non-Gaussian or heavy-tailed.

Classical tests for spherical or elliptical symmetry were developed mainly for fixed dimension.  Early foundational work includes the ellipsoidal-symmetry test of \citet{Beran1979}, randomization and bootstrap ideas for nonparametric invariance hypotheses in \citet{Romano1989}, and rotationally invariant von Mises type tests for spherical symmetry in \citet{Baringhaus1991}.  Empirical-process and bootstrap approaches were further developed by \citet{KoltchinskiiSakhanenko2000}, with subsequent comparison and power studies in \citet{Sakhanenko2008}.  Other fixed-dimensional procedures use conditional Monte Carlo characterizations, moment or cumulant restrictions, chi-square partitions of directional cells, divergence criteria, empirical characteristic functions, or local asymptotic optimality arguments; representative examples include \citet{ZhuNeuhaus2000}, \citet{ManzottiPerezQuiroz2002}, \citet{Schott2002}, \citet{ZhuNeuhaus2003}, \citet{HufferPark2007}, \citet{BatsidisMartinPardoZografos2014}, \citet{AlbisettiBalabdaouiHolzmann2020} and \citet{BabicGelbgrasHallinLey2021}.  These works provide a rich low-dimensional theory and, in several cases, affine-invariant and distribution-free or bootstrap-calibrated procedures.  Their asymptotic justifications, however, typically keep $p$ fixed while $n$ grows, and many implementations rely on stable affine standardization, multidimensional partitions, or resampling schemes whose behavior changes substantially when $p$ is large.

The high-dimensional setting creates different statistical difficulties.  Sample covariance matrices may be singular when $p\ge n$, location and shape estimation errors can enter every coordinate of the standardized observations, and departures from ellipticity may be either dense, affecting many coordinates weakly, or sparse, affecting only a few directions strongly.  Although there is a broad literature on statistical inference under elliptical or heavy-tailed high-dimensional models \citep{feng2026high}, including robust shape and sphericity testing \citep{ZouPengFengWang2014}, high-dimensional Hettmansperger--Randles estimation \citep{YanFengZhang2025}, and robust sparse precision estimation \citep{LuFeng2025}, direct high-dimensional goodness-of-fit tests for the elliptical model are comparatively limited.  A notable recent contribution is \citet{WangLopes2026}, which develops a high-dimensional test for elliptical models based on low-dimensional random projections.  The present paper takes a different route: instead of comparing projected distributions, it directly tests a structural implication of ellipticity, namely the independence between radius and direction after standardization.

Our test is motivated by the observation that under the elliptical null, any scalar radial transform is independent of every directional coordinate.  We use the log-radius $L=\log R$ and form the vector of empirical coordinatewise correlations between $L$ and the components of $\bm{U}$.  Under the null, all these correlations are centered at zero.  Under alternatives, radial--directional dependence may be spread over many coordinates or concentrated in a small subset.  We therefore construct two complementary statistics: a sum statistic, which accumulates the squared coordinate correlations and is powerful for dense alternatives, and a max statistic, which detects the largest coordinatewise radial--directional association and is powerful for sparse alternatives.  A Cauchy combination p-value is then used to combine the sum and max components without needing to know the sparsity pattern in advance.

The main contributions of this paper are as follows.  First, we introduce a radial--directional moment-testing framework for high-dimensional elliptical goodness-of-fit problems with unspecified radial distribution, location and shape.  The resulting statistics are simple to compute and give coordinate-level diagnostic information through the largest radial--directional correlations.  Second, for the oracle standardized observations, we establish the null distribution theory in regimes where $p$ may diverge with $n$: the sum statistic is asymptotically normal, the max statistic has a Gumbel limit, and the two components are asymptotically independent.  Third, we prove that replacing the unknown location and shape by a high-dimensional Hettmansperger--Randles plug-in estimator preserves these first-order limits under explicit perturbation conditions.  This step separates the radial--directional testing argument from the robust standardization argument and makes clear what estimator accuracy is required.  Fourth, we provide a local-alternative result showing that the sum and max components remain asymptotically independent in a balanced detectable regime where both have nontrivial power.  Fifth, we provide a practical calibration and implementation scheme, together with simulations showing that the sum and max components have the expected dense-versus-sparse power complementarity and that their Cauchy combination is adaptive.  Finally, applications to near-infrared spectroscopy and mass-spectrometry data show how the proposed tests can identify localized radial--directional departures that are not always captured by existing high-dimensional elliptical tests.

The rest of the paper is organized as follows.  Section~\ref{sec:method} defines the oracle and plug-in radial--directional test statistics.  Section~\ref{sec:theory} gives the null theory for the sum, max and Cauchy-combination statistics and establishes plug-in validity.  Section~\ref{sec:simulation} reports simulation evidence under dense and sparse alternatives.  Section~\ref{sec:application} applies the method to spectroscopic and mass-spectrometry data.  Technical proofs and supplementary numerical results are collected in the appendices.

\paragraph{Notation.}
For a vector \(\bm{x}\), \(\|\bm{x}\|_2\), \(\|\bm{x}\|_\infty\) and \(\|\bm{x}\|_1\) denote its Euclidean, maximum and \(\ell_1\) norms.  For a matrix \(\mathbf{A}=(A_{jk})\), \(\|\mathbf{A}\|_{op}\), \(\|\mathbf{A}\|_F\), \(\|\mathbf{A}\|_{\max}\) and \(\|\mathbf{A}\|_{L_1}\) denote the spectral, Frobenius, entrywise maximum and maximum column-sum norms.  The trace, diagonal and smallest and largest eigenvalues of \(\mathbf{A}\) are denoted by \(\tr(\mathbf{A})\), \(\diag(\mathbf{A})\), \(\lambda_{\min}(\mathbf{A})\) and \(\lambda_{\max}(\mathbf{A})\).  The unit sphere in \(\R^p\) is \(\Sphere^{p-1}=\{\bm{u}\in\R^p:\|\bm{u}\|_2=1\}\).  For random variables, \(\E\), \(\Var\), \(\Cov\), \(\Corr\) and \(\Prob\) denote expectation, variance, covariance, correlation and probability.  For \(\alpha>0\), the Orlicz norm is \(\|W\|_{\psi_\alpha}=\inf\{c>0:\E\exp(|W|^\alpha/c^\alpha)\le2\}\).  The notation \(a_n\asymp b_n\) means that \(a_n/b_n\) is bounded away from zero and infinity.  We write \(O_p(a_n)\) and \(o_p(a_n)\) for stochastic order and negligible stochastic order, \(\xrightarrow{p}\) for convergence in probability, and \(\xrightarrow{d}\) for convergence in distribution.  Vectors are written in bold italic, such as \(\bm{X}\) and \(\bm{U}\), and matrices are written in bold roman, such as \(\mathbf{\Sigma}\) and \(\mathbf{\Omega}\).

\section{Radial--Directional Dependence Tests for Elliptical Models}
\label{sec:method}

\subsection{Elliptical null model}
\label{subsec:null-model}

Let $\bm{X}_1,\ldots,\bm{X}_n\in\R^p$ be independent observations from an unknown distribution.  The problem is to test whether this distribution belongs to an elliptical family with unspecified location and shape,
\[
H_0:\ \bm{X}\in\mathcal E_p
\qquad\text{against}\qquad
H_1:\ \bm{X}\notin\mathcal E_p .
\]
Here \(\mathcal E_p\) denotes the class of all $p$-dimensional elliptical distributions that admit the representation
\[
\begin{aligned}
\mathcal E_p
=
\bigl\{
&\mathcal L\bigl(\bm{\mu}+\mathbf{\Sigma}^{1/2}R\bm{U}\bigr):
\bm{\mu}\in\R^p,
\mathbf{\Sigma}\succ0,
\tr(\mathbf{\Sigma})=p,\\
& R>0,
\bm{U}\sim\Unif(\Sphere^{p-1}),
R\perpn\bm{U}
\bigr\}.
\end{aligned}
\]
Under $H_0$, there exist a location vector $\bm{\mu}\in\R^p$ and a positive definite shape matrix $\mathbf{\Sigma}$, normalized by $\tr(\mathbf{\Sigma})=p$, such that
\begin{equation}
\label{eq:elliptical-null}
\bm{X}_i=\bm{\mu}+\mathbf{\Sigma}^{1/2}\bm{Y}_i,
\qquad
\bm{Y}_i=R_i\bm{U}_i,
\qquad i=1,\ldots,n,
\end{equation}
where $R_i>0$ and
\begin{equation}
\label{eq:radial-directional-null}
\bm{U}_i\sim\Unif(\Sphere^{p-1}),
\qquad
R_i\perpn \bm{U}_i.
\end{equation}
Equivalently, after standardization by the location and shape parameters,
\begin{equation}
\label{eq:standardized-oracle}
\bm{Y}_i=\mathbf{\Sigma}^{-1/2}(\bm{X}_i-\bm{\mu}),
\qquad
R_i=\norm{\bm{Y}_i},
\qquad
\bm{U}_i=\bm{Y}_i/\norm{\bm{Y}_i}.
\end{equation}
Thus, radial--directional independence is a necessary structural feature of the elliptical null model; related spherical and elliptical symmetry tests also exploit the separation between radial and angular information \citep{AlbisettiBalabdaouiHolzmann2020}.  The proposed procedure tests this feature through high-dimensional correlations between a scalar radial transform and the directional coordinates.  Define
\[
L_i=\log R_i,
\qquad
m_L=\E(L_i),
\qquad
\sigma_L^2=\Var(L_i).
\]
For each coordinate $j=1,\ldots,p$, let
\[
\gamma_j=\Corr(L_i,U_{ij}).
\]
Under \eqref{eq:elliptical-null}--\eqref{eq:radial-directional-null}, $\gamma_j=0$ for every $j$.  The proposed test is based on the empirical evidence against this system of moment restrictions.  This formulation targets alternatives under which the radius and direction are dependent.  It does not require the departure from ellipticity to be expressed as a mean shift or as a change in the covariance structure.

\subsection{Oracle sum, max and Cauchy-combination statistics}
\label{subsec:oracle-statistics}

First suppose that $\bm{\mu}$ and $\mathbf{\Sigma}$ are known, so that $R_i$, $\bm{U}_i$ and $L_i$ are observed through \eqref{eq:standardized-oracle}.  Write
\[
\bar L=\frac1n\sum_{i=1}^nL_i,
\qquad
\bar U_j=\frac1n\sum_{i=1}^nU_{ij},
\]
and
\[
\hat\sigma_L^2=\frac1n\sum_{i=1}^n(L_i-\bar L)^2,
\qquad
\hat\sigma_{U,j}^2=\frac1n\sum_{i=1}^n(U_{ij}-\bar U_j)^2.
\]
The oracle coordinate correlation is
\[
\hat\gamma_j^{\rm or}
=
\frac{n^{-1}\sum_{i=1}^n(L_i-\bar L)(U_{ij}-\bar U_j)}{\hat\sigma_L\hat\sigma_{U,j}},
\qquad j=1,\ldots,p.
\]
Let
\[
\bm{g}_n^{\rm or}=(\hat\gamma_1^{\rm or},\ldots,\hat\gamma_p^{\rm or})^\top.
\]
The sum-type statistic is
\begin{equation}
\label{eq:sum-stat}
T_{\rm sum}=n\norm{\bm{g}_n^{\rm or}}_2^2=n\sum_{j=1}^p(\hat\gamma_j^{\rm or})^2,
\end{equation}
and the max-type statistic is
\begin{equation}
\label{eq:max-stat}
T_{\max}=n\norm{\bm{g}_n^{\rm or}}_\infty^2-2\log p+\log\log p.
\end{equation}
The sum component aggregates weak departures over many coordinates, whereas the max component is sensitive to a small number of large coordinate correlations.  This follows the max--sum principle used in high-dimensional testing, where sum-type and max-type statistics are designed for dense and sparse alternatives, respectively \citep{FengJiangLiuXiong2022,WangLiuFengMa2024}.

Let $\Phi$ denote the standard normal distribution function and define
\[
F_G(x)=\exp\{-\pi^{-1/2}e^{-x/2}\},
\qquad x\in\R.
\]
The asymptotic p-values associated with \eqref{eq:sum-stat} and \eqref{eq:max-stat} are
\begin{equation}
\label{eq:pvalue-sum-max}
P_{\rm sum}=1-\Phi\left(\frac{T_{\rm sum}-p}{\sqrt{2p}}\right),
\qquad
P_{\max}=1-F_G(T_{\max}).
\end{equation}
The adaptive statistic is obtained through the Cauchy combination
\begin{equation}
\label{eq:cauchy-stat}
T_{\rm cau}
=
\frac12\tan\{\pi(1/2-P_{\rm sum})\}
+
\frac12\tan\{\pi(1/2-P_{\max})\},
\end{equation}
with combined p-value
\begin{equation}
\label{eq:cauchy-pvalue}
P_{\rm cau}
=
\frac12-\frac1\pi\arctan(T_{\rm cau}).
\end{equation}
The test rejects the elliptical null hypothesis at level $\alpha$ when $P_{\rm cau}\le\alpha$.  The validity of this calibration is established in Section~\ref{sec:theory} through the asymptotic independence of the sum and max components.

\subsection{HR plug-in implementation}
\label{subsec:plugin-statistics}

In applications, $\bm{\mu}$ and $\mathbf{\Sigma}$ are unknown.  We estimate them by a high-dimensional HR procedure \citep{HettmanspergerRandles2002,YanFengZhang2025}.  It combines the spatial median, a spatial-sign graphical-lasso precision initializer, and an iterated HR location--shape update.  Throughout this subsection, for any nonzero vector $\bm{x}$, write
\[
\mathcal U(\bm{x})=\frac{\bm{x}}{\|\bm{x}\|_2}.
\]
The final scatter estimate is normalized by $\tr(\hat{\mathbf{\Sigma}})=p$, so it is on the same scale as the shape matrix in \eqref{eq:elliptical-null}.

The initial location is the spatial median
\begin{equation}
\label{eq:hr-spatial-median}
\hat{\bm{\mu}}^{(0)}
=
\arg\min_{\bm{m}\in\R^p}\sum_{i=1}^n\|\bm{X}_i-\bm{m}\|_2 .
\end{equation}
Using this location, form the spatial-sign covariance matrix
\[
\hat{\mathbf{S}}_0
=
\frac1n\sum_{i=1}^n
\mathcal U(\bm{X}_i-\hat{\bm{\mu}}^{(0)})
\mathcal U(\bm{X}_i-\hat{\bm{\mu}}^{(0)})^\top .
\]
The initial precision matrix is obtained from the spatial-sign graphical-lasso problem
\[
\hat{\mathbf{\Omega}}^{(0)}
=
\arg\min_{\mathbf{\Omega}\succ0}
\left[
\tr\{p\mathbf{\Omega}\hat{\mathbf{S}}_0\}
-
\log |\mathbf{\Omega}|
+
\lambda_n\|\mathbf{\Omega}\|_{1}
\right],
\]
where $\|\mathbf{\Omega}\|_1=\sum_{j,k}|\Omega_{jk}|$.  The initial shape estimate is
\[
\hat{\mathbf{\Sigma}}^{(0)}
=
\{\hat{\mathbf{\Omega}}^{(0)}\}^{-1},
\qquad
\hat{\mathbf{\Sigma}}^{(0)}
\leftarrow
p\hat{\mathbf{\Sigma}}^{(0)}/\tr\{\hat{\mathbf{\Sigma}}^{(0)}\}.
\]

For $t=0,1,2,\ldots$, the HR update is computed as follows.  First standardize the observations by the current location and shape:
\[
\hat{\bm{\varepsilon}}_i^{(t)}
=
\{\hat{\mathbf{\Sigma}}^{(t)}\}^{-1/2}
(\bm{X}_i-\hat{\bm{\mu}}^{(t)}),
\qquad
\hat{\bm{u}}_i^{(t)}
=
\mathcal U\{\hat{\bm{\varepsilon}}_i^{(t)}\}.
\]
The HR location equation $\sum_i\mathcal U\{\hat{\mathbf{\Sigma}}^{-1/2}(\bm{X}_i-\bm{m})\}=\bm{0}$ is solved by the fixed-point update
\[
\hat{\bm{\mu}}^{(t+1)}
=
\hat{\bm{\mu}}^{(t)}
+
\{\hat{\mathbf{\Sigma}}^{(t)}\}^{1/2}
\frac{n^{-1}\sum_{i=1}^n\hat{\bm{u}}_i^{(t)}}
     {n^{-1}\sum_{i=1}^n\|\hat{\bm{\varepsilon}}_i^{(t)}\|_2^{-1}}.
\]
Next compute the spatial-sign covariance matrix of the current residual directions,
\[
\hat{\mathbf{S}}_{\varepsilon}^{(t)}
=
\frac1n\sum_{i=1}^n
\hat{\bm{u}}_i^{(t)}\hat{\bm{u}}_i^{(t)\top}.
\]
For a matrix $\mathbf{A}=(A_{jk})_{1\le j,k\le p}$ and a bandwidth $h$, define
\[
\{B_h(\mathbf{A})\}_{jk}=A_{jk}\mathbf{1}(|j-k|\le h).
\]
The banded HR shape update is
\[
\tilde{\mathbf{\Sigma}}^{(t+1)}
=
 p\{\hat{\mathbf{\Sigma}}^{(t)}\}^{1/2}
B_h(\hat{\mathbf{S}}_{\varepsilon}^{(t)})
\{\hat{\mathbf{\Sigma}}^{(t)}\}^{1/2},
\qquad
\hat{\mathbf{\Sigma}}^{(t+1)}
=
\frac{p\mathcal P_+\{\tilde{\mathbf{\Sigma}}^{(t+1)}+\rho\mathbf{I}_p\}}
     {\tr\left[\mathcal P_+\{\tilde{\mathbf{\Sigma}}^{(t+1)}+\rho\mathbf{I}_p\}\right]},
\]
where $\mathcal P_+(\cdot)$ denotes projection onto the positive definite cone by replacing nonpositive eigenvalues by a small positive number, and $\rho>0$ is a ridge constant.  The iteration stops at the first $t$ for which
\begin{equation}
\label{eq:hr-stop-main}
\max\left\{
\|\hat{\bm{\mu}}^{(t+1)}-\hat{\bm{\mu}}^{(t)}\|_2,
\frac{\|\hat{\mathbf{\Sigma}}^{(t+1)}-\hat{\mathbf{\Sigma}}^{(t)}\|_F}
     {\max\{1,\|\hat{\mathbf{\Sigma}}^{(t)}\|_F\}}
\right\}
\le \varepsilon_{\rm HR}.
\end{equation}
In the numerical sections we use $\lambda_n=0.08$, $h=3$, $\rho=10^{-4}$, $\varepsilon_{\rm HR}=10^{-4}$ and at most $30$ HR iterations.

Algorithm~\ref{alg:hd-hr} summarizes the complete HR plug-in standardization used in the test.

\begin{algorithm}[H]
\caption{High-dimensional HR plug-in standardization}
\label{alg:hd-hr}
\begin{algorithmic}[1]
\Require Observations $\bm{X}_1,\ldots,\bm{X}_n\in\R^p$, graphical-lasso tuning $\lambda_n$, band width $h$, ridge $\rho$, tolerance $\varepsilon_{\rm HR}$ and maximum iteration number $K_{\max}$.
\Ensure HR estimates $\hat{\bm{\mu}}$, $\hat{\mathbf{\Sigma}}$ and standardized quantities $\{\hat{\bm{Y}}_i,\hat R_i,\hat{\bm{U}}_i,\hat L_i\}_{i=1}^n$.
\State Compute the spatial median
$\displaystyle
\hat{\bm{\mu}}^{(0)}=\arg\min_{\bm{m}\in\R^p}\sum_{i=1}^n\|\bm{X}_i-\bm{m}\|_2.
$
\State Form
$\displaystyle
\hat{\mathbf{S}}_0=n^{-1}\sum_{i=1}^n
\mathcal U(\bm{X}_i-\hat{\bm{\mu}}^{(0)})
\mathcal U(\bm{X}_i-\hat{\bm{\mu}}^{(0)})^\top.
$
\State Compute the sparse precision initializer.
\Statex \(\displaystyle
\hat{\mathbf{\Omega}}^{(0)}=\arg\min_{\mathbf{\Omega}\succ0}
\left[\tr\{p\mathbf{\Omega}\hat{\mathbf{S}}_0\}-\log|\mathbf{\Omega}|+
\lambda_n\|\mathbf{\Omega}\|_1\right].
\)
\State Set
$\displaystyle
\hat{\mathbf{\Sigma}}^{(0)}=\{\hat{\mathbf{\Omega}}^{(0)}\}^{-1},\qquad
\hat{\mathbf{\Sigma}}^{(0)}\leftarrow
p\hat{\mathbf{\Sigma}}^{(0)}/\tr\{\hat{\mathbf{\Sigma}}^{(0)}\}.
$
\For{$t=0,1,\ldots,K_{\max}-1$}
\State Compute residuals and directions
$\displaystyle
\hat{\bm{\varepsilon}}_i^{(t)}=
\{\hat{\mathbf{\Sigma}}^{(t)}\}^{-1/2}(\bm{X}_i-\hat{\bm{\mu}}^{(t)}),
\qquad
\hat{\bm{u}}_i^{(t)}=\mathcal U\{\hat{\bm{\varepsilon}}_i^{(t)}\}.
$
\State Update location
$\displaystyle
\hat{\bm{\mu}}^{(t+1)}=
\hat{\bm{\mu}}^{(t)}+
\{\hat{\mathbf{\Sigma}}^{(t)}\}^{1/2}
\frac{n^{-1}\sum_{i=1}^n\hat{\bm{u}}_i^{(t)}}
     {n^{-1}\sum_{i=1}^n\|\hat{\bm{\varepsilon}}_i^{(t)}\|_2^{-1}}.
$
\State Compute
$\displaystyle
\hat{\mathbf{S}}_{\varepsilon}^{(t)}=n^{-1}\sum_{i=1}^n
\hat{\bm{u}}_i^{(t)}\hat{\bm{u}}_i^{(t)\top}.
$
\State Update shape by banding, ridge correction and trace normalization
$\displaystyle
\tilde{\mathbf{\Sigma}}^{(t+1)}=
 p\{\hat{\mathbf{\Sigma}}^{(t)}\}^{1/2}
B_h(\hat{\mathbf{S}}_{\varepsilon}^{(t)})
\{\hat{\mathbf{\Sigma}}^{(t)}\}^{1/2},
$
$\displaystyle
\hat{\mathbf{\Sigma}}^{(t+1)}=
\frac{p\mathcal P_+\{\tilde{\mathbf{\Sigma}}^{(t+1)}+\rho\mathbf{I}_p\}}
     {\tr[\mathcal P_+\{\tilde{\mathbf{\Sigma}}^{(t+1)}+\rho\mathbf{I}_p\}]}.
$
\If{the stopping criterion in \eqref{eq:hr-stop-main} is satisfied}
\State \textbf{break}
\EndIf
\EndFor
\State Set $\hat{\bm{\mu}}=\hat{\bm{\mu}}^{(t+1)}$ and $\hat{\mathbf{\Sigma}}=\hat{\mathbf{\Sigma}}^{(t+1)}$.
\State Compute
$\displaystyle
\hat{\bm{Y}}_i=\hat{\mathbf{\Sigma}}^{-1/2}(\bm{X}_i-\hat{\bm{\mu}}),\quad
\hat R_i=\|\hat{\bm{Y}}_i\|_2,
\quad
\hat{\bm{U}}_i=\hat{\bm{Y}}_i/\|\hat{\bm{Y}}_i\|_2,
\quad
\hat L_i=\log\hat R_i.
$
\end{algorithmic}
\end{algorithm}

The fitted standardized observations are then
\begin{equation}
\label{eq:plugin-standardization}
\hat{\bm{Y}}_i=\hat{\mathbf{\Sigma}}^{-1/2}(\bm{X}_i-\hat{\bm{\mu}}),
\qquad
\hat R_i=\norm{\hat{\bm{Y}}_i},
\qquad
\hat{\bm{U}}_i=\hat{\bm{Y}}_i/\norm{\hat{\bm{Y}}_i},
\qquad
\hat L_i=\log\hat R_i.
\end{equation}
With
\[
\bar{\hat L}=\frac1n\sum_{i=1}^n\hat L_i,
\qquad
\bar{\hat U}_j=\frac1n\sum_{i=1}^n\hat U_{ij},
\]
and the empirical standard deviations $\hat\sigma_{\hat L}$ and $\hat\sigma_{\hat U,j}$, set
\begin{equation}
\label{eq:plugin-gammahat}
\hat g_{n,j}
=
\frac{n^{-1}\sum_{i=1}^n(\hat L_i-\bar{\hat L})(\hat U_{ij}-\bar{\hat U}_j)}{\hat\sigma_{\hat L}\hat\sigma_{\hat U,j}},
\qquad
\hat{\bm{g}}_n=(\hat g_{n,1},\ldots,\hat g_{n,p})^\top.
\end{equation}
The feasible statistics are
\[
\hat T_{\rm sum}=n\norm{\hat{\bm{g}}_n}_2^2,
\qquad
\hat T_{\max}=n\norm{\hat{\bm{g}}_n}_\infty^2-2\log p+\log\log p.
\]
The feasible p-values $\hat P_{\rm sum}$, $\hat P_{\max}$ and $\hat P_{\rm cau}$ are obtained from \eqref{eq:pvalue-sum-max}--\eqref{eq:cauchy-pvalue} with $T_{\rm sum}$ and $T_{\max}$ replaced by $\hat T_{\rm sum}$ and $\hat T_{\max}$.

\subsection{Finite-sample calibration}
\label{subsec:finite-sample-calibration}

The analytic feasible p-values in \eqref{eq:pvalue-sum-max} use the normal and Gumbel limits.  In finite samples, especially when the HR standardization is estimated in very high dimensions, an additional mean--variance calibration can be applied after HR standardization.  The calibration is carried out under the fitted elliptical null model and preserves the null separation between the radius and the direction.

Let $(\hat R_i,\hat{\bm{U}}_i)$ be the HR-standardized radii and directions from \eqref{eq:plugin-standardization}.  For $b=1,\ldots,B$, sample the radii with replacement and independently sample directions according to
\[
\hat R_1^{*(b)},\ldots,\hat R_n^{*(b)}\sim \widehat F_R,
\qquad
\bm{W}_1^{*(b)},\ldots,\bm{W}_n^{*(b)}\sim\Unif(\Sphere^{p-1}),
\]
where $\widehat F_R$ is the empirical distribution of $\hat R_1,\ldots,\hat R_n$.  The bootstrap standardized observations are
\[
\bm{Y}_i^{*(b)}=\hat R_i^{*(b)}\bm{W}_i^{*(b)},
\qquad i=1,\ldots,n.
\]

This radial--directional resampling mimics the representation $Y=RU$ with $R\perpn U$ and $U$ uniform on the sphere; it is closely related to the spherically symmetric bootstrap used in tests of spherical and elliptical symmetry \citep{AlbisettiBalabdaouiHolzmann2020}.  From each bootstrap sample compute $T_{{\rm sum}}^{*(b)}$ and $T_{\max}^{*(b)}$ by the same correlation formulas.  Let $\hat m_S,\hat s_S$ be the bootstrap mean and standard deviation of $T_{{\rm sum}}^{*(b)}$, and let $\hat m_M,\hat s_M$ be the corresponding quantities for $T_{\max}^{*(b)}$.

The bootstrap mean--variance corrected p-values are
\begin{equation}
\label{eq:boot-corrected-pvalues}
P_{\rm sum}^{\rm boot}
=
1-\Phi\left(\frac{\hat T_{\rm sum}-\hat m_S}{\hat s_S}\right),
\qquad
P_{\max}^{\rm boot}
=
1-F_G\left\{\mu_G+\frac{\sigma_G}{\hat s_M}(\hat T_{\max}-\hat m_M)\right\},
\end{equation}
where
\[
\mu_G=2\gamma_{\rm E}-\log\pi,
\qquad
\sigma_G^2=\frac{2\pi^2}{3},
\]
and $\gamma_{\rm E}$ is Euler's constant.  The corrected Cauchy p-value is obtained by replacing $P_{\rm sum}$ and $P_{\max}$ in \eqref{eq:cauchy-stat}--\eqref{eq:cauchy-pvalue} with $P_{\rm sum}^{\rm boot}$ and $P_{\max}^{\rm boot}$.  This calibration is used only to improve finite-sample centering and scaling; the theoretical reference distributions remain those established in Section~\ref{sec:theory}.

\section{Asymptotic Theory}
\label{sec:theory}

All limits in this section are taken along a sequence $n\to\infty$ and $p=p_n\to\infty$.  The following notation is used throughout.  Under the null model, the quantities $m_L=m_{L,p}$ and $\sigma_L^2=\sigma_{L,p}^2$ may depend on $p$.  Set
\[
Z_i=\frac{L_i-m_L}{\sigma_L},
\qquad
\rho_{L,p}=\sigma_L^{-1},
\qquad
\sigma_U^2=\Var(U_{ij})=p^{-1},
\]
\[
\xi_{ij}=\frac{(L_i-m_L)U_{ij}}{\sigma_L\sigma_U}=\sqrt p\,Z_iU_{ij}.
\]

\subsection{Regularity conditions}
\label{subsec:assumptions}

The regularity conditions are stated at two levels.  We first give the oracle radial--directional null model and then give the model-parameter conditions used for high-dimensional HR standardization.

\begin{assumption}
\label{ass:oracle}
The following conditions hold.
\begin{enumerate}[label=\textnormal{(\roman*)}]
\item $\bm{Y}_i=R_i\bm{U}_i$, $\bm{U}_i\sim\Unif(\Sphere^{p-1})$, and $R_i\perpn \bm{U}_i$.
\item The log-radius has a non-degenerate finite variance at each dimension,
\[
0<\sigma_L^2=\Var(\log R_i)<\infty,
\qquad
Z_i=\frac{\log R_i-m_L}{\sigma_L},
\qquad
m_L=\E(\log R_i),
\]
and there exist constants $\eta_Z>0$ and $C_Z<\infty$ such that
\[
\sup_{p\ge1}\E |Z_i|^{8+\eta_Z}\le C_Z.
\]
Moreover, for a deterministic sequence $b_{n,p}\ge 1$,
\[
\Pr\left(\max_{1\le i\le n}|Z_i|\le b_{n,p}\right)\to1,
\qquad
\frac{b_{n,p}^2\{(\log p)^5+\log p\log n\}}{n}\to0.
\]
\end{enumerate}
\end{assumption}

\begin{remark}
\label{rem:oracle-radial-condition}
Assumption~\ref{ass:oracle} is the oracle condition used for the null distribution theory.  Part \textnormal{(i)} is exactly the radial--directional representation of an elliptically standardized observation.  Part \textnormal{(ii)} is imposed on the standardized log-radius $Z_i$, not directly on $\log R_i-m_L$.  Hence $\sigma_L^2$ may vanish with $p$, as it does for concentrated high-dimensional radial laws.  For example, if $R_i^2\sim\chi_p^2$, then $\Var(\log R_i)=\psi_1(p/2)/4\asymp p^{-1}$.  The factor $\rho_{L,p}=\sigma_L^{-1}$ is therefore kept explicitly in all plug-in rates.  Appendix~\ref{app:radial-verification} verifies this condition and gives admissible deterministic choices of $b_{n,p}$ for common elliptical radial laws.
\end{remark}

Let
\[
\mathbf{\Omega}=\mathbf{\Sigma}^{-1},
\qquad
\bm{\varepsilon}_i=\mathbf{\Omega}^{1/2}(\bm{X}_i-\bm{\mu})=R_i\bm{U}_i,
\qquad
r_i=\|\bm{\varepsilon}_i\|_2=R_i,
\qquad
\zeta_k=\E(r_i^{-k}).
\]
For a matrix $\mathbf{A}=(A_{jk})$, write
\[
\|\mathbf{A}\|_{L_1}=\max_{1\le k\le p}\sum_{j=1}^p |A_{jk}|,
\qquad
\|\mathbf{A}\|_{\max}=\max_{1\le j,k\le p}|A_{jk}|.
\]

\begin{assumption}
\label{ass:hr-rate}
The high-dimensional HR estimator is computed by \eqref{eq:hr-spatial-median}--\eqref{eq:hr-stop-main}, and the following model-parameter conditions hold.
\begin{enumerate}[label=\textnormal{(\roman*)}]
\item There are constants $0<c_R<C_R<\infty$ and $K_R<\infty$ such that, for $k\in\{-1,1,2,3,4\}$,
\[
c_R\le \E\{(r_i/\sqrt p)^{-k}\}\le C_R,
\qquad
\norm{\zeta_1^{-1}r_i^{-1}}_{\psi_2}\le K_R.
\]
Consequently $\zeta_1\asymp p^{-1/2}$.
\item There are constants $\eta,h,d_0,D_0>0$ such that
\[
\eta\le \lambda_{\min}(\mathbf{\Sigma})
\le \lambda_{\max}(\mathbf{\Sigma})\le \eta^{-1},
\qquad
\tr(\mathbf{\Sigma})=p,
\qquad
\|\mathbf{\Sigma}\|_{L_1}\le h,
\]
where $\diag(\mathbf{\Sigma})=(d_1^2,\ldots,d_p^2)$ satisfies
\[
d_0\le \min_{1\le j\le p}d_j\le \max_{1\le j\le p}d_j\le D_0 .
\]
\item There are $T>0$, $0\le q<1$ and $s_0(p)>0$ such that
\[
\|\mathbf{\Omega}\|_{L_1}\le T,
\qquad
\max_{1\le j\le p}\sum_{k=1}^p |\Omega_{jk}|^q\le s_0(p).
\]
\item For
\[
\mathbf{S}=\E\{\mathcal U(\bm{X}_i-\bm{\mu})\mathcal U(\bm{X}_i-\bm{\mu})^\top\},
\]
there exists $\psi>0$ such that
\[
\limsup_{p\to\infty}\|\mathbf{S}\|_{op}\le 1-\psi.
\]
\item The graphical-lasso tuning parameter is
\[
\lambda_n
=
T\left\{
C_{\eta,T,1}\sqrt{\frac{\log p}{n}}
+C_{\eta,T,2}p^{-1/2}
\right\}
\]
for fixed positive constants $C_{\eta,T,1}$ and $C_{\eta,T,2}$, and
\[
\log p=o(n^{1/3}),
\qquad
\tau_n:=\lambda_n^{1-q}s_0(p),
\qquad
\tau_n\sqrt{\log p}\to0.
\]
\end{enumerate}
\end{assumption}

\begin{remark}
\label{rem:hr-rate-comment}
Assumption~\ref{ass:hr-rate} is a parameter-level condition for the HR estimator, not a direct assumption on the test statistics.  Conditions \textnormal{(i)}--\textnormal{(v)} collect the radial moment, shape regularity, precision-sparsity, spatial-sign and graphical-lasso tuning requirements used in the high-dimensional HR analysis of \citet{YanFengZhang2025}.  The perturbation expansion needed by the present tests is derived from Assumptions~\ref{ass:oracle}--\ref{ass:hr-rate} in Proposition~\ref{prop:yan-to-plugin}; the proof is given in Appendix~\ref{subsec:app-plugin-assumption}.
\end{remark}

Define
\[
\mathfrak c_n
=
 n^{-1/4}\sqrt{\log(np)}
 +n^{-(1-q)/2}(\log p)^{(1-q)/2}\sqrt{\log(np)}\,s_0(p)
 +p^{-(1-q)/2}\sqrt{\log(np)}\,s_0(p),
\]
\[
a_p=\E\left(\frac{L_i-m_L}{R_i}\right)+\frac1p\E\left(\frac{1-(L_i-m_L)}{R_i}\right),
\]
and
\begin{equation}
\label{eq:kappa-def}
\tilde\kappa_p=-(\sigma_L\sigma_U)^{-1}\zeta_1^{-1}a_p,
\qquad
\kappa_p=-(\sigma_L\sigma_U)^{-1}a_p.
\end{equation}
The testing rates are
\[
\mathfrak A_{S,n}
=
|\tilde\kappa_p|\left(n^{-1/2}+p^{-1/2}+\frac{\log p}{\sqrt n}\right)
+\frac{|\tilde\kappa_p|^2}{\sqrt p}
+|\tilde\kappa_p|\mathfrak c_n
+(1+\rho_{L,p})\left(\tau_n+\frac{\sqrt p}{n}\right)
+\frac{(1+\rho_{L,p})^2\tau_n^2}{\sqrt p},
\]
\[
\mathfrak A_{M,n}
=
\frac{|\tilde\kappa_p|\log p}{\sqrt p}
+\sqrt{\log p}\,\mathfrak c_n
+(1+\rho_{L,p})\left(\tau_n\log p+\frac{\log p}{\sqrt n}\right)
+(1+\rho_{L,p})^2\left(\tau_n^2\log p+\frac{\log p}{n}\right).
\]

For the plug-in expansion, let
\[
\bm{v}=\hat{\mathbf{\Sigma}}^{-1/2}(\hat{\bm{\mu}}-\bm{\mu}),
\qquad
\mathbf{B}=\hat{\mathbf{\Sigma}}^{-1/2}\mathbf{\Sigma}^{1/2}-\mathbf{I}_p,
\qquad
\bar{\bm{U}}=\frac1n\sum_{i=1}^n\bm{U}_i.
\]
Define
\[
\mathbf{\Delta}_n=\mathbf{M}_n-\E(\mathbf{M}_n),
\qquad
\mathbf{M}_n=\frac1n\sum_{i=1}^n\left[
\frac{L_i-m_L}{R_i}\mathbf{I}_p+
\frac{1-(L_i-m_L)}{R_i}\bm{U}_i\bm{U}_i^\top
\right],
\]
\[
\bm{r}_n^{(1)}=-(\sigma_L\sigma_U)^{-1}\mathbf{\Delta}_n\bm{v},
\]
\[
\bm{d}_n^{(\mathbf{\Sigma})}
=
\frac1{\sigma_L\sigma_U}\frac1n\sum_{i=1}^n
\left\{(L_i-m_L)(\mathbf{I}_p-\bm{U}_i\bm{U}_i^\top)\mathbf{B}\bm{U}_i
+\bm{U}_i(\bm{U}_i^\top \mathbf{B}\bm{U}_i)\right\},
\]
and
\[
\bm{b}_n^{\rm hard}=\bm{r}_n^{(1)}+\bm{d}_n^{(\mathbf{\Sigma})}.
\]

\begin{proposition}
\label{prop:yan-to-plugin}
Under Assumptions~\ref{ass:oracle}--\ref{ass:hr-rate},
\begin{equation}
\label{eq:hr-bahadur}
\bm{v}=\zeta_1^{-1}\bar{\bm{U}}+n^{-1/2}\bm{C}_n,
\qquad
\|\bm{C}_n\|_\infty=O_p(\mathfrak c_n),
\end{equation}
\[
\|\mathbf{B}\|_{op}=O_p(\tau_n),
\]
and
\[
\hat{\bm{g}}_n-\bm{g}_n^{\rm or}
=
\tilde\kappa_p\bar{\bm{U}}
+\kappa_p n^{-1/2}\bm{C}_n
+\bm{b}_n^{\rm hard}
+\bm{a}_n.
\]
The linearization remainder satisfies
\[
\frac n{\sqrt p}
\left\{ |(\bm{g}_n^{\rm or})^\top \bm{a}_n|
+|\bm{d}_n^\top \bm{a}_n|
+\|\bm{a}_n\|_2^2\right\}
=O_p(\mathfrak A_{S,n}),
\]
where
\[
\bm{d}_n
=\tilde\kappa_p\bar{\bm{U}}
+\kappa_p n^{-1/2}\bm{C}_n
+\bm{b}_n^{\rm hard},
\]
and
\[
\sqrt n\|\bm{a}_n\|_\infty=O_p(\mathfrak A_{M,n}/\sqrt{\log p}).
\]
Moreover,
\[
\frac n{\sqrt p}
\left\{|(\bm{g}_n^{\rm or})^\top \bm{b}_n^{\rm hard}|+
\|\bm{b}_n^{\rm hard}\|_2^2\right\}
=O_p\left((1+\rho_{L,p})\left\{\tau_n+\frac{\sqrt p}{n}\right\}+\frac{(1+\rho_{L,p})^2\tau_n^2}{\sqrt p}\right),
\]
\begin{equation}
\label{eq:derived-hard-max}
\sqrt n\|\bm{b}_n^{\rm hard}\|_\infty
=O_p\left((1+\rho_{L,p})\left\{\tau_n\sqrt{\log p}+\sqrt{\frac{\log p}{n}}\right\}\right).
\end{equation}
\end{proposition}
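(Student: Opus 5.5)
We split the proof into an estimator part, which gives \eqref{eq:hr-bahadur} and the bound on $\|\mathbf{B}\|_{op}$, and a testing part, which gives the perturbation expansion of $\hat{\bm g}_n$ and the norm bounds on its pieces. The estimator part is imported from the high-dimensional HR analysis of \citet{YanFengZhang2025}. Under Assumption~\ref{ass:hr-rate}(i)--(v), their results give:
\begin{itemize}
\item the Bahadur representation of the HR location, $\hat{\bm\mu}-\bm\mu=\zeta_1^{-1}\mathbf{\Sigma}^{1/2}\bar{\bm U}+\text{remainder}$, with an entrywise remainder of order $n^{-1/2}\mathfrak c_n$;
\item the shape bound $\|\hat{\mathbf{\Sigma}}^{-1/2}-\mathbf{\Sigma}^{-1/2}\|_{op}=O_p(\tau_n)$, via the graphical-lasso initializer rate $\lambda_n^{1-q}s_0(p)$, which the banded HR iteration preserves.
\end{itemize}
Left-multiplying the location representation by $\hat{\mathbf{\Sigma}}^{-1/2}=(\mathbf{I}_p+\mathbf{B})\mathbf{\Sigma}^{-1/2}$ gives $\bm v=\zeta_1^{-1}\bar{\bm U}+\zeta_1^{-1}\mathbf{B}\bar{\bm U}+\dots$. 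The cross term is $\zeta_1^{-1}\|\mathbf{B}\|_{op}\|\bar{\bm U}\|_2=O_p(\sqrt p\,\tau_n\,n^{-1/2})$. Its sup-norm must still be shown to be $O_p(n^{-1/2}\mathfrak c_n)$. For this we use $\|\mathbf{B}\|_{L_1}$-type control from the banding together with sub-Gaussianity of the coordinates $\sqrt p\,\bar U_j$. This absorbs the cross term into $\bm C_n$.

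For the statistic, we write $\hat{\bm Y}_i=(\mathbf{I}_p+\mathbf{B})\bm Y_i-\bm v$ and Taylor-expand the maps $\bm y\mapsto\log\|\bm y\|$ and $\bm y\mapsto\bm y/\|\bm y\|$ around $\bm Y_i=R_i\bm U_i$. To first order,
\[
\hat L_i-L_i\approx \bm U_i^\top\mathbf{B}\bm U_i-R_i^{-1}\bm U_i^\top\bm v,
\qquad
\hat{\bm U}_i-\bm U_i\approx(\mathbf{I}_p-\bm U_i\bm U_i^\top)\bigl(\mathbf{B}\bm U_i-R_i^{-1}\bm v\bigr).
\]
We substitute these into the numerator of $\hat g_{n,j}$, centre, and collect the terms linear in $\bm v$. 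They equal $-\mathbf{M}_n\bm v$ up to normalization. Splitting $\mathbf{M}_n=\E\mathbf{M}_n+\mathbf{\Delta}_n$ and using $\E\mathbf{M}_n=a_p\mathbf{I}_p$ (by isotropy) gives $\kappa_p\bm v+\bm r_n^{(1)}$. Inserting \eqref{eq:hr-bahadur} then produces $\tilde\kappa_p\bar{\bm U}+\kappa_p n^{-1/2}\bm C_n$. The terms linear in $\mathbf{B}$ are exactly $\bm d_n^{(\mathbf{\Sigma})}$.

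Everything else is put into $\bm a_n$. This includes:
\begin{itemize}
\item the second-order Taylor terms;
\item the denominator corrections $\hat\sigma_{\hat L}/\sigma_L-1$ and $\sqrt p\,\hat\sigma_{\hat U,j}-1$, together with the ratio $\hat\sigma_L/\sigma_L$ from the oracle standardization;
\item the mean-centering terms $\bar{\hat L}-\bar L$ and $\bar{\hat U}_j-\bar U_j$.
\end{itemize}
The Taylor expansion is valid uniformly in $i$ because $\|\bm v\|_2/R_i$ and $\|\mathbf{B}\|_{op}$ are small. By Assumption~\ref{ass:hr-rate}(i), $R_i\asymp\sqrt p$, and a union bound over $i\le n$ uses the $\psi_2$ control of $r_i^{-1}$.

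The remainder bounds come next. For the sum part we bound $|(\bm g_n^{\rm or})^\top\bm a_n|$ by Cauchy--Schwarz, using $\|\bm g_n^{\rm or}\|_2=O_p(\sqrt{p/n})$, and bound $|\bm d_n^\top\bm a_n|$ and $\|\bm a_n\|_2^2$ term by term. The largest second-order pieces are of order $|\tilde\kappa_p|^2\|\bar{\bm U}\|^2$ and $\rho_{L,p}^2\tau_n^2$; after multiplying by $n/\sqrt p$ they give the terms in $\mathfrak A_{S,n}$. The factor $\rho_{L,p}$ appears because every error in $L_i$ is divided by $\sigma_L$. For the max part we use Bernstein-type maximal inequalities over $j\le p$ for averages of $U_{ij}$-weighted quantities; the truncation $\max_i|Z_i|\le b_{n,p}$ from Assumption~\ref{ass:oracle} makes the summands bounded.

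For $\bm b_n^{\rm hard}$, the term $\bm r_n^{(1)}$ is handled as follows. We have $\|\mathbf{\Delta}_n\|_{op}=O_p(p^{-1/2}\sqrt{p/n}\,(1+\|Z\|))$ by a matrix concentration bound for the independent rank-one sums, and $\|\bm v\|_2=O_p(\sqrt{p/n})$. Together with the prefactor $(\sigma_L\sigma_U)^{-1}=\sqrt p\,\rho_{L,p}$, this gives the $\sqrt p/n$ contribution. For $\bm d_n^{(\mathbf{\Sigma})}$ we condition on $\mathbf{B}$, which is allowed up to the dependence of $\mathbf{B}$ on the sample. The summands are mean-zero up to a term $p^{-1}\tr(\mathbf{B})\bm U_i$. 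Second moments then give $\|\bm d_n^{(\mathbf{\Sigma})}\|_2^2\lesssim(1+\rho_{L,p})^2\|\mathbf{B}\|_F^2/n+\dots$, and the coordinatewise maximal inequality gives \eqref{eq:derived-hard-max}.

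\textbf{Main obstacle.} The hardest step is the dependence between $\mathbf{B}$ (and $\bm v$) and the same sample $\{\bm U_i,L_i\}$. Naive conditioning is invalid. We would first bound the relevant quantities uniformly over the deterministic set $\{\|\mathbf{B}\|_{op}\le C\tau_n\}$: the maps are linear in $\mathbf{B}$, so this reduces to operator-norm bounds on random tensors such as $n^{-1}\sum_i(L_i-m_L)\bm U_i\otimes\bm U_i$. If that proves too lossy, we fall back on a leave-one-out or sample-splitting argument; it is precisely in this step that the $\tau_n$ terms, rather than $\tau_n/\sqrt n$, arise.
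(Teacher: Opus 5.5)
Your expansion of $\hat{\bm g}_n-\bm g_n^{\rm or}$ is the same as the paper's. The paper packages your two Taylor expansions into the single map $\bm\phi(\bm y)=(\log\|\bm y\|_2-m_L)\bm y/\|\bm y\|_2$. Its derivative at $R\bm u$ in the direction $R\mathbf B\bm u-\bm v$ gives $\bm d_n^{(\mathbf\Sigma)}$ and $-(\sigma_L\sigma_U)^{-1}\mathbf M_n\bm v$. The paper then splits $\mathbf M_n=a_p\mathbf I_p+\mathbf\Delta_n$ and sends the Taylor, centering and self-normalization errors to $\bm a_n$.

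The real difference is the location step. The paper does not import a representation of $\hat{\bm\mu}-\bm\mu$ and convert it. Instead it linearizes the HR equation $n^{-1}\sum_i\mathcal U(R_i\bm U_i+R_i\mathbf B\bm U_i-\bm v)=\bm 0$ directly in $\hat{\mathbf\Sigma}^{-1/2}$-coordinates and inverts $\mathbf Q_n=n^{-1}\sum_iR_i^{-1}(\mathbf I_p-\bm U_i\bm U_i^\top)$. This does not remove your cross term. Since $n^{-1}\sum_i(\mathbf I_p-\bm U_i\bm U_i^\top)\mathbf B\bm U_i=\mathbf B\bar{\bm U}-n^{-1}\sum_i\bm U_i(\bm U_i^\top\mathbf B\bm U_i)$ and $\mathbf Q_n^{-1}\approx\zeta_1^{-1}\mathbf I_p$, the paper's $\bm c_{n,2}$ contains essentially $\zeta_1^{-1}\mathbf B\bar{\bm U}$, and its sup-norm rate is simply asserted there.

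The mechanism you propose for this term is not available, however. The banding $B_h$ acts on $\hat{\mathbf S}_\varepsilon^{(t)}$ before conjugation by the generally dense $\{\hat{\mathbf\Sigma}^{(t)}\}^{1/2}$. Assumption~\ref{ass:hr-rate} controls $\|\mathbf\Omega\|_{L_1}$, but not row sums of $\mathbf\Omega^{1/2}$ or of $\mathbf B=\hat{\mathbf\Sigma}^{-1/2}\mathbf\Sigma^{1/2}-\mathbf I_p$.

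The genuine gap is the step you flag as the main obstacle: your primary device provably cannot give the stated rates. Let $\bm W_n=n^{-1}\sum_iZ_i\bm U_i$. By Theorem~\ref{thm:linearization}, $\bm g_n^{\rm or}=\sqrt p\,\bm W_n$ up to an $\ell_2$ error $O_p(\sqrt p\log p/n)$, and $n\|\bm W_n\|_2^2\to1$ in probability. Take the rank-one matrix $\mathbf B=C\tau_n\bm e\bm e^\top$ with $\bm e=\bm W_n/\|\bm W_n\|_2$, which lies in your set $\{\|\mathbf B\|_{op}\le C\tau_n\}$. Since $(\sigma_L\sigma_U)^{-1}(L_i-m_L)=\sqrt pZ_i$,
\[
(\sqrt p\,\bm W_n)^\top\frac{\sqrt p}{n}\sum_{i=1}^nZ_i(\mathbf I_p-\bm U_i\bm U_i^\top)\mathbf B\bm U_i
=C\tau_n\,p\,\|\bm W_n\|_2\Bigl\{\|\bm W_n\|_2-\frac1n\sum_{i=1}^nZ_i(\bm e^\top\bm U_i)^3\Bigr\}.
\]
The cubic sum is of smaller order, so at this $\mathbf B$ we get $\frac n{\sqrt p}|(\bm g_n^{\rm or})^\top\bm d_n^{(\mathbf\Sigma)}|\asymp\sqrt p\,\tau_n$.

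For non-concentrated radial laws ($t_\nu$, scale mixtures) we have $\rho_{L,p}\asymp1$, and the proposition claims $O(\tau_n+\sqrt p/n+\tau_n^2/\sqrt p)$. The quantity $\sqrt p\,\tau_n$ is of strictly larger order and does not even vanish, since under the tuning in Assumption~\ref{ass:hr-rate}(v), $\sqrt p\,\tau_n\ge c\,p^{q/2}s_0(p)\ge c'>0$. Aligning a single row of $\mathbf B$ with $\bm W_n$ likewise gives $\sqrt n\|\bm d_n^{(\mathbf\Sigma)}\|_\infty\asymp\sqrt p\,\tau_n$, which defeats \eqref{eq:derived-hard-max}.

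For a fixed $\mathbf B$ in the ball, the displayed term has mean zero and is $O_p(\tau_n)$ after the factor $n/\sqrt p$. So the stated rates really require showing that the fitted $\mathbf B$ (and $\bm v$, inside $\bm r_n^{(1)}$) is nearly uncorrelated with $\bm W_n$. Sample splitting would change the estimator the statement is about. A leave-one-out argument would need a stability analysis of the HR/graphical-lasso iteration, which you do not sketch. In fairness, the paper's proof does not supply this either: it lists rates such as $\|\bm d_n^{(\mathbf\Sigma)}\|_2=O_p(\tau_n/\sqrt n)$ as though $\mathbf B$ and $\bm v$ were fixed.

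There is also a smaller slip in your treatment of $\bm r_n^{(1)}$. Your matrix-concentration bound is of order $p^{-1/2}\sqrt{p/n}=n^{-1/2}$. Combined with the prefactor $\sqrt p\,\rho_{L,p}$ and $\|\bm v\|_2=O_p(\sqrt{p/n})$, it gives $\|\bm r_n^{(1)}\|_2=O_p(\rho_{L,p}p/n)$, hence $\frac n{\sqrt p}\|\bm r_n^{(1)}\|_2^2=O_p(\rho_{L,p}^2p^{3/2}/n)$, not a $\sqrt p/n$ term. To do better you must use two facts. First, the $\mathbf I_p$-part of $\mathbf M_n$ carries the factor $L_i-m_L=\sigma_LZ_i$. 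Second, $\|n^{-1}\sum_i\bm U_i\bm U_i^\top-p^{-1}\mathbf I_p\|_{op}=O_p((pn)^{-1/2}+n^{-1})$, so the rank-one part of $\mathbf\Delta_n$ is $O_p(p^{-1}n^{-1/2}+p^{-1/2}n^{-1})$.
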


\subsection{Oracle null theory}
\label{subsec:oracle-theory}

\begin{theorem}
\label{thm:linearization}
Under Assumption~\ref{ass:oracle} and $\log p=o(\sqrt n)$.  Then $\gamma_j=0$ for all $j=1,\ldots,p$, and
\[
\hat\gamma_j^{\rm or}=\frac1n\sum_{i=1}^n\xi_{ij}+r_{n,j},
\qquad
\max_{1\le j\le p}|r_{n,j}|
=O_p\left(\frac{\log p+\sqrt{\log p}}n\right)
=o_p(n^{-1/2}).
\]
For each fixed $j$,
\[
\hat\gamma_j^{\rm or}=\frac1n\sum_{i=1}^n\xi_{ij}+O_p(n^{-1}).
\]
\end{theorem}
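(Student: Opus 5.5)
The plan is to reduce everything to normalized quantities and then control each coordinate uniformly. First, $\gamma_j=0$ is immediate. By Assumption~\ref{ass:oracle}(i), $L_i$ is independent of $U_{ij}$, so $\Cov(L_i,U_{ij})=0$, and $\Var(U_{ij})=p^{-1}>0$ by symmetry of the uniform law on $\Sphere^{p-1}$.

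Since correlations are invariant under affine maps of each variable, I would write $\hat\gamma_j^{\rm or}$ in terms of $Z_i$ and $V_{ij}=\sqrt p\,U_{ij}$, which satisfy $\E Z_i=\E V_{ij}=0$ and $\E Z_i^2=\E V_{ij}^2=1$. Then
\[
\hat\gamma_j^{\rm or}=\frac{A_j-\bar Z\bar V_j}{\hat s_Z\hat s_{V,j}},
\qquad
A_j=\frac1n\sum_i Z_iV_{ij}=\frac1n\sum_i\xi_{ij},
\]
where $\hat s_Z^2=n^{-1}\sum_i Z_i^2-\bar Z^2$ and $\hat s_{V,j}^2=n^{-1}\sum_iV_{ij}^2-\bar V_j^2$. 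The remainder therefore decomposes as
\[
r_{n,j}=A_j\bigl\{(\hat s_Z\hat s_{V,j})^{-1}-1\bigr\}-\frac{\bar Z\bar V_j}{\hat s_Z\hat s_{V,j}} .
\]

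The next step is to bound each ingredient uniformly in $j$.
\begin{itemize}
\item $V_{ij}$ is sub-Gaussian with $\psi_2$-norm bounded by an absolute constant, and $V_{ij}^2-1$ is sub-exponential. Bernstein's inequality plus a union bound gives $\max_j|\bar V_j|=O_p(\sqrt{\log p/n})$ and $\max_j|\hat s_{V,j}^2-1|=O_p(\sqrt{\log p/n}+\log p/n)$.
\item Since $\E Z^8$ is finite, $\bar Z=O_p(n^{-1/2})$ and $\hat s_Z^2-1=O_p(n^{-1/2})$; the latter term does not depend on $j$.
\item For $A_j$, I would condition on $(Z_i)$. Because $R_i\perpn\bm U_i$, the $V_{ij}$ are independent of the $Z_i$, so conditionally $A_j$ is a weighted sum of independent sub-Gaussian variables with weights $Z_i/n$. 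Hoeffding's inequality gives $\Pr(|A_j|>t\mid Z)\le 2\exp\{-ct^2n^2/\sum_iZ_i^2\}$. Since $n^{-1}\sum_iZ_i^2\to1$ in probability, a union bound yields $\max_j|A_j|=O_p(\sqrt{\log p/n})$.
\end{itemize}

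Combining these bounds:
\[
\max_j|r_{n,j}|\lesssim \max_j|A_j|\Bigl(|\hat s_Z^2-1|+\max_j|\hat s_{V,j}^2-1|+\bar Z^2+\max_j\bar V_j^2\Bigr)+|\bar Z|\max_j|\bar V_j| .
\]
Here I use that the denominators are bounded away from zero with probability tending to one. The dominant products are then
\[
\sqrt{\log p/n}\cdot\sqrt{\log p/n}=\log p/n,
\qquad
\sqrt{\log p/n}\cdot n^{-1/2}=\sqrt{\log p}/n,
\qquad
n^{-1/2}\sqrt{\log p/n}=\sqrt{\log p}/n,
\]
which together give the claimed rate $O_p\{(\log p+\sqrt{\log p})/n\}$. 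The condition $\log p=o(\sqrt n)$ makes this $o_p(n^{-1/2})$. For fixed $j$, every factor is $O_p(n^{-1/2})$ by the CLT and LLN, so the products are $O_p(n^{-1})$.

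I expect the main obstacle to be the uniform variance term $\max_j|\hat s_{V,j}^2-1|$ and its interplay with the heavier-tailed $Z_i$. The $Z$ factors enter only through $j$-free averages, so conditioning on $Z$, combined with the independence of radius and direction, avoids needing exponential tails for $Z$; the eighth-moment and $b_{n,p}$ conditions are not needed here. A further subtlety is that the coordinates of $\bm U_i$ are not independent, but the union bound only requires marginal sub-Gaussianity of each $V_{ij}$, which holds uniformly in $p$.
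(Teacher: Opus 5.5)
Your proof is correct and follows essentially the same route as the paper: the same numerator/denominator decomposition of $r_{n,j}$ (you write it after normalizing to $Z_i$ and $\sqrt p\,U_{ij}$), conditional-on-$Z$ sub-Gaussian concentration with a union bound for $\max_j|A_j|$ and $\max_j|\bar V_j|$, a Bernstein-type bound for the directional variances, and only fourth-moment control of the $Z$-averages. The differences are cosmetic: the paper invokes the beta law of $U_{ij}^2$ where you use sub-exponential Bernstein, and your extra $\bar Z^2$ and $\max_j\bar V_j^2$ terms in the combined bound are redundant, since they are already contained in $\hat s_Z^2-1$ and $\hat s_{V,j}^2-1$.
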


This result gives the first-order stochastic expansion of the coordinatewise radial--directional correlations.  Hence, after centering and self-normalization, the proposed problem reduces to the behavior of a high-dimensional vector of averages with nearly spherical coordinates.  The uniform remainder bound is used by both the quadratic statistic and the maximum statistic.

\begin{theorem}
\label{thm:sum-clt}
Under Assumption~\ref{ass:oracle}.  If
\[
p\to\infty,
\qquad
p=O(n^\kappa)
\quad\text{for some }\kappa\in(0,2),
\]
then
\[
\frac{T_{\rm sum}-p}{\sqrt{2p}}\xrightarrow{d} N(0,1).
\]
\end{theorem}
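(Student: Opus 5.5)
The plan is to bypass the uniform linearization of Theorem~\ref{thm:linearization}: its remainder, when squared and summed over $p$ coordinates, is too coarse once $p\gg n$. Instead I would write $T_{\rm sum}$ exactly as a self-normalized quadratic form in the log-radii, conditional on the directions. The two facts that drive everything are that $Z_i$ is independent of all the directions (Assumption~\ref{ass:oracle}(i)) and that $\|\bm U_i\|_2=1$ exactly.

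\emph{Setup.} Put $V_{ij}=U_{ij}-\bar U_j$ and $w_j=\{p\hat\sigma_{U,j}^2\}^{-1}$, and let $\hat s^2=n^{-1}\sum_i(Z_i-\bar Z)^2=\hat\sigma_L^2/\sigma_L^2$. Since $\sum_iV_{ij}=0$, the centering $\bar L$ drops out of the numerator of $\hat\gamma_j^{\rm or}$. This gives the exact identity
\[
T_{\rm sum}=\hat s^{-2}\,\frac1n\,\bm Z^\top\mathbf A\bm Z,
\qquad
A_{ik}=p\sum_{j=1}^p w_jV_{ij}V_{kj},
\]
where $\mathbf A$ is a function of $\bm U_1,\ldots,\bm U_n$ only. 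The trace is also exact: $n^{-1}\tr(\mathbf A)=p\sum_jw_j\hat\sigma_{U,j}^2=p$.

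\emph{Decomposition.} I write $A_{ii}=p(1+\epsilon_i)$ with $n^{-1}\sum_i\epsilon_i=0$, and use $\hat s^2=n^{-1}\sum_iZ_i^2-\bar Z^2$. Then
\[
\hat s^{2}(T_{\rm sum}-p)=\underbrace{\frac pn\sum_{i=1}^n Z_i^2\epsilon_i}_{D_n}+p\bar Z^2+\underbrace{\frac1n\sum_{i\ne k}Z_iZ_kA_{ik}}_{Q_n}.
\]
The term $p\bar Z^2=O_p(p/n)$ is $o_p(\sqrt p)$ exactly because $p=O(n^\kappa)$ with $\kappa<2$. Since $\hat s^2\to1$ in probability, it then suffices to show $D_n=o_p(\sqrt p)$ and $Q_n/\sqrt{2p}\xrightarrow{d}N(0,1)$.

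\emph{The diagonal term $D_n$.} Because $\sum_i\epsilon_i=0$, we have $D_n=(p/n)\sum_i(Z_i^2-1)\epsilon_i$. Conditionally on $\bm U$, this is a sum of independent mean-zero terms, so
\[
\E(D_n^2\mid\bm U)\lesssim \frac{p^2}{n^2}\sum_i\epsilon_i^2 .
\]
To bound $\epsilon_i$, I split
\[
\epsilon_i=\sum_j(w_j-1)V_{ij}^2+\Bigl(\sum_jV_{ij}^2-1\Bigr).
\]
The second piece equals $-2\bm U_i^\top\bar{\bm U}+\|\bar{\bm U}\|_2^2$, which is of order $n^{-1}+(np)^{-1/2}$. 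For the first piece, $w_j-1=O_p(n^{-1/2})$, and the fluctuations of $V_{ij}^2$ about $1/p$ average out, giving order $(np)^{-1/2}+n^{-1}$. These estimates use spherical moment bounds for $\bm U_i$ (it has sub-Gaussian coordinates of scale $p^{-1/2}$). Hence $\E(D_n^2\mid\bm U)=O_p(p/n^2+p^2/n^3)$, which is $o_p(p)$ whenever $p=o(n^2)$.

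\emph{The off-diagonal term $Q_n$ (main obstacle).} I would establish a conditional CLT for $Q_n$ given $\bm U$, using the martingale CLT for degenerate quadratic forms in the independent variables $Z_i$ (de Jong type). Two ingredients are needed.
\begin{enumerate}[label=(\alph*)]
\item The conditional variance $2n^{-2}\sum_{i\ne k}A_{ik}^2$ must be shown to equal $2p\{1+o_p(1)\}$. The leading part is $n^{-2}p^2\sum_{i\ne k}(\bm U_i^\top\bm U_k)^2\approx p^2\cdot p^{-1}$, and the corrections from $w_j$ and from the centering have to be controlled.
\item The Lyapunov and spectral conditions must hold: $\lambda_{\max}(\mathbf A)/\tr(\mathbf A\mathbf A^{\!\top})^{1/2}$ and $\max_i\sum_kA_{ik}^2/\sum_{i,k}A_{ik}^2$ must be $o_p(1)$. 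The spectral bound follows from $\lambda_{\max}(\mathbf A)\lesssim p\,\|\mathbf V\|_{op}^2\lesssim p(1+n/p)$ against $\|\mathbf A\|_F\asymp n\sqrt p$. The fourth-moment condition uses $\E Z_i^4<\infty$, and Assumption~\ref{ass:oracle}(ii) provides the slack.
\end{enumerate}
The delicate part is bounding $\E\sum_{i\ne k}(\bm U_i^\top\bm U_k)^4$ and the related cross terms across the whole range $p\ll n$ through $p\approx n^{2-\delta}$. I would compute these exactly from the spherical moments $\E U_{1}^4=3/\{p(p+2)\}$ and similar, rather than through crude bounds. Since the conditional limit $N(0,1)$ does not depend on $\bm U$, dominated convergence then yields the unconditional statement.
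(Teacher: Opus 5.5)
Your argument is correct, and it takes a genuinely different route from the paper.

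\textbf{How the paper argues.} For this theorem the paper never conditions on the directions. It replaces $\hat\sigma_{U,j}^{-2}$ by $p$ and discards the centering term $\bar Z\bar{\bm{U}}$ by a second-moment bound. It then writes what remains exactly as $p+(p\bar Z^2+H_n)/\hat v_L$, where $H_n=(2/n)\sum_{i<k}\bm{\xi}_i^\top\bm{\xi}_k$ with $\bm{\xi}_i=\sqrt p\,Z_i\bm{U}_i$ is an unweighted, uncentered U-statistic. It proves $H_n/\sqrt{2p}\xrightarrow{d}N(0,1)$ by an unconditional martingale CLT over the i.i.d.\ vectors $\bm{\xi}_i$. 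Most of the work then goes into Lemma~\ref{lem:weighted_remainder}. That lemma shows, through the degenerate kernels $H_{ik}$ and $G_{ik\ell}$, that the weight correction $n\sum_j\hat C_j^2(\hat\sigma_{U,j}^{-2}-p)$ is $o_p(\sqrt p)$ in absolute size.

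\textbf{What your route buys and costs.} Your exact representation $T_{\rm sum}=\hat s^{-2}n^{-1}\bm{Z}^\top\mathbf{A}\bm{Z}$, with $n^{-1}\tr(\mathbf{A})=p$, folds both the centering and the weights into a $\bm{U}$-measurable matrix. They then only need to be controlled in two places: the diagonal term $D_n$, and as relative perturbations of the conditional variance $2n^{-2}\sum_{i\ne k}A_{ik}^2$. That is lighter than the paper's remainder analysis. The price is twofold. You need an operator-norm bound $\|\mathbf{V}\|_{op}^2\lesssim 1+n/p$. You also need a CLT for quadratic forms applied conditionally on $\bm{U}$, for instance de Jong's fourth-moment criterion along subsequences on which its conditions hold almost surely. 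Finally, you lose the reduction to $H_n$ that the paper reuses in Lemma~\ref{lem:sum_to_Hn} for the sum--max independence theorem. From your route you would still have to show $Q_n-H_n=o_p(\sqrt p)$ there.

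\textbf{One correction in (b).} The spectral and Lindeberg quantities must be computed for the off-diagonal matrix $\mathbf{A}_0=\mathbf{A}-\diag(\mathbf{A})$, not for $\mathbf{A}$. Because $A_{ii}\approx p$, you have $\|\mathbf{A}\|_F^2\approx n^2p+np^2$, so $\|\mathbf{A}\|_F\asymp n\sqrt p$ is false once $p\gg n$. What you need, and what holds, is $\|\mathbf{A}_0\|_F^2=n^2p\{1+o_p(1)\}$ from (a), together with $\|\mathbf{A}_0\|_{op}\le\|\mathbf{A}\|_{op}+\max_iA_{ii}\lesssim n+p$. This gives $\|\mathbf{A}_0\|_{op}/\|\mathbf{A}_0\|_F\lesssim\sqrt p/n+p^{-1/2}\to0$. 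The maximal-row condition then follows from $\sum_k(\mathbf{A}_0)_{ik}^2\le\|\mathbf{A}_0\|_{op}^2$.
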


Theorem~\ref{thm:sum-clt} justifies the normal calibration of the sum statistic.  Since $T_{\rm sum}$ accumulates $p$ squared coordinate correlations, it is most sensitive when the radial--directional dependence is spread over many coordinates.  This is the dense regime in the sum--max terminology, as in high-dimensional tests that separate dense and sparse alternatives through sum and max components \citep{FengJiangLiuXiong2022}.

\begin{theorem}
\label{thm:max-gumbel}
Under Assumption~\ref{ass:oracle}.  If
\[
p\to\infty,
\qquad
\log p=o(n^{1/5}),
\]
then, for every $x\in\R$,
\[
\Prob(T_{\max}\le x)\to F_G(x)=\exp\{-\pi^{-1/2}e^{-x/2}\}.
\]
\end{theorem}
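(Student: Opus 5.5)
We reduce the problem to the maximum of a normalized sum by means of Theorem~\ref{thm:linearization}, and then prove a Gumbel limit for $\max_j(n^{-1/2}\sum_i\xi_{ij})^2$ using a Gaussian approximation combined with a weak-dependence extreme-value argument.

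\textbf{Step 1 (reduction).} Write $S_j=n^{-1/2}\sum_{i=1}^n\xi_{ij}$. Under $\log p=o(n^{1/5})\subset o(\sqrt n)$, Theorem~\ref{thm:linearization} gives $\sqrt n\max_j|\hat\gamma^{\rm or}_j-n^{-1}\sum_i\xi_{ij}|=O_p((\log p)/\sqrt n)$. Therefore
\[
\bigl|n\|\bm g_n^{\rm or}\|_\infty^2-\max_jS_j^2\bigr|\le 2\max_j|S_j|\,O_p\bigl(\tfrac{\log p}{\sqrt n}\bigr)+O_p\bigl(\tfrac{(\log p)^2}{n}\bigr).
\]
A crude union bound gives $\max_j|S_j|=O_p(\sqrt{\log p})$, so the perturbation is $O_p((\log p)^{3/2}/\sqrt n)=o_p(1)$. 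It then suffices to show that $\Prob(\max_jS_j^2-2\log p+\log\log p\le x)\to F_G(x)$.

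\textbf{Step 2 (covariance structure).} The vectors $\bm\xi_i=\sqrt p Z_i\bm U_i$ are i.i.d. with mean zero, because $Z_i\perpn\bm U_i$ and $\E\bm U_i=0$. Their covariance is $p\,\E Z_i^2\,\E\bm U_i\bm U_i^\top=\mathbf I_p$. So $\bm S=(S_1,\dots,S_p)$ has exactly identity covariance. Its coordinates are uncorrelated but not independent: they are linked through the common scalar $Z_i$ and through the sphere constraint.

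\textbf{Step 3 (Gaussian approximation).} We approximate $\max_j|S_j|$ by $\max_j|G_j|$ with $\bm G\sim N(0,\mathbf I_p)$, using the high-dimensional CLT for maxima over hyperrectangles of Chernozhukov--Chetverikov--Kato type. We verify its moment conditions by truncation. On the event $\max_i|Z_i|\le b_{n,p}$, which has probability tending to one by Assumption~\ref{ass:oracle}(ii), we have $|\xi_{ij}|\le b_{n,p}\sqrt p|U_{ij}|$. The variable $\sqrt pU_{ij}$ is sub-Gaussian uniformly in $p$, so $\max_{i,j}\sqrt p|U_{ij}|=O_p(\sqrt{\log(np)})$. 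On the truncated event the envelope is $B_n\lesssim b_{n,p}\sqrt{\log(np)}$. Truncation breaks mean-zero only negligibly, and the $8+\eta_Z$ moment of $Z$ controls the correction. The CCK error is of order $(B_n^2\log^5(pn)/n)^{1/6}$. The condition $b_{n,p}^2\{(\log p)^5+\log p\log n\}/n\to0$ is designed precisely to make this vanish. Together with $\log p=o(n^{1/5})$, this gives
\[
\sup_t\bigl|\Prob(\max_j|S_j|\le t)-\Prob(\max_j|G_j|\le t)\bigr|\to0.
\]
Because $\bm G$ has independent coordinates, the classical computation gives $\Prob(\max_jG_j^2-2\log p+\log\log p\le x)\to\exp\{-\pi^{-1/2}e^{-x/2}\}$, which is the stated $F_G$.

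\textbf{Main obstacle.} The delicate step is Step 3. The difficulty is making the Gaussian approximation error uniform at the thresholds $t_p=\sqrt{2\log p-\log\log p+x}$, while handling only polynomial moments of $Z_i$ and the non-independent spherical coordinates.

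If the CCK envelope conditions turn out too tight, the fallback is different. We would prove the Gumbel limit directly by Bonferroni inclusion--exclusion, in the style of Cai--Liu--Xia:
\[
\sum_{|A|=k}\Prob\Bigl(\bigcap_{j\in A}\{|S_j|>t_p\}\Bigr)\to\frac{(\pi^{-1/2}e^{-x/2})^k}{k!}.
\]
This would be established for each fixed $k$ via a $k$-dimensional moderate-deviation (Zaitsev-type) Gaussian approximation for $(S_j)_{j\in A}$. The $k$-dimensional approximation applies because the covariance of these coordinates is $\mathbf I_k$ and the truncated summands are bounded by $b_{n,p}\sqrt{\log(np)}$. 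The remaining task would be to check that the number of index tuples times the approximation error vanishes, which again uses the rate condition on $b_{n,p}$.
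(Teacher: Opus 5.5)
Your overall route matches the paper's. You reduce $T_{\max}$ to $\|\bm S_n\|_\infty^2-2\log p+\log\log p$ via Theorem~\ref{thm:linearization}; your Step~1 is Lemma~\ref{lem:max_reduction}, including the $(\log p)^3/n\to0$ bookkeeping. You then approximate $\|\bm S_n\|_\infty$ by the maximum of $p$ i.i.d.\ standard normals over hyperrectangles, and finish with the classical Gumbel computation. The problem is how you verify the Gaussian approximation in Step~3, which as written does not close under the stated hypotheses.

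You bound the envelope almost surely through $\max_{i,j}\sqrt p|U_{ij}|=O_p(\sqrt{\log(np)})$. That gives $B_n\asymp b_{n,p}\sqrt{\log(np)}$, so your error term needs $b_{n,p}^2\log(np)\log^5(pn)/n\to0$. This is not implied by Assumption~\ref{ass:oracle}(ii) together with $\log p=o(n^{1/5})$. Take the Gaussian radial law, where $b_{n,p}\asymp\sqrt{\log n}$, and set $\log p=n^{1/5}/\log n$. Then $b_{n,p}^2\{(\log p)^5+\log p\log n\}/n\to0$, but $b_{n,p}^2(\log p)^6/n\asymp n^{1/5}/(\log n)^5\to\infty$. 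So your claim that the assumption is ``designed precisely'' to kill this error is false for the envelope you chose. The rate you quote also mixes versions: the original CCK bound with a $\psi_1$ envelope is $(B_n^2\log^7(pn)/n)^{1/6}$, which costs two further logarithms.

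The fix is the paper's: do not truncate $\bm U_i$, and measure the envelope in Orlicz norm instead. Since $\sqrt p\,U_{ij}$ is sub-Gaussian uniformly in $p$ and $j$, condition on $Z_1,\dots,Z_n$ and set $\bm X_i=V_n^{-1/2}\sqrt p\,Z_i\bm U_i$ with $V_n=n^{-1}\sum_iZ_i^2$. These vectors are conditionally independent and centered, with average covariance exactly $\mathbf I_p$. On $\{\max_i|Z_i|\le b_{n,p},\ 1/2\le V_n\le 2\}$ they satisfy $\max_{i,j}\|X_{ij}\|_{\psi_1\mid Z}\le Cb_{n,p}$. The hyperrectangle bound the paper uses, $C\{B_n^2(\log p)^5/n\}^{c_0}$ (Koike) with $B_n=Cb_{n,p}$, then vanishes exactly under Assumption~\ref{ass:oracle}(ii). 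The factor $V_n^{-1/2}$ is removed by Gaussian anti-concentration, since $|V_n^{1/2}-1|\sqrt{\log p}=O_p(\sqrt{\log p/n})$.

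Conditioning on $Z$ also spares you the truncation bookkeeping. If you instead truncate $Z_i$ unconditionally, mean zero survives exactly because $Z_i\perpn\bm U_i$. The covariance, however, becomes $c_n\mathbf I_p$ with $c_n=\E\{Z_i^2\mathbf 1(|Z_i|\le b_{n,p})\}$, and you must check $|c_n-1|\log p\to0$. Your Bonferroni fallback is unnecessary here; the paper uses that machinery only for the sum--max independence theorem.
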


Theorem~\ref{thm:max-gumbel} gives the extreme-value calibration of the max statistic.  The centering $2\log p-\log\log p$ is the usual high-dimensional maximum correction, and the resulting test is designed to detect alternatives in which only a small number of directional coordinates are coupled with the radius.

\begin{theorem}
\label{thm:sum-max-independence}
Under Assumption~\ref{ass:oracle}.  If
\[
p\to\infty,
\qquad
p=O(n^\kappa)\quad\text{for some }\kappa\in(0,2),
\qquad
\log p=o(n^{1/5}),
\]
then
\[
\left(
\frac{T_{\rm sum}-p}{\sqrt{2p}},\ T_{\max}
\right)
\xrightarrow{d} (Z,G),
\]
where $Z\sim N(0,1)$, $G$ has distribution function $F_G$, and $Z$ and $G$ are independent.  Consequently, the oracle Cauchy p-value in \eqref{eq:cauchy-pvalue} satisfies
\[
P_{\rm cau}\xrightarrow{d} U(0,1).
\]
\end{theorem}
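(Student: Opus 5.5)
By Theorem~\ref{thm:linearization}, we may replace $\hat\gamma_j^{\rm or}$ by $\bar\xi_j=n^{-1}\sum_i\xi_{ij}$ in both statistics. Write $S_n=n\sum_j\bar\xi_j^2$ and $M_n=n\max_j\bar\xi_j^2$.

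\textbf{Step 1: reduction to $(S_n,M_n)$.} For the sum part, $T_{\rm sum}-S_n=2n\sum_j\bar\xi_j r_{n,j}+n\sum_j r_{n,j}^2$. I expect this difference to be $o_p(\sqrt p)$ after a second-moment computation, as already used for Theorem~\ref{thm:sum-clt}. For the max part, $|T_{\max}-(M_n-2\log p+\log\log p)|\le 2n\|\bar\xi\|_\infty\|r_n\|_\infty+n\|r_n\|_\infty^2$. Since $\sqrt n\|\bar\xi\|_\infty=O_p(\sqrt{\log p})$, this is $O_p(\sqrt{\log p}(\log p)/\sqrt n)=o_p(1)$ under $\log p=o(n^{1/5})$. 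By Slutsky, it suffices to prove joint convergence for the pair $\bigl((S_n-p)/\sqrt{2p},\ M_n-2\log p+\log\log p\bigr)$.

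\textbf{Step 2: product of probabilities.} I will show that for fixed $x,y$,
\[
\Prob\{(S_n-p)/\sqrt{2p}\le x,\ M_n-2\log p+\log\log p\le y\}\to\Phi(x)F_G(y).
\]
The standard route follows the Feng--Jiang--Liu--Xiong argument. Set $u_p=y+2\log p-\log\log p$ and $A_j=\{n\bar\xi_j^2>u_p\}$. Then $\{M_n>u_p\}=\bigcup_j A_j$. By Bonferroni inequalities truncated at an even/odd order $k$,
\[
\Prob\Bigl(\{S^*\le x\}\cap\bigcup_j A_j\Bigr)
\]
is sandwiched between alternating sums $\sum_{|J|=d}\Prob(\{S^*\le x\}\cap A_J)$, $d\le k$, where $S^*=(S_n-p)/\sqrt{2p}$. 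For a fixed set $J$ of $d$ coordinates, decompose $S_n=S_{n,J}+S_{n,J^c}$. First, $S_{n,J}/\sqrt{2p}=O_p(d\log p/\sqrt p)\to0$ on $A_J$-type events, since on the relevant range each $n\bar\xi_j^2=O(\log p)$. Second, I aim to show $\Prob(\{S_{n,J^c}^*\le x\}\cap A_J)\approx\Prob(S^*\le x)\Prob(A_J)$ uniformly in $J$ with error $o(p^{-d})$. Summing over $J$ then gives $\Prob(S^*\le x)\sum_J\Prob(A_J)$. The marginal Bonferroni sums converge to $1-F_G(y)$, which is already established in the proof of Theorem~\ref{thm:max-gumbel}.

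\textbf{Step 3: approximate independence of $S_{n,J^c}$ and $(\bar\xi_j)_{j\in J}$ (main obstacle).} The coordinates $U_{ij}$ are not independent; they are only uncorrelated with a spherical constraint. My first attempt is the Gaussian representation $\bm U_i=\bm G_i/\|\bm G_i\|$ with $\bm G_i\sim N(0,\mathbf I_p)$. Then $\xi_{ij}=Z_iG_{ij}\cdot\sqrt p/\|\bm G_i\|$, and $\sqrt p/\|\bm G_i\|=1+O_p(\sqrt{\log n/p})$ uniformly in $i$. The common factor $Z_i\sqrt p/\|\bm G_i\|$ couples the blocks, and its effect on $S_{n,J^c}$ enters only through $n^{-1}\sum_i Z_i^2 p\|\bm G_i\|^{-2}$. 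Because $\E Z^8$ is bounded, this is $1+O_p(n^{-1/2})$, and because $p=O(n^{\kappa})$ with $\kappa<2$, it perturbs $S_{n,J^c}$ by $o_p(\sqrt p)$.

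After this substitution, conditional on $(Z_i)_i$, the vectors $\{G_{ij}\}_{j\in J}$ and $\{G_{ij}\}_{j\in J^c}$ are exactly independent. Given $Z$, I will apply a conditional Gaussian approximation (Zaitsev/Lindeberg with the truncation $|Z_i|\le b_{n,p}$ from Assumption~\ref{ass:oracle}) to obtain the tail probabilities of $A_J$ with relative error $o(1)$ at level $p^{-d}$. The condition $b_{n,p}^2(\log p)^5/n\to0$ is exactly what such a moderate-deviation bound requires. Conditionally on $Z$, the two events then factor, and averaging over $Z$ gives the product, because $S_{n,J^c}$ given $Z$ is asymptotically $N(p,2p)$ for almost all $Z$ by Theorem~\ref{thm:sum-clt}'s martingale CLT argument.

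The delicate point is controlling the conditional moderate-deviation error uniformly over $\binom pd$ sets $J$ while replacing $\|\bm G_i\|$ normalization. I would handle it by conditioning on both $Z$ and $\|\bm G_i\|$; conditionally on $\|\bm G_i\|$, the direction $\bm U_i$ is still uniform, so I may alternatively use the exact fact that $(U_{ij})_{j\in J}$ has a density close to Gaussian for fixed $d$.

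\textbf{Step 4: conclusion for the Cauchy p-value.} Once joint convergence holds, $(P_{\rm sum},P_{\max})$ converges to a pair of independent $U(0,1)$ variables by the continuous mapping theorem. The quantity $\tan\{\pi(1/2-P)\}$ is then standard Cauchy, and the average of independent standard Cauchy variables is standard Cauchy. Hence $T_{\rm cau}$ is asymptotically standard Cauchy, and $P_{\rm cau}\xrightarrow{d}U(0,1)$ by continuity of $\arctan$.
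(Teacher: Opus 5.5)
\textbf{Main gap: Step 1 reduces the sum part to the wrong statistic.} Your reduction is false when $p\gtrsim n$, and the theorem allows $p=O(n^\kappa)$ with $\kappa$ up to $2$. Your $S_n=n\sum_j\bar\xi_j^2=np\|\bm W_n\|^2$ uses the population $\sigma_L$, so it keeps the diagonal term: $S_n=p(1+A_n)+H_n$, with $A_n=n^{-1}\sum_i(Z_i^2-1)$ and $H_n=\tfrac{2}{n}\sum_{i<k}\bm\xi_i^\top\bm\xi_k$. The term $pA_n$ has exact order $p/\sqrt n$, which is $o(\sqrt p)$ only if $p=o(n)$.

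In $T_{\rm sum}$ this term is cancelled by the empirical normalization $\hat\sigma_L^{-2}$. The exact identity $np\|\bm W_n\|^2/\hat v_L-p=(p\bar Z^2+H_n)/(1+A_n-\bar Z^2)$ (Lemma~\ref{lem:ideal_identity}) shows the common factor $\hat v_L$ absorbs it. In your notation, $r_{n,j}$ contains the piece $-\tfrac12A_n\bar\xi_j$ coming from $\hat\sigma_L^{-1}$. Hence $2n\sum_j\bar\xi_jr_{n,j}$ contains $-A_nS_n\approx -pA_n$. The bound $\max_j|r_{n,j}|=O_p(\log p/n)$ is fine coordinatewise, but no second-moment computation makes this aggregated term $o_p(\sqrt p)$.

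Consequently $(S_n-p)/\sqrt{2p}=H_n/\sqrt{2p}+\sqrt{p/2}\,A_n$ does not converge to $N(0,1)$ for $p\gtrsim n$, so the target of Step 2 is false there. Your argument only covers $p=o(n)$. The same slip recurs in Step 3:
\begin{itemize}
\item A factor $1+O_p(n^{-1/2})$ on a quantity of size $p$ perturbs it by $O_p(p/\sqrt n)$, not $o_p(\sqrt p)$; you appear to have used $\sqrt p/n\to0$ instead.
\item Conditionally on $(Z_i)$, $S_{n,J^c}$ is centered at $(p-d)(1+A_n)$, not at $p$.
\item The martingale CLT behind Theorem~\ref{thm:sum-clt} is for $H_n$, not for $S_n$.
\end{itemize}
The repair is to reduce $(T_{\rm sum}-p)/\sqrt{2p}$ to $H_n/\sqrt{2p}+o_p(1)$ (Lemma~\ref{lem:sum_to_Hn}) and run the independence argument with the degenerate $U$-statistic $H_n$ throughout.

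\textbf{Second gap: Step 3 does not establish the block decoupling.} With $\bm U_i=\bm G_i/\|\bm G_i\|$, the normalization involves both blocks. Conditioning on $\|\bm G_i\|$ still leaves $\bm U_{i,J}$ and $\bm U_{i,J^c}$ tied by $\|\bm U_{i,J}\|^2+\|\bm U_{i,J^c}\|^2=1$, so the events do not factor.

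The paper conditions instead on the block mass $B_i=\|\bm U_{i,J}\|^2\sim\mathrm{Beta}(d/2,(p-d)/2)$. It writes $\bm U_{i,J}=\sqrt{B_i}\,\bm V_i$ and $\bm U_{i,J^c}=\sqrt{1-B_i}\,\bm W_i$, with $B_i,\bm V_i,\bm W_i$ independent. Then the $J$-block sums and $H_{n,J^c}$ are exactly independent given $\sigma\{(Z_i,B_i)\}$. Three estimates at the $p^{-d}$ scale are still needed:
\begin{itemize}
\item $\Prob(|H_{n,J}|>C\log p)=o(p^{-d})$ for the in-block part;
\item a conditional CLT for $H_{n,J^c}/\sqrt{2p}$ holding in $L^1$ over the conditioning, obtained from Haeusler's martingale Berry--Esseen bound;
\item a conditional Cram\'er-type moderate deviation for the $J$-block with relative error $o(1)$.
\end{itemize}
Exchangeability of coordinates under the null makes uniformity in $J$ automatic.

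Your max-part reduction, the Bonferroni skeleton, and Step 4 match the paper's argument.
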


The preceding theorem is the key step for the adaptive procedure.  It shows that the large quadratic fluctuation and the extreme coordinate fluctuation separate asymptotically.  Therefore, the two p-values may be combined analytically, and the resulting Cauchy test inherits the dense sensitivity of the sum component and the sparse sensitivity of the max component.  This role parallels adaptive max--sum procedures for high-dimensional independence testing \citep{FengJiangLiuXiong2022,WangLiuFengMa2024}.

\subsection{HR plug-in null theory}
\label{subsec:plugin-theory}

\begin{theorem}
\label{thm:plugin-sum}
Under Assumptions~\ref{ass:oracle}--\ref{ass:hr-rate}.  If
\[
p\to\infty,
\qquad
p=O(n^\kappa)\quad\text{for some }\kappa\in(0,2),
\qquad
\log p=o(\sqrt n),
\qquad
\mathfrak A_{S,n}\to0,
\]
then
\[
\frac{\hat T_{\rm sum}-T_{\rm sum}}{\sqrt{2p}}=O_p(\mathfrak A_{S,n}).
\]
Consequently,
\[
\frac{\hat T_{\rm sum}-p}{\sqrt{2p}}\xrightarrow{d} N(0,1).
\]
\end{theorem}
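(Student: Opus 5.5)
The plan is to expand the difference of squared norms and then bound each piece with the controls supplied by Proposition~\ref{prop:yan-to-plugin}. Write $\hat{\bm g}_n=\bm g_n^{\rm or}+\bm d_n+\bm a_n$, where $\bm d_n=\tilde\kappa_p\bar{\bm U}+\kappa_p n^{-1/2}\bm C_n+\bm b_n^{\rm hard}$. Then
\[
\hat T_{\rm sum}-T_{\rm sum}
=2n(\bm g_n^{\rm or})^\top\bm d_n+n\|\bm d_n\|_2^2+2n(\bm g_n^{\rm or})^\top\bm a_n+2n\bm d_n^\top\bm a_n+n\|\bm a_n\|_2^2 .
\]
Dividing by $\sqrt{2p}$, the last three terms are already $O_p(\mathfrak A_{S,n})$ by the linearization remainder bound of Proposition~\ref{prop:yan-to-plugin}.

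For the $\bm b_n^{\rm hard}$ part of $\bm d_n$, the proposition controls $(n/\sqrt p)\{|(\bm g_n^{\rm or})^\top\bm b_n^{\rm hard}|+\|\bm b_n^{\rm hard}\|_2^2\}$ directly. These bounds give exactly the $(1+\rho_{L,p})$ terms of $\mathfrak A_{S,n}$. The cross terms between $\bm b_n^{\rm hard}$ and the other pieces of $\bm d_n$ are handled by Cauchy--Schwarz.

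The remaining work concerns the pieces $\tilde\kappa_p\bar{\bm U}$ and $\kappa_p n^{-1/2}\bm C_n$.

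First, $\|\bar{\bm U}\|_2^2$ has expectation $1/n$ and is $O_p(1/n)$. Hence $n\tilde\kappa_p^2\|\bar{\bm U}\|_2^2/\sqrt p=O_p(|\tilde\kappa_p|^2/\sqrt p)$, which matches a term of $\mathfrak A_{S,n}$.

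Second, consider the cross term $n\tilde\kappa_p(\bm g_n^{\rm or})^\top\bar{\bm U}$. Apply Theorem~\ref{thm:linearization} to replace $\hat\gamma_j^{\rm or}$ by $n^{-1}\sum_i\xi_{ij}$. The leading part is then $\tilde\kappa_p n^{-1}\sum_{i,k}\sqrt p\,Z_i\bm U_i^\top\bm U_k$. Its diagonal $i=k$ part is $\tilde\kappa_p\sqrt p\,\bar Z$, of order $|\tilde\kappa_p|\sqrt{p/n}$. Its off-diagonal part is a degenerate U-statistic with mean zero, since $\E Z_i=0$ and $\E\bm U_k=0$. Its variance is of order $n^{-2}\cdot n^2\cdot p\cdot\E(\bm U_i^\top\bm U_k)^2=1$, so it is $O_p(|\tilde\kappa_p|)$. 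After dividing by $\sqrt p$ these give $|\tilde\kappa_p|(n^{-1/2}+p^{-1/2})$. The linearization remainder contributes $n\|\bm r_n\|_\infty\|\bar{\bm U}\|_1/\sqrt p$. Since $\|\bar{\bm U}\|_1\le\sqrt{p/n}$, this is $\lesssim|\tilde\kappa_p|\log p/\sqrt n$.

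Third, the $\bm C_n$ term is handled coordinatewise using $\|\bm C_n\|_\infty=O_p(\mathfrak c_n)$:
\[
\frac{n}{\sqrt p}\,|\kappa_p|\,n^{-1/2}\,|(\bm g_n^{\rm or})^\top\bm C_n|\le |\kappa_p|\sqrt{n/p}\,\|\bm g_n^{\rm or}\|_1\|\bm C_n\|_\infty .
\]
With $\|\bm g_n^{\rm or}\|_1\lesssim\sqrt p\,\|\bm g_n^{\rm or}\|_2=O_p(p/\sqrt n)$, this crude route gives $|\kappa_p|\sqrt p\,\mathfrak c_n$. That is too large unless $\kappa_p=\zeta_1\tilde\kappa_p$ is used. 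Since $\zeta_1\asymp p^{-1/2}$ by Assumption~\ref{ass:hr-rate}(i), it becomes $|\tilde\kappa_p|\mathfrak c_n$, as required. Squared terms in $\bm C_n$ give $|\kappa_p|^2\|\bm C_n\|_2^2/\sqrt p\le|\tilde\kappa_p|^2\mathfrak c_n^2/\sqrt p$, which is also negligible.

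Collecting these bounds gives $(\hat T_{\rm sum}-T_{\rm sum})/\sqrt{2p}=O_p(\mathfrak A_{S,n})$. The normal limit then follows from Theorem~\ref{thm:sum-clt} and Slutsky's lemma, because $\mathfrak A_{S,n}\to0$.

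The main obstacle is the cross term with $\bar{\bm U}$. A naive Cauchy--Schwarz bound gives $n|\tilde\kappa_p|\|\bm g_n^{\rm or}\|_2\|\bar{\bm U}\|_2/\sqrt p\asymp|\tilde\kappa_p|$, which does not vanish. The fix is the U-statistic decomposition above, whose mean-zero off-diagonal part has variance of order one rather than order $p$. I would first verify this variance computation, using independence of $Z_i$ and $\bm U_i$ and $\E(\bm U_i^\top\bm U_k)^2=1/p$.
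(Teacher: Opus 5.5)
Your proposal is correct and follows essentially the same route as the paper. Both arguments expand $\hat T_{\rm sum}-T_{\rm sum}=2n(\bm g_n^{\rm or})^\top\bm\delta_n+n\|\bm\delta_n\|_2^2$, take the $\bm a_n$ and $\bm b_n^{\rm hard}$ bounds directly from Proposition~\ref{prop:yan-to-plugin}, and control the $\bm C_n$ term by $\|\bm g_n^{\rm or}\|_1\|\bm C_n\|_\infty$ using $|\kappa_p|=\zeta_1|\tilde\kappa_p|\lesssim|\tilde\kappa_p|p^{-1/2}$, before concluding with Theorem~\ref{thm:sum-clt} and Slutsky. The one place you go beyond the paper is the cross term $(n/\sqrt p)|(\bm g_n^{\rm or})^\top\bar{\bm U}|=O_p(n^{-1/2}+p^{-1/2}+\log p/\sqrt n)$: the paper only asserts it as an oracle bound, while your linearization plus degenerate U-statistic variance calculation actually justifies it.
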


Theorem~\ref{thm:plugin-sum} shows that estimating the unknown location and shape by the high-dimensional HR procedure does not affect the first-order null distribution of the sum statistic.  The rate $\mathfrak A_{S,n}$ displays the contributions of the HR location expansion, the spatial-sign graphical-lasso shape error, and the second-order Taylor remainder.

\begin{theorem}
\label{thm:plugin-max}
Under Assumptions~\ref{ass:oracle}--\ref{ass:hr-rate}.  If
\[
\log p=o(n^{1/5}),
\qquad
(\log p)^2=o(p),
\qquad
\mathfrak A_{M,n}\to0,
\]
then
\[
\hat T_{\max}-T_{\max}=O_p(\mathfrak A_{M,n}).
\]
Consequently, for every $x\in\R$,
\[
\Prob(\hat T_{\max}\le x)\to F_G(x)=\exp\{-\pi^{-1/2}e^{-x/2}\}.
\]
\end{theorem}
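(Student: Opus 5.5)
The plan is to reduce the feasible max statistic to the oracle one through the decomposition of Proposition~\ref{prop:yan-to-plugin}, and then apply Theorem~\ref{thm:max-gumbel} together with a Slutsky-type argument. Write $\hat{\bm g}_n=\bm g_n^{\rm or}+\bm d_n+\bm a_n$ with $\bm d_n=\tilde\kappa_p\bar{\bm U}+\kappa_p n^{-1/2}\bm C_n+\bm b_n^{\rm hard}$. Since
\[
\bigl|\,\|\hat{\bm g}_n\|_\infty^2-\|\bm g_n^{\rm or}\|_\infty^2\bigr|
\le 2\|\bm g_n^{\rm or}\|_\infty\|\hat{\bm g}_n-\bm g_n^{\rm or}\|_\infty+\|\hat{\bm g}_n-\bm g_n^{\rm or}\|_\infty^2,
\]
we get $|\hat T_{\max}-T_{\max}|\le 2\sqrt n\|\bm g_n^{\rm or}\|_\infty\cdot\sqrt n\|\bm\delta_n\|_\infty+n\|\bm\delta_n\|_\infty^2$, where $\bm\delta_n=\hat{\bm g}_n-\bm g_n^{\rm or}$. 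By Theorem~\ref{thm:max-gumbel}, $n\|\bm g_n^{\rm or}\|_\infty^2=2\log p+O_p(\log\log p)$, so $\sqrt n\|\bm g_n^{\rm or}\|_\infty=O_p(\sqrt{\log p})$. It therefore suffices to show that $\sqrt{\log p}\,\sqrt n\|\bm\delta_n\|_\infty=O_p(\mathfrak A_{M,n})$. The quadratic term is then $O_p(\mathfrak A_{M,n}^2/\log p)$, which is negligible.

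Next I will bound each piece of $\sqrt n\|\bm\delta_n\|_\infty$ separately.

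\textbf{The $\bar{\bm U}$ term.} Each $\sqrt n\bar U_j$ has variance $1/p$ and is a sum of bounded, sub-Gaussian coordinates of uniform sphere vectors (with $\|\sqrt p U_{ij}\|_{\psi_2}$ bounded). A union bound gives $\sqrt n\|\bar{\bm U}\|_\infty=O_p(\sqrt{\log p/p})$. Multiplied by $\sqrt{\log p}$, this contributes $|\tilde\kappa_p|\log p/\sqrt p$, the first term of $\mathfrak A_{M,n}$.

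\textbf{The $\bm C_n$ term.} I expect $\kappa_p=\zeta_1\tilde\kappa_p$ to be bounded. Indeed, $|a_p|\lesssim \sigma_L\E(|Z_i|/R_i)+p^{-1}\E R_i^{-1}$, and since $\sigma_U=p^{-1/2}$ and $\zeta_1\asymp p^{-1/2}$, Cauchy--Schwarz together with Assumption~\ref{ass:hr-rate}(i) bounds $|\kappa_p|$ by a constant plus $(1+\rho_{L,p})p^{-1}$-type terms. I will verify that these extra terms are absorbed. Then $\sqrt{\log p}\,|\kappa_p|\,\|\bm C_n\|_\infty=O_p(\sqrt{\log p}\,\mathfrak c_n)$.

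\textbf{The $\bm b_n^{\rm hard}$ term.} This follows directly from \eqref{eq:derived-hard-max}, giving $(1+\rho_{L,p})(\tau_n\log p+\log p/\sqrt n)$.

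\textbf{The remainder $\bm a_n$.} Proposition~\ref{prop:yan-to-plugin} gives $\sqrt{\log p}\,\sqrt n\|\bm a_n\|_\infty=O_p(\mathfrak A_{M,n})$.

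Collecting these bounds yields $\hat T_{\max}-T_{\max}=O_p(\mathfrak A_{M,n})$. Since $\mathfrak A_{M,n}\to0$, Theorem~\ref{thm:max-gumbel} applies under $\log p=o(n^{1/5})$. Because $F_G$ is continuous, Slutsky's lemma gives $\Prob(\hat T_{\max}\le x)\to F_G(x)$ for every $x$.

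The main obstacle is verifying that $\kappa_p$ is uniformly bounded, or at least that $|\kappa_p|\sqrt{\log p}\,\mathfrak c_n$ is dominated by the stated rate. The difficulty is that $\sigma_L$ may vanish, so the factor $(\sigma_L\sigma_U)^{-1}$ must be cancelled carefully against the $L_i-m_L=\sigma_L Z_i$ factor in $a_p$. The residual $p^{-1}\E(R_i^{-1})/(\sigma_L\sigma_U)$ piece behaves like $\rho_{L,p}p^{-1}$. If it is not automatically small, I would absorb it into the $(1+\rho_{L,p})$ terms; this is where the condition $(\log p)^2=o(p)$ is used. The same condition ensures that the $\bar{\bm U}$ contribution vanishes even when $|\tilde\kappa_p|\asymp1$. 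A secondary check is that the union bound for $\bar{\bm U}$ holds at level $\log p$, which is fine because the coordinates are bounded by $1$ and sub-Gaussian at scale $p^{-1/2}$.
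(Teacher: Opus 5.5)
Your overall argument is the paper's: the decomposition $\bm\delta_n=\tilde\kappa_p\bar{\bm U}+\kappa_p n^{-1/2}\bm C_n+\bm b_n^{\rm hard}+\bm a_n$ from Proposition~\ref{prop:yan-to-plugin}, the inequality for $\|\cdot\|_\infty^2$, the oracle bound $\sqrt n\|\bm g_n^{\rm or}\|_\infty=O_p(\sqrt{\log p})$, and Slutsky with Theorem~\ref{thm:max-gumbel}. Your treatment of the $\bar{\bm U}$, $\bm b_n^{\rm hard}$ and $\bm a_n$ pieces is correct.

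The step that does not go through as written is the $\bm C_n$ term. You aim at the second term $\sqrt{\log p}\,\mathfrak c_n$ of $\mathfrak A_{M,n}$, which forces you to show $\kappa_p=O(1)$. Your Cauchy--Schwarz bound on $a_p$ then leaves a piece of order $\rho_{L,p}/p$. Your proposed fix is to absorb $\sqrt{\log p}\,\rho_{L,p}p^{-1}\mathfrak c_n$ into $(1+\rho_{L,p})(\tau_n\log p+\log p/\sqrt n)$ ``using $(\log p)^2=o(p)$'', and this is not justified. That condition only compares $\log p$ with $p$. What you would actually need is $p^{-1}\mathfrak c_n\lesssim\tau_n\sqrt{\log p}+\sqrt{\log p/n}$, and $\mathfrak c_n$ contains the piece $n^{-1/4}\sqrt{\log(np)}$, which $(\log p)^2=o(p)$ does not control.

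No analysis of $a_p$ is needed. By definition $\kappa_p=\zeta_1\tilde\kappa_p$, and Assumption~\ref{ass:hr-rate}(i) gives $\zeta_1\asymp p^{-1/2}$. Hence $\sqrt{\log p}\,|\kappa_p|\,\|\bm C_n\|_\infty=O_p(|\tilde\kappa_p|\sqrt{\log p}\,\mathfrak c_n/\sqrt p)$. This is at most the \emph{first} term $|\tilde\kappa_p|\log p/\sqrt p$ of $\mathfrak A_{M,n}$ once $\mathfrak c_n\le\sqrt{\log p}$. That holds eventually, because $\sqrt{\log p}\,\mathfrak c_n\le\mathfrak A_{M,n}\to0$. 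This is exactly how the paper proceeds, via $|\kappa_p|\le C|\tilde\kappa_p|p^{-1/2}$.

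Put differently, the hypothesis $\mathfrak A_{M,n}\to0$ already forces $|\kappa_p|\log p\asymp|\tilde\kappa_p|\log p/\sqrt p\to0$. So the obstacle you flagged does not arise, and $(\log p)^2=o(p)$ is never used inside the argument. With this substitution your proof is complete.
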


Theorem~\ref{thm:plugin-max} provides the corresponding stability result for the maximum statistic.  The rate $\mathfrak A_{M,n}$ is coordinatewise and therefore contains logarithmic factors from taking maxima over $p$ directions.

\begin{corollary}
\label{cor:plugin-cauchy}
Under the joint conditions of Theorems~\ref{thm:plugin-sum} and \ref{thm:plugin-max},
\[
\left(
\frac{\hat T_{\rm sum}-p}{\sqrt{2p}},\ \hat T_{\max}
\right)
\xrightarrow{d} (Z,G),
\]
where $Z\sim N(0,1)$, $G$ has distribution function $F_G$, and $Z\perpn G$.  Hence the feasible Cauchy p-value $\hat P_{\rm cau}$ satisfies
\[
\hat P_{\rm cau}\xrightarrow{d} U(0,1).
\]
\end{corollary}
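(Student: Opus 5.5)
The corollary follows by combining the oracle joint limit of Theorem~\ref{thm:sum-max-independence} with the two plug-in perturbation bounds of Theorems~\ref{thm:plugin-sum} and~\ref{thm:plugin-max}, and then applying Slutsky's lemma and the continuous mapping theorem. First we check that the joint conditions contain what Theorem~\ref{thm:sum-max-independence} needs. They include $p=O(n^\kappa)$ with $\kappa\in(0,2)$ and $\log p=o(n^{1/5})$, the latter implying $\log p=o(\sqrt n)$. Assumption~\ref{ass:oracle} holds by hypothesis. Theorem~\ref{thm:sum-max-independence} therefore gives
\[
\left(\frac{T_{\rm sum}-p}{\sqrt{2p}},\,T_{\max}\right)\xrightarrow{d}(Z,G),\qquad Z\perpn G .
\]

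Next write
\[
\left(\frac{\hat T_{\rm sum}-p}{\sqrt{2p}},\,\hat T_{\max}\right)=\left(\frac{T_{\rm sum}-p}{\sqrt{2p}},\,T_{\max}\right)+\left(\frac{\hat T_{\rm sum}-T_{\rm sum}}{\sqrt{2p}},\,\hat T_{\max}-T_{\max}\right).
\]
By Theorem~\ref{thm:plugin-sum}, the first coordinate of the perturbation is $O_p(\mathfrak A_{S,n})=o_p(1)$, since $\mathfrak A_{S,n}\to0$. By Theorem~\ref{thm:plugin-max}, the second coordinate is $O_p(\mathfrak A_{M,n})=o_p(1)$. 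A vector whose coordinates are each $o_p(1)$ is $o_p(1)$ in $\R^2$. Slutsky's lemma in $\R^2$ then gives the stated joint convergence to $(Z,G)$ with $Z\perpn G$. The one point to confirm is that both plug-in bounds come from the same $\hat{\bm\mu},\hat{\mathbf\Sigma}$ and live on one probability space. They do, because both are statements about the same feasible statistics built from Algorithm~\ref{alg:hd-hr}. No separate decoupling argument is needed, because independence passes to the limit through the oracle pair.

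Finally, $\hat P_{\rm cau}$ is a function $h(\hat S,\hat T_{\max})$ of the pair, where $\hat S=(\hat T_{\rm sum}-p)/\sqrt{2p}$. Explicitly,
\[
h(s,t)=\tfrac12-\tfrac1\pi\arctan\Bigl[\tfrac12\tan\{\pi(\Phi(s)-\tfrac12)\}+\tfrac12\tan\{\pi(F_G(t)-\tfrac12)\}\Bigr].
\]
This map is continuous on $\R^2$, since $\Phi(s),F_G(t)\in(0,1)$ keep both tangents finite. The continuous mapping theorem gives $\hat P_{\rm cau}\xrightarrow{d}h(Z,G)$.

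It remains to identify the law of $h(Z,G)$. Since $Z\sim N(0,1)$ and $G\sim F_G$, both $\Phi(Z)$ and $F_G(G)$ are $U(0,1)$, and they are independent. Consequently $\tan\{\pi(\Phi(Z)-1/2)\}$ and $\tan\{\pi(F_G(G)-1/2)\}$ are independent standard Cauchy variables. Their average with weights $1/2,1/2$ is again standard Cauchy, by stability of the Cauchy law. Applying $\tfrac12-\pi^{-1}\arctan(\cdot)$ to a standard Cauchy variable yields $U(0,1)$. I expect no real obstacle here. The only care needed is bookkeeping: each hypothesis of the three invoked theorems must be implied by the joint conditions, in particular that $\log p=o(\sqrt n)$ is covered via $\log p=o(n^{1/5})$.
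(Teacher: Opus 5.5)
Your proposal is correct and matches the paper's proof. It combines the plug-in rates of Theorems~\ref{thm:plugin-sum} and~\ref{thm:plugin-max} with the oracle joint limit of Theorem~\ref{thm:sum-max-independence} via Slutsky's lemma, then reaches $\hat P_{\rm cau}$ by continuous mapping and Cauchy stability. The only cosmetic difference is that you compose everything into a single continuous map $h$, whereas the paper first records $(\hat P_{\rm sum},\hat P_{\max})\xrightarrow{d}(U_1,U_2)$ with $U_1\perpn U_2$.
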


Corollary~\ref{cor:plugin-cauchy} completes the null theory for the feasible test.  It permits the same analytic p-value formulas in Section~\ref{subsec:oracle-statistics} after replacing the oracle standardized observations by the HR-standardized observations.

\subsection{Power under radial--directional alternatives}
\label{subsec:alternatives}

We next study power under alternatives that preserve the uniform directional law but destroy radial--directional independence.  This distinction is important for the HR standardization: under a completely unrestricted non-elliptical alternative, the population HR location and shape need not equal the working pair $(\bm{\mu},\mathbf{\Sigma})$.  For the alternatives below, however, the direction remains uniform, so $(\bm{\mu},\mathbf{\Sigma})$ is still the population HR target even though the radius depends on the direction.

Let
\begin{equation}
\label{eq:alternative-baseline}
\bm{Y}_{0i}=R_{0i}\bm{U}_{0i},
\qquad
\bm{U}_{0i}\sim\Unif(\Sphere^{p-1}),
\qquad
R_{0i}\perpn \bm{U}_{0i},
\end{equation}
be an elliptical baseline vector after standardization.  For a nonempty active set $A\subset\{1,\ldots,p\}$, define
\[
s_A(\bm{u})=\frac1{\sqrt{|A|}}\sum_{j\in A}u_j,
\qquad \bm{u}\in\Sphere^{p-1}.
\]
For a signal strength parameter $\delta=\delta_n\ge0$, set
\begin{equation}
\label{eq:alternative-radius}
R_{1i}=R_{0i}\exp\{\delta_n s_A(\bm{U}_{0i})\},
\qquad
\bm{Y}_{1i}=R_{1i}\bm{U}_{0i}.
\end{equation}
The observed vector is
\begin{equation}
\label{eq:alternative-observed}
\bm{X}_i=\bm{\mu}+\mathbf{\Sigma}^{1/2}\bm{Y}_{1i}.
\end{equation}
When $\delta_n=0$, the baseline elliptical model is recovered.  When $\delta_n>0$, the conditional distribution of $R_{1i}$ depends on $\bm{U}_{0i}$, and the independence condition $R\perpn\bm{U}$ is violated.  The active set $A$ controls the sparsity pattern; the numerical study uses
\[
A_{\rm sp}=\{1\},
\qquad
A_{0.2}=\{1,\ldots,\lfloor0.2p\rfloor\},
\qquad
A_{\rm all}=\{1,\ldots,p\}.
\]

Write $s_n=|A|$, $L_{0i}=\log R_{0i}$, $L_{1i}=\log R_{1i}$, $m_{1,n}=\E L_{1i}$ and $\sigma_{1,n}^2=\Var(L_{1i})$.  Let
\[
\mathbf{A}_{1,n}
=
\E\{R_{1i}^{-1}(\mathbf{I}_p-\bm{U}_{0i}\bm{U}_{0i}^{\top})\},
\]
\[
\mathbf{M}_{1,n}
=
\E\left[ R_{1i}^{-1}
\left\{(L_{1i}-m_{1,n})\mathbf{I}_p+
[1-(L_{1i}-m_{1,n})]\bm{U}_{0i}\bm{U}_{0i}^{\top}\right\}\right],
\]
\[
\mathbf{H}_{1,n}=-(\sigma_{1,n}\sigma_U)^{-1}\mathbf{M}_{1,n}\mathbf{A}_{1,n}^{-1},
\qquad
\rho_{1,n}=\sigma_{1,n}^{-1},
\qquad
\sigma_U^2=p^{-1}.
\]
Define
\[
\mathfrak h_{2,n}=(\sigma_{1,n}\sigma_U)^{-1}\|\mathbf{M}_{1,n}\|_{op},
\qquad
\mathfrak h_{\infty,n}=(\sigma_{1,n}\sigma_U)^{-1}\|\mathbf{M}_{1,n}\|_{L_1},
\]
\[
\mathfrak H_{2,n}=\|\mathbf{H}_{1,n}\|_{op},
\qquad
\mathfrak H_{\infty,n}=\|\mathbf{H}_{1,n}\|_{L_1}.
\]
The HR perturbation rates under $H_1$ are
\[
\mathfrak R_{2,n}^{(1)}
=
\frac{\mathfrak H_{2,n}}{\sqrt n}
+\frac{\sqrt p\,\mathfrak h_{2,n}\mathfrak c_n}{\sqrt n}
+(1+\rho_{1,n})\left(\frac{\delta_n\tau_n}{\sqrt p}+\frac{\tau_n}{\sqrt n}\right)
+(1+\rho_{1,n})^2\left(\tau_n^2+\frac pn\right),
\]
\[
\mathfrak R_{\infty,n}^{(1)}
=
\mathfrak H_{\infty,n}\sqrt{\frac{\log p}{np}}
+\frac{\mathfrak h_{\infty,n}\mathfrak c_n}{\sqrt n}
+(1+\rho_{1,n})\left(\frac{\delta_n\tau_n}{\sqrt p}+\tau_n\sqrt{\frac{\log p}{n}}\right)
+(1+\rho_{1,n})^2\left(\tau_n^2+\frac{\log p}{n}\right).
\]

\begin{proposition}
\label{prop:h1-hr-behavior}
Under \eqref{eq:alternative-baseline}--\eqref{eq:alternative-observed}.  Suppose that the shape, precision-sparsity, spatial-sign and tuning conditions in Assumption~\ref{ass:hr-rate}\textnormal{(ii)}--\textnormal{(v)} hold, and that there are constants $c_1,C_1,K_1,C_\delta,\eta_1,C_{Z,1}>0$ such that, with
\[
Z_{1i}=\frac{L_{1i}-m_{1,n}}{\sigma_{1,n}},
\qquad
\rho_{1,n}=\sigma_{1,n}^{-1},
\qquad
0<\sigma_{1,n}^2<\infty,
\]
\[
\sup_p \E |Z_{1i}|^{8+\eta_1}\le C_{Z,1},
\qquad
\Pr\left(\max_{1\le i\le n}|Z_{1i}|\le b_{n,p}^{(1)}\right)\to1,
\]
\[
\frac{[b_{n,p}^{(1)}]^2\{(\log p)^5+\log p\log n\}}{n}\to0,
\qquad
\delta_n^2/p\le C_\delta,
\]
\[
c_1\le \E\{(R_{1i}/\sqrt p)^{-k}\}\le C_1,
\qquad
k\in\{-1,1,2,3,4\},
\qquad
\|\zeta_{1,n}^{-1}R_{1i}^{-1}\|_{\psi_2}\le K_1,
\]
where $\zeta_{1,n}=\E(R_{1i}^{-1})$.  Let $\hat{\bm{\mu}}$ and $\hat{\mathbf{\Sigma}}$ be computed by Algorithm~\ref{alg:hd-hr}.  Then
\[
\mathcal U\{\mathbf{\Sigma}^{-1/2}(\bm{X}_i-\bm{\mu})\}=\bm{U}_{0i},
\qquad
\E\bm{U}_{0i}=\bm{0},
\qquad
p\E(\bm{U}_{0i}\bm{U}_{0i}^{\top})=\mathbf{I}_p,
\]
so $(\bm{\mu},\mathbf{\Sigma})$ is the population HR location--shape target.  Moreover, with
\[
\bm{v}_1=\hat{\mathbf{\Sigma}}^{-1/2}(\hat{\bm{\mu}}-\bm{\mu}),
\qquad
\mathbf{B}_1=\hat{\mathbf{\Sigma}}^{-1/2}\mathbf{\Sigma}^{1/2}-\mathbf{I}_p,
\qquad
\bar{\bm{U}}_0=n^{-1}\sum_{i=1}^n\bm{U}_{0i},
\]
we have
\[
\bm{v}_1=\mathbf{A}_{1,n}^{-1}\bar{\bm{U}}_0+n^{-1/2}\bm{C}_{1,n},
\qquad
\|\bm{C}_{1,n}\|_\infty=O_p(\mathfrak c_n),
\qquad
\|\mathbf{B}_1\|_{op}=O_p(\tau_n).
\]
Let $\bm{g}_{n}^{(1),\rm or}$ be the oracle correlation vector computed from $(L_{1i},\bm{U}_{0i})$, and let $\hat{\bm{g}}_{n}^{(1)}$ be the HR plug-in vector computed from Algorithm~\ref{alg:hd-hr}.  Then
\[
\|\hat{\bm{g}}_{n}^{(1)}-\bm{g}_{n}^{(1),\rm or}\|_2
=O_p(\mathfrak R_{2,n}^{(1)}),
\qquad
\|\hat{\bm{g}}_{n}^{(1)}-\bm{g}_{n}^{(1),\rm or}\|_\infty
=O_p(\mathfrak R_{\infty,n}^{(1)}).
\]
\end{proposition}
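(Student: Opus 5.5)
The plan has three steps. First, identify the population HR target. Second, transfer the null-case HR expansion to the alternative, changing only the linear location term. Third, redo the perturbation analysis behind Proposition~\ref{prop:yan-to-plugin}, now without the oracle null structure.

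\textbf{Population target.} By \eqref{eq:alternative-observed},
\[
\mathbf{\Sigma}^{-1/2}(\bm{X}_i-\bm{\mu})=R_{1i}\bm{U}_{0i}
\quad\text{with } R_{1i}>0,
\qquad\text{so}\qquad
\mathcal U\{\mathbf{\Sigma}^{-1/2}(\bm{X}_i-\bm{\mu})\}=\bm{U}_{0i}\sim\Unif(\Sphere^{p-1}).
\]
This gives $\E\bm{U}_{0i}=\bm{0}$ and $p\E(\bm{U}_{0i}\bm{U}_{0i}^\top)=\mathbf{I}_p$. These are exactly the population HR location and shape equations. The HR equations involve only the direction, plus the radius through the denominator weight, so $(\bm{\mu},\mathbf{\Sigma})$ solves them even though $R_{1i}$ depends on $\bm{U}_{0i}$.

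\textbf{HR expansion.} I would rerun the high-dimensional HR analysis of \citet{YanFengZhang2025}, as used in Proposition~\ref{prop:yan-to-plugin}. The spatial median, the spatial-sign graphical lasso and the banded HR update depend on the data only through $\bm{U}_{0i}$ and through the radial moments $\E\{(R_{1i}/\sqrt p)^{-k}\}$ and $\|\zeta_{1,n}^{-1}R_{1i}^{-1}\|_{\psi_2}$. These are assumed with the same constants, and conditions (ii)--(v) are kept. Hence $\|\mathbf{B}_1\|_{op}=O_p(\tau_n)$ transfers verbatim: the shape step uses only the sign covariance of uniform directions, perturbed by the location error.

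The location Bahadur step changes in one place. Linearizing $n^{-1}\sum_i\mathcal U(\bm{\varepsilon}_i-\bm{v}_1)=\bm{0}$ gives
\[
\bar{\bm{U}}_0-\frac1n\sum_iR_{1i}^{-1}(\mathbf{I}_p-\bm{U}_{0i}\bm{U}_{0i}^\top)\bm{v}_1+\cdots=\bm{0}.
\]
Under the null the Jacobian factorizes to $\zeta_1(1-p^{-1})\mathbf{I}_p$. Here it does not factorize, since $R_{1i}$ depends on $\bm{U}_{0i}$, so its expectation is $\mathbf{A}_{1,n}$. I would show $\|\mathbf{A}_{1,n}-\zeta_{1,n}\mathbf{I}_p\|_{op}$ is small relative to $\zeta_{1,n}$. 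Use $\exp(-\delta_n s_A)$ with $s_A(\bm{U}_{0i})=O_p(p^{-1/2})$ and $\delta_n^2/p\le C_\delta$, so $\delta_n s_A=O_p(1)$, and apply the $\psi_2$ bound. This makes $\mathbf{A}_{1,n}$ invertible with $\|\mathbf{A}_{1,n}^{-1}\|_{op}\asymp\sqrt p$. The remainder $\bm{C}_{1,n}$ then gets the same sup-norm bound $\mathfrak c_n$ through the same matrix-Bernstein and maximal-inequality arguments.

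\textbf{Correlation perturbation.} I would Taylor-expand $\hat L_i=\log\|(\mathbf{I}_p+\mathbf{B}_1)(\bm{Y}_{1i}-\bm{v}_1)\|$ and $\hat{\bm{U}}_i$ to second order in $(\bm{v}_1,\mathbf{B}_1)$, exactly as in the derivation of $\mathbf{M}_n$. The first-order location effect on the numerator of $\hat{\bm{g}}$ is
\[
-(\sigma_{1,n}\sigma_U)^{-1}\bigl\{\E\mathbf{M}\bigr\}\bm{v}_1
=-(\sigma_{1,n}\sigma_U)^{-1}\mathbf{M}_{1,n}\bm{v}_1,
\qquad
\bm{v}_1=\mathbf{A}_{1,n}^{-1}\bar{\bm{U}}_0+n^{-1/2}\bm{C}_{1,n}.
\]
This equals $\mathbf{H}_{1,n}\bar{\bm{U}}_0$ plus a $\bm{C}_{1,n}$-term. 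Since $\|\bar{\bm{U}}_0\|_2=O_p(n^{-1/2})$ and $\|\bar{\bm{U}}_0\|_\infty=O_p(\sqrt{\log p/(np)})$, the two pieces give $\mathfrak H_{2,n}/\sqrt n$ and $\mathfrak H_{\infty,n}\sqrt{\log p/(np)}$. The $\bm{C}_{1,n}$ term, using $\|\bm{C}\|_2\le\sqrt p\|\bm{C}\|_\infty$, gives the $\mathfrak h$ terms. The fluctuation $(\mathbf{M}_n-\E\mathbf{M}_n)\bm{v}_1$ and the shape terms $\bm{d}^{(\mathbf{\Sigma})}$ are bounded as before and give $(1+\rho_{1,n})\tau_n/\sqrt n$ and its sup-norm analogue.

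The new term $\delta_n\tau_n/\sqrt p$ comes from the mean of the shape term, which no longer vanishes. It equals
\[
\E[(L_{1i}-m_{1,n})(\mathbf{I}_p-\bm{U}\bm{U}^\top)\mathbf{B}_1\bm{U}],
\]
which has size $\mathrm{Cov}(L_1,\bm{U})\sim\delta_n/p$ times $\|\mathbf{B}_1\|_{op}$, scaled by $\sigma_U^{-1}=\sqrt p$. The quadratic remainders give $(1+\rho_{1,n})^2(\tau_n^2+p/n)$ in $\ell_2$ and $(1+\rho_{1,n})^2(\tau_n^2+\log p/n)$ in $\ell_\infty$. The denominators $\hat\sigma_{\hat L},\hat\sigma_{\hat U,j}$ are handled by the ratio expansion of Theorem~\ref{thm:linearization}, using the $Z_{1i}$ moment and $b^{(1)}_{n,p}$ truncation conditions.

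\textbf{Main obstacle.} I expect the hardest step to be controlling the location Jacobian and its inverse uniformly when $R_{1i}$ is directionally dependent. This includes the sup-norm bound on $\mathbf{A}_{1,n}^{-1}$-weighted maxima, which the null proof obtained for free from isotropy. I would first try writing $\mathbf{A}_{1,n}=\zeta_{1,n}(\mathbf{I}_p+\mathbf{E}_n)$. Here $\mathbf{E}_n$ is low-rank, essentially supported on $\mathrm{span}(\bm{1}_A)$ by the symmetry of $s_A$ under permutations within $A$ and within $A^c$. Then a Neumann series or Sherman--Morrison formula gives explicit inverses and $L_1$-norm bounds.
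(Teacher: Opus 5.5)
Your proposal is correct and follows essentially the same route as the paper. The paper also identifies the spatial sign at $(\bm{\mu},\mathbf{\Sigma})$ as $\bm{U}_{0i}$, transfers $\|\mathbf{B}_1\|_{op}=O_p(\tau_n)$ because radii cancel in spatial signs, and redoes the location linearization with the non-factorizing Jacobian $\mathbf{A}_{1,n}$, whose eigenvalues are of order $p^{-1/2}$. It then Taylor-expands $\bm{\phi}_1$ to get the leading term $\mathbf{H}_{1,n}\bar{\bm{U}}_0$, the $\mathfrak h$-terms from $\bm{C}_{1,n}$, and the new $\delta_n\tau_n/\sqrt p$ term from the nonzero mean of the shape derivative. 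The only difference is minor: you center $\mathbf{A}_{1,n}$ at $\zeta_{1,n}\mathbf{I}_p$ and suggest a Sherman--Morrison inverse to control its $L_1$ norm. The paper instead bounds the eigenvalues of $\mathbf{A}_{1,n}$ through the factorization $\E(R_{0i}^{-1})\,\E[e^{-\delta_n s_A(\bm{U}_{0i})}\{1-(\bm{x}^{\top}\bm{U}_{0i})^2\}]$ and a spherical moment-generating bound, and then expands $\mathbf{Q}_{1,n}^{-1}$ around $\mathbf{A}_{1,n}^{-1}$ to second order.
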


\begin{theorem}
\label{thm:h1-consistency}
Under \eqref{eq:alternative-baseline}--\eqref{eq:alternative-observed}, Proposition~\ref{prop:h1-hr-behavior}, $p\to\infty$, $\log p=o(n^{1/5})$, and, with
\[
Z_{0i}=\frac{L_{0i}-\E L_{0i}}{\sigma_{0,n}},
\qquad
\sigma_{0,n}^2:=\Var(L_{0i}),
\qquad
0<\sigma_{0,n}^2<\infty,
\]
\[
\sup_p \E |Z_{0i}|^{8+\eta_0}\le C_{Z,0},
\qquad
\Pr\left(\max_{1\le i\le n}|Z_{0i}|\le b_{n,p}^{(0)}\right)\to1,
\]
\[
\frac{[b_{n,p}^{(0)}]^2\{(\log p)^5+\log p\log n\}}{n}\to0
\]
for some constants $\eta_0,C_{Z,0}>0$.  
Then
\[
\sigma_{1,n}^2=\sigma_{0,n}^2+\frac{\delta_n^2}{p},
\qquad
\gamma_j=
\Corr(L_{1i},U_{0ij})=
\frac{\delta_n\mathbf 1(j\in A)}{\sqrt{s_np}\,\sigma_{1,n}},
\]
\[
\|\bm{\gamma}\|_2^2=\frac{\delta_n^2}{p\sigma_{0,n}^2+\delta_n^2},
\qquad
\|\bm{\gamma}\|_\infty^2=
\frac{\delta_n^2}{s_n(p\sigma_{0,n}^2+\delta_n^2)}.
\]
Let
\[
\mathcal S_n=n\|\bm{\gamma}\|_2^2
=\frac{n\delta_n^2}{p\sigma_{0,n}^2+\delta_n^2},
\qquad
\mathcal M_n=n\|\bm{\gamma}\|_\infty^2
=\frac{n\delta_n^2}{s_n(p\sigma_{0,n}^2+\delta_n^2)}.
\]
For the oracle statistics, $P_{\rm sum}\xrightarrow{p}0$ if
\[
\frac{\mathcal S_n}{\sqrt p+p/\sqrt n}\to\infty,
\]
and $P_{\max}\xrightarrow{p}0$ if
\[
\frac{\mathcal M_n}{\log p}\to\infty.
\]
For the HR plug-in statistics, $\hat P_{\rm sum}\xrightarrow{p}0$ if
\[
\frac{\mathcal S_n}{\sqrt p+p/\sqrt n}\to\infty,
\qquad
\frac{n\{(\|\bm{\gamma}\|_2+\sqrt{p/n})\mathfrak R_{2,n}^{(1)}+[\mathfrak R_{2,n}^{(1)}]^2\}}{\mathcal S_n}\to0,
\]
and $\hat P_{\max}\xrightarrow{p}0$ if
\[
\frac{\mathcal M_n}{\log p}\to\infty,
\qquad
\frac{n\{(\|\bm{\gamma}\|_\infty+\sqrt{\log p/n})\mathfrak R_{\infty,n}^{(1)}+[\mathfrak R_{\infty,n}^{(1)}]^2\}}{\mathcal M_n}\to0.
\]
If at least one of the two plug-in p-values converges to zero and the other one is bounded away from one in probability, then $\hat P_{\rm cau}\xrightarrow{p}0$.
\end{theorem}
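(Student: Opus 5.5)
The proof splits into three parts: the population moment identities, the oracle power statements, and the plug-in power statements. The Cauchy claim is handled last.

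\emph{Population moments.} Write $L_{1i}=L_{0i}+\delta_n s_A(\bm{U}_{0i})$ with $L_{0i}\perpn\bm{U}_{0i}$. Since $\E U_{0ij}=0$ and $\E(U_{0ij}U_{0ik})=p^{-1}\mathbf 1(j=k)$, we get $\E s_A(\bm{U}_{0i})=0$ and $\Var\{s_A(\bm{U}_{0i})\}=s_n^{-1}\cdot s_n p^{-1}=p^{-1}$. The cross term $\Cov(L_{0i},s_A(\bm{U}_{0i}))$ vanishes by independence, which gives $\sigma_{1,n}^2=\sigma_{0,n}^2+\delta_n^2/p$. Similarly,
\[
\Cov(L_{1i},U_{0ij})=\delta_n s_n^{-1/2}\sum_{k\in A}\E(U_{0ik}U_{0ij})=\delta_n\mathbf 1(j\in A)/(\sqrt{s_n}\,p).
\]
Dividing by $\sigma_{1,n}p^{-1/2}$ yields the stated $\gamma_j$. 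Summing $\gamma_j^2$ over the $s_n$ active coordinates gives $\|\bm{\gamma}\|_2^2$ and $\|\bm{\gamma}\|_\infty^2$ after multiplying numerator and denominator by $p$.

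\emph{Oracle power.} Decompose $\bm{g}_n^{(1),\rm or}=\bm{\gamma}+\bm{e}_n$. I would redo the linearization of Theorem~\ref{thm:linearization} with centered quantities $\tilde\xi_{ij}=\sqrt p Z_{1i}U_{0ij}-\sqrt p\gamma_j$. The moment and truncation hypotheses on $Z_{1i}$ in Proposition~\ref{prop:h1-hr-behavior} (the analogues of Assumption~\ref{ass:oracle}(ii)) make the same ratio expansion go through, with the remainder now also of order $|\gamma_j|\cdot O_p(n^{-1/2})$.

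For the sum statistic,
\[
T_{\rm sum}=n\|\bm{\gamma}\|_2^2+2n\bm{\gamma}^\top\bm{e}_n+n\|\bm{e}_n\|_2^2 .
\]
I will bound $n\|\bm{e}_n\|_2^2=O_p(p)$ by a direct second-moment computation, with $\E\tilde\xi_{ij}^2$ uniformly bounded thanks to the eighth moments. I will bound $n|\bm{\gamma}^\top\bm{e}_n|=O_p(\sqrt n\|\bm{\gamma}\|_2)$ by the variance of a scalar average.

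Then
\[
\frac{T_{\rm sum}-p}{\sqrt{2p}}\ge \frac{\mathcal S_n-O_p(\sqrt{\mathcal S_n}+p)}{\sqrt{2p}} .
\]
This bound is too crude, because $n\|\bm{e}_n\|_2^2-p$ is only of order $\sqrt p+p/\sqrt n$ after centering. The centering step is the main obstacle. I would show
\[
\E\, n\|\bm{e}_n\|_2^2=p\{1+O(\|\bm{\gamma}\|_2+n^{-1/2})\}
\]
and $\Var(n\|\bm{e}_n\|_2^2)=O(p+p^2/n)$, the latter by a fourth-moment calculation on the $U$-statistic representation, using $\E(U_{0ij}^2U_{0ik}^2)=1/\{p(p+2)\}$. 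Together these give $T_{\rm sum}-p\ge \mathcal S_n-O_p(\sqrt p+p/\sqrt n+\sqrt{\mathcal S_n})$, so $P_{\rm sum}\to0$ under $\mathcal S_n/(\sqrt p+p/\sqrt n)\to\infty$.

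For the max statistic, a Bernstein-type inequality with the truncation level $b^{(1)}_{n,p}$ gives $\sqrt n\|\bm{e}_n\|_\infty=O_p(\sqrt{\log p})$. Hence
\[
n\|\bm{g}\|_\infty^2\ge(\sqrt{\mathcal M_n}-O_p(\sqrt{\log p}))^2,
\]
which exceeds $2\log p$ by a diverging amount, so $P_{\max}\to0$.

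\emph{Plug-in power.} Write $\hat{\bm{g}}_n^{(1)}=\bm{g}_n^{(1),\rm or}+\bm{w}_n$, with $\|\bm{w}_n\|_2=O_p(\mathfrak R_{2,n}^{(1)})$ and $\|\bm{w}_n\|_\infty=O_p(\mathfrak R_{\infty,n}^{(1)})$ by Proposition~\ref{prop:h1-hr-behavior}. Then
\[
|\hat T_{\rm sum}-T_{\rm sum}|\le 2n\|\bm{g}^{\rm or}\|_2\|\bm{w}_n\|_2+n\|\bm{w}_n\|_2^2 .
\]
Since $\|\bm{g}^{\rm or}\|_2=O_p(\|\bm{\gamma}\|_2+\sqrt{p/n})$ by the previous step, the added condition makes this difference $o_p(\mathcal S_n)$. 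The max statistic is handled analogously, using $\|\bm{g}^{\rm or}\|_\infty=O_p(\|\bm{\gamma}\|_\infty+\sqrt{\log p/n})$.

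\emph{Cauchy combination.} If $\hat P_{\rm sum}\to0$ (say), then $\tan\{\pi(1/2-\hat P_{\rm sum})\}\to+\infty$. If $\hat P_{\max}\le 1-\epsilon$ with probability tending to one, the other tangent is bounded below by $\tan(\pi(\epsilon-1/2))$. Hence $T_{\rm cau}\to\infty$ and $\hat P_{\rm cau}\to0$.
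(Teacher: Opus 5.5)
Your overall route is the same as the paper's: population moments via $\Var\{s_A(\bm{U}_{0i})\}=p^{-1}$, a linearization of the oracle correlations around $\bm{\gamma}$, mean and variance of the quadratic part for $T_{\rm sum}$, a uniform $\sqrt{\log p/n}$ deviation bound for $T_{\max}$, the perturbation inequality with $\|\bm{g}_n^{(1),\rm or}\|_2=O_p(\|\bm{\gamma}\|_2+\sqrt{p/n})$ for the plug-in statistics, and the tangent argument for the Cauchy combination.

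The one step that does not close as written is the centering of the sum statistic. You plan to show $\E\,n\|\bm{e}_n\|_2^2=p\{1+O(\|\bm{\gamma}\|_2+n^{-1/2})\}$. That leaves an additive error of order $p\|\bm{\gamma}\|_2=p\sqrt{\mathcal S_n/n}$, which you then silently drop when you write $T_{\rm sum}-p\ge\mathcal S_n-O_p(\sqrt p+p/\sqrt n+\sqrt{\mathcal S_n})$. This error is $o(\mathcal S_n)$ only when $\mathcal S_n\gg p^2/n$. The hypothesis $\mathcal S_n\gg\sqrt p+p/\sqrt n$ implies that only when $p=O(n^{2/3})$. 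For example, take $p=n$ and $\mathcal S_n=n^{0.6}$ (Gaussian baseline, $\delta_n^2\approx n^{-0.4}$). The hypothesis holds, but $p\|\bm{\gamma}\|_2=n^{0.8}\gg\mathcal S_n$, so your lower bound becomes vacuous.

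The missing observation is that the leading term has no such error. Write $\bm{V}_i=\sqrt p\,Z_{1i}\bm{U}_{0i}$, recalling that $m_{1,n}=m_{0,n}$ so $\E Z_{1i}^2=1$. Because $\|\bm{U}_{0i}\|_2=1$, you have $\|\bm{V}_i\|_2^2=pZ_{1i}^2$ exactly. Hence $\bm{Z}_n=\sqrt n(\bar{\bm{V}}-\bm{\gamma})$ satisfies $\E\|\bm{Z}_n\|_2^2=p-\|\bm{\gamma}\|_2^2$, with $\|\bm{\gamma}\|_2^2\le1$. Combined with your correct variance bound $O(p+p^2/n)$, this gives $\|\bm{Z}_n\|_2^2=p+O_p(\sqrt p+p/\sqrt n)$. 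The self-normalization remainder should then be controlled separately in probability, not folded into an expectation of $n\|\bm{e}_n\|_2^2$; this is exactly how the paper proceeds.

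A smaller slip: the centered summand should be $\sqrt p\,Z_{1i}U_{0ij}-\gamma_j$, not $\sqrt p\,Z_{1i}U_{0ij}-\sqrt p\,\gamma_j$, since $\E(\sqrt p\,Z_{1i}U_{0ij})=\gamma_j$.
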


The preceding result covers consistency when the signal is strong.  The next theorem describes a balanced local regime in which the active set is neither fixed nor dense, and the signal is calibrated so that the sum and max components both have nontrivial limiting power.  This construction follows the max--sum adaptive testing principle used in high-dimensional factor testing by \citet{MaFengWangBao2024}, but here the signal enters through radial--directional correlations.

\begin{theorem}
\label{thm:h1-sum-max-independence}
Under \eqref{eq:alternative-baseline}--\eqref{eq:alternative-observed}, Proposition~\ref{prop:h1-hr-behavior}, and the moment and truncation conditions for $Z_{0i}$ stated in Theorem~\ref{thm:h1-consistency}.  Suppose that
\[
p\to\infty,
\qquad
\log p=o(n^{1/5}),
\qquad
p^{3/2}/n\to0.
\]
Let $A=A_n$, $s_n=|A_n|$, and assume that, for some $\ell\in(0,\infty)$,
\[
s_n\to\infty,
\qquad
s_n/p\to0,
\qquad
\frac{s_n\log p}{\sqrt p}\to\ell.
\]
For $m\ge2$ and $t\in\mathbb R$, define
\[
u_m(t)=\{2\log m-\log\log m+t\}^{1/2}.
\]
Assume that, for a fixed $\eta\in\mathbb R$,
\[
\mu_n
:=
\frac{\sqrt n\,\delta_n}{\sqrt{s_np}\,\sigma_{1,n}}
=u_p(0)-u_{s_n}(\eta),
\]
\[
\sigma_{1,n}^2=\sigma_{0,n}^2+\frac{\delta_n^2}{p}.
\]
Equivalently,
\[
\delta_n^2
=
\frac{s_np\sigma_{0,n}^2\{u_p(0)-u_{s_n}(\eta)\}^2}
{n-s_n\{u_p(0)-u_{s_n}(\eta)\}^2},
\]
where the denominator is positive for all sufficiently large $n$.  Then
\[
\left(
\frac{T_{\rm sum}-p}{\sqrt{2p}},\ T_{\max}
\right)
\xrightarrow{d}(Z+\theta_S,G_\eta),
\qquad
Z\perpn G_\eta,
\]
where $Z\sim N(0,1)$,
\[
\theta_S=\frac{(3-2\sqrt2)\ell}{\sqrt2},
\]
and $G_\eta$ has distribution function
\[
F_\eta(y)
=
\Prob(G_\eta\le y)
=
\exp\{-\lambda_0(y)-\lambda_1(y;\eta)\},
\]
with
\[
\lambda_0(y)=\pi^{-1/2}e^{-y/2},
\qquad
\lambda_1(y;\eta)=\frac1{2\sqrt\pi}\exp\left(-\frac\eta2-\frac{y}{2\sqrt2}\right).
\]
If additionally
\[
\frac{n\{(\|\bm{\gamma}\|_2+\sqrt{p/n})\mathfrak R_{2,n}^{(1)}+[\mathfrak R_{2,n}^{(1)}]^2\}}{\sqrt p}\to0,
\]
\[
n\{(\|\bm{\gamma}\|_\infty+\sqrt{\log p/n})\mathfrak R_{\infty,n}^{(1)}+[\mathfrak R_{\infty,n}^{(1)}]^2\}\to0,
\]
then
\[
\left(
\frac{\hat T_{\rm sum}-p}{\sqrt{2p}},\ \hat T_{\max}
\right)
\xrightarrow{d}(Z+\theta_S,G_\eta),
\qquad
Z\perpn G_\eta.
\]
Consequently, for any fixed $\alpha\in(0,1)$,
\[
\Prob(P_{\rm sum}\le\alpha)\to
1-\Phi\{z_{1-\alpha}-\theta_S\}\in(\alpha,1),
\]
\[
\Prob(P_{\max}\le\alpha)\to
1-F_\eta\{F_G^{-1}(1-\alpha)\}\in(\alpha,1),
\]
and the same two limits hold for $\hat P_{\rm sum}$ and $\hat P_{\max}$ under the two plug-in remainder conditions above.
\end{theorem}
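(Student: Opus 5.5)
The proof has two layers: a local-alternative limit for the oracle pair, and a transfer to the plug-in pair.

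\textbf{Oracle pair.} Work with $(L_{1i},\bm{U}_{0i})$. The direction is still uniform on the sphere, and under \eqref{eq:alternative-radius} we have $L_{1i}=L_{0i}+\delta_n s_A(\bm{U}_{0i})$.

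First apply the argument of Theorem~\ref{thm:linearization}, with $Z_{1i}$ in place of $Z_i$, to get
\[
\hat\gamma_j^{\rm or}=\gamma_j+n^{-1}\sum_i\xi^{(1)}_{ij}+r_{n,j},
\qquad
\xi^{(1)}_{ij}=\sqrt p\,(Z_{1i}U_{0ij}-\E Z_{1i}U_{0ij}),
\qquad
\max_j|r_{n,j}|=o_p(n^{-1/2}).
\]
The moment and truncation conditions of Proposition~\ref{prop:h1-hr-behavior} supply what this argument needs. One extra term appears: the centering error from $\hat\sigma_L$ and $\hat\sigma_{U,j}$ now multiplies the nonzero $\gamma_j$. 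This term is of size $|\gamma_j|\,O_p(n^{-1/2})$ and $\gamma_j=O(\mu_n/\sqrt n)$, so it is negligible for the maximum. For the sum its aggregate effect is $O_p(\sqrt{s_n}\,\mu_n/\sqrt n)$, which is small relative to $\sqrt p$ after division.

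Next compute the signal. On $A$ we have $n\gamma_j^2=\mu_n^2$, hence
\[
n\|\bm{\gamma}\|_2^2=s_n\mu_n^2 .
\]
Expanding $u_p(0)-u_{s_n}(\eta)$ with $s_n=\ell\sqrt p/\log p\,(1+o(1))$ gives $\log s_n=\tfrac12\log p(1+o(1))$. Therefore
\[
\mu_n^2\approx\bigl(\sqrt{2\log p}-\sqrt{\log p}\bigr)^2=(3-2\sqrt2)\log p,
\qquad
\frac{s_n\mu_n^2}{\sqrt{2p}}\to\frac{(3-2\sqrt2)\ell}{\sqrt2}=\theta_S .
\]
This step needs care with the $\log\log$ terms, which contribute only $o(1)$.

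The sum statistic then decomposes as
\[
T_{\rm sum}-p=\Bigl\{n\|\bar{\bm\xi}\|_2^2-p\Bigr\}+2n\bm{\gamma}^\top\bar{\bm\xi}+n\|\bm{\gamma}\|_2^2 .
\]
The cross term has variance $O\bigl(n\|\bm{\gamma}\|_2^2\bigr)=O(s_n\log p)=O(\sqrt p)$, so it is $o_p(\sqrt p)$. The noise covariance differs from the null one only by $O(\delta_n^2/p)$ perturbations on the $A$-block. So the null CLT of Theorem~\ref{thm:sum-clt} (martingale CLT for the degenerate U-statistic) applies with small modifications. The condition $p^{3/2}/n\to0$ controls the diagonal and remainder terms.

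For the maximum, split $\max_j$ into $\max_{j\notin A}$ and $\max_{j\in A}$.
\begin{enumerate}
\item Off $A$ there are $p-s_n\sim p$ null-like coordinates, which give the factor $\exp\{-\lambda_0(y)\}$ by the Gaussian-approximation and Bonferroni-type argument of Theorem~\ref{thm:max-gumbel}.
\item On $A$ the coordinates are approximately $N(\mu_n,1)$. The event $\{(\mu_n+N)^2\le 2\log p-\log\log p+y\}$ reduces to $N\le u_p(y)-\mu_n$, and
\[
u_p(y)-\mu_n=u_{s_n}(\eta)+\frac{y}{2\sqrt{2\log p}}+o(1/\sqrt{\log p}) .
\]
The local scaling is $\sqrt{2\log s_n}\approx\sqrt{\log p}$, which is what produces the factor $1/\sqrt2$. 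Thus the $s_n$ maxima contribute $\exp\{-\lambda_1(y;\eta)\}$.
\item The lower tail at $-\mu_n-N$ is negligible.
\item The two blocks are asymptotically independent, since their cross-correlations are $O(\delta_n^2/p\cdot\text{small})$.
\end{enumerate}

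\textbf{Joint independence of sum and max.} Adapt the proof of Theorem~\ref{thm:sum-max-independence}. Show that for the exceedance events $B_j$,
\[
\Prob\Bigl(\tfrac{T_{\rm sum}-p}{\sqrt{2p}}\le x,\ \bigcup_{j\in S}B_j\Bigr)
\approx
\Prob\Bigl(\tfrac{T_{\rm sum}-p}{\sqrt{2p}}\le x\Bigr)\,\Prob\Bigl(\bigcup_{j\in S}B_j\Bigr)
\]
for small sets $S$, by removing coordinates in $S$ from the sum, which changes it by $O_p(|S|\log p/\sqrt p)$. When $S\subset A$ this removal also takes out signal of size $|S|\mu_n^2/\sqrt p\to0$ for fixed $|S|$, so the null argument goes through, combined with inclusion--exclusion.

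\textbf{Plug-in transfer.} Write $\hat T_{\rm sum}-T_{\rm sum}=2n(\bm{g}^{(1),\rm or})^\top\bm e+n\|\bm e\|_2^2$ with $\bm e=\hat{\bm g}_n^{(1)}-\bm g_n^{(1),\rm or}$. Using $\|\bm g^{(1),\rm or}\|_2=O_p(\|\bm\gamma\|_2+\sqrt{p/n})$ and Proposition~\ref{prop:h1-hr-behavior}, the stated condition makes this difference $o_p(\sqrt p)$. Similarly,
\[
|\hat T_{\max}-T_{\max}|\le 2n\|\bm g^{(1),\rm or}\|_\infty\|\bm e\|_\infty+n\|\bm e\|_\infty^2=o_p(1).
\]
Slutsky then gives the plug-in joint limit.

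\textbf{Power statements.} They follow from the continuity of the limits. Power exceeds $\alpha$ because $\theta_S>0$ and $\lambda_1>0$, and stays below $1$ because both limit laws are nondegenerate.

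I expect the main obstacle to be the joint independence under the alternative. There the signal block and the noise interact through the $O(\delta_n^2/p)$ perturbations of the covariance, and the Gaussian approximation must remain valid uniformly while the mean shift $\mu_n\sim\sqrt{\log p}$ is present.
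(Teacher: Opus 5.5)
Your signal calculation ($s_n\mu_n^2/\sqrt{2p}\to\theta_S$) matches the paper. So do the split of the maximum into $A_n^c$ and $A_n$, with $2u_{s_n}(\eta)\{u_p(y)-u_p(0)\}\to y/\sqrt2$ producing $\lambda_1(y;\eta)$, the negligible lower tail, and the plug-in transfer.

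The gap is the step you flag yourself, joint independence under $H_1$, and your proposed fix does not work as stated. The null argument rests on the block representation $\bm U_{i,J}=\sqrt{B_i}\,\bm V_i$, $\bm U_{i,J^c}=\sqrt{1-B_i}\,\bm W_i$, with the radial weight measurable with respect to $\mathcal G_n=\sigma\{(Z_i,B_i)\}$. Given $\mathcal G_n$, the $J$-block exceedance event is then independent of $H_{n,J^c}$. This yields a factorization with relative error $o(1)$ on events of probability $\asymp p^{-d}$.

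Under the alternative the radial weight is $Z_{1i}=(\sigma_{0,n}/\sigma_{1,n})Z_{0i}+(\delta_n/\sigma_{1,n})S_A(\bm U_{0i})$. Here $S_A(\bm U_{0i})$ contains the coordinates in $A_n\setminus J$, so it is a function of the $J^c$ direction. Deleting the coordinates in $S$ from the sum therefore does not decouple the two pieces, and the null argument does not simply ``go through''. In addition, an $O_p(|S|\log p/\sqrt p)$ bound on the deleted part is too weak at this scale. You need a tail bound of order $o(p^{-|S|})$, as in Lemma~\ref{lem:Jn_part_negligible}.

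The same dependence between $Z_{1i}$ and $\bm U_{0i}$ means the conditioning-on-$Z$ steps you import from Theorems~\ref{thm:linearization} and \ref{thm:max-gumbel} do not apply as written. It also leaves you redoing the martingale CLT with a perturbed covariance.

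The missing idea is a coupling with null-type vectors. This, not control of diagonal terms, is what $p^{3/2}/n\to0$ is for.

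Set $\bm\xi_i=\sqrt p\,Z_{0i}\bm U_{0i}$ and $\bm S_n^0=n^{-1/2}\sum_i\bm\xi_i$. Write $n^{-1/2}\sum_i(\sqrt p\,Z_{1i}\bm U_{0i}-\bm\gamma)=\bm S_n^0+\bm\Delta_n$. The summands of $\bm\Delta_n$ are centred, with $\E\|\sqrt p\,Z_{1i}\bm U_{0i}-\bm\gamma-\bm\xi_i\|_2^2\le C\,s_np\mu_n^2/n\asymp p^{3/2}/n$. Hence $\|\bm\Delta_n\|_2=o_p(1)$ and $\sqrt{\log p}\,\|\bm\Delta_n\|_\infty=o_p(1)$.

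Up to $o_p(1)$ you then get two reductions, with $H_n^0=(2/n)\sum_{i<k}\bm\xi_i^\top\bm\xi_k$ and deterministic shift $a_{nj}=\mu_n\mathbf 1(j\in A_n)$:
\begin{itemize}
\item $(T_{\rm sum}-p)/\sqrt{2p}=H_n^0/\sqrt{2p}+\theta_S$;
\item $T_{\max}=\|\bm S_n^0+\bm a_n\|_\infty^2-2\log p+\log\log p$.
\end{itemize}

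The noise now has exactly the null structure, since $Z_{0i}\perpn\bm U_{0i}$. So the Beta-block conditional independence given $(Z_{0i},B_i)$, the conditional CLT for $Q_{n,J}$, and the conditional moderate deviation apply verbatim. Only two things change. Coordinates in $J_1\subset A_n$ have tail $\bar\Phi(u_p(y)-\mu_n)$, and the Bonferroni sums run over pairs $J_0\subset A_n^c$, $J_1\subset A_n$. This gives $\Phi(x)\exp\{-\lambda_0(y)-\lambda_1(y;\eta)\}$ directly.
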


\section{Simulation Studies}
\label{sec:simulation}

This section evaluates the finite-sample performance of the feasible HR plug-in tests at nominal level $\alpha=0.05$.  Throughout the simulations, $n=200$, the reported entries are rejection percentages based on 500 Monte Carlo replications, and the rejection rules are $\hat P_{\rm sum}\le0.05$, $\hat P_{\max}\le0.05$ and $\hat P_{\rm cau}\le0.05$.  Tables use the abbreviations S, M and C for the HR-Sum, HR-Max and HR-Cauchy tests.  The competing high-dimensional elliptical-model test of \citet{WangLopes2026} is denoted by W or WL according to table width.  The simulation design mirrors the theory in Section~\ref{subsec:alternatives}: the sum component targets dense radial--directional departures, the max component targets sparse departures, and the Cauchy p-value is used when the active-set size is unknown \citep{FengJiangLiuXiong2022,WangLiuFengMa2024}.

\subsection{Empirical size}
\label{subsec:simulation-size}

The size experiments are generated from the elliptical null model in \eqref{eq:elliptical-null}.  We set $\bm{\mu}=\bm 0$ and generate
\[
\bm{X}_i=\mathbf{\Sigma}^{1/2}\bm{Y}_{0i},
\qquad
\bm{Y}_{0i}=R_{0i}\bm{U}_{0i},
\qquad
\bm{U}_{0i}\sim\Unif(\Sphere^{p-1}),
\qquad
R_{0i}\perpn\bm{U}_{0i}.
\]
Five radial laws are considered.  For the Gaussian baseline, $\bm{Y}_{0i}=\bm{Z}_i$ with $\bm{Z}_i\sim N_p(\bm 0,\mathbf{I}_p)$, so $R_{0i}=\|\bm{Z}_i\|_2$.  For the $t_{10}$ baseline,
\[
\bm{Y}_{0i}=\frac{\bm{Z}_i}{\sqrt{G_i/10}},
\qquad
G_i\sim\chi^2_{10},
\qquad
G_i\perpn\bm{Z}_i.
\]
For the mixture-normal baseline, $\bm{Y}_{0i}=S_i\bm{Z}_i$, where $\Prob(S_i=1)=0.9$ and $\Prob(S_i=3)=0.1$.  Two additional light-tailed elliptical laws are generated directly from $R_{0i}\bm{U}_{0i}$.  For the Kotz-type power-exponential law,
\[
R_{0i}^{\mathrm{raw}}=(2G_i)^{1/(2\beta)},
\qquad
G_i\sim\Gamma\left(\frac{p}{2\beta},1\right),
\qquad
\beta=2,
\]
and $R_{0i}^{\mathrm{raw}}$ is multiplied by a deterministic constant so that $\E R_{0i}^2=p$.  For the bounded-radial law,
\[
R_{0i}^2=2pB_i,
\qquad
B_i\sim {\rm Beta}(p/2,p/2),
\]
which gives $\E R_{0i}^2=p$.  These five baselines include both concentrated log-radius laws, such as Gaussian, Kotz-type and bounded-radial laws, and non-concentrated radial laws, such as $t_{10}$ and mixture-normal laws.  Hence the size study is aligned with Assumption~\ref{ass:oracle}\textnormal{(ii)}, where $\sigma_L^2=\Var(\log R_i)$ is allowed to depend on $p$.

The outer shape matrix $\mathbf{\Sigma}$ is chosen from three structures.  The identity design is $\mathbf{\Sigma}^{\mathrm{I}}=\mathbf{I}_p$.  The autoregressive design is $\mathbf{\Sigma}^{\mathrm{AR}}_{jk}=0.3^{|j-k|}$.  The sparse-precision design first constructs a sparse positive-definite matrix $\mathbf{\Omega}^{\mathrm{SP}}$ and then sets $\mathbf{\Sigma}^{\mathrm{SP}}=(\mathbf{\Omega}^{\mathrm{SP}})^{-1}$.  Each shape matrix is normalized by $p\mathbf{\Sigma}/\tr(\mathbf{\Sigma})$ so that $\tr(\mathbf{\Sigma})=p$, matching the normalization used in \eqref{eq:elliptical-null} and in the HR algorithm.

Tables~\ref{tab:size-main-low} and \ref{tab:size-main-high} report empirical sizes in percentages.  Table~\ref{tab:size-main-low} contains all five radial laws for $p=50,100,200$.  Table~\ref{tab:size-main-high} focuses on the largest dimension $p=400$ and reports the three light-tailed nulls.  The displayed entries use the analytic null calibration and do not apply the finite-sample bootstrap correction in Section~\ref{subsec:finite-sample-calibration}.  The complete $p=400$ size results for the two non-concentrated radial laws are given in Appendix~\ref{app:simulation-tables}.

\begin{table}[H]
\centering
\caption{Empirical sizes (\%) of HR-Sum, HR-Max, HR-Cauchy and WL at nominal level $\alpha=0.05$ with $n=200$ and $p\le 200$. The entries use the analytic null calibration without bootstrap correction.}
\label{tab:size-main-low}
\renewcommand{\arraystretch}{1.05}
\setlength{\tabcolsep}{3.2pt}
\resizebox{0.65\textwidth}{!}{%
\begin{tabular}{lccccccccc}
\toprule
 & \multicolumn{3}{c}{$p=50$} & \multicolumn{3}{c}{$p=100$} & \multicolumn{3}{c}{$p=200$} \\
\cmidrule(lr){2-4}\cmidrule(lr){5-7}\cmidrule(lr){8-10}
Test & $\mathbf{I}_p$ & AR & SP & $\mathbf{I}_p$ & AR & SP & $\mathbf{I}_p$ & AR & SP \\
\midrule
\multicolumn{10}{c}{Gaussian} \\
HR-Sum & 4.4 & 5.2 & 4.8 & 5.4 & 3.0 & 4.2 & 5.2 & 4.4 & 5.8 \\
HR-Max & 3.6 & 4.0 & 4.4 & 4.0 & 4.6 & 3.8 & 4.2 & 6.4 & 6.0 \\
HR-Cauchy & 4.8 & 5.8 & 5.0 & 4.6 & 4.8 & 5.4 & 5.6 & 5.8 & 5.8 \\
WL & 5.8 & 5.4 & 4.6 & 5.4 & 4.4 & 5.2 & 5.8 & 5.8 & 6.6 \\
\addlinespace[1pt]
\multicolumn{10}{c}{Kotz $\beta=2$} \\
HR-Sum & 5.8 & 5.8 & 5.8 & 5.4 & 5.8 & 5.2 & 6.2 & 5.2 & 5.0 \\
HR-Max & 4.4 & 4.6 & 4.0 & 4.4 & 4.8 & 3.6 & 4.6 & 5.0 & 3.4 \\
HR-Cauchy & 5.4 & 5.6 & 5.6 & 5.8 & 6.4 & 4.0 & 6.0 & 5.4 & 5.0 \\
WL & 7.0 & 3.6 & 5.6 & 6.0 & 8.0 & 9.2 & 5.0 & 4.2 & 4.6 \\
\addlinespace[1pt]
\multicolumn{10}{c}{Bounded radial} \\
HR-Sum & 5.8 & 3.8 & 6.0 & 5.0 & 5.8 & 4.6 & 5.4 & 5.0 & 5.0 \\
HR-Max & 4.4 & 5.0 & 5.6 & 4.4 & 5.6 & 4.8 & 5.4 & 4.8 & 4.8 \\
HR-Cauchy & 5.4 & 5.8 & 6.2 & 5.4 & 6.0 & 5.0 & 5.0 & 5.0 & 6.0 \\
WL & 6.2 & 5.8 & 6.0 & 7.0 & 4.4 & 5.4 & 4.6 & 6.2 & 7.6 \\
\addlinespace[1pt]
\multicolumn{10}{c}{Mixture normal} \\
HR-Sum & 5.2 & 4.2 & 7.0 & 6.4 & 6.8 & 6.6 & 7.0 & 6.6 & 7.4 \\
HR-Max & 5.4 & 4.8 & 4.4 & 3.8 & 3.0 & 4.8 & 5.0 & 6.2 & 4.0 \\
HR-Cauchy & 6.0 & 5.2 & 6.0 & 5.4 & 6.0 & 5.8 & 6.2 & 6.0 & 4.6 \\
WL & 5.6 & 4.4 & 5.2 & 4.8 & 6.8 & 4.8 & 5.6 & 4.2 & 3.8 \\
\addlinespace[1pt]
\multicolumn{10}{c}{$t_{10}$} \\
HR-Sum & 6.6 & 5.2 & 5.2 & 5.8 & 5.8 & 4.8 & 8.2 & 7.6 & 8.6 \\
HR-Max & 4.8 & 5.0 & 4.2 & 4.4 & 5.4 & 5.0 & 4.2 & 3.6 & 4.8 \\
HR-Cauchy & 5.2 & 5.6 & 4.6 & 6.0 & 5.6 & 5.4 & 5.8 & 6.2 & 6.4 \\
WL & 3.4 & 5.0 & 4.6 & 5.0 & 5.2 & 4.8 & 4.8 & 5.0 & 9.0 \\
\bottomrule
\end{tabular}}
\end{table}

\begin{table}[H]
\centering
\caption{Empirical sizes (\%) of HR-Sum, HR-Max, HR-Cauchy and WL at nominal level $\alpha=0.05$ with $n=200$ and $p=400$ for light-tailed elliptical nulls. The entries use the analytic null calibration without bootstrap correction.}
\label{tab:size-main-high}
\renewcommand{\arraystretch}{1.05}
\setlength{\tabcolsep}{6pt}
\resizebox{0.4\textwidth}{!}{%
\begin{tabular}{lccc}
\toprule
Test & $\mathbf{I}_p$ & AR & SP \\
\midrule
\multicolumn{4}{c}{Gaussian} \\
HR-Sum & 4.6 & 4.2 & 4.0 \\
HR-Max & 4.4 & 4.4 & 5.0 \\
HR-Cauchy & 4.8 & 4.8 & 4.0 \\
WL & 6.0 & 3.8 & 5.8 \\
\addlinespace[1pt]
\multicolumn{4}{c}{Kotz $\beta=2$} \\
HR-Sum & 4.4 & 5.6 & 4.8 \\
HR-Max & 5.2 & 3.2 & 4.8 \\
HR-Cauchy & 4.6 & 4.6 & 5.4 \\
WL & 4.6 & 7.0 & 4.2 \\
\addlinespace[1pt]
\multicolumn{4}{c}{Bounded radial} \\
HR-Sum & 4.0 & 3.6 & 4.4 \\
HR-Max & 4.6 & 4.6 & 3.2 \\
HR-Cauchy & 4.4 & 4.6 & 4.6 \\
WL & 6.4 & 4.2 & 5.4 \\
\bottomrule
\end{tabular}}
\end{table}

The size results support two aspects of the proposed procedure.  First, the HR plug-in standardization keeps the tests close to the nominal level across identity, autoregressive and sparse-precision shapes, even though the shape matrix is unknown and may be high dimensional.  Second, the calibration remains stable for radial laws with very different log-radius behavior.  In particular, the Gaussian, Kotz-type and bounded-radial rows have $\sigma_L^2\asymp p^{-1}$, whereas the mixture-normal and $t_{10}$ rows have non-vanishing radial variability.  This empirical pattern is consistent with the theory, which standardizes $\log R_i$ by its own $\sigma_L$ rather than requiring a fixed lower bound on $\Var(\log R_i)$.  A useful point is that the proposed tests do not require the radial law to be estimated: only the centered and scaled log-radius enters the statistic.  Thus the same analytic calibration works for concentrated radial laws, where the raw log-radius variance shrinks with $p$, and for heavy-tailed or scale-mixture laws, where the radial variability remains visible.  The rows for AR and SP shapes also indicate that the HR standardization removes the main affine effect before the radial--directional diagnostic is applied.

\subsection{Empirical power}
\label{subsec:simulation-power}

The power experiments use the radial--directional alternatives in \eqref{eq:alternative-baseline}--\eqref{eq:alternative-observed}.  Starting from the same elliptical baseline $\bm{Y}_{0i}=R_{0i}\bm{U}_{0i}$ as in the size study, we set
\[
R_{1i}=R_{0i}\exp\{\delta_n s_A(\bm{U}_{0i})\},
\qquad
\bm{X}_i=\mathbf{\Sigma}^{1/2}R_{1i}\bm{U}_{0i},
\]
where
\[
A_{\rm sp}=\{1\},
\qquad
A_{0.2}=\{1,\ldots,\lfloor0.2p\rfloor\},
\qquad
A_{\rm all}=\{1,\ldots,p\}.
\]
The three choices are the sparse, moderately dense and fully dense cases.  The table column $\delta_n$ is the signal strength in \eqref{eq:alternative-radius}.  We display the Gaussian and $t_{10}$ baselines, which contrast a concentrated light-tailed log-radius law with a heavy-tailed law.  Complete numerical tables and the remaining radial-law figures are reported in Appendices~\ref{app:simulation-tables} and \ref{app:power-figures}.

\begin{table}[H]
\centering
\caption{Empirical powers (\%) under \eqref{eq:alternative-radius} with $n=200$, averaged over the three shape structures. For each dimension, S, M, C and W denote tests based on $\hat P_{\rm sum}$, $\hat P_{\max}$, $\hat P_{\rm cau}$ and WL, respectively.}
\label{tab:power-main}
\scriptsize
\renewcommand{\arraystretch}{1.05}
\setlength{\tabcolsep}{1.9pt}
\resizebox{0.75\textwidth}{!}{%
\begin{tabular}{llrrrrrrrrrrrr}
\toprule
 & & \multicolumn{4}{c}{$p=100$} & \multicolumn{4}{c}{$p=200$} & \multicolumn{4}{c}{$p=400$} \\
\cmidrule(lr){3-6}\cmidrule(lr){7-10}\cmidrule(lr){11-14}
Active set & $\delta_n$ & S & M & C & W & S & M & C & W & S & M & C & W \\
\midrule
\multicolumn{14}{c}{Gaussian} \\
$A_{\rm sp}$ & 1 & 100 & 100 & 100 & 13 & 100 & 100 & 100 & 12 & 97 & 100 & 100 & 10 \\
$A_{\rm sp}$ & 2 & 100 & 100 & 100 & 55 & 100 & 100 & 100 & 42 & 100 & 100 & 100 & 31 \\
$A_{\rm sp}$ & 3 & 100 & 100 & 100 & 80 & 100 & 100 & 100 & 70 & 100 & 100 & 100 & 58 \\
\addlinespace[1pt]
$A_{0.2}$ & 1 & 100 & 92 & 100 & 14 & 100 & 43 & 100 & 10 & 86 & 14 & 80 & 12 \\
$A_{0.2}$ & 2 & 100 & 100 & 100 & 55 & 100 & 64 & 100 & 40 & 97 & 19 & 95 & 29 \\
$A_{0.2}$ & 3 & 100 & 100 & 100 & 80 & 100 & 68 & 100 & 71 & 99 & 20 & 97 & 59 \\
\addlinespace[1pt]
$A_{\rm all}$ & 1 & 100 & 31 & 100 & 12 & 100 & 13 & 100 & 12 & 86 & 6 & 80 & 10 \\
$A_{\rm all}$ & 2 & 100 & 44 & 100 & 52 & 100 & 15 & 100 & 41 & 97 & 7 & 94 & 30 \\
$A_{\rm all}$ & 3 & 100 & 48 & 100 & 80 & 100 & 16 & 100 & 72 & 99 & 6 & 96 & 60 \\
\addlinespace[1pt]
\multicolumn{14}{c}{$t_{10}$} \\
$A_{\rm sp}$ & 1 & 60 & 97 & 97 & 6 & 25 & 59 & 59 & 5 & 15 & 17 & 21 & 3 \\
$A_{\rm sp}$ & 2 & 100 & 100 & 100 & 12 & 82 & 100 & 100 & 8 & 34 & 94 & 94 & 4 \\
$A_{\rm sp}$ & 3 & 100 & 100 & 100 & 30 & 99 & 100 & 100 & 13 & 62 & 100 & 100 & 5 \\
\addlinespace[1pt]
$A_{0.2}$ & 1 & 58 & 16 & 51 & 6 & 25 & 6 & 20 & 5 & 13 & 5 & 11 & 2 \\
$A_{0.2}$ & 2 & 100 & 64 & 100 & 15 & 77 & 17 & 71 & 7 & 29 & 6 & 23 & 3 \\
$A_{0.2}$ & 3 & 100 & 90 & 100 & 30 & 98 & 29 & 97 & 15 & 57 & 8 & 48 & 4 \\
$A_{0.2}$ & 4 & -- & -- & -- & -- & -- & -- & -- & -- & 78 & 11 & 71 & 9 \\
$A_{0.2}$ & 5 & -- & -- & -- & -- & -- & -- & -- & -- & 91 & 13 & 86 & 14 \\
\addlinespace[1pt]
$A_{\rm all}$ & 1 & 58 & 13 & 51 & 6 & 24 & 6 & 20 & 5 & 15 & 5 & 12 & 3 \\
$A_{\rm all}$ & 2 & 100 & 28 & 100 & 12 & 76 & 11 & 68 & 7 & 29 & 6 & 24 & 3 \\
$A_{\rm all}$ & 3 & 100 & 39 & 100 & 28 & 99 & 13 & 97 & 13 & 60 & 8 & 51 & 5 \\
$A_{\rm all}$ & 4 & -- & -- & -- & -- & -- & -- & -- & -- & 79 & 8 & 71 & 9 \\
$A_{\rm all}$ & 5 & -- & -- & -- & -- & -- & -- & -- & -- & 90 & 8 & 85 & 14 \\
\bottomrule
\end{tabular}}
\end{table}

Table~\ref{tab:power-main} displays the expected separation between the test components.  Under $A_{\rm sp}$, the signal is concentrated in one coordinate and the max statistic is the most direct detector; under $A_{0.2}$ and $A_{\rm all}$, the signal is distributed over many coordinates and the sum statistic is more effective.  The Cauchy test tracks the stronger component across these regimes, which is the main practical advantage of combining S and M when the analyst does not know the active-set size.  The comparison with WL also illustrates the difference between the two testing principles: WL is a broad projection-based elliptical goodness-of-fit test, whereas the proposed tests are constructed to detect radial--directional dependence and therefore have substantially higher power for the alternatives in \eqref{eq:alternative-radius}.  For the heavy-tailed $t_{10}$ baseline, larger signal values are needed at $p=400$ because radial variability is stronger, but the powers of S and C still increase steadily for dense alternatives and the max component remains effective for sparse alternatives.  These patterns are the finite-sample counterpart of the signal measures $\mathcal S_n$ and $\mathcal M_n$: $\mathcal S_n$ increases when many weak correlations accumulate, while $\mathcal M_n$ increases when at least one coordinate carries a strong correlation.  The Cauchy combination is therefore not merely a numerical aggregation device; it implements the theoretical sum--max complementarity and provides a default procedure when no reliable prior information about sparsity is available.

\begin{figure}[htbp]
\centering
\wideportraitfigure{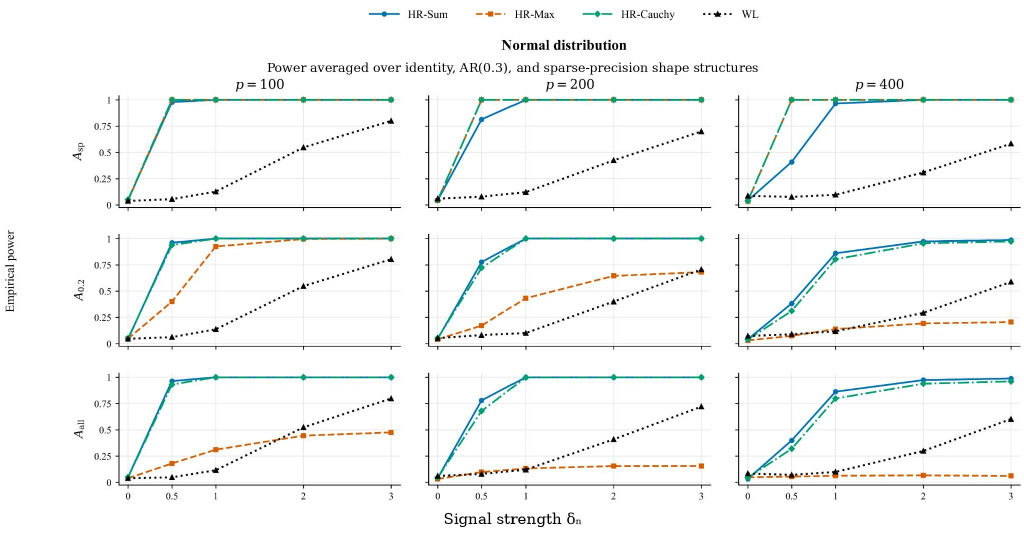}
\caption{Empirical power curves for the Gaussian baseline, averaged over $\mathbf{\Sigma}^{\mathrm{I}}$, $\mathbf{\Sigma}^{\mathrm{AR}}$ and $\mathbf{\Sigma}^{\mathrm{SP}}$. Rows correspond to $A_{\rm sp}$, $A_{0.2}$ and $A_{\rm all}$; columns correspond to $p=100,200,400$.}
\label{fig:power-main-normal}
\end{figure}

\begin{figure}[htbp]
\centering
\wideportraitfigure{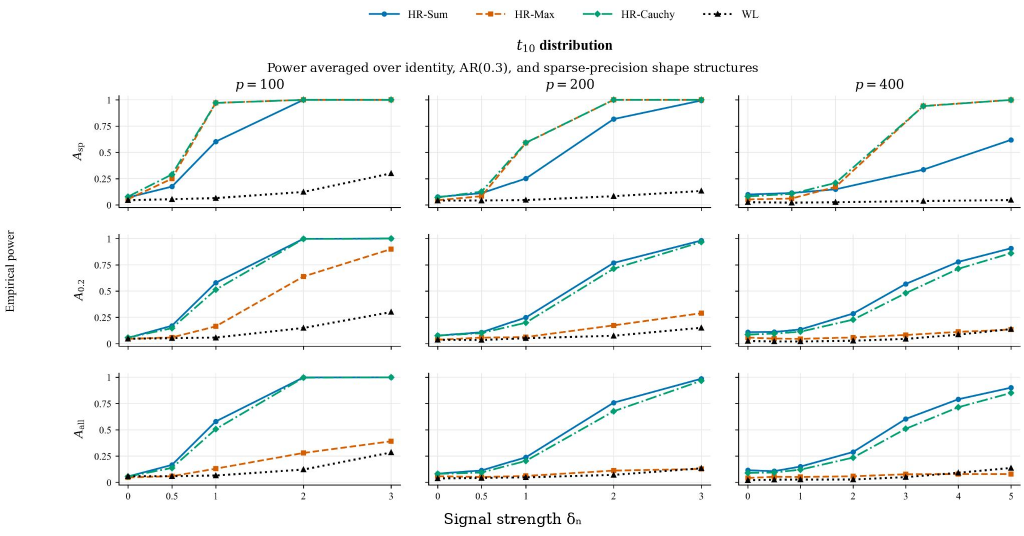}
\caption{Empirical power curves for the $t_{10}$ baseline, averaged over $\mathbf{\Sigma}^{\mathrm{I}}$, $\mathbf{\Sigma}^{\mathrm{AR}}$ and $\mathbf{\Sigma}^{\mathrm{SP}}$. Rows correspond to $A_{\rm sp}$, $A_{0.2}$ and $A_{\rm all}$; columns correspond to $p=100,200,400$. For $p=400$, the displayed signal-strength range extends to $\delta_n=5$.}
\label{fig:power-main-t10}
\end{figure}

Figures~\ref{fig:power-main-normal} and \ref{fig:power-main-t10} give the corresponding graphical summaries for the Gaussian and $t_{10}$ baselines.  The curves make the dense-versus-sparse complementarity visually clear: M rises fastest in the first row, S rises fastest in the lower rows, and C avoids committing to either regime.  The heavier radial tail in the $t_{10}$ baseline flattens the curves relative to the Gaussian baseline, which agrees with the signal-to-noise quantities $\mathcal S_n$ and $\mathcal M_n$ in Theorem~\ref{thm:h1-consistency}.  The figures also show that increasing $p$ makes the sparse and dense regimes look different in a practically meaningful way: sparse signals remain visible through the maximum coordinate, whereas dense signals become easier for S once many coordinates contribute.  This is precisely the diagnostic information that a single omnibus projection statistic does not provide.

\section{Real Data Applications}
\label{sec:application}

\subsection{Gasoline spectroscopy data}
\label{subsec:gasoline-data}

The first application uses the gasoline near-infrared spectroscopy data set included in the R package \texttt{pls} \citep{MevikWehrens2007,Kalivas1997}.  The analysis uses only the spectral matrix \texttt{gasoline\$NIR}; the octane response is not used.  The resulting sample size and dimension are
\[
n=60,
\qquad
p=401.
\]
The 401 wavelengths range from 900nm to 1700nm with spacing 2nm.

For each analysis window, the corresponding wavelength columns are extracted and standardized within that window.  If $X_{ij}$ denotes the raw intensity at wavelength $j$, the standardized value is
\[
\widetilde X_{ij}=\frac{X_{ij}-\bar X_j}{s_j},
\]
where $\bar X_j$ and $s_j$ are the sample mean and sample standard deviation computed using only the observations and wavelengths in the current window.  Window-specific standardization prevents scale differences across wavelength regions from dominating the test statistics.

The analysis is conducted at three resolutions.  The full-spectrum analysis uses all wavelengths from 900nm to 1700nm.  The broad-window analysis divides the spectrum into four windows: A1, 900--1098nm; A2, 1100--1298nm; A3, 1300--1498nm; and A4, 1500--1700nm.  The refined-window analysis focuses on selected local windows in the A1/A2 region.  The reported HR p-values use the bootstrap mean--variance correction in \eqref{eq:boot-corrected-pvalues}, with $B=200$ bootstrap repetitions.  The permutation check uses $B_\pi=499$ permutations, and WL is computed by Monte Carlo calibration.

Table~\ref{tab:gasoline-broad} reports the full-spectrum and broad-window p-values.  The full spectrum is strongly rejected by all methods.  At the window level, the proposed tests provide a componentwise interpretation.  In A1, the rejection is driven mainly by the sum component and WL does not reject, suggesting an aggregated radial--directional departure.  In A2, the max and Cauchy components are significant while the sum component is not, which is consistent with a more localized departure.  A3 and A4 are detected by both the proposed method and WL, indicating stronger global deviations from ellipticity.

\begin{table}[htbp]
\centering
\caption{Gasoline spectroscopy data: full-spectrum and broad-window p-values.}
\label{tab:gasoline-broad}
\small
\setlength{\tabcolsep}{6pt}
\begin{tabular}{lccccc}
\toprule
Window & $p_{\rm used}$ & HR-Sum & HR-Max & HR-Cauchy & WL \\
\midrule
Full spectrum & 401 & $5.9\times10^{-4}$ & $5.7\times10^{-5}$ & $1.05\times10^{-4}$ & $<10^{-4}$ \\
A1: 900--1098nm & 100 & 0.003 & 0.074 & 0.006 & 0.825 \\
A2: 1100--1298nm & 100 & 0.123 & 0.002 & 0.003 & $<10^{-4}$ \\
A3: 1300--1498nm & 100 & $1.4\times10^{-7}$ & $6.8\times10^{-4}$ & $2.9\times10^{-7}$ & $<10^{-4}$ \\
A4: 1500--1700nm & 101 & 0.001 & $4.5\times10^{-4}$ & $6.3\times10^{-4}$ & $<10^{-4}$ \\
\bottomrule
\end{tabular}
\end{table}

\begin{table}[htbp]
\centering
\caption{Gasoline spectroscopy data: selected refined-window p-values.}
\label{tab:gasoline-fine-selected}
\small
\setlength{\tabcolsep}{6pt}
\begin{tabular}{lccccc}
\toprule
Window & Center & HR-Sum & HR-Max & HR-Cauchy & WL \\
\midrule
900--998nm & 949 & 0.011 & 0.046 & 0.018 & 0.031 \\
1050--1148nm & 1099 & 0.180 & 0.008 & 0.015 & 0.976 \\
1100--1198nm & 1149 & 0.078 & 0.006 & 0.011 & $<2\times10^{-4}$ \\
\bottomrule
\end{tabular}
\end{table}

Table~\ref{tab:gasoline-fine-selected} reports selected refined windows in the A1/A2 region.  The 900--998nm window is mainly supported by the sum component, whereas the 1050--1148nm and 1100--1198nm windows are mainly supported by the max component.  Thus, the proposed tests not only reject the elliptical model but also distinguish whether the departure appears as an aggregate effect over many directions or as a localized directional effect.

\subsection{Arcene data}
\label{subsec:additional-applications}

We also examine two additional data sets to illustrate the complementarity between the proposed radial--directional tests and the WL moment-based elliptical test.  The Arcene data are serum mass-spectrometry profiles from the NIPS 2003 Feature Selection Challenge \citep{GuyonEtAl2004,UCIArcene2008}.  We pool the training and validation samples, giving $n=200$ observations and $10{,}000$ original features.  Class labels are not used in the test.  Because the Arcene feature order is randomized, we do not form adjacent feature windows; instead, we retain the top $K$ variables by marginal variance, remove non-finite or constant columns if present, and then standardize the retained variables columnwise before applying all tests.

The cookie data contain near-infrared spectra of biscuit dough samples \citep{BrownFearnVannucci2001,OsborneFearnMillerDouglas1984}.  We use only the first 700 spectral variables, corresponding to 1100--2498nm, and do not use the four constituent percentages as testing variables.  For each reported tail window, the wavelength columns are extracted and standardized within that window.  The ``no outliers'' analysis removes the two documented outlying samples in the calibration and validation sets.

Table~\ref{tab:additional-applications} reports selected blocks.  For Arcene, the top-300 and top-400 feature sets are strongly detected by the proposed tests while WL is much less significant.  In the top-500 block, the sum component is not significant but the max and Cauchy components reject, which points to a sparse radial--directional departure.  For the cookie tail windows, the proposed tests reject strongly, whereas WL is not significant in the selected local blocks.  These examples support the interpretation that the two testing principles are complementary: WL is sensitive to global kurtosis-type moment discrepancies, while the proposed procedure is sensitive to dependence between the fitted radius and fitted direction.

\begin{table}[htbp]
\centering
\caption{Additional data applications: selected p-values illustrating complementarity between the proposed tests and WL.}
\label{tab:additional-applications}
\scriptsize
\setlength{\tabcolsep}{3.5pt}
\begin{tabular}{llccccc}
\toprule
Data & Block & $(n,p)$ & HR-Sum & HR-Max & HR-Cauchy & WL \\
\midrule
Arcene & Arcene top 300 & $(200,300)$ & 0.002 & 0.002 & 0.002 & 0.293 \\
Arcene & Arcene top 400 & $(200,400)$ & 0.002 & 0.002 & 0.002 & 0.051 \\
Arcene & Arcene top 500 & $(200,500)$ & 0.094 & 0.012 & 0.021 & 0.076 \\
Cookie & Cookie 2350--2498nm & $(72,75)$ & 0.002 & 0.002 & 0.002 & 0.061 \\
Cookie & Cookie 2360--2498nm & $(72,70)$ & 0.002 & 0.002 & 0.002 & 0.224 \\
Cookie & Cookie 2360--2498nm (no outl.) & $(70,70)$ & 0.002 & 0.002 & 0.002 & 0.130 \\
\bottomrule
\end{tabular}
\end{table}

The distinction between the sum and max components is useful in these applications.  A significant sum component indicates that many directional coordinates jointly contribute to the radius--direction dependence, whereas a significant max component indicates that the departure is concentrated in a small number of directions.  The Arcene top-500 block and the gasoline A2 window are max-driven, while the gasoline A1 window is sum-driven.  This provides an interpretable sparse-versus-dense diagnostic in addition to the global reject-or-not decision.

\section{Conclusion}
\label{sec:conclusion}

This paper develops a high-dimensional goodness-of-fit framework for elliptical models by focusing on the radial--directional independence property.  The proposed statistics use coordinatewise correlations between the log-radius and the direction after affine standardization.  The sum component accumulates dense radial--directional departures, the max component detects sparse departures, and the Cauchy combination adapts between the two.  The theory establishes oracle normal and Gumbel limits, sum--max asymptotic independence under the null and under a balanced local alternative, and validity of high-dimensional HR plug-in standardization under explicit perturbation rates.  The assumptions on the log-radius allow both concentrated and non-concentrated radial laws, and the appendix verifies them for several common elliptical distributions.

Several extensions are natural.  First, the present paper uses coordinatewise radial--directional correlations after a robust affine standardization.  It would be useful to develop tests based on nonlinear directional features or data-adaptive bases, especially when departures are not aligned with the original coordinates.  Second, the HR plug-in theory is derived under sparsity and regularity conditions on the shape and precision matrices.  Extending the plug-in analysis to more general dependence structures, or to estimators designed for factor or low-rank-plus-sparse covariance models \citep{xu2025high}, would broaden the range of high-dimensional elliptical models covered by the method.

\appendix

\section{Proofs of the Main Results}

\subsection{Derivation of the HR plug-in expansion from Yan-type conditions}
\label{subsec:app-plugin-assumption}

\begin{proof}[Proof of Proposition~\ref{prop:yan-to-plugin}]
Let $\hat{\mathbf{\Omega}}=\hat{\mathbf{\Sigma}}^{-1}$ and
\[
\hat{\mathbf{I}}=\hat{\mathbf{\Omega}}^{1/2}\mathbf{\Sigma}^{1/2}
=\mathbf{I}_p+\mathbf{B}.
\]
The spatial-sign graphical-lasso event is
\begin{equation}
\label{eq:app-glasso-bound}
\begin{aligned}
\mathcal E_{\Omega,n}
=\Bigl\{&
\|\hat{\mathbf{\Omega}}-\mathbf{\Omega}\|_{\max}\le C\lambda_n,
\quad
\|\hat{\mathbf{\Omega}}-\mathbf{\Omega}\|_{op}\le C\lambda_n^{1-q}s_0(p),\\
&
 p^{-1}\|\hat{\mathbf{\Omega}}-\mathbf{\Omega}\|_F^2
\le C\lambda_n^{2-q}s_0(p)
\Bigr\},
\qquad
\Prob(\mathcal E_{\Omega,n})\ge 1-2p^{-2}.
\end{aligned}
\end{equation}
Since $\eta\le\lambda_{\min}(\mathbf{\Sigma})\le\lambda_{\max}(\mathbf{\Sigma})\le\eta^{-1}$,
\begin{equation}
\label{eq:app-B-bound}
\begin{aligned}
\|\mathbf{B}\|_{op}
&=\|\hat{\mathbf{\Omega}}^{1/2}\mathbf{\Sigma}^{1/2}-\mathbf{\Omega}^{1/2}\mathbf{\Sigma}^{1/2}\|_{op}                                                   \\
&\le \|\mathbf{\Sigma}^{1/2}\|_{op}\,
\|\hat{\mathbf{\Omega}}^{1/2}-\mathbf{\Omega}^{1/2}\|_{op}                                                        \\
&\le C\|\hat{\mathbf{\Omega}}-\mathbf{\Omega}\|_{op}
=O_p(\tau_n).
\end{aligned}
\end{equation}
For the HR location, the estimating equation is
\[
\bm{0}
=
\frac1n\sum_{i=1}^n
\mathcal U\{\hat{\mathbf{\Omega}}^{1/2}(\bm{X}_i-\hat{\bm{\mu}})\}
=
\frac1n\sum_{i=1}^n
\mathcal U\{R_i\bm{U}_i+R_i\mathbf{B}\bm{U}_i-\bm{v}\}.
\]
For $\bm{h}$ satisfying $\|\bm{h}\|_2\le r/2$ and $\bm{u}\in\Sphere^{p-1}$,
\begin{align*}
\mathcal U(r\bm{u}+\bm{h})
&=\frac{r\bm{u}+\bm{h}}
        {r\{1+2r^{-1}\bm{u}^\top\bm{h}+r^{-2}\|\bm{h}\|_2^2\}^{1/2}}                                      \\
&=\bm{u}+r^{-1}(\mathbf{I}_p-\bm{u}\bm{u}^\top)\bm{h}+\bm{q}(r,\bm{u},\bm{h}),                                  \\
\|\bm{q}(r,\bm{u},\bm{h})\|_\infty
&\le
C r^{-2}\|\bm{h}\|_2^2\|\bm{u}\|_\infty
+C r^{-2}\|\bm{h}\|_2\|(\mathbf{I}_p-\bm{u}\bm{u}^\top)\bm{h}\|_\infty .
\end{align*}
Substituting $\bm{h}_{i}=R_i\mathbf{B}\bm{U}_i-\bm{v}$ gives
\begin{equation}
\label{eq:app-location-linear}
\begin{aligned}
\bm{0}
&=
\bar{\bm{U}}
+
\left\{\frac1n\sum_{i=1}^n(\mathbf{I}_p-\bm{U}_i\bm{U}_i^\top)\mathbf{B}\bm{U}_i\right\}                                           \\
&\quad
-
\left\{\frac1n\sum_{i=1}^nR_i^{-1}(\mathbf{I}_p-\bm{U}_i\bm{U}_i^\top)\right\}\bm{v}
+
\frac1n\sum_{i=1}^n\bm{q}(R_i,\bm{U}_i,\bm{h}_i).
\end{aligned}
\end{equation}
Put
\[
\mathbf{Q}_n=\frac1n\sum_{i=1}^nR_i^{-1}(\mathbf{I}_p-\bm{U}_i\bm{U}_i^\top),
\qquad
\E(\mathbf{Q}_n)=\zeta_1(1-p^{-1})\mathbf{I}_p.
\]
The entries of \(\mathbf{Q}_n\) satisfy
\begin{align*}
\E(U_{ij}^2)&=p^{-1},
&
\E(U_{ij}^4)&=\frac{3}{p(p+2)},\\
\E(U_{ij}^2U_{ik}^2)&=\frac{1}{p(p+2)}\quad(j\ne k),
&
\Var\{R_i^{-1}(1-U_{ij}^2)\}&\le Cp^{-1},\\
\Var(R_i^{-1}U_{ij}U_{ik})&\le Cp^{-3}\quad(j\ne k).\end{align*}
For \(t>0\),
\begin{align*}
\Prob\left\{
\max_j |Q_{n,jj}-\zeta_1(1-p^{-1})|>t
\right\}&\le 2p\exp(-c npt^2),\\
\Prob\left\{
\max_{j\ne k}|Q_{n,jk}|>t
\right\}&\le 2p^2\exp(-c np^3t^2).
\end{align*}
With
\(t_1=Cp^{-1/2}\sqrt{\log p/n}\) and
\(t_2=Cp^{-3/2}\sqrt{\log p/n}\),
\begin{equation}
\label{eq:app-Qn-rates}
\begin{aligned}
\|\mathbf{Q}_n-\zeta_1(1-p^{-1})\mathbf{I}_p\|_{\max}
&=O_p\left(p^{-1/2}\sqrt{\frac{\log p}{n}}+p^{-3/2}\sqrt{\frac{\log p}{n}}\right),                                  \\
\|\mathbf{Q}_n^{-1}-\zeta_1^{-1}\mathbf{I}_p\|_{\max}
&=O_p\left(\zeta_1^{-1}p^{-1}+\zeta_1^{-2}p^{-1/2}\sqrt{\frac{\log p}{n}}\right).
\end{aligned}
\end{equation}
Combining \eqref{eq:app-glasso-bound}--\eqref{eq:app-Qn-rates} with \eqref{eq:app-location-linear},
\begin{equation}
\label{eq:app-v-expansion}
\begin{aligned}
\bm{v}
&=\zeta_1^{-1}\bar{\bm{U}}
+n^{-1/2}\bm{C}_n,                                                               \\
\bm{C}_n
&=n^{1/2}\{\bm{c}_{n,1}+\bm{c}_{n,2}+\bm{c}_{n,3}\},                         \\
\bm{c}_{n,1}
&=(\mathbf{Q}_n^{-1}-\zeta_1^{-1}\mathbf{I}_p)\bar{\bm{U}},                    \\
\bm{c}_{n,2}
&=\mathbf{Q}_n^{-1}\frac1n\sum_{i=1}^n
(\mathbf{I}_p-\bm{U}_i\bm{U}_i^\top)\mathbf{B}\bm{U}_i,                       \\
\bm{c}_{n,3}
&=\mathbf{Q}_n^{-1}\frac1n\sum_{i=1}^n\bm{q}(R_i,\bm{U}_i,\bm{h}_i),            \\
\|\bm{C}_n\|_\infty
&=O_p\{n^{-1/4}\sqrt{\log(np)}\}                                                \\
&\quad
+O_p\{n^{-(1-q)/2}(\log p)^{(1-q)/2}\sqrt{\log(np)}\,s_0(p)\}                  \\
&\quad
+O_p\{p^{-(1-q)/2}\sqrt{\log(np)}\,s_0(p)\}                                    \\
&=O_p(\mathfrak c_n).
\end{aligned}
\end{equation}
For the correlation vector, define
\[
\bm{\phi}(\bm{y})=(\log\|\bm{y}\|_2-m_L)\frac{\bm{y}}{\|\bm{y}\|_2},
\qquad
\bm{\phi}_i=(L_i-m_L)\bm{U}_i.
\]
For $\bm{y}=R\bm{u}$ and $\bm{h}=R\mathbf{B}\bm{u}-\bm{v}$,
\begin{align*}
D\bm{\phi}(R\bm{u})[\bm{h}]
&=
\left(\frac{\bm{u}^\top\bm{h}}{R}\right)\bm{u}
+(L-m_L)R^{-1}(\mathbf{I}_p-\bm{u}\bm{u}^\top)\bm{h}                                                                \\
&=
(L-m_L)(\mathbf{I}_p-\bm{u}\bm{u}^\top)\mathbf{B}\bm{u}
+\bm{u}(\bm{u}^\top\mathbf{B}\bm{u})                                                                                 \\
&\quad
-R^{-1}\left\{(L-m_L)\mathbf{I}_p+[1-(L-m_L)]\bm{u}\bm{u}^\top\right\}\bm{v}.
\end{align*}
The second derivative of $\bm{\phi}$ satisfies, whenever $\|\bm{h}\|_2\le R/2$,
\begin{equation}
\label{eq:app-phi-second}
\|\bm{\phi}(R\bm{u}+\bm{h})-\bm{\phi}(R\bm{u})-D\bm{\phi}(R\bm{u})[\bm{h}]\|_\infty
\le
C(1+|L-m_L|)R^{-2}\|\bm{h}\|_2^2\{\|\bm{u}\|_\infty+R^{-1}\|\bm{h}\|_\infty\}.
\end{equation}
Therefore
\begin{equation}
\label{eq:app-phi-average}
\begin{aligned}
\frac1{\sigma_L\sigma_U}\frac1n\sum_{i=1}^n\{\bm{\phi}(\hat{\bm{Y}}_i)-\bm{\phi}_i\}
&=
\bm{d}_n^{(\mathbf{\Sigma})}
-(\sigma_L\sigma_U)^{-1}\mathbf{M}_n\bm{v}
+\bm{r}_{n,2}                                                                  \\
&=
\bm{d}_n^{(\mathbf{\Sigma})}
-(\sigma_L\sigma_U)^{-1}\E(\mathbf{M}_n)\bm{v}
-(\sigma_L\sigma_U)^{-1}\mathbf{\Delta}_n\bm{v}
+\bm{r}_{n,2}.
\end{aligned}
\end{equation}
Because $R_i\perpn\bm{U}_i$ and $\E(\bm{U}_i\bm{U}_i^\top)=p^{-1}\mathbf{I}_p$,
\begin{align*}
\E(\mathbf{M}_n)
&=
\left[
\E\left(\frac{L_i-m_L}{R_i}\right)
+\frac1p\E\left(\frac{1-(L_i-m_L)}{R_i}\right)
\right]\mathbf{I}_p
=a_p\mathbf{I}_p.
\end{align*}
Equations \eqref{eq:kappa-def}, \eqref{eq:app-v-expansion} and \eqref{eq:app-phi-average} imply
\begin{equation}
\label{eq:app-main-expansion-no-centering}
\begin{aligned}
\frac1{\sigma_L\sigma_U}\frac1n\sum_{i=1}^n\{\bm{\phi}(\hat{\bm{Y}}_i)-\bm{\phi}_i\}
&=
\tilde\kappa_p\bar{\bm{U}}
+\kappa_p n^{-1/2}\bm{C}_n
+\bm{r}_n^{(1)}
+\bm{d}_n^{(\mathbf{\Sigma})}
+\bm{r}_{n,2}.
\end{aligned}
\end{equation}
The empirical centering and self-normalization in \eqref{eq:plugin-gammahat} give
\begin{equation}
\label{eq:app-centering-remainder}
\begin{aligned}
\hat{\bm{g}}_n-\bm{g}_n^{\rm or}
&=
\frac1{\sigma_L\sigma_U}\frac1n\sum_{i=1}^n\{\bm{\phi}(\hat{\bm{Y}}_i)-\bm{\phi}_i\}
+\bm{r}_{n,c}+\bm{r}_{n,s},                                                               \\
\bm{a}_n
&=\bm{r}_{n,2}+\bm{r}_{n,c}+\bm{r}_{n,s}.
\end{aligned}
\end{equation}
The sample moments obey
\begin{equation}
\label{eq:app-sample-moment-rates}
\begin{aligned}
\|\bar{\bm{U}}\|_2&=O_p(n^{-1/2}),
&
\|\bar{\bm{U}}\|_\infty&=O_p\left(\sqrt{\frac{\log p}{np}}\right),                                                        \\
\|\bm{g}_n^{\rm or}\|_2&=O_p\left(\sqrt{\frac pn}\right),
&
\|\bm{g}_n^{\rm or}\|_\infty&=O_p\left(\sqrt{\frac{\log p}{n}}\right),                                                       \\
\|\mathbf{\Delta}_n\bm{v}\|_2&=O_p\left(\frac{1}{\sqrt n}\left(\tau_n+\sqrt{\frac pn}\right)\right),
&
\|\mathbf{\Delta}_n\bm{v}\|_\infty&=O_p\left(\frac1{\sqrt{np}}\left(\tau_n\sqrt{\log p}+\sqrt{\frac{\log p}{n}}\right)\right), \\
\|\bm{d}_n^{(\mathbf{\Sigma})}\|_2&=O_p\left(\frac{\tau_n}{\sqrt n}\right),
&
\|\bm{d}_n^{(\mathbf{\Sigma})}\|_\infty&=O_p\left(\tau_n\sqrt{\frac{\log p}{n}}\right).
\end{aligned}
\end{equation}
Decompose the matrices in \eqref{eq:app-phi-average} into the parts multiplied by $L_i-m_L=\sigma_LZ_i$ and the parts not multiplied by $L_i-m_L$.  After division by $\sigma_L\sigma_U$, the former parts have the same order as in \eqref{eq:app-sample-moment-rates}, whereas the latter parts carry the additional factor $\rho_{L,p}=\sigma_L^{-1}$.  Thus \eqref{eq:app-sample-moment-rates} gives
\begin{equation}
\label{eq:app-hard-rates}
\begin{aligned}
\frac n{\sqrt p}
\left\{|(\bm{g}_n^{\rm or})^\top \bm{b}_n^{\rm hard}|+
\|\bm{b}_n^{\rm hard}\|_2^2\right\}
&=O_p\left((1+\rho_{L,p})\left\{\tau_n+\frac{\sqrt p}{n}\right\}+\frac{(1+\rho_{L,p})^2\tau_n^2}{\sqrt p}\right),                                                   \\
\sqrt n\|\bm{b}_n^{\rm hard}\|_\infty
&=O_p\left((1+\rho_{L,p})\left\{\tau_n\sqrt{\log p}+\sqrt{\frac{\log p}{n}}\right\}\right).
\end{aligned}
\end{equation}
The Taylor bound \eqref{eq:app-phi-second}, the centering identity
\[
\frac1n\sum_{i=1}^n(\hat L_i-\bar{\hat L})(\hat U_{ij}-\bar{\hat U}_j)
=
\frac1n\sum_{i=1}^n(\hat L_i-m_L)\hat U_{ij}
-(\bar{\hat L}-m_L)\bar{\hat U}_j,
\]
and the variance expansion
\[
\hat\sigma_{\hat L}^{-1}\hat\sigma_{\hat U,j}^{-1}
-(\sigma_L\sigma_U)^{-1}
=O_p\left(\sqrt{\frac{\log p}{n}}+(1+\rho_{L,p})\tau_n+\frac{|\tilde\kappa_p|}{\sqrt p}\right)(\sigma_L\sigma_U)^{-1}
\]
yield
\begin{equation}
\label{eq:app-a-rates}
\begin{aligned}
\frac n{\sqrt p}
\left\{ |(\bm{g}_n^{\rm or})^\top \bm{a}_n|+|\bm{d}_n^\top\bm{a}_n|+\|\bm{a}_n\|_2^2\right\}
&=O_p(\mathfrak A_{S,n}),                                                                 \\
\sqrt n\|\bm{a}_n\|_\infty
&=O_p(\mathfrak A_{M,n}/\sqrt{\log p}).
\end{aligned}
\end{equation}
Equations \eqref{eq:app-v-expansion}, \eqref{eq:app-B-bound}, \eqref{eq:app-main-expansion-no-centering}, \eqref{eq:app-centering-remainder}, \eqref{eq:app-hard-rates} and \eqref{eq:app-a-rates} are \eqref{eq:hr-bahadur}--\eqref{eq:derived-hard-max}.
\end{proof}

\subsection{Proof of Theorem~\ref{thm:linearization}}

\subsubsection{Auxiliary lemmas}

Let
\[
\begin{aligned}
\widetilde L_i&=L_i-m_L,\\
S_{n,j}&=\frac1n\sum_{i=1}^n \widetilde L_iU_{ij},\\
\hat C_j&=\frac1n\sum_{i=1}^n (L_i-\bar L)(U_{ij}-\bar U_j),\\
\sigma_U^2&=\Var(U_{1j}).
\end{aligned}
\]

\begin{lemma}
\label{lem:lin_null}
Under Assumption~\ref{ass:oracle},
\[
\E(U_{1j})=0,
\qquad
\E(U_{1j}^2)=\frac1p,
\qquad
\sigma_U^2=\frac1p,
\qquad
\gamma_j=0.
\]
\end{lemma}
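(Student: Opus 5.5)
The plan is to derive all four claims from the uniformity of $\bm{U}_1$ on $\Sphere^{p-1}$ together with the radial--directional independence in Assumption~\ref{ass:oracle}(i). No moment machinery is needed beyond the finiteness of $\sigma_L^2$ in part (ii).

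\emph{Mean of $U_{1j}$.} Uniformity implies $-\bm{U}_1 \stackrel{d}{=} \bm{U}_1$, so $U_{1j} \stackrel{d}{=} -U_{1j}$. Since $|U_{1j}|\le 1$, the expectation exists, and symmetry forces $\E(U_{1j})=0$.

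\emph{Second moment and variance.} Uniformity is invariant under coordinate permutations, so $U_{11}^2,\ldots,U_{1p}^2$ share a common distribution. Taking expectations in $\sum_{j=1}^p U_{1j}^2=\|\bm{U}_1\|_2^2=1$ then gives $p\,\E(U_{1j}^2)=1$, that is, $\E(U_{1j}^2)=1/p$. Combining this with the zero mean yields $\sigma_U^2=\Var(U_{1j})=1/p$.

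\emph{The correlation $\gamma_j$.} By Assumption~\ref{ass:oracle}(i), $R_1\perpn \bm{U}_1$. Hence $L_1=\log R_1$, being a measurable function of $R_1$, is independent of $U_{1j}$. Both variables have finite second moments: $\sigma_L^2<\infty$ by Assumption~\ref{ass:oracle}(ii), and $|U_{1j}|\le1$. Independence therefore gives
\[
\Cov(L_1,U_{1j})=\E(L_1)\E(U_{1j})-\E(L_1)\E(U_{1j})=0 .
\]
Both standard deviations are strictly positive, namely $\sigma_L>0$ and $\sigma_U=p^{-1/2}>0$, so the correlation is well defined and $\gamma_j=0$.

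There is no genuine obstacle here. The only point needing care is that $\gamma_j$ be well defined, which requires the nondegeneracy $\sigma_L^2>0$ from Assumption~\ref{ass:oracle}(ii) and the positivity $\sigma_U^2=1/p$ established in the second step.
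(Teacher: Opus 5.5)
Your proposal is correct and follows the paper's own argument. Spherical symmetry gives $\E(U_{1j})=0$ and $\E(U_{1j}^2)=1/p$, and the independence $R_1\perpn\bm{U}_1$ gives $\E\{(L_1-m_L)U_{1j}\}=\E(L_1-m_L)\E(U_{1j})=0$; you just spell out the symmetry steps (sign-flip invariance, exchangeability with $\|\bm{U}_1\|_2=1$) and the well-definedness of the correlation, which the paper leaves implicit.
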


\begin{proof}
Since $\bm{U}_i\sim\Unif(\Sphere^{p-1})$, spherical symmetry gives $\E(U_{1j})=0$ and $\E(U_{1j}^2)=1/p$. Thus $\sigma_U^2=1/p$. Moreover, by Assumption~\ref{ass:oracle}, $R_i\perpn \bm{U}_i$, so
\[
\E(\widetilde L_iU_{ij})=\E(\widetilde L_i)\E(U_{ij})=0.
\]
Hence $\Cov(L_i,U_{ij})=0$, that is, $\gamma_j=0$.
\end{proof}

\begin{lemma}
\label{lem:lin_basic_bounds}
Under Assumption~\ref{ass:oracle}. Then
\[
\bar L-m_L=\sigma_L O_p(n^{-1/2}),
\qquad
\max_{1\le j\le p}|\bar U_j|=O_p\!\left(\sqrt{\frac{\log p}{np}}\right),
\]
\[
\max_{1\le j\le p}|S_{n,j}|=\sigma_L O_p\!\left(\sqrt{\frac{\log p}{np}}\right),
\]
\[
\frac{\hat\sigma_L^2}{\sigma_L^2}-1=O_p(n^{-1/2}),
\]
\[
\max_{1\le j\le p}|\hat\sigma_{U,j}^2-\sigma_U^2|
=O_p\!\left(\frac{\sqrt{\log p}}{p\sqrt n}+\frac{\log p}{np}\right).
\]
For each fixed $j$, the sharper bounds
\[
\bar U_j=O_p((np)^{-1/2}),
\qquad
S_{n,j}=\sigma_L O_p((np)^{-1/2}),
\qquad
\hat\sigma_{U,j}^2-\sigma_U^2=O_p\!\left(\frac{1}{p\sqrt n}\right)
\]
hold as well.
\end{lemma}

\begin{proof}
Since $L_i-m_L=\sigma_L Z_i$,
\[
\bar L-m_L=\sigma_L\bar Z,
\qquad
\bar Z=\frac1n\sum_{i=1}^n Z_i,
\qquad
\E\bar Z^2=n^{-1}.
\]
Thus $\bar L-m_L=\sigma_LO_p(n^{-1/2})$.  For a uniform direction $\bm{U}\sim\Unif(\Sphere^{p-1})$ and every deterministic $\bm{a}\in\R^p$,
\[
\Pr(|\bm{a}^{\top}\bm{U}|>t)
\le 2\exp\left(-\frac{cpt^2}{\|\bm{a}\|_2^2}\right),
\qquad t>0,
\]
with a numerical constant $c>0$.  Applying this inequality to $\bm{a}=n^{-1}(1,\ldots,1)^{\top}$ after conditioning on all sample indices gives
\[
\Pr\left(\max_{1\le j\le p}|\bar U_j|>t\right)
\le 2p\exp(-cnpt^2).
\]
Taking $t=C\{\log p/(np)\}^{1/2}$ gives the displayed rate for $\max_j|\bar U_j|$.

Let $V_n=n^{-1}\sum_{i=1}^n Z_i^2$.  Assumption~\ref{ass:oracle}\textnormal{(ii)} gives
\[
\E(V_n-1)^2
=n^{-2}\sum_{i=1}^n\Var(Z_i^2)
\le n^{-1}\E Z_i^4=O(n^{-1}),
\qquad
V_n=1+O_p(n^{-1/2}).
\]
Conditional on $Z_1,\ldots,Z_n$, the same spherical concentration applied to the weighted average gives
\[
\Pr\left(
\max_{1\le j\le p}\left|\frac1n\sum_{i=1}^n Z_iU_{ij}\right|>t
\,\bigg|\,Z_1,\ldots,Z_n\right)
\le
2p\exp\left(-\frac{cnpt^2}{V_n}\right).
\]
On the event $V_n\le 2$, choosing $t=C\{\log p/(np)\}^{1/2}$ gives
\[
\max_{1\le j\le p}\left|\frac1n\sum_{i=1}^n Z_iU_{ij}\right|
=O_p\!\left(\sqrt{\frac{\log p}{np}}\right).
\]
Since $S_{n,j}=\sigma_L n^{-1}\sum_i Z_iU_{ij}$, this proves the displayed rate for $S_{n,j}$.

Furthermore,
\[
\frac{\hat\sigma_L^2}{\sigma_L^2}
=\frac1n\sum_{i=1}^n(Z_i-\bar Z)^2
=1+\frac1n\sum_{i=1}^n(Z_i^2-1)-\bar Z^2.
\]
The first centered average is $O_p(n^{-1/2})$ because $\E Z_i^4$ is uniformly bounded, and $\bar Z^2=O_p(n^{-1})$.  Hence
\[
\hat\sigma_L^2/\sigma_L^2-1=O_p(n^{-1/2}).
\]
For the directional variance, $\hat\sigma_{U,j}^2=n^{-1}\sum_i U_{ij}^2-\bar U_j^2$.  The beta law $U_{ij}^2\sim\mathrm{Beta}\{1/2,(p-1)/2\}$ implies
\[
\Pr\left(\left|\frac1n\sum_{i=1}^n(pU_{ij}^2-1)\right|>x\right)
\le 2\exp\{-c n\min(x^2,x)\},
\qquad x>0,
\]
uniformly in $j$.  Therefore
\[
\max_{1\le j\le p}\left|\frac1n\sum_{i=1}^n U_{ij}^2-\frac1p\right|
=O_p\!\left(\frac{\sqrt{\log p}}{p\sqrt n}\right),
\qquad
\max_j\bar U_j^2=O_p\!\left(\frac{\log p}{np}\right),
\]
which proves the stated uniform variance bound.  Omitting the union bound gives the fixed-coordinate rates.
\end{proof}

\begin{lemma}
\label{lem:lin_expansions}
Under Assumption~\ref{ass:oracle}. Then
\[
\max_{1\le j\le p}|\hat C_j-S_{n,j}|
\le |\bar L-m_L|\max_{1\le j\le p}|\bar U_j|
=\sigma_L O_p\!\left(\frac{\sqrt{\log p}}{n\sqrt p}\right),
\]
so that
\[
\max_{1\le j\le p}|\hat C_j|=\sigma_L O_p\!\left(\sqrt{\frac{\log p}{np}}\right).
\]
Moreover,
\[
\hat\sigma_L^{-1}=\sigma_L^{-1}\{1+O_p(n^{-1/2})\},
\]
and uniformly over $1\le j\le p$,
\[
\hat\sigma_{U,j}^{-1}=\sigma_U^{-1}\left[1+O_p\!\left\{\sqrt{\frac{\log p}{n}}+\frac{\log p}{n}\right\}\right],
\]
whence
\[
\max_{1\le j\le p}
\left|
\frac{1}{\hat\sigma_L\hat\sigma_{U,j}}-\frac{1}{\sigma_L\sigma_U}
\right|
=
\frac{1}{\sigma_L\sigma_U}
O_p\!\left(\sqrt{\frac{\log p}{n}}+\frac{\log p}{n}+\frac1{\sqrt n}\right)
=
\frac{\sqrt p}{\sigma_L}O_p\!\left(\sqrt{\frac{\log p}{n}}+\frac{\log p}{n}\right).
\]
For each fixed $j$,
\[
\hat C_j-S_{n,j}=\sigma_L O_p\bigl((n\sqrt p)^{-1}\bigr),
\qquad
\frac{1}{\hat\sigma_L\hat\sigma_{U,j}}-\frac{1}{\sigma_L\sigma_U}
=
\frac{1}{\sigma_L\sigma_U}O_p(n^{-1/2}).
\]
\end{lemma}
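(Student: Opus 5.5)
The argument is deterministic algebra plus the rates of Lemma~\ref{lem:lin_basic_bounds}. We treat the three claims in turn.

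\textbf{Covariance part.} Expanding around $m_L$ gives the exact identity
\[
\hat C_j=\frac1n\sum_{i=1}^n(L_i-m_L)U_{ij}-(\bar L-m_L)\bar U_j=S_{n,j}-(\bar L-m_L)\bar U_j,
\]
because $n^{-1}\sum_i(L_i-\bar L)(U_{ij}-\bar U_j)=n^{-1}\sum_i(L_i-\bar L)U_{ij}$ and $\bar L-\bar L=0$. Hence $|\hat C_j-S_{n,j}|=|\bar L-m_L|\,|\bar U_j|$, and the maximum over $j$ is at most $|\bar L-m_L|\max_j|\bar U_j|$. Lemma~\ref{lem:lin_basic_bounds} gives $\bar L-m_L=\sigma_LO_p(n^{-1/2})$ and $\max_j|\bar U_j|=O_p(\sqrt{\log p/(np)})$. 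Their product is $\sigma_LO_p(\sqrt{\log p}/(n\sqrt p))$. Adding $\max_j|S_{n,j}|=\sigma_LO_p(\sqrt{\log p/(np)})$ from the same lemma yields the bound for $\max_j|\hat C_j|$, since the correction term is smaller by a factor $n^{-1/2}$. For fixed $j$ we use instead $\bar U_j=O_p((np)^{-1/2})$, which gives $\sigma_LO_p((n\sqrt p)^{-1})$.

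\textbf{Inverse standard deviations.} Apply the map $x\mapsto x^{-1/2}$ near $1$, i.e.\ $|(1+\epsilon)^{-1/2}-1|\le C|\epsilon|$ for $|\epsilon|\le1/2$, on an event of probability tending to one.
\begin{itemize}
\item For the radius, $\hat\sigma_L^2/\sigma_L^2=1+O_p(n^{-1/2})$, so $\hat\sigma_L^{-1}=\sigma_L^{-1}\{1+O_p(n^{-1/2})\}$.
\item For the directions, write $\hat\sigma_{U,j}^2/\sigma_U^2=1+\epsilon_j$ with $\sigma_U^2=p^{-1}$. Then $\max_j|\epsilon_j|=p\max_j|\hat\sigma_{U,j}^2-\sigma_U^2|=O_p(\sqrt{\log p/n}+\log p/n)$. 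This tends to zero because $\log p=o(\sqrt n)$, the assumption also used in Theorem~\ref{thm:linearization}; here we rely on that standing condition. Consequently $\max_j|\epsilon_j|\le1/2$ with probability tending to one, and the uniform expansion of $\hat\sigma_{U,j}^{-1}$ follows.
\end{itemize}

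\textbf{Product.} Write $(1+a)(1+b_j)-1=a+b_j+ab_j$ with $a=O_p(n^{-1/2})$ and $\max_j|b_j|=O_p(\sqrt{\log p/n}+\log p/n)$. The cross term $ab_j$ is dominated by the other two, so the maximum is $(\sigma_L\sigma_U)^{-1}O_p(\sqrt{\log p/n}+\log p/n+n^{-1/2})$. Using $\sigma_U^{-1}=\sqrt p$ and absorbing $n^{-1/2}$ into $\sqrt{\log p/n}$ (as $p\to\infty$) gives the last displayed form. For fixed $j$, $b_j=O_p(n^{-1/2})$ by the fixed-coordinate rate in Lemma~\ref{lem:lin_basic_bounds}, giving $(\sigma_L\sigma_U)^{-1}O_p(n^{-1/2})$.

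The only point needing care is uniformity in $j$: every union bound has already been taken inside Lemma~\ref{lem:lin_basic_bounds}, so here we only need the deterministic Lipschitz bound on a single high-probability event where $\max_j|\epsilon_j|\le1/2$. Beyond checking that this event has probability tending to one, no genuine obstacle remains.
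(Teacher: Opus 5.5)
Your proposal is correct and follows essentially the same route as the paper: the exact identity $\hat C_j=S_{n,j}-(\bar L-m_L)\bar U_j$ together with the rates of Lemma~\ref{lem:lin_basic_bounds}, then a first-order expansion of $x\mapsto(1+x)^{-1/2}$ on an event where $|\hat\sigma_L^2/\sigma_L^2-1|$ and $\max_j|\hat\sigma_{U,j}^2/\sigma_U^2-1|$ are at most $1/2$. One small point: you do not need to borrow $\log p=o(\sqrt n)$ from Theorem~\ref{thm:linearization}, because $b_{n,p}\ge1$ and $b_{n,p}^2(\log p)^5/n\to0$ in Assumption~\ref{ass:oracle}\textnormal{(ii)} already force $\log p=o(n^{1/5})$, so the lemma holds under its stated hypothesis alone.
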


\begin{proof}
Using $\widetilde L_i=L_i-m_L$ and $\E(U_{ij})=0$,
\[
\hat C_j
=
\frac1n\sum_{i=1}^n \widetilde L_iU_{ij}-(\bar L-m_L)\bar U_j
=
S_{n,j}-(\bar L-m_L)\bar U_j.
\]
The two bounds for $\hat C_j-S_{n,j}$ and $\hat C_j$ follow from Lemma~\ref{lem:lin_basic_bounds}.  Also,
\[
\frac{\hat\sigma_L^2}{\sigma_L^2}=1+O_p(n^{-1/2}),
\qquad
\max_{1\le j\le p}\left|\frac{\hat\sigma_{U,j}^2}{\sigma_U^2}-1\right|
=O_p\!\left(\sqrt{\frac{\log p}{n}}+\frac{\log p}{n}\right)=o_p(1).
\]
For $|x|\le1/2$,
\[
(1+x)^{-1/2}=1-\frac{x}{2}+O(x^2),
\]
and hence the displayed inverse expansions follow with $x=\hat\sigma_L^2/\sigma_L^2-1$ and $x=\hat\sigma_{U,j}^2/\sigma_U^2-1$.  The fixed-coordinate bounds are obtained in the same way from the fixed-coordinate rates in Lemma~\ref{lem:lin_basic_bounds}.
\end{proof}

\subsubsection{Proof of Theorem~\ref{thm:linearization}}

\begin{proof}[Proof of Theorem~\ref{thm:linearization}]
Lemma~\ref{lem:lin_null} gives $\gamma_j=0$ for all $j$. To prove the expansion, write
\[
\hat\gamma_j
=
\hat C_j\cdot \frac{1}{\hat\sigma_L\hat\sigma_{U,j}}.
\]
Subtracting $(\sigma_L\sigma_U)^{-1}S_{n,j}$ from both sides gives
\[
\hat\gamma_j-\frac{1}{\sigma_L\sigma_U}S_{n,j}
=
\hat C_j\left(\frac{1}{\hat\sigma_L\hat\sigma_{U,j}}-\frac{1}{\sigma_L\sigma_U}\right)
+
\frac{1}{\sigma_L\sigma_U}(\hat C_j-S_{n,j}).
\]
By Lemma~\ref{lem:lin_expansions}, uniformly in $j$,
\begin{align*}
&\max_{1\le j\le p}
\left|
\hat C_j\left(\frac{1}{\hat\sigma_L\hat\sigma_{U,j}}-\frac{1}{\sigma_L\sigma_U}\right)
\right| \\
&\qquad=
\sigma_L O_p\!\left(\sqrt{\frac{\log p}{np}}\right)
\cdot
\frac{\sqrt p}{\sigma_L}O_p\!\left(\sqrt{\frac{\log p}{n}}+\frac{\log p}{n}\right) \\
&\qquad=
O_p\!\left(\frac{\log p}{n}+\frac{(\log p)^{3/2}}{n^{3/2}}\right)
=O_p\!\left(\frac{\log p}{n}\right),
\end{align*}
while
\[
\max_{1\le j\le p}
\left|
\frac{1}{\sigma_L\sigma_U}(\hat C_j-S_{n,j})
\right|
=
\frac{\sqrt p}{\sigma_L}\,\sigma_L O_p\!\left(\frac{\sqrt{\log p}}{n\sqrt p}\right)
=
O_p\!\left(\frac{\sqrt{\log p}}{n}\right).
\]
Hence
\[
\max_{1\le j\le p}
\left|
\hat\gamma_j-\frac{1}{\sigma_L\sigma_U}S_{n,j}
\right|
=
O_p\!\left(\frac{\log p+\sqrt{\log p}}{n}\right).
\]
In particular, since $p\ge 3$ eventually, this bound is $O_p((\log p)/n)$ in the high-dimensional cases considered below.
Since $\sigma_U=p^{-1/2}$,
\[
\frac{1}{\sigma_L\sigma_U}S_{n,j}
=
\frac{\sqrt p}{\sigma_L}\cdot\frac1n\sum_{i=1}^n \widetilde L_iU_{ij}
=
\frac1n\sum_{i=1}^n \xi_{ij}.
\]
Therefore
\[
\hat\gamma_j=\frac1n\sum_{i=1}^n \xi_{ij}+r_{n,j},
\qquad
\max_{1\le j\le p}|r_{n,j}|=O_p\!\left(\frac{\log p+\sqrt{\log p}}{n}\right).
\]
Because $\log p=o(\sqrt n)$, the displayed remainder is $o_p(n^{-1/2})$, proving the uniform expansion.

For the fixed-coordinate expansion, fix $j$. The same decomposition together with the fixed-coordinate estimates in Lemma~\ref{lem:lin_expansions} yields
\[
\hat C_j\left(\frac{1}{\hat\sigma_L\hat\sigma_{U,j}}-\frac{1}{\sigma_L\sigma_U}\right)=O_p(n^{-1}),
\qquad
\frac{1}{\sigma_L\sigma_U}(\hat C_j-S_{n,j})=O_p(n^{-1}).
\]
Hence
\[
\hat\gamma_j=\frac1n\sum_{i=1}^n \xi_{ij}+O_p(n^{-1}),
\]
which proves the fixed-coordinate statement.
\end{proof}

\subsection{Proof of Theorem~\ref{thm:sum-clt}}

Define
\[
\hat C_j=
\frac1n\sum_{i=1}^n (L_i-\bar L)(U_{ij}-\bar U_j),
\qquad
\bar Z=\frac1n\sum_{i=1}^n Z_i,
\qquad
\bar{\bm{U}}=\frac1n\sum_{i=1}^n \bm{U}_i,
\]
\[
\bm{W}_n=\frac1n\sum_{i=1}^n Z_i\bm{U}_i\in\R^p,
\qquad
\hat v_L=\frac{\hat\sigma_L^2}{\sigma_L^2}=\frac1n\sum_{i=1}^n (Z_i-\bar Z)^2.
\]

\subsubsection{Auxiliary lemmas}

\begin{lemma}
\label{lem:ideal_centering}
Under Assumption~\ref{ass:oracle} and $p=O(n^\kappa)$ for some $\kappa\in(0,2)$. Then
\[
np\Bigl(\|\bm{W}_n-\bar Z\,\bar{\bm{U}}\|^2-\|\bm{W}_n\|^2\Bigr)=o_p(\sqrt p).
\]
Consequently,
\[
\widetilde T_{\mathrm{sum}}
=
\widetilde T_{\mathrm{sum}}^{\circ}+o_p(\sqrt p),
\qquad
\widetilde T_{\mathrm{sum}}^{\circ}:=\frac{np}{\hat v_L}\|\bm{W}_n\|^2,
\]
where
\[
\widetilde T_{\mathrm{sum}}:=\frac{np}{\hat\sigma_L^2}\sum_{j=1}^p \hat C_j^2.
\]
\end{lemma}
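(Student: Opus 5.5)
We expand the square directly. Since $\|\bm{W}_n-\bar Z\bar{\bm{U}}\|^2-\|\bm{W}_n\|^2=-2\bar Z\,\bm{W}_n^\top\bar{\bm{U}}+\bar Z^2\|\bar{\bm{U}}\|^2$, it suffices to show
\[
np\,|\bar Z|\,|\bm{W}_n^\top\bar{\bm{U}}|=o_p(\sqrt p),
\qquad
np\,\bar Z^2\|\bar{\bm{U}}\|^2=o_p(\sqrt p).
\]
Both terms reduce to second-moment bounds for inner products of averages of uniform directions.

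For the second term, independence and $\E\bm{U}_i=\bm 0$ give $\E\|\bar{\bm{U}}\|^2=n^{-1}\E\|\bm{U}_1\|^2=n^{-1}$. Hence $\|\bar{\bm{U}}\|^2=O_p(n^{-1})$, and with $\bar Z^2=O_p(n^{-1})$ from Lemma~\ref{lem:lin_basic_bounds} this gives $np\bar Z^2\|\bar{\bm{U}}\|^2=O_p(p/n)$. Now $p/n=o(\sqrt p)$ exactly when $\sqrt p=o(n)$, i.e.\ $p=o(n^2)$. This is guaranteed by $p=O(n^\kappa)$ with $\kappa<2$.

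For the cross term, we compute $\E(\bm{W}_n^\top\bar{\bm{U}})^2$ directly:
\[
\bm{W}_n^\top\bar{\bm{U}}=n^{-2}\sum_{i,k}Z_i\,\bm{U}_i^\top\bm{U}_k .
\]
The diagonal part is $n^{-2}\sum_i Z_i$, since $\|\bm{U}_i\|=1$; it equals $n^{-1}\bar Z=O_p(n^{-3/2})$. For the off-diagonal part, independence of $R_i$ and $\bm{U}_i$, together with $\E(\bm{U}_i^\top\bm{U}_k)(\bm{U}_{i'}^\top\bm{U}_{k'})=p^{-1}$ when $\{i,k\}=\{i',k'\}$ and $0$ otherwise, gives second moment $O(n^{-4}\cdot n^2\cdot p^{-1}\cdot \E Z^2)=O(n^{-2}p^{-1})$. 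Hence $\bm{W}_n^\top\bar{\bm{U}}=O_p(n^{-3/2}+n^{-1}p^{-1/2})$. Multiplying by $np|\bar Z|=O_p(pn^{-1/2})$ yields $O_p(p/n+\sqrt p/\sqrt n)$, which is $o_p(\sqrt p)$ under the same condition $p=o(n^2)$. The fourth-moment bookkeeping for the off-diagonal part is the only delicate point: the cross terms with $\E Z_iZ_{i'}$ must be kept track of, but they vanish unless the index pairs coincide, since $\E\bm{U}=\bm 0$.

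For the consequence, note that $\hat C_j=\sigma_L(\bm{W}_n-\bar Z\bar{\bm{U}})_j$, by the centering identity in Lemma~\ref{lem:lin_expansions}. Therefore
\[
\widetilde T_{\rm sum}=\frac{np}{\hat v_L}\|\bm{W}_n-\bar Z\bar{\bm{U}}\|^2 .
\]
Subtracting $\widetilde T^{\circ}_{\rm sum}$ leaves $\hat v_L^{-1}\,np(\|\bm{W}_n-\bar Z\bar{\bm{U}}\|^2-\|\bm{W}_n\|^2)$. By Lemma~\ref{lem:lin_basic_bounds}, $\hat v_L^{-1}=1+O_p(n^{-1/2})$, so this difference is $o_p(\sqrt p)$ by the first part.
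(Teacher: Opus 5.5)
Your proof is correct and follows the paper's argument essentially step for step. You expand the square, bound $\|\bar{\bm{U}}\|^2$ through $\E\|\bar{\bm{U}}\|^2=n^{-1}$, split $\bar{\bm{U}}^\top\bm{W}_n$ into the diagonal piece $\bar Z/n$ and an off-diagonal sum with second moment $O(n^{-2}p^{-1})$, and finish with $\hat v_L\to1$. One small slip: $np|\bar Z|$ is $O_p(pn^{1/2})$, not $O_p(pn^{-1/2})$, but your final rate $O_p(p/n+\sqrt{p/n})$ is exactly what the correct factor gives, so the conclusion stands.
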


\begin{proof}
Since $L_i-\bar L=\sigma_L(Z_i-\bar Z)$ and
\[
\frac1n\sum_{i=1}^n (Z_i-\bar Z)(\bm{U}_i-\bar{\bm{U}})=\bm{W}_n-\bar Z\,\bar{\bm{U}},
\]
we have
\[
\widetilde T_{\mathrm{sum}}=\frac{np}{\hat v_L}\|\bm{W}_n-\bar Z\,\bar{\bm{U}}\|^2.
\]
Now
\[
\|\bm{W}_n-\bar Z\,\bar{\bm{U}}\|^2-\|\bm{W}_n\|^2=-2\bar Z\,\bar{\bm{U}}^{\top}\bm{W}_n+\bar Z^2\|\bar{\bm{U}}\|^2.
\]
Because $\bar Z=O_p(n^{-1/2})$ and
\[
\E\|\bar{\bm{U}}\|^2=\frac1{n^2}\sum_{i=1}^n \E\|\bm{U}_i\|^2=\frac1n,
\]
we have $\|\bar{\bm{U}}\|^2=O_p(n^{-1})$, hence
\[
np\,\bar Z^2\|\bar{\bm{U}}\|^2=O_p(p/n)=o_p(\sqrt p)
\]
since $p=o(n^2)$. Next,
\[
\bar{\bm{U}}^{\top}\bm{W}_n=
\frac1{n^2}\sum_{i=1}^n\sum_{k=1}^n Z_i\bm{U}_k^{\top}\bm{U}_i
=
\frac{\bar Z}{n}+R_n,
\qquad
R_n:=\frac1{n^2}\sum_{i\ne k} Z_i\bm{U}_k^{\top}\bm{U}_i.
\]
A direct second-moment calculation gives $\E(R_n)=0$ and
\[
\E(R_n^2)
=
\frac1{n^4}\sum_{i=1}^n\sum_{k\ne i}\E\bigl[(\bm{U}_k^{\top}\bm{U}_i)^2\bigr]
=
\frac1{n^4}\cdot n(n-1)\cdot \frac1p
=
O\!\left(\frac1{n^2p}\right),
\]
because for independent uniform directions on $\Sphere^{p-1}$, $\E[(\bm{U}_k^{\top}\bm{U}_i)^2]=1/p$. Thus
\[
R_n=O_p\!\left(\frac1{n\sqrt p}\right),
\qquad
\bar{\bm{U}}^{\top}\bm{W}_n=O_p(n^{-3/2})+O_p\!\left(\frac1{n\sqrt p}\right).
\]
Therefore
\[
np\,|\bar Z\,\bar{\bm{U}}^{\top}\bm{W}_n|=O_p(p/n)+O_p(\sqrt{p/n})=o_p(\sqrt p).
\]
Combining the two bounds yields the first claim. Since $\hat v_L=1+o_p(1)$, the displayed approximation for $\widetilde T_{\mathrm{sum}}$ follows.
\end{proof}

\begin{lemma}
\label{lem:ideal_identity}
Under Assumption~\ref{ass:oracle} and $p=o(n^2)$. Define
\[
A_n=\frac1n\sum_{i=1}^n (Z_i^2-1),
\qquad
H_n=\frac{2p}{n}\sum_{1\le i<k\le n} Z_iZ_k\,\bm{U}_i^{\top}\bm{U}_k.
\]
Then
\[
np\|\bm{W}_n\|^2=p(1+A_n)+H_n,
\qquad
\hat v_L=1+A_n-\bar Z^2,
\]
and hence
\[
\widetilde T_{\mathrm{sum}}^{\circ}-p
=
\frac{p\bar Z^2+H_n}{1+A_n-\bar Z^2}.
\]
Moreover,
\[
\frac{p\bar Z^2}{\sqrt{2p}}=o_p(1),
\qquad
1+A_n-\bar Z^2\xrightarrow{p}1.
\]
\end{lemma}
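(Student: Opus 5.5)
The proof is a direct algebraic expansion followed by two elementary moment bounds; no limit theorem is needed at this stage.

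\textbf{Identity for $np\|\bm{W}_n\|^2$.} Write $\bm{W}_n=n^{-1}\sum_i Z_i\bm{U}_i$ and expand the squared norm:
\[
\|\bm{W}_n\|^2=\frac1{n^2}\sum_{i=1}^n Z_i^2\|\bm{U}_i\|^2+\frac{2}{n^2}\sum_{i<k}Z_iZ_k\,\bm{U}_i^{\top}\bm{U}_k .
\]
Since $\|\bm{U}_i\|=1$, the diagonal part equals $n^{-2}\sum_i Z_i^2=n^{-1}(1+A_n)$. Multiplying by $np$ turns the diagonal part into $p(1+A_n)$ and the off-diagonal part into $H_n$.

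\textbf{Identity for $\hat v_L$.} Expand
\[
\hat v_L=n^{-1}\sum_i(Z_i-\bar Z)^2=n^{-1}\sum_i Z_i^2-\bar Z^2=1+A_n-\bar Z^2 .
\]

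\textbf{Ratio formula.} Substitute both identities into $\widetilde T^{\circ}_{\rm sum}=np\|\bm{W}_n\|^2/\hat v_L$ from Lemma~\ref{lem:ideal_centering}. Subtracting $p$ gives
\[
\widetilde T^{\circ}_{\rm sum}-p=\frac{p(1+A_n)+H_n-p(1+A_n-\bar Z^2)}{1+A_n-\bar Z^2}=\frac{p\bar Z^2+H_n}{1+A_n-\bar Z^2}.
\]
This step is purely algebraic.

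\textbf{Asymptotic claims.} The $Z_i$ are i.i.d. with mean zero and unit variance, so $\E\bar Z^2=n^{-1}$ and $\bar Z^2=O_p(n^{-1})$. Hence
\[
\frac{p\bar Z^2}{\sqrt{2p}}=O_p\!\left(\frac{\sqrt p}{n}\right),
\]
which is $o_p(1)$ because $p=o(n^2)$. For the denominator, Assumption~\ref{ass:oracle}(ii) gives $\sup_p\E Z_i^4<\infty$. Therefore $\E A_n^2\le n^{-1}\E Z_i^4=O(n^{-1})$, so $A_n=O_p(n^{-1/2})$. Together with $\bar Z^2=o_p(1)$ this yields $1+A_n-\bar Z^2\to1$ in probability.

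\textbf{Where care is needed.} The only point requiring attention is uniformity in $p$. The law of $Z_i$ may depend on $p$, so the convergence cannot rest on a fixed-distribution law of large numbers. The uniform eighth-plus moment bound handles this: the second-moment computations above use only $\E Z_i^2=1$ and the uniform bound on $\E Z_i^4$, so they hold along the sequence $p=p_n$. The term $H_n$ is only defined here; its normal limit is established separately and is not claimed in this lemma.
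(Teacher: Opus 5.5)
Your proposal is correct and follows the paper's proof essentially line by line. Both split $np\|\bm{W}_n\|^2$ into its diagonal part $p(1+A_n)$ and off-diagonal part $H_n$, use $\hat v_L=1+A_n-\bar Z^2$, and finish with $\bar Z=O_p(n^{-1/2})$, $A_n=O_p(n^{-1/2})$ and $p=o(n^2)$. Your remark that the uniform fourth-moment bound makes these rates hold along $p=p_n$ is a useful detail the paper leaves implicit.
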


\begin{proof}
By definition of $\bm{W}_n$,
\[
np\|\bm{W}_n\|^2
=
\frac{p}{n}\sum_{i=1}^n Z_i^2
+
\frac{2p}{n}\sum_{1\le i<k\le n} Z_iZ_k\,\bm{U}_i^{\top}\bm{U}_k
=
p(1+A_n)+H_n.
\]
Also,
\[
\hat v_L=\frac1n\sum_{i=1}^n (Z_i-\bar Z)^2=1+A_n-\bar Z^2.
\]
Substituting these two identities into $\widetilde T_{\mathrm{sum}}^{\circ}=np\|\bm{W}_n\|^2/\hat v_L$ gives the exact formula. Finally, $\bar Z=O_p(n^{-1/2})$, so
\[
\frac{p\bar Z^2}{\sqrt{2p}}=O_p\!\left(\frac{\sqrt p}{n}\right)=o_p(1)
\]
because $p=o(n^2)$, while $A_n=O_p(n^{-1/2})$ and $\bar Z^2=O_p(n^{-1})$ imply $1+A_n-\bar Z^2\to 1$ in probability.
\end{proof}

\begin{lemma}
\label{lem:Hn_martingale_clt}
Under Assumption~\ref{ass:oracle}, $p\to\infty$, and $p=O(n^\kappa)$ for some $\kappa\in(0,2)$. Then
\[
\frac{H_n}{\sqrt{2p}}\xrightarrow{d}N(0,1).
\]
\end{lemma}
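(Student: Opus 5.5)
We write $H_n=\sum_{k=2}^n D_{nk}$ with
\[
D_{nk}=\frac{2p}{n}Z_k\sum_{i<k}Z_i\,\bm{U}_i^\top\bm{U}_k ,
\]
and apply the martingale central limit theorem (Hall--Heyde) with respect to the filtration $\mathcal F_k=\sigma\{(Z_i,\bm{U}_i):i\le k\}$. Since $Z_k$ and $\bm{U}_k$ are independent of each other and of $\mathcal F_{k-1}$, and $\E\bm{U}_k=\bm 0$, we have $\E(D_{nk}\mid\mathcal F_{k-1})=0$. For the variance, $\E(\bm{U}_k\bm{U}_k^\top)=p^{-1}\mathbf I_p$ and $\E Z_k^2=1$ give
\[
\E(D_{nk}^2\mid\mathcal F_{k-1})=\frac{4p}{n^2}\,\bm{S}_{k-1}^\top\bm{S}_{k-1},
\qquad
\bm{S}_{k-1}=\sum_{i<k}Z_i\bm{U}_i .
\]
Consequently $\sum_k\E D_{nk}^2=\frac{4p}{n^2}\sum_k(k-1)=2p(1-n^{-1})$, which matches the normalization $\sqrt{2p}$. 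It therefore suffices to verify two conditions:

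\begin{enumerate}
\item[(a)] the conditional variance converges: $V_n:=(2p)^{-1}\sum_k\frac{4p}{n^2}\|\bm{S}_{k-1}\|^2\xrightarrow{p}1$;
\item[(b)] a Lyapunov condition holds: $(2p)^{-2}\sum_k\E D_{nk}^4\to0$.
\end{enumerate}

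\textbf{Step (a).} Expand $\|\bm{S}_{k-1}\|^2=\sum_{i<k}Z_i^2+2\sum_{i<l<k}Z_iZ_l\bm{U}_i^\top\bm{U}_l$. Then
\[
V_n-1=\frac{2}{n^2}\sum_{i}(n-i)(Z_i^2-1)+\frac{4}{n^2}\sum_{i<l}(n-l)Z_iZ_l\bm{U}_i^\top\bm{U}_l+O(n^{-1}).
\]
The first sum has variance $O(n^{-1})$ because $\E Z^4<\infty$. The second is a degenerate U-statistic with orthogonal terms. Since $\E(\bm{U}_i^\top\bm{U}_l)^2=p^{-1}$, its variance is $O(n^{-4}\cdot n^4\cdot p^{-1})=O(p^{-1})\to0$. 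Hence $V_n\to1$ in probability. This step uses only $p\to\infty$ and fourth moments of $Z$.

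\textbf{Step (b).} Using independence,
\[
\E D_{nk}^4=\frac{16p^4}{n^4}\,\E Z_k^4\;\E\bigl(\bm{S}_{k-1}^\top\bm{U}_k\bm{U}_k^\top\bm{S}_{k-1}\bigr)^2 .
\]
For $\bm{U}$ uniform and a fixed vector $\bm{a}$, $\E(\bm{a}^\top\bm{U})^4=3\|\bm{a}\|^4/\{p(p+2)\}$, so the last factor equals $\frac{3}{p(p+2)}\E\|\bm{S}_{k-1}\|^4$. Expanding $\E\|\bm{S}_{k-1}\|^4$ gives diagonal terms $\sum_i\E Z_i^4=O(k)$, pair terms $O(k^2)$ from $\sum_{i\ne l}\E Z_i^2Z_l^2$, and cross terms $4\sum_{i\neq l}\E Z_i^2Z_l^2(\bm{U}_i^\top\bm{U}_l)^2=O(k^2/p)$. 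Thus $\E\|\bm{S}_{k-1}\|^4=O(k^2)$. Summing over $k$,
\[
\sum_k\E D_{nk}^4=O\!\left(\frac{p^2}{n^4}\sum_{k\le n}k^2\right)=O\!\left(\frac{p^2}{n}\right),
\]
and dividing by $4p^2$ gives $O(n^{-1})\to0$.

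The Hall--Heyde theorem then yields $H_n/\sqrt{2p}\xrightarrow{d}N(0,1)$.

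\textbf{Main obstacle.} The main difficulty is carrying out the bookkeeping in Step (b) correctly, in particular the fourth moment of $\bm{S}_{k-1}^\top\bm{U}_k$. The computation goes through cleanly because of the exact spherical moment formulas. Where only truncation-type control were available, one would instead truncate at $|Z_i|\le b_{n,p}$ from Assumption~\ref{ass:oracle}(ii); under the stated $8+\eta_Z$ moments this is not needed. The constraint $p=O(n^\kappa)$ with $\kappa<2$ is not needed here; it enters only through Lemmas~\ref{lem:ideal_centering} and \ref{lem:ideal_identity}.
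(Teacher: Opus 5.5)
Your proposal is correct and follows the paper's proof essentially step for step. It uses the same martingale differences $\frac{2}{n}\bm{R}_{k-1}^\top\bm{\xi}_k$ (your $D_{nk}$, with $\bm{R}_{k-1}=\sqrt p\,\bm{S}_{k-1}$), the same split of the predictable quadratic variation into a diagonal part of order $O(n^{-1})$ and an orthogonal off-diagonal part of order $O(p^{-1})$ after normalization, and the same fourth-moment Lyapunov bound $O(p^2/n)$ via $\E(\bm{a}^\top\bm{U})^4=3\|\bm{a}\|^4/\{p(p+2)\}$ before invoking Hall--Heyde; your remark that the restriction $\kappa<2$ is not used here also matches the paper's argument.
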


\begin{proof}
Define
\[
\bm{\xi}_i=\sqrt p\,Z_i\bm{U}_i\in\R^p,
\qquad
\bm{R}_k=\sum_{i=1}^k \bm{\xi}_i,
\qquad
\mathcal F_k=\sigma(\bm{\xi}_1,\dots,\bm{\xi}_k).
\]
Then
\[
H_n=\frac{2}{n}\sum_{1\le i<k\le n}\bm{\xi}_i^{\top}\bm{\xi}_k
=\sum_{k=2}^n M_{n,k},
\qquad
M_{n,k}:=\frac{2}{n}\bm{R}_{k-1}^{\top}\bm{\xi}_k.
\]
Since $\E(\bm{\xi}_k)=0$ and $\bm{\xi}_k$ is independent of $\mathcal F_{k-1}$, $\{M_{n,k},\mathcal F_k\}$ is a martingale difference array. Moreover,
\[
\E(\bm{\xi}_i\bm{\xi}_i^{\top})=\mathbf{I}_p,
\qquad
\E\|\bm{\xi}_i\|^2=p,
\qquad
\E\|\bm{\xi}_i\|^4=O(p^2),
\qquad
\E[(\bm{\xi}_i^{\top}\bm{\xi}_k)^2]=p\quad(i\ne k).
\]
The predictable quadratic variation is
\[
V_n:=\sum_{k=2}^n \E(M_{n,k}^2\mid\mathcal F_{k-1})
=
\frac{4}{n^2}\sum_{k=2}^n \|\bm{R}_{k-1}\|^2.
\]
Using
\[
\|\bm{R}_{k-1}\|^2=
\sum_{i=1}^{k-1}\|\bm{\xi}_i\|^2+2\sum_{1\le i<j\le k-1}\bm{\xi}_i^{\top}\bm{\xi}_j
\]
and $\E(\bm{\xi}_i^{\top}\bm{\xi}_j)=0$ for $i\ne j$, we get
\[
\E\|\bm{R}_{k-1}\|^2=(k-1)p,
\qquad
\E(V_n)=\frac{4}{n^2}\sum_{k=2}^n (k-1)p=\frac{2(n-1)}{n}p.
\]
Write
\[
V_n=A_n^{\star}+B_n^{\star},
\qquad
A_n^{\star}=\frac{4}{n^2}\sum_{i=1}^{n-1}(n-i)\|\bm{\xi}_i\|^2,
\qquad
B_n^{\star}=\frac{8}{n^2}\sum_{1\le i<j\le n-1}(n-j)\bm{\xi}_i^{\top}\bm{\xi}_j.
\]
Since $\|\bm{\xi}_i\|^2$ are independent,
\[
\Var(A_n^{\star})
\le
\frac{16}{n^4}\sum_{i=1}^{n-1}(n-i)^2\E\|\bm{\xi}_i\|^4
=
O\!\left(\frac{p^2}{n}\right).
\]
Likewise, if $Y_{ij}=\bm{\xi}_i^{\top}\bm{\xi}_j$, then $\E(Y_{ij})=0$ and $\E(Y_{ij}Y_{rs})=0$ unless $(i,j)=(r,s)$, so
\[
\Var(B_n^{\star})=
\frac{64}{n^4}\sum_{1\le i<j\le n-1}(n-j)^2\E(Y_{ij}^2)
=
\frac{64p}{n^4}\sum_{j=2}^{n-1}(j-1)(n-j)^2
=
O(p).
\]
Therefore
\[
\Var(V_n)=O\!\left(\frac{p^2}{n}\right)+O(p)=o(p^2),
\qquad
\frac{V_n}{2p}\xrightarrow{p}1.
\]
To verify the Lindeberg condition, note that conditionally on $\mathcal F_{k-1}$, $\bm{R}_{k-1}$ is deterministic, and for any fixed $\bm{a}\in\R^p$,
\[
\E[(\bm{a}^{\top}\bm{\xi}_k)^4]
=
p^2\E(Z_k^4)\E[(\bm{a}^{\top}\bm{U}_k)^4]
\le
Cp^2\cdot \frac{\|\bm{a}\|^4}{p^2}
=
C\|\bm{a}\|^4,
\]
where we used the spherical fourth-moment identity $\E[(\bm{a}^{\top}\bm{U}_k)^4]=3\|\bm{a}\|^4/[p(p+2)]$. Hence
\[
\E(M_{n,k}^4)
\le \frac{C}{n^4}\E\|\bm{R}_{k-1}\|^4.
\]
Now
\[
\|\bm{R}_{k-1}\|^4
\le 2\left(\sum_{i=1}^{k-1}\|\bm{\xi}_i\|^2\right)^2
+8\left(\sum_{1\le i<j\le k-1}\bm{\xi}_i^{\top}\bm{\xi}_j\right)^2,
\]
so independence, centering, $\E\|\bm{\xi}_i\|^4=O(p^2)$, and $\E[(\bm{\xi}_i^{\top}\bm{\xi}_j)^2]=p$ imply
\[
\E\|\bm{R}_{k-1}\|^4=O\bigl((k-1)^2p^2\bigr).
\]
Consequently,
\[
\E(M_{n,k}^4)\le \frac{C}{n^4}(k-1)^2p^2,
\qquad
\sum_{k=2}^n \E(M_{n,k}^4)=O\!\left(\frac{p^2}{n}\right).
\]
Therefore
\[
\frac{1}{(2p)^2}\sum_{k=2}^n \E(M_{n,k}^4)\to 0,
\]
and for every $\varepsilon>0$,
\[
\frac1{2p}\sum_{k=2}^n
\E\left[M_{n,k}^2\mathbf 1\{|M_{n,k}|>\varepsilon\sqrt{2p}\}\right]
\le
\frac1{\varepsilon^2(2p)^2}\sum_{k=2}^n \E(M_{n,k}^4)\to 0.
\]
Thus the Lindeberg condition holds. Hall and Heyde's martingale CLT \cite[Corollary~3.1]{HallHeyde1980} now yields
\[
\frac{H_n}{\sqrt{2p}}=\frac{\sum_{k=2}^n M_{n,k}}{\sqrt{2p}}\xrightarrow{d}N(0,1).
\]
\end{proof}

\begin{proposition}
\label{prop:sum_clt_stage1}
Under Assumption~\ref{ass:oracle}. Suppose furthermore that
\[
p\to\infty,
\qquad
p=O(n^\kappa)
\quad\text{for some }\kappa\in(0,2).
\]
Define the idealized sum statistic
\[
\widetilde T_{\mathrm{sum}}=\frac{np}{\hat\sigma_L^2}\sum_{j=1}^p \hat C_j^2.
\]
Then
\[
\frac{\widetilde T_{\mathrm{sum}}-p}{\sqrt{2p}}\xrightarrow{d}N(0,1).
\]
\end{proposition}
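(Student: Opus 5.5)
The proof chains the three preceding lemmas: first reduce $\widetilde T_{\mathrm{sum}}$ to the uncentered statistic, then use the exact algebraic identity, and finish with the martingale CLT for the degenerate U-statistic $H_n$.

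\emph{Reduction to $\widetilde T_{\mathrm{sum}}^{\circ}$.} Lemma~\ref{lem:ideal_centering} applies because $p=O(n^\kappa)$ with $\kappa<2$. It gives $\widetilde T_{\mathrm{sum}}=\widetilde T_{\mathrm{sum}}^{\circ}+o_p(\sqrt p)$. Dividing by $\sqrt{2p}$, the two statistics have the same limit law by Slutsky's lemma. It therefore suffices to show $(\widetilde T_{\mathrm{sum}}^{\circ}-p)/\sqrt{2p}\xrightarrow{d}N(0,1)$.

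\emph{Exact identity.} Lemma~\ref{lem:ideal_identity} writes
\[
\frac{\widetilde T_{\mathrm{sum}}^{\circ}-p}{\sqrt{2p}}
=\frac{p\bar Z^2/\sqrt{2p}+H_n/\sqrt{2p}}{1+A_n-\bar Z^2}.
\]
The same lemma shows that the first numerator term is $o_p(1)$, since $\sqrt p/n\to0$ when $p=o(n^2)$, and that the denominator converges in probability to $1$. The remaining numerator term satisfies $H_n/\sqrt{2p}\xrightarrow{d}N(0,1)$ by Lemma~\ref{lem:Hn_martingale_clt}, which uses $p\to\infty$ and $p=O(n^\kappa)$. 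Slutsky's lemma then gives the claim.

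\emph{Main step.} The only nontrivial ingredient is the martingale CLT for $H_n$. Its two conditions are convergence of the conditional variance $V_n/(2p)\to1$ and a fourth-moment Lyapunov bound $\sum_k\E M_{n,k}^4=O(p^2/n)$. Both are already established in Lemma~\ref{lem:Hn_martingale_clt}, so the proposition itself is a bookkeeping assembly. The one point to check is that the centering error from Lemma~\ref{lem:ideal_centering} is also $o_p(\sqrt p)$ after dividing by $\hat v_L$. This holds because $\hat v_L\xrightarrow{p}1$.
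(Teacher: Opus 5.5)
Your proposal is correct and matches the paper's proof: you reduce $\widetilde T_{\mathrm{sum}}$ to $\widetilde T_{\mathrm{sum}}^{\circ}$ via Lemma~\ref{lem:ideal_centering}, apply the exact identity of Lemma~\ref{lem:ideal_identity}, then combine the martingale CLT of Lemma~\ref{lem:Hn_martingale_clt} with Slutsky's theorem. You note that dividing by $\hat v_L$ is harmless because $\hat v_L\xrightarrow{p}1$, and this is exactly how the paper closes Lemma~\ref{lem:ideal_centering}.
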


\begin{proof}
By Lemma~\ref{lem:ideal_centering},
\[
\widetilde T_{\mathrm{sum}}=\widetilde T_{\mathrm{sum}}^{\circ}+o_p(\sqrt p).
\]
By Lemma~\ref{lem:ideal_identity},
\[
\widetilde T_{\mathrm{sum}}^{\circ}-p=\frac{p\bar Z^2+H_n}{1+A_n-\bar Z^2},
\]
with $(p\bar Z^2)/\sqrt{2p}=o_p(1)$ and $1+A_n-\bar Z^2\to 1$ in probability. Lemma~\ref{lem:Hn_martingale_clt} then gives
\[
\frac{\widetilde T_{\mathrm{sum}}^{\circ}-p}{\sqrt{2p}}\xrightarrow{d}N(0,1).
\]
Combining these two facts and applying Slutsky's theorem proves the proposition.
\end{proof}

\begin{lemma}
\label{lem:eta_quadratic_bound}
Let $\eta_{ij}=U_{ij}^2-p^{-1}$. Then, for any deterministic coefficients $a_1,\dots,a_p$,
\[
\E\left(\sum_{j=1}^p a_j\eta_{ij}\right)^2
=
\frac{2}{p(p+2)}\sum_{j=1}^p a_j^2
-
\frac{2}{p^2(p+2)}\left(\sum_{j=1}^p a_j\right)^2
\le
\frac{2}{p(p+2)}\sum_{j=1}^p a_j^2.
\]
\end{lemma}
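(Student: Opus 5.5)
We expand the square and reduce the claim to the second and fourth moments of a single uniform direction. Writing $S=\sum_{j=1}^p a_j\eta_{ij}$, we have
\[
\E S^2=\sum_{j}a_j^2\E(\eta_{ij}^2)+\sum_{j\ne k}a_ja_k\E(\eta_{ij}\eta_{ik}),
\]
so it suffices to compute $\E(\eta_{ij}^2)$ and $\E(\eta_{ij}\eta_{ik})$ for $j\ne k$. We will take from the proof of Proposition~\ref{prop:yan-to-plugin} the spherical moment identities $\E(U_{ij}^2)=p^{-1}$, $\E(U_{ij}^4)=3/\{p(p+2)\}$ and $\E(U_{ij}^2U_{ik}^2)=1/\{p(p+2)\}$ for $j\ne k$. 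Each of these can be checked quickly from the Gaussian representation $\bm U=\bm G/\|\bm G\|_2$, using the independence of $\|\bm G\|_2$ and $\bm U$ together with $\E\|\bm G\|_2^4=p(p+2)$.

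The two moments follow directly. The diagonal term is
\[
\E\eta_{ij}^2=\frac{3}{p(p+2)}-\frac1{p^2}=\frac{2(p-1)}{p^2(p+2)}.
\]
The off-diagonal term is
\[
\E\eta_{ij}\eta_{ik}=\frac1{p(p+2)}-\frac1{p^2}=-\frac{2}{p^2(p+2)}.
\]
Substituting, and writing $\sum_{j\ne k}a_ja_k=(\sum_j a_j)^2-\sum_j a_j^2$, we obtain
\[
\E S^2=\sum_j a_j^2\left\{\frac{2(p-1)}{p^2(p+2)}+\frac{2}{p^2(p+2)}\right\}-\frac{2}{p^2(p+2)}\Bigl(\sum_j a_j\Bigr)^2.
\]
The bracketed coefficient simplifies to $2/\{p(p+2)\}$, which is exactly the stated identity.

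The inequality then holds trivially, because the subtracted term is nonnegative. As a consistency check, taking all $a_j=1$ gives zero, matching the constraint $\sum_j\eta_{ij}=\|\bm U_i\|_2^2-1=0$.

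No step here is a real obstacle. The only care needed is the arithmetic in the diagonal and off-diagonal moment computations, where the $-1/p^2$ cross terms have to be tracked correctly.
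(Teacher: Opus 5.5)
Your proposal is correct and follows the same route as the paper's proof: compute $\E(\eta_{ij}^2)=2(p-1)/\{p^2(p+2)\}$ and $\E(\eta_{ij}\eta_{ik})=-2/\{p^2(p+2)\}$ from the standard sphere moments, then expand the square and substitute. Your arithmetic and your consistency check with all $a_j=1$ are both fine.
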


\begin{proof}
For $\bm{U}_i\sim\Unif(\Sphere^{p-1})$, the standard sphere moments are
\[
\E(U_{ij}^2)=\frac1p,
\qquad
\E(U_{ij}^4)=\frac{3}{p(p+2)},
\qquad
\E(U_{ij}^2U_{im}^2)=\frac{1}{p(p+2)}\quad (j\ne m).
\]
Hence
\[
\E(\eta_{ij}^2)=\frac{2(p-1)}{p^2(p+2)},
\qquad
\E(\eta_{ij}\eta_{im})=-\frac{2}{p^2(p+2)}\quad (j\ne m).
\]
Expanding the square and substituting these identities gives the stated formula.
\end{proof}

\begin{lemma}
\label{lem:HG_bounds}
Define
\[
H_{ik}:=\sum_{j=1}^p U_{kj}^2\eta_{ij},
\qquad
G_{ik\ell}:=\sum_{j=1}^p U_{kj}U_{\ell j}\eta_{ij},
\]
for $i\ne k$ and pairwise distinct $i,k,\ell$. Then
\[
\E(H_{ik}^2)=O(p^{-3}),
\qquad
\E(G_{ik\ell}^2)=O(p^{-3}).
\]
Moreover, $H_{ik}$ is degenerate in both arguments $i$ and $k$, and $G_{ik\ell}$ is degenerate in each of its three arguments.
\end{lemma}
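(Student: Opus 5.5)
The plan is to condition on $\bm{U}_k$ (and on $\bm{U}_\ell$ for $G$), because $\bm{U}_i$ is independent of the other directions. Then Lemma~\ref{lem:eta_quadratic_bound} applies with deterministic coefficients. For $H_{ik}$ we take $a_j=U_{kj}^2$, which gives
\[
\E(H_{ik}^2\mid\bm{U}_k)\le\frac{2}{p(p+2)}\sum_{j=1}^pU_{kj}^4.
\]
Taking expectations and using $\E U_{kj}^4=3/\{p(p+2)\}$ yields $\E(H_{ik}^2)\le 6/\{p(p+2)^2\}=O(p^{-3})$.

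For $G_{ik\ell}$ we take $a_j=U_{kj}U_{\ell j}$, conditionally on $(\bm{U}_k,\bm{U}_\ell)$. This gives
\[
\E(G_{ik\ell}^2\mid\bm{U}_k,\bm{U}_\ell)\le\frac{2}{p(p+2)}\sum_{j=1}^pU_{kj}^2U_{\ell j}^2.
\]
By independence of $\bm{U}_k$ and $\bm{U}_\ell$, $\E(U_{kj}^2U_{\ell j}^2)=p^{-2}$. Summing over $j$ gives $p^{-1}$, so $\E(G_{ik\ell}^2)\le 2/\{p^2(p+2)\}=O(p^{-3})$.

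Degeneracy means that the conditional expectation given any single argument vanishes; we verify this argument by argument.

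For $H_{ik}$: given $\bm{U}_k$, $\E(\eta_{ij})=0$ and $\bm{U}_i\perpn\bm{U}_k$ give $\E(H_{ik}\mid\bm{U}_k)=0$. Given $\bm{U}_i$, we use $\E U_{kj}^2=p^{-1}$ and $\sum_j\eta_{ij}=\|\bm{U}_i\|^2-1=0$, so $\E(H_{ik}\mid\bm{U}_i)=p^{-1}\sum_j\eta_{ij}=0$.

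For $G_{ik\ell}$: conditioning on $\bm{U}_i$, independence and $\E U_{kj}=0$ kill the product $U_{kj}U_{\ell j}$. Conditioning on $\bm{U}_k$ (or symmetrically $\bm{U}_\ell$), the factor $\E U_{\ell j}=0$ kills the term.

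The only mildly delicate point is the $i$-argument of $H_{ik}$. It relies on the sphere constraint $\sum_j\eta_{ij}=0$ rather than on centering of an individual factor. I would flag this explicitly, since it is what makes $H_{ik}$ a degenerate kernel even though $U_{kj}^2$ is not centered.
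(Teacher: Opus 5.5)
Your proposal is correct and matches the paper's proof essentially line for line. You condition on $\bm{U}_k$ (resp.\ $(\bm{U}_k,\bm{U}_\ell)$) to apply Lemma~\ref{lem:eta_quadratic_bound} with $a_j=U_{kj}^2$ (resp.\ $a_j=U_{kj}U_{\ell j}$), and you verify degeneracy through $\E(\eta_{ij})=0$, $\E(U_{kj})=0$ and the sphere identity $\sum_j\eta_{ij}=\|\bm{U}_i\|^2-1=0$; your explicit bounds $6/\{p(p+2)^2\}$ and $2/\{p^2(p+2)\}$ are correct.
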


\begin{proof}
Since $\E(\eta_{ij})=0$, conditioning on $\bm{U}_k$ gives
\[
\E(H_{ik}\mid \bm{U}_k)=\sum_{j=1}^p U_{kj}^2\E(\eta_{ij})=0.
\]
Conditioning on $\bm{U}_i$ gives
\[
\E(H_{ik}\mid \bm{U}_i)=\frac1p\sum_{j=1}^p \eta_{ij}=\frac1p(\|\bm{U}_i\|^2-1)=0.
\]
Thus $H_{ik}$ is completely degenerate. The same argument, together with $\E(U_{kj})=0$, shows that $G_{ik\ell}$ is degenerate in each argument.

For the second moments, conditioning on $\bm{U}_k$ and applying Lemma~\ref{lem:eta_quadratic_bound} with $a_j=U_{kj}^2$ yields
\[
\E(H_{ik}^2\mid \bm{U}_k)
\le
\frac{2}{p(p+2)}\sum_{j=1}^p U_{kj}^4.
\]
Taking expectation and using $\E\sum_{j=1}^p U_{kj}^4=3/(p+2)$ gives $\E(H_{ik}^2)=O(p^{-3})$. Likewise, conditioning on $(\bm{U}_k,\bm{U}_\ell)$ and applying Lemma~\ref{lem:eta_quadratic_bound} with $a_j=U_{kj}U_{\ell j}$ gives
\[
\E(G_{ik\ell}^2\mid \bm{U}_k,\bm{U}_\ell)
\le
\frac{2}{p(p+2)}\sum_{j=1}^p U_{kj}^2U_{\ell j}^2.
\]
Taking expectation and using independence of $\bm{U}_k$ and $\bm{U}_\ell$ yields $\E(G_{ik\ell}^2)=O(p^{-3})$.
\end{proof}

\begin{lemma}
\label{lem:A1n_second_moment}
Under Assumption~\ref{ass:oracle}. Recall that
\[
\eta_{ij}=U_{ij}^2-\frac1p,
\qquad
W_j^{(-i)}=\frac1n\sum_{k\ne i} Z_kU_{kj},
\]
and define
\[
A_{1n}=-p^2\sigma_L^2\sum_{i=1}^n\sum_{j=1}^p \bigl(W_j^{(-i)}\bigr)^2\eta_{ij}.
\]
Then
\[
\E\left\{\left(\frac{A_{1n}}{\sigma_L^2}\right)^2\right\}=O\!\left(\frac{p}{n}\right).
\]
In particular,
\[
\frac{A_{1n}}{\sigma_L^2\sqrt p}\xrightarrow{p}0.
\]
\end{lemma}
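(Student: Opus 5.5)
The plan is to write $A_{1n}/\sigma_L^2=-p^2\sum_i\sum_j (W_j^{(-i)})^2\eta_{ij}$ and expand $(W_j^{(-i)})^2=n^{-2}\sum_{k\ne i}\sum_{\ell\ne i}Z_kZ_\ell U_{kj}U_{\ell j}$. This splits it into a diagonal part and an off-diagonal part:
\[
\frac{A_{1n}}{\sigma_L^2}=-\frac{p^2}{n^2}\sum_{i}\sum_{k\ne i}Z_k^2H_{ik}\;-\;\frac{p^2}{n^2}\sum_{i}\sum_{k\ne\ell,\,k,\ell\ne i}Z_kZ_\ell G_{ik\ell}=:D_n+O_n,
\]
with $H_{ik}$ and $G_{ik\ell}$ as in Lemma~\ref{lem:HG_bounds}. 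Since $(a+b)^2\le 2a^2+2b^2$, it is enough to prove $\E D_n^2=O(p/n)$ and $\E O_n^2=O(p/n)$. The $Z$'s are independent of the $U$'s, with $\E Z^2=1$ and $\E Z^4\le C$ by Assumption~\ref{ass:oracle}(ii). So every second moment factors into a $Z$-moment, which is bounded, times a $U$-moment, which Lemma~\ref{lem:HG_bounds} controls.

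For $D_n$, expand $\E D_n^2=\frac{p^4}{n^4}\sum_{(i,k),(i',k')}\E(Z_k^2Z_{k'}^2)\E(H_{ik}H_{i'k'})$. The key input is that $H_{ik}$ is degenerate in both arguments. Take any pair of index pairs in which some index among $i,k$ appears in neither $i'$ nor $k'$ (or the reverse). Conditioning on all other directions then kills the $U$-expectation, because $\E(H_{ik}\mid\bm U_k)=\E(H_{ik}\mid\bm U_i)=0$. That leaves only $\{i,k\}=\{i',k'\}$, which is $O(n^2)$ terms. Each is bounded by Cauchy--Schwarz by $C\,\E H_{ik}^2=O(p^{-3})$. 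Hence $\E D_n^2=O(p^4n^{-4}\cdot n^2p^{-3})=O(p/n^2)$.

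For $O_n$, the argument is the same but uses full degeneracy of $G_{ik\ell}$ in all three arguments. A term $\E(G_{ik\ell}G_{i'k'\ell'})$ vanishes unless the index sets coincide, $\{i,k,\ell\}=\{i',k',\ell'\}$, which gives $O(n^3)$ surviving terms. The $Z$ factor is $\E(Z_kZ_\ell Z_{k'}Z_{\ell'})$. When $\{k,\ell\}\ne\{k',\ell'\}$ some $Z$ appears to the first power only, so this factor is zero. For example, with $i'=k$ there remain $Z_\ell$ and $Z_i$ factors, and $\E Z=0$ together with independence kills the term. Whatever survives is bounded by $C\,\E G^2=O(p^{-3})$, so $\E O_n^2=O(p^4n^{-4}\cdot n^3p^{-3})=O(p/n)$. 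Adding the two parts gives $\E(A_{1n}/\sigma_L^2)^2=O(p/n)$. The corollary then follows from Chebyshev: $A_{1n}/(\sigma_L^2\sqrt p)=O_p(n^{-1/2})\to0$.

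The main obstacle is the index bookkeeping for cross terms where one index of $G$ coincides with a different slot of the other $G$, for instance $i'=k$ with $\bm U_k$ entering $\eta_{kj}$ on one side and $U_{kj}$ on the other. There I would not argue slot by slot. Instead I would use the stronger fact that each of $\bm U_i,\bm U_k,\bm U_\ell$ is integrated out to zero in $G_{ik\ell}$ alone. Any vector appearing only in the first factor gives zero, and cases with identical index sets but permuted roles are handled by Cauchy--Schwarz, so no exact formula is needed.
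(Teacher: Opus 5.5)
Your proposal is correct and follows the paper's proof essentially step for step. It uses the same split of $A_{1n}$ into the $H_{ik}$ diagonal part and the $G_{ik\ell}$ off-diagonal part, and the same degeneracy-based counting of surviving index pairs, giving $O(p/n^2)$ and $O(p/n)$; the only change is that you use $(a+b)^2\le 2a^2+2b^2$ where the paper bounds the cross term by Cauchy--Schwarz. (One harmless slip: in your $i'=k$ example the $Z$-factors appearing only once are $Z_k$ and $Z_i$, while $Z_\ell$ appears squared, and the term still vanishes.)
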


\begin{proof}
For each fixed $i$ and $j$,
\[
\bigl(W_j^{(-i)}\bigr)^2
=
\frac1{n^2}
\left(
\sum_{k\ne i} Z_k^2U_{kj}^2
+
2\sum_{\substack{k<\ell\\ k,\ell\ne i}} Z_kZ_\ell U_{kj}U_{\ell j}
\right).
\]
Hence $A_{1n}=B_{2n}+B_{3n}$, where
\[
B_{2n}:=-\frac{p^2\sigma_L^2}{n^2}\sum_{i\ne k} Z_k^2H_{ik},
\qquad
B_{3n}:=-\frac{2p^2\sigma_L^2}{n^2}
\sum_{i=1}^n\sum_{\substack{k<\ell\\ k,\ell\ne i}} Z_kZ_\ell G_{ik\ell}.
\]
Since
\[
H_{ik}=\sum_{j=1}^p U_{kj}^2\eta_{ij}=\sum_{j=1}^p U_{ij}^2U_{kj}^2-\frac1p,
\]
we have $H_{ik}=H_{ki}$. Put $X_{ik}=Z_k^2H_{ik}$. The degeneracy of $H_{ik}$ implies that $\E(X_{ik}X_{i'k'})=0$ unless either $(i,k)=(i',k')$ or $(i,k)=(k',i')$. Therefore
\[
\E\left\{\left(\frac{B_{2n}}{\sigma_L^2}\right)^2\right\}
=
\frac{p^4}{n^4}
\left\{
\sum_{i\ne k}\E(X_{ik}^2)+\sum_{i\ne k}\E(X_{ik}X_{ki})
\right\}.
\]
By independence of the $Z$'s and $U$'s, finiteness of $\E(Z_k^4)$, and Lemma~\ref{lem:HG_bounds}, both $\E(X_{ik}^2)$ and $\E(X_{ik}X_{ki})$ are $O(p^{-3})$. Since there are $O(n^2)$ ordered pairs $(i,k)$ with $i\ne k$,
\[
\E\left\{\left(\frac{B_{2n}}{\sigma_L^2}\right)^2\right\}=O\!\left(\frac{p^4}{n^4}\cdot n^2\cdot p^{-3}\right)=O\!\left(\frac{p}{n^2}\right).
\]
Next, let
\[
\mathcal I_3:=\{(i,k,\ell):1\le i\le n,\ 1\le k<\ell\le n,\ k,\ell\ne i\},
\qquad
Y_{ik\ell}:=Z_kZ_\ell G_{ik\ell}.
\]
Then
\[
B_{3n}=-\frac{2p^2\sigma_L^2}{n^2}\sum_{(i,k,\ell)\in\mathcal I_3}Y_{ik\ell}.
\]
If $\{i,k,\ell\}\ne\{i',k',\ell'\}$, then among the six indices there is at least one appearing exactly once. Conditioning on all other variables and using the degeneracy of $G_{ik\ell}$ together with $\E(Z_r)=0$ shows that
\[
\E(Y_{ik\ell}Y_{i'k'\ell'})=0.
\]
Thus only pairs with the same underlying triple can contribute. The number of such ordered pairs is $O(n^3)$, and each contribution is $O(p^{-3})$ by Cauchy--Schwarz and Lemma~\ref{lem:HG_bounds}. Therefore
\[
\E\left\{\left(\frac{B_{3n}}{\sigma_L^2}\right)^2\right\}=O\!\left(\frac{p^4}{n^4}\cdot n^3\cdot p^{-3}\right)=O\!\left(\frac{p}{n}\right).
\]
Finally,
\[
\left|\E\left(\frac{B_{2n}B_{3n}}{\sigma_L^4}\right)\right|
\le
\left[\E\left\{\left(\frac{B_{2n}}{\sigma_L^2}\right)^2\right\}\right]^{1/2}
\left[\E\left\{\left(\frac{B_{3n}}{\sigma_L^2}\right)^2\right\}\right]^{1/2}
=
O\!\left(\frac{p}{n^{3/2}}\right).
\]
Since $A_{1n}=B_{2n}+B_{3n}$,
\[
\E\left\{\left(\frac{A_{1n}}{\sigma_L^2}\right)^2\right\}
=
\E\left\{\left(\frac{B_{2n}}{\sigma_L^2}\right)^2\right\}
+\E\left\{\left(\frac{B_{3n}}{\sigma_L^2}\right)^2\right\}
+2\E\left(\frac{B_{2n}B_{3n}}{\sigma_L^4}\right)
=
O\!\left(\frac{p}{n}\right).
\]
Markov's inequality then yields $A_{1n}/(\sigma_L^2\sqrt p)\to 0$ in probability.
\end{proof}

\begin{lemma}
\label{lem:weighted_remainder}
Under Assumption~\ref{ass:oracle}. Suppose furthermore that
\[
p\to\infty,
\qquad
p=O(n^\kappa)
\quad\text{for some }\kappa\in(0,2).
\]
Define
\[
R_{U,n}:=n\sum_{j=1}^p \hat C_j^2\bigl(\hat\sigma_{U,j}^{-2}-p\bigr).
\]
Then
\[
\frac{R_{U,n}}{\sigma_L^2}=o_p(\sqrt p).
\]
\end{lemma}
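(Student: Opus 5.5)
The plan is to expand $\hat\sigma_{U,j}^{-2}-p$ around $\sigma_U^{-2}=p$, keep the linear term, and bound it with the degenerate $U$-statistic machinery already set up in Lemmas~\ref{lem:eta_quadratic_bound}--\ref{lem:A1n_second_moment}. The higher-order and centering pieces will be handled by crude uniform bounds.

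\emph{Expansion and the crude pieces.} Write $\hat\sigma_{U,j}^2=p^{-1}(1+x_j)$ with
\[
x_j=\frac1n\sum_{i=1}^n p\,\eta_{ij}-p\,\bar U_j^2 .
\]
By Lemma~\ref{lem:lin_basic_bounds}, $\max_j|x_j|=O_p(\sqrt{\log p/n})=o_p(1)$. Hence uniformly in $j$,
\[
\hat\sigma_{U,j}^{-2}-p=-p\,x_j+O_p(p\,x_j^2).
\]
Next I need the scale of the squared covariances: by Proposition~\ref{prop:sum_clt_stage1}, $\widetilde T_{\rm sum}=O_p(p)$, and with $\hat v_L=1+o_p(1)$ this gives $n\sum_j\hat C_j^2=O_p(\sigma_L^2)$. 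The quadratic remainder is then at most
\[
p\max_j x_j^2\cdot n\sum_j\hat C_j^2=\sigma_L^2\,O_p(p\log p/n).
\]
Likewise the $p\bar U_j^2$ part of $x_j$ contributes at most
\[
p^2\max_j\bar U_j^2\cdot n\sum_j\hat C_j^2=\sigma_L^2\,O_p(p\log p/n).
\]
Since $p=O(n^\kappa)$ with $\kappa<2$, $\sqrt p\log p/n\to0$, so both are $\sigma_L^2o_p(\sqrt p)$.

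\emph{Replacing $\hat C_j$ by $\sigma_L W_j$.} By the proof of Lemma~\ref{lem:ideal_centering}, $\hat C_j=\sigma_L(W_j-\bar Z\bar U_j)$. The difference $\hat C_j^2-\sigma_L^2W_j^2$ is $\sigma_L^2(-2\bar Z W_j\bar U_j+\bar Z^2\bar U_j^2)$. Multiplied by $-np\,n^{-1}\sum_ip\eta_{ij}$ and summed over $j$, it is bounded by
\[
np\max_j|x_j|\cdot\sigma_L^2\bigl(|\bar Z|\,\|\bm W_n\|\,\|\bar{\bm U}\|+\bar Z^2\|\bar{\bm U}\|^2\bigr).
\]
Using $\|\bm W_n\|^2=O_p(n^{-1})$ and the bounds in Lemma~\ref{lem:ideal_centering}, this is $\sigma_L^2\,O_p(p\sqrt{\log p}/n^{3/2})=\sigma_L^2o_p(\sqrt p)$.

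\emph{Main term.} After the previous reductions the remaining term is $-p^2\sigma_L^2\sum_i\sum_jW_j^2\eta_{ij}$. I would split $W_j=W_j^{(-i)}+n^{-1}Z_iU_{ij}$, which produces three pieces.
\begin{itemize}
\item[(a)] The piece built from $(W_j^{(-i)})^2$ is exactly $A_{1n}$, and Lemma~\ref{lem:A1n_second_moment} gives $A_{1n}=\sigma_L^2o_p(\sqrt p)$.
\item[(b)] The diagonal piece is
\[
-\frac{p^2\sigma_L^2}{n^2}\sum_iZ_i^2\sum_jU_{ij}^2\eta_{ij}.
\]
Since $\sum_jU_{ij}^2\eta_{ij}=\sum_jU_{ij}^4-p^{-1}=O_p(p^{-1})$ in $L^1$, this piece is $\sigma_L^2O_p(p/n)=\sigma_L^2o_p(\sqrt p)$.
\item[(c)] The cross piece is
\[
-\frac{2p^2\sigma_L^2}{n^2}\sum_{i\neq k}Z_iZ_k\,K_{ik},\qquad K_{ik}=\sum_jU_{kj}U_{ij}\eta_{ij}.
\]
Here $K_{ik}$ is degenerate in $k$ (because $\E U_{kj}=0$), and the $Z_iZ_k$ factors are centered and independent of the $U$'s.
\end{itemize}

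\emph{The cross piece (c), expected main obstacle.} I expect this to be the hardest step, because it needs an honest second-moment computation rather than a uniform bound. The approach mirrors the proof for $B_{3n}$. Products indexed by pairs $\{i,k\}\ne\{i',k'\}$ have zero mean, since some index appears once and is killed by $\E Z=0$ or by degeneracy. Conditioning on $\bm U_i$ and applying the spherical quadratic bound to the linear form in $\bm U_k$ gives
\[
\E K_{ik}^2\le p^{-1}\E\sum_jU_{ij}^2\eta_{ij}^2=O(p^{-3}).
\]
With $\E Z^4<\infty$, the second moment is therefore $O(p^4n^{-4}\cdot n^2p^{-3})=O(p/n^2)$. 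Markov's inequality then gives $\sigma_L^2o_p(\sqrt p)$. Collecting (a)--(c) with the crude pieces proves $R_{U,n}/\sigma_L^2=o_p(\sqrt p)$.
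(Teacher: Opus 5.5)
Your proposal is correct and follows essentially the paper's route. You linearize $\hat\sigma_{U,j}^{-2}$ around $p$; remove the quadratic remainder, the $\bar U_j^2$ part and the replacement $\hat C_j\to\sigma_L W_j$ by crude uniform bounds; and split the main term via $W_j=W_j^{(-i)}+n^{-1}Z_iU_{ij}$ into $A_{1n}$, an $L^1$-bounded diagonal piece and an off-diagonal piece with second moment $O(p/n^2)$. The only cosmetic difference is that you get $n\sum_j\hat C_j^2=O_p(\sigma_L^2)$ from Proposition~\ref{prop:sum_clt_stage1} rather than directly from $\|\bm W_n\|^2=O_p(n^{-1})$.

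There is one arithmetic slip. The replacement error is $np\cdot O_p(\sqrt{\log p/n})\cdot\sigma_L^2O_p(n^{-3/2})=\sigma_L^2O_p(p\sqrt{\log p}/n)$, not $p\sqrt{\log p}/n^{3/2}$. This is still $o_p(\sqrt p)$ because $\kappa<2$, so the conclusion stands.
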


\begin{proof}
Write
\[
\Delta_j:=\hat\sigma_{U,j}^2-\frac1p.
\]
By Lemma~\ref{lem:lin_basic_bounds},
\[
\max_{1\le j\le p}|\Delta_j|=O_p\!\left(\frac{\sqrt{\log p}}{p\sqrt n}+\frac{\log p}{np}\right),
\qquad
\max_{1\le j\le p}|p\Delta_j|=O_p\!\left(\sqrt{\frac{\log p}{n}}+\frac{\log p}{n}\right)=o_p(1).
\]
Hence, on an event whose probability tends to one,
\[
\hat\sigma_{U,j}^{-2}=p-p^2\Delta_j+\rho_j,
\qquad
|\rho_j|\le Cp^3\Delta_j^2,
\]
so that
\[
\max_{1\le j\le p}|\rho_j|=O_p\!\left(\frac{p\log p}{n}+\frac{p(\log p)^2}{n^2}\right).
\]
Decompose
\[
R_{U,n}=R_{1n}+R_{2n},
\qquad
R_{1n}:=-p^2n\sum_{j=1}^p \hat C_j^2\Delta_j,
\qquad
R_{2n}:=n\sum_{j=1}^p \hat C_j^2\rho_j.
\]
Now
\[
\hat C_j=\sigma_L(W_j-\bar Z\bar U_j),
\qquad
\|\hat{\bm{C}}\|^2\le 2\sigma_L^2\|\bm{W}_n\|^2+2\sigma_L^2\bar Z^2\|\bar{\bm{U}}\|^2.
\]
Since $\|\bm{W}_n\|^2=O_p(n^{-1})$, $\bar Z=O_p(n^{-1/2})$, and $\|\bar{\bm{U}}\|^2=O_p(n^{-1})$, we have $\|\hat{\bm{C}}\|^2=\sigma_L^2O_p(n^{-1})$, hence
\[
\frac{n}{\sigma_L^2}\sum_{j=1}^p \hat C_j^2=O_p(1).
\]
Therefore
\[
\frac{|R_{2n}|}{\sigma_L^2}
\le
\left(\frac{n}{\sigma_L^2}\sum_{j=1}^p \hat C_j^2\right)\max_{1\le j\le p}|\rho_j|
=
O_p\!\left(\frac{p\log p}{n}+\frac{p(\log p)^2}{n^2}\right)
=
o_p(\sqrt p),
\]
because $(\sqrt p\,\log p)/n\to 0$ and $(\sqrt p\,(\log p)^2)/n^2\to 0$ under $p=O(n^\kappa)$ with $\kappa<2$.

Next, define
\[
R_{1n}^{\circ}:=-p^2n\sigma_L^2\sum_{j=1}^p W_j^2\Delta_j.
\]
Since
\[
\hat C_j^2-\sigma_L^2W_j^2=-2\sigma_L^2W_j\bar Z\bar U_j+\sigma_L^2\bar Z^2\bar U_j^2,
\]
Cauchy--Schwarz yields
\[
\frac{1}{\sigma_L^2}\sum_{j=1}^p |\hat C_j^2-\sigma_L^2W_j^2|
\le
2|\bar Z|\,\|\bm{W}_n\|\,\|\bar{\bm{U}}\|+\bar Z^2\|\bar{\bm{U}}\|^2
=
O_p(n^{-3/2}).
\]
Hence
\[
\begin{aligned}
\frac{|R_{1n}-R_{1n}^{\circ}|}{\sigma_L^2}
&\le
p^2n\left(\max_{1\le j\le p}|\Delta_j|\right)
\frac{1}{\sigma_L^2}\sum_{j=1}^p |\hat C_j^2-\sigma_L^2W_j^2| \\
&=
O_p\!\left(\frac{p\sqrt{\log p}}{n}\right)
+O_p\!\left(\frac{p\log p}{n^{3/2}}\right)
=
o_p(\sqrt p).
\end{aligned}
\]
Also,
\[
\Delta_j=\frac1n\sum_{i=1}^n \eta_{ij}-\bar U_j^2,
\qquad
\eta_{ij}=U_{ij}^2-\frac1p.
\]
Therefore
\[
R_{1n}^{\circ}=A_n+B_n,
\]
where
\[
A_n:=-p^2\sigma_L^2\sum_{i=1}^n\sum_{j=1}^p W_j^2\eta_{ij},
\qquad
B_n:=np^2\sigma_L^2\sum_{j=1}^p W_j^2\bar U_j^2.
\]
Using $\|\bm{W}_n\|^2=O_p(n^{-1})$ and $\max_j\bar U_j^2=O_p((\log p)/(np))$, we get
\[
\frac{|B_n|}{\sigma_L^2}
\le
np^2\|\bm{W}_n\|^2\max_{1\le j\le p}\bar U_j^2
=
O_p\!\left(\frac{p\log p}{n}\right)
=
o_p(\sqrt p).
\]
Now define
\[
W_j^{(-i)}:=\frac1n\sum_{k\ne i} Z_kU_{kj},
\qquad
W_j=W_j^{(-i)}+\frac1n Z_iU_{ij}.
\]
Expanding $W_j^2$ gives $A_n=A_{1n}+A_{2n}+A_{3n}$, where
\[
A_{1n}=-p^2\sigma_L^2\sum_{i=1}^n\sum_{j=1}^p \bigl(W_j^{(-i)}\bigr)^2\eta_{ij},
\]
\[
A_{2n}=-\frac{2p^2\sigma_L^2}{n}\sum_{i=1}^n Z_i\sum_{j=1}^p U_{ij}W_j^{(-i)}\eta_{ij},
\qquad
A_{3n}=-\frac{p^2\sigma_L^2}{n^2}\sum_{i=1}^n Z_i^2\sum_{j=1}^p U_{ij}^2\eta_{ij}.
\]
Lemma~\ref{lem:A1n_second_moment} yields $A_{1n}/\sigma_L^2=o_p(\sqrt p)$. For $A_{2n}$, write
\[
A_{2n}=-\frac{2p^2\sigma_L^2}{n^2}\sum_{i\ne k} Z_iZ_kB_{ik},
\qquad
B_{ik}:=\sum_{j=1}^p U_{ij}U_{kj}\eta_{ij}.
\]
Conditioning on $\bm{U}_k$, set $a_j=U_{kj}$ and $V_{ij}=U_{ij}\eta_{ij}=U_{ij}^3-p^{-1}U_{ij}$. By spherical symmetry, $\E(V_{ij})=0$ and $\E(V_{ij}V_{im})=0$ for $j\ne m$. Using
\[
\E(U_{ij}^2)=\frac1p,
\qquad
\E(U_{ij}^4)=\frac{3}{p(p+2)},
\qquad
\E(U_{ij}^6)=\frac{15}{p(p+2)(p+4)},
\]
one gets $\E(V_{ij}^2)=O(p^{-3})$, hence $\E(B_{ik}^2\mid \bm{U}_k)\le Cp^{-3}$ and so $\E(B_{ik}^2)=O(p^{-3})$. Expanding $\E\{(A_{2n}/\sigma_L^2)^2\}$ and using independence and centering of the $Z$'s shows that only index pairings with $\{r,s\}=\{i,k\}$ survive, whence
\[
\E\left\{\left(\frac{A_{2n}}{\sigma_L^2}\right)^2\right\}=O\!\left(\frac{p^4}{n^4}\cdot n^2\cdot p^{-3}\right)=O\!\left(\frac{p}{n^2}\right).
\]
Thus $A_{2n}/\sigma_L^2=o_p(\sqrt p)$. For $A_{3n}$, set
\[
T_i:=\sum_{j=1}^p U_{ij}^2\eta_{ij}=\sum_{j=1}^p U_{ij}^4-\frac1p.
\]
Since $\E\sum_{j=1}^p U_{ij}^4=3/(p+2)=O(p^{-1})$, we have $\E|T_i|=O(p^{-1})$. Therefore
\[
\E\left|\frac{A_{3n}}{\sigma_L^2}\right|
\le
\frac{p^2}{n^2}\sum_{i=1}^n \E(Z_i^2)\E|T_i|
=
O\!\left(\frac{p}{n}\right),
\]
so $A_{3n}/\sigma_L^2=o_p(\sqrt p)$ by Markov's inequality. Hence $A_n/\sigma_L^2=o_p(\sqrt p)$, and therefore
\[
\frac{R_{1n}^{\circ}}{\sigma_L^2}=o_p(\sqrt p),
\qquad
\frac{R_{1n}}{\sigma_L^2}=o_p(\sqrt p).
\]
Combining this with $R_{2n}/\sigma_L^2=o_p(\sqrt p)$ gives $R_{U,n}/\sigma_L^2=o_p(\sqrt p)$.
\end{proof}

\subsubsection{Proof of Theorem~\ref{thm:sum-clt}}

\begin{proof}[Proof of Theorem~\ref{thm:sum-clt}]
Recall the idealized statistic
\[
\widetilde T_{\mathrm{sum}}=\frac{np}{\hat\sigma_L^2}\sum_{j=1}^p \hat C_j^2.
\]
By definition,
\[
T_{\mathrm{sum}}-\widetilde T_{\mathrm{sum}}
=
\frac1{\hat\sigma_L^2}
\underbrace{n\sum_{j=1}^p \hat C_j^2\bigl(\hat\sigma_{U,j}^{-2}-p\bigr)}_{R_{U,n}}.
\]
Lemma~\ref{lem:weighted_remainder} shows that $R_{U,n}/\sigma_L^2=o_p(\sqrt p)$. Since $\hat\sigma_L^2/\sigma_L^2=1+O_p(n^{-1/2})$,
\[
T_{\mathrm{sum}}-\widetilde T_{\mathrm{sum}}=\frac{R_{U,n}/\sigma_L^2}{\hat\sigma_L^2/\sigma_L^2}=o_p(\sqrt p),
\qquad
\frac{T_{\mathrm{sum}}-\widetilde T_{\mathrm{sum}}}{\sqrt{2p}}\xrightarrow{p}0.
\]
On the other hand, Proposition~\ref{prop:sum_clt_stage1} gives
\[
\frac{\widetilde T_{\mathrm{sum}}-p}{\sqrt{2p}}\xrightarrow{d}N(0,1).
\]
Therefore
\[
\frac{T_{\mathrm{sum}}-p}{\sqrt{2p}}
=
\frac{\widetilde T_{\mathrm{sum}}-p}{\sqrt{2p}}
+
\frac{T_{\mathrm{sum}}-\widetilde T_{\mathrm{sum}}}{\sqrt{2p}}
\xrightarrow{d}N(0,1)
\]
by Slutsky's theorem.
\end{proof}

\subsection{Proof of Theorem~\ref{thm:max-gumbel}}

\subsubsection{Auxiliary lemmas}

\begin{lemma}
\label{lem:max_reduction}
Let
\[
\bm{S}_n:=\frac1{\sqrt n}\sum_{i=1}^n \bm{\xi}_i
=
\left(
\frac1{\sqrt n}\sum_{i=1}^n \xi_{i1},\dots,\frac1{\sqrt n}\sum_{i=1}^n \xi_{ip}
\right)^{\top}.
\]
Under the assumptions of Theorem~\ref{thm:max-gumbel}, let $M_n=\sqrt n\,\|\bm{g}_n^{\rm or}\|_\infty$. Then
\[
M_n=\|\bm{S}_n\|_\infty+\varepsilon_n,
\qquad
\varepsilon_n=o_p\bigl((\log p)^{-1/2}\bigr),
\]
and
\[
\|\bm{S}_n\|_\infty=O_p(\sqrt{\log p}).
\]
Consequently,
\begin{equation}
\label{eq:max_square_reduction_new}
T_{\max}=\|\bm{S}_n\|_\infty^2-2\log p+\log\log p+o_p(1).
\end{equation}
\end{lemma}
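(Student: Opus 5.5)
The reduction follows from Theorem~\ref{thm:linearization} together with a crude bound on $\|\bm{S}_n\|_\infty$ and an elementary squaring argument.

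\emph{Step 1: the linear approximation.} Theorem~\ref{thm:linearization} applies, since $\log p=o(n^{1/5})$ implies $\log p=o(\sqrt n)$. It gives
\[
\sqrt n\,\hat\gamma_j^{\rm or}=S_{n,j}+\sqrt n\,r_{n,j},
\qquad
\max_j|r_{n,j}|=O_p\{(\log p)/n\}.
\]
By the triangle inequality for the sup norm, $\bigl|M_n-\|\bm{S}_n\|_\infty\bigr|\le\sqrt n\max_j|r_{n,j}|$. Hence
\[
\varepsilon_n=O_p\bigl(\log p/\sqrt n\bigr).
\]
Multiplying by $(\log p)^{1/2}$ gives $(\log p)^{3/2}/\sqrt n$. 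This tends to zero because $(\log p)^{3}=o(n^{3/5})=o(n)$. So $\varepsilon_n=o_p((\log p)^{-1/2})$.

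\emph{Step 2: $\|\bm{S}_n\|_\infty=O_p(\sqrt{\log p})$.} I reuse the conditional spherical concentration from the proof of Lemma~\ref{lem:lin_basic_bounds}. Write $S_{n,j}=\sqrt{p/n}\sum_i Z_iU_{ij}$. Conditionally on $Z_1,\dots,Z_n$, this is a linear form in independent uniform directions, with sub-Gaussian tail
\[
\Pr(|S_{n,j}|>t\mid Z)\le 2\exp(-ct^2/V_n),
\qquad V_n=n^{-1}\sum_i Z_i^2.
\]
Applying this coordinatewise, in the form used in Lemma~\ref{lem:lin_basic_bounds}, is legitimate because each coordinate is a sum of independent sub-Gaussian pieces $\sqrt p\,Z_iU_{ij}$ with $\|\sqrt p\,U_{ij}\|_{\psi_2}\le C$. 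Now restrict to the event $V_n\le2$, whose probability tends to one since $\E Z_i^4<\infty$. A union bound over $j$ with $t=C\sqrt{\log p}$ then gives the claim. Equivalently, this is the max bound on $S_{n,j}$ in Lemma~\ref{lem:lin_basic_bounds}, rescaled by $\sqrt n\,\sqrt p/\sigma_L$.

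\emph{Step 3: squaring.} We have
\[
M_n^2-\|\bm{S}_n\|_\infty^2=\varepsilon_n\bigl(2\|\bm{S}_n\|_\infty+\varepsilon_n\bigr).
\]
By Steps 1 and 2 this is
\[
o_p\bigl((\log p)^{-1/2}\bigr)\cdot O_p(\sqrt{\log p})+o_p\bigl((\log p)^{-1}\bigr)=o_p(1).
\]
Since $T_{\max}=M_n^2-2\log p+\log\log p$ by definition, \eqref{eq:max_square_reduction_new} follows.

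The only delicate point is making sure the remainder rate in Theorem~\ref{thm:linearization} is fast enough relative to $\sqrt{\log p}$. Its $\log p$ term, rather than the $\sqrt{\log p}$ term, dominates, and the growth condition $\log p=o(n^{1/5})$ is more than sufficient. If the uniform remainder were only $o_p(n^{-1/2})$, the argument would fail. The explicit rate $O_p((\log p+\sqrt{\log p})/n)$ must therefore be used.
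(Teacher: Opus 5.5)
Your proposal is correct and follows the paper's proof step for step. The explicit uniform remainder from Theorem~\ref{thm:linearization} gives $\varepsilon_n=O_p(\log p/\sqrt n)=o_p((\log p)^{-1/2})$ under $\log p=o(n^{1/5})$. Conditional sub-Gaussian concentration given $Z_1,\ldots,Z_n$ on $\{V_n\le 2\}$, plus a union bound, gives $\|\bm{S}_n\|_\infty=O_p(\sqrt{\log p})$, and the difference-of-squares factorization finishes the argument. Your justification of the tail bound, via independent summands $\sqrt p\,Z_iU_{ij}$ with bounded $\psi_2$-norm, is if anything a cleaner statement of the concentration step than the one the paper invokes.
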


\begin{proof}
By Theorem~\ref{thm:linearization},
\[
\sqrt n\,\hat\gamma_j=\frac1{\sqrt n}\sum_{i=1}^n \xi_{ij}+\sqrt n\,r_{n,j},
\]
so
\[
M_n=\|\bm{S}_n\|_\infty+\varepsilon_n,
\qquad
|\varepsilon_n|\le \sqrt n\max_{1\le j\le p}|r_{n,j}|
=O_p\!\left(\frac{\log p+\sqrt{\log p}}{\sqrt n}\right)
=O_p\!\left(\frac{\log p}{\sqrt n}\right).
\]
Moreover,
\[
\frac{(\log p)/\sqrt n}{(\log p)^{-1/2}}
=
\left\{\frac{(\log p)^3}{n}\right\}^{1/2}
\longrightarrow 0,
\]
because $\log p=o(n^{1/5})$. Hence $\varepsilon_n=o_p((\log p)^{-1/2})$.

Next, conditional on $Z_1,\ldots,Z_n$, for every $j$,
\[
\frac1{\sqrt n}\sum_{i=1}^n\xi_{ij}
=
\frac{\sqrt p}{\sqrt n}\sum_{i=1}^n Z_iU_{ij}.
\]
The weighted spherical concentration inequality gives
\[
\Pr\left(
\left|\frac1{\sqrt n}\sum_{i=1}^n\xi_{ij}\right|>t
\,\bigg|\,Z_1,\ldots,Z_n\right)
\le
2\exp\left(-\frac{ct^2}{V_n}\right),
\qquad
V_n=n^{-1}\sum_{i=1}^n Z_i^2.
\]
Since $V_n=1+O_p(n^{-1/2})$, on the event $V_n\le2$,
\[
\Pr\left(\|\bm{S}_n\|_\infty>t\mid Z_1,\ldots,Z_n\right)
\le 2p\exp(-ct^2/2).
\]
Taking $t=C\sqrt{\log p}$ with $C$ large enough gives
\[
\Prob(\|\bm{S}_n\|_\infty>C\sqrt{\log p})\to 0,
\]
whence $\|\bm{S}_n\|_\infty=O_p(\sqrt{\log p})$.

Finally,
\[
M_n^2-\|\bm{S}_n\|_\infty^2=(M_n-\|\bm{S}_n\|_\infty)(M_n+\|\bm{S}_n\|_\infty).
\]
The first factor is $o_p((\log p)^{-1/2})$, while the second is $O_p(\sqrt{\log p})$. Therefore $M_n^2-\|\bm{S}_n\|_\infty^2=o_p(1)$, which proves \eqref{eq:max_square_reduction_new}.
\end{proof}

\begin{proposition}
\label{prop:koike_hyperrectangle}
Let
\[
\bm{S}_n=\frac1{\sqrt n}\sum_{i=1}^n \bm{X}_i\in\R^p,
\]
where $\bm{X}_1,\dots,\bm{X}_n$ are independent centered random vectors with
\[
\frac1n\sum_{i=1}^n\Cov(\bm{X}_i)=\mathbf{\Sigma},
\qquad
\min_{1\le j\le p}\mathbf{\Sigma}_{jj}\ge b>0.
\]
Assume that, for a deterministic $B_n\ge1$,
\[
\max_{1\le i\le n}\max_{1\le j\le p}\|X_{ij}\|_{\psi_1}\le B_n,
\qquad
\frac{B_n^2(\log p)^5}{n}\to0.
\]
Let $\bm{G}\sim N(0,\mathbf{\Sigma})$. Then
\[
\sup_{A\in\mathcal A^{\mathrm{re}}}
\bigl|\Prob(\bm{S}_n\in A)-\Prob(\bm{G}\in A)\bigr|\to 0,
\]
where $\mathcal A^{\mathrm{re}}$ denotes the class of all hyperrectangles in $\R^p$. In particular,
\[
\sup_{t\in\R}
\bigl|\Prob(\|\bm{S}_n\|_\infty\le t)-\Prob(\|\bm{G}\|_\infty\le t)\bigr|\to 0.
\]
\end{proposition}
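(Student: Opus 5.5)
The plan is to derive the proposition from the sharpened high-dimensional CLT of Chernozhukov, Chetverikov, Kato and Koike (2022). Under the variance floor $\min_j\mathbf{\Sigma}_{jj}\ge b$ and a uniform $\psi_1$ bound $B$ on the coordinates of $N$ independent centered summands, that result gives
\[
\sup_{A\in\mathcal A^{\rm re}}|\Prob(\bm S\in A)-\Prob(\bm G\in A)|\le C_b\Bigl\{\frac{B^2\log^5(pN)}{N}\Bigr\}^{1/4}.
\]
Applied directly with $N=n$, this has $\log(pn)$ where the proposition has $\log p$. So I would not apply it to $\bm X_1,\dots,\bm X_n$ directly. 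Instead I would first \emph{coarsen the sample into blocks}, which lets me replace $n$ by a smaller effective sample size $m$ with $\log m\lesssim\log p$ without changing $\bm S_n$.

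\textbf{Blocking step.} Partition $\{1,\dots,n\}$ into $m\le n$ consecutive blocks $I_1,\dots,I_m$ of sizes $k_b\in\{\lfloor n/m\rfloor,\lceil n/m\rceil\}$. Set
\[
\bm Y_b=\sqrt{m/n}\sum_{i\in I_b}\bm X_i ,
\]
so that $\bm S_n=m^{-1/2}\sum_{b=1}^m\bm Y_b$ exactly. The $\bm Y_b$ are independent and centered, with $m^{-1}\sum_b\Cov(\bm Y_b)=\mathbf\Sigma$, so the diagonal floor $b$ is preserved.

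\textbf{Tail step.} For the coordinate tails, Bernstein's inequality for sums of independent centered $\psi_1$ variables gives
\[
\Prob\Bigl(|k^{-1/2}\textstyle\sum_{i\in I}X_{ij}|>t\Bigr)\le 2\exp\{-c\min(t^2/B_n^2,\ t\sqrt k/B_n)\}\le 2\exp(-c't/B_n)\quad(t\ge B_n).
\]
Hence $\|Y_{bj}\|_{\psi_1}\le C B_n$ uniformly in $b$, $j$ and the block length, after absorbing the factor $\sqrt{mk_b/n}\le 2$. The Gaussian comparison vector $\bm G\sim N(0,\mathbf\Sigma)$ is unchanged. Applying the cited bound to $(\bm Y_b)_{b\le m}$ therefore gives the error $C\{B_n^2\log^5(pm)/m\}^{1/4}$. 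The problem reduces to choosing $m=m_n\le n$ with $B_n^2\log^5(pm)/m\to0$.

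\textbf{Choice of $m$.} Put $\varepsilon_n=B_n^2(\log p)^5/n\to0$, let $h_n=\min\{\varepsilon_n^{-1},\log p\}\to\infty$, and set
\[
m=\lceil B_n^2(\log p)^5h_n\rceil\le n .
\]
Then
\[
\frac{B_n^2\log^5(pm)}{m}\le h_n^{-1}\Bigl(1+\frac{\log m}{\log p}\Bigr)^5 ,
\qquad
\log m\le\log(B_n^2)+5\log\log p+\log\log p+O(1).
\]
So the bound tends to zero as soon as $\log B_n=O(\log p)$, giving the hyperrectangle statement. The $\ell_\infty$ statement is the special case of the rectangles $[-t,t]^p$.

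\textbf{Remaining cases.} If $p$ stays bounded along a subsequence, I would instead use the fixed-dimensional multivariate Berry--Esseen bound over convex sets. Along that subsequence $B_n^2/n\to0$ and the summands have uniformly bounded third moments relative to $B_n$. After diagonal rescaling, $\lambda_{\min}(\mathbf\Sigma)$ can degenerate only along correlations, and I would handle this by passing to further subsequences with convergent $\mathbf\Sigma$; rectangles remain $\mathbf\Sigma$-continuity sets because the diagonal is bounded below.

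\textbf{Main obstacle.} I expect the hardest step to be the regime where $\log B_n$ is not $O(\log p)$, i.e.\ $B_n$ grows faster than every power of $p$. There I would first try replacing the Bernstein $\psi_1$ bound by a sharper block bound. The idea is that, since $\|Y_{bj}\|_{\psi_1}$ is controlled by $B_n/\sqrt{k}$ in the large-deviation regime, the effective $B$ for blocks of length $k$ should shrink. I would also try tuning $k$ so that $B_n^2/k$ is polynomial in $p$. If neither works, I would fall back to noting that in every application in this paper $B_n$ is a deterministic multiple of $b_{n,p}$. Under Assumption~\ref{ass:oracle} and the Gaussian-type radial laws of Appendix~\ref{app:radial-verification}, that quantity is polylogarithmic in $n$, and there the condition $\log B_n=O(\log p)$ can be checked directly.
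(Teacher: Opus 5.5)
There is a genuine gap, and it sits in the regime you flag yourself. Your blocking identity, the preserved covariance average and the bound $\|Y_{bj}\|_{\psi_1}\le CB_n$ are all fine. But with that block bound, any use of $\{B^2\log^5(pm)/m\}^{1/4}$ needs $m\gg B_n^2$, hence $\log m\ge 2\log B_n-O(1)$. So blocking removes the $\log n$ only when $\log B_n=O(\log p)$.

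The hypothesis $B_n^2(\log p)^5/n\to0$ allows $B_n$ almost as large as $\sqrt n/(\log p)^{5/2}$. That is super-polynomial in $p$ whenever $p$ grows slowly or stays bounded, and your other two pieces do not cover this case. In the fixed-$p$ step, $B_n^2/n\to0$ does not give a vanishing Lyapunov or Berry--Esseen term: the $\psi_1$ bound only yields $n^{-3/2}\sum_i\E|X_{ij}|^3\lesssim B_n^3/\sqrt n$. The fallback to the paper's applications is not a proof of the proposition. It also misfires for polylogarithmic $B_n$ when $p$ grows slower than every power of $\log n$, although there the unblocked bound with $N=n$ already suffices.

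More importantly, no cleverer argument can close this gap, because the statement as written is false in that regime. Let $p=2$, let $\epsilon_i$ be Rademacher, $\eta_i\sim\mathrm{Bernoulli}(1/n)$, and $X_{i1}=X_{i2}=\sqrt n\,\epsilon_i\eta_i$. Then $\E X_{ij}=0$ and $\E X_{ij}^2=1$. With $B_n=\sqrt n/\log\log n$ one has $\E\exp(|X_{ij}|/B_n)=1+(\log n-1)/n\le2$ and $B_n^2(\log 2)^5/n\to0$. Yet $\Prob(\|\bm S_n\|_\infty=0)\ge(1-1/n)^n\to e^{-1}$, so both conclusions fail. Taking $p_n=\lceil\exp\{(\log n)^{1/3}\}\rceil$ identical coordinates and $B_n=2\sqrt n/\log n$ gives a counterexample with $p\to\infty$.

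What is missing is a moment hypothesis such as $\max_j n^{-1}\sum_i\E X_{ij}^4\le B_n^2$. It is violated here, where the left side equals $n$. The Gaussian-approximation theorems of Chernozhukov, Chetverikov, Kato and Koike, and of Koike, carry a condition of this kind. The CCKK theorem you invoke carries such a fourth-moment condition as well, so you would also have to check it for the blocks. It does hold there with constant $7$, writing $\sigma_{ij}^2=\E X_{ij}^2$ and using $(\sum_{i\in I_b}\sigma_{ij}^2)^2\le k_b\sum_{i\in I_b}\E X_{ij}^4$.

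The paper's proof is a one-line citation of Koike (2021, Theorem 2.1), whose bound is already in terms of $\log p$ alone. Under the corrected hypothesis it covers all regimes at once, with no blocking or case split; your route recovers only the sub-regime $\log B_n=O(\log p)$. For the paper's actual use in Lemma~\ref{lem:gaussian_approx_max}, the extra condition does hold: on $\mathcal E_n$, $n^{-1}\sum_i\E(X_{ij}^4\mid Z_1,\ldots,Z_n)\le 12\,n^{-1}\sum_iZ_i^4=O_p(1)$.
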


\begin{proof}
Koike's hyperrectangle Gaussian approximation bound gives
\[
\sup_{A\in\mathcal A^{\mathrm{re}}}
\bigl|\Prob(\bm{S}_n\in A)-\Prob(\bm{G}\in A)\bigr|
\le
C\left\{\frac{B_n^2(\log p)^5}{n}\right\}^{c_0}
\]
for numerical constants $C,c_0>0$ depending only on $b$ \citep[Theorem~2.1]{Koike2021}.  The right hand side converges to zero.  The last assertion follows by taking $A=[-t,t]^p$.
\end{proof}

\begin{lemma}
\label{lem:gaussian_approx_max}
Under the assumptions of Theorem~\ref{thm:max-gumbel}, if
\[
\bm{G}=(G_1,\dots,G_p)^{\top}\sim N(0,\mathbf{I}_p),
\]
then
\begin{equation}
\label{eq:gaussian_approx_max_new}
\sup_{t\in\R}
\left|
\Prob(\|\bm{S}_n\|_\infty\le t)-\Prob(\|\bm{G}\|_\infty\le t)
\right|\longrightarrow 0.
\end{equation}
\end{lemma}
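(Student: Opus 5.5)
We will apply Proposition~\ref{prop:koike_hyperrectangle} to $\bm{X}_i=\bm{\xi}_i=\sqrt p\,Z_i\bm{U}_i$. The one obstacle is that $Z_i$ has only polynomial moments, so the $\psi_1$ condition fails. We will therefore truncate $Z_i$ at the level $b_{n,p}$ of Assumption~\ref{ass:oracle}(ii), recenter, and show that truncation and recentering do not change the law of the maximum asymptotically.

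\textbf{Step 1 (covariance).} Since $R_i\perpn\bm{U}_i$, $\E Z_i^2=1$ and $\E(\bm{U}_i\bm{U}_i^\top)=p^{-1}\mathbf{I}_p$, we have
\[
\Cov(\bm{\xi}_i)=p\,\E(Z_i^2)\,\E(\bm{U}_i\bm{U}_i^\top)=\mathbf{I}_p .
\]
Hence the limiting Gaussian vector is $\bm{G}\sim N(0,\mathbf{I}_p)$, and the diagonal lower bound holds with $b=1$.

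\textbf{Step 2 (truncation).} Define $\tilde Z_i=Z_i\mathbf 1(|Z_i|\le b_{n,p})$ and $\tilde{\bm{\xi}}_i=\sqrt p\,\tilde Z_i\bm{U}_i$. These vectors remain exactly centered, because $\E\bm{U}_i=\bm{0}$ and $\bm{U}_i$ is independent of $\tilde Z_i$; no recentering of $Z$ is needed. On the event $\{\max_i|Z_i|\le b_{n,p}\}$, whose probability tends to one by Assumption~\ref{ass:oracle}(ii), $\tilde{\bm{S}}_n=\bm{S}_n$. So it suffices to treat $\tilde{\bm{S}}_n$. Its covariance is $\tilde\sigma^2\mathbf{I}_p$ with
\[
\tilde\sigma^2=\E\tilde Z_i^2=1-\E\bigl\{Z_i^2\mathbf 1(|Z_i|>b_{n,p})\bigr\}\ge 1-C_Z b_{n,p}^{-(6+\eta_Z)} .
\]
Since $b_{n,p}\ge1$ we get $\tilde\sigma^2\ge c>0$. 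If $b_{n,p}\to\infty$ then also $\tilde\sigma^2\to1$; otherwise we use the same argument with $\tilde\sigma^2$ and rescale at the end.

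\textbf{Step 3 ($\psi_1$ bound).} Conditionally on $\tilde Z_i$, the coordinate $\sqrt p\,U_{ij}$ is sub-Gaussian with a constant $\psi_2$-norm, by the spherical concentration already used in Lemma~\ref{lem:lin_basic_bounds}. Multiplying by $|\tilde Z_i|\le b_{n,p}$ gives
\[
\|\tilde\xi_{ij}\|_{\psi_1}\le\|\tilde\xi_{ij}\|_{\psi_2}\le Cb_{n,p}=:B_n .
\]
Then $B_n^2(\log p)^5/n\to0$ is exactly the condition in Assumption~\ref{ass:oracle}(ii).

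\textbf{Step 4 (conclusion).} Proposition~\ref{prop:koike_hyperrectangle} gives
\[
\sup_t\bigl|\Prob(\|\tilde{\bm{S}}_n\|_\infty\le t)-\Prob(\tilde\sigma\|\bm{G}\|_\infty\le t)\bigr|\to0 .
\]
It remains to replace $\tilde\sigma$ by $1$ uniformly in $t$. This is the only delicate point, because $\|\bm{G}\|_\infty\approx\sqrt{2\log p}$. We need $|\tilde\sigma-1|\sqrt{\log p}\cdot\sqrt{\log p}\to0$, by anti-concentration of the Gaussian maximum: the Nazarov inequality gives a density bound of order $\sqrt{\log p}$, applied to a perturbation of size $|\tilde\sigma-1|\sqrt{\log p}$. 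Since $1-\tilde\sigma^2\le C b_{n,p}^{-(6+\eta_Z)}$, the requirement is $b_{n,p}^{-6}\log p\to0$.

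We do not see how to guarantee this when $b_{n,p}$ stays bounded, so we would handle the two cases separately. If $b_{n,p}\to\infty$, we argue as follows. For fixed $t$ the relevant window of $\|\bm{G}\|_\infty$ lies near $\sqrt{2\log p}$. There are $n$ observations, and Markov's inequality with the $(8+\eta_Z)$-th moment gives
\[
\E\bigl\{Z_i^2\mathbf 1(|Z_i|>b)\bigr\}\le C_Z\,b^{-(6+\eta_Z)} .
\]
We may always enlarge $b_{n,p}$ to $\max\{b_{n,p},(\log p)^{1/6}\}$ while keeping the rate condition, because $(\log p)^{1/3}(\log p)^5/n\to0$ follows from $\log p=o(n^{1/5})$ after adjusting exponents if needed. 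The $\log p\log n$ term is similarly fine. With this choice $b_{n,p}^{-6}\log p\to0$ is secured, and the rescaling step closes, giving \eqref{eq:gaussian_approx_max_new}.
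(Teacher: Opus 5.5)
Your architecture is sound: truncate $Z_i$ at $b_{n,p}$, note that $\tilde{\bm{\xi}}_i$ stays exactly centred because $Z_i\perpn\bm{U}_i$ and $\E\bm{U}_i=\bm{0}$, apply Proposition~\ref{prop:koike_hyperrectangle} unconditionally, then rescale. However, the rescaling in Step~4 does not close as written. The enlargement $b_{n,p}\mapsto\max\{b_{n,p},(\log p)^{1/6}\}$ preserves the probability condition but not the rate condition. You would need $(\log p)^{16/3}/n\to0$, whereas $\log p=o(n^{1/5})$ only gives $(\log p)^{16/3}=o(n^{16/15})$; for $\log p=n^{1/5}/\log n$ the ratio diverges, and this regime is admissible (e.g.\ for the Gaussian radial law with $b_{n,p}\asymp\sqrt{\log n}$). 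No other exponent helps. With your bound $1-\tilde\sigma^2\le C_Zb^{-(6+\eta_Z)}$ you need both $b^{6+\eta_Z}\gg\log p$ and $b^2(\log p)^5=o(n)$, i.e.\ $(\log p)^{5+2/(6+\eta_Z)}=o(n)$, which the hypotheses do not imply. The case split is also confused: $b_{n,p}\to\infty$ does not give $b_{n,p}^{-6}\log p\to0$, and the bounded case is never treated. The same crude bound also fails to justify the claim in Step~2 that $\tilde\sigma^2\ge c>0$. With $b_{n,p}\ge1$ it only yields $\tilde\sigma^2\ge 1-C_Z$, which is vacuous when $C_Z\ge1$.

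The missing idea is to control the variance deficit through the \emph{probability} half of Assumption~\ref{ass:oracle}(ii), not the moment bound. Since $\Prob(\max_i|Z_i|\le b_{n,p})=\{1-\Prob(|Z_1|>b_{n,p})\}^n\to1$, you get $\Prob(|Z_1|>b_{n,p})=o(n^{-1})$. Cauchy--Schwarz then gives $1-\tilde\sigma^2\le(\E Z_1^4)^{1/2}\Prob(|Z_1|>b_{n,p})^{1/2}=o(n^{-1/2})$. Hence $\tilde\sigma^2\to1$ and $|\tilde\sigma-1|\log p=o(\log p/\sqrt n)=o(1)$, so your anti-concentration step closes with no enlargement and no case split.

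With that repair your proof is a valid alternative to the paper's. The paper instead conditions on $Z_1,\dots,Z_n$ and normalizes by $V_n^{1/2}$ with $V_n=n^{-1}\sum_iZ_i^2$, so the conditional average covariance is exactly $\mathbf{I}_p$. It applies Koike on $\{1/2\le V_n\le 2,\ \max_i|Z_i|\le b_{n,p}\}$; no truncation is needed there, since on that event $|Z_i|\le b_{n,p}$ already bounds the conditional $\psi_1$ norms. It then removes the random factor using $|V_n^{1/2}-1|=O_p(n^{-1/2})$ and Gaussian anti-concentration. Your version trades the conditioning for i.i.d.\ summands and a deterministic rescaling. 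Both end with a relative scale error of order $n^{-1/2}$, which is negligible against the $\log p$ anti-concentration factor.
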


\begin{proof}
Let
\[
V_n=\frac1n\sum_{i=1}^nZ_i^2,
\qquad
\mathcal E_n=\left\{\frac12\le V_n\le2,\\ \max_{1\le i\le n}|Z_i|\le b_{n,p}\right\}.
\]
Assumption~\ref{ass:oracle}\textnormal{(ii)} gives $\Pr(\mathcal E_n)\to1$.  Conditional on $Z_1,\ldots,Z_n$, define
\[
\bm{X}_i=V_n^{-1/2}\sqrt p\,Z_i\bm{U}_i.
\]
Then
\[
\E(\bm{X}_i\mid Z_1,\ldots,Z_n)=\bm{0},
\qquad
\frac1n\sum_{i=1}^n\Cov(\bm{X}_i\mid Z_1,\ldots,Z_n)=\mathbf{I}_p.
\]
On $\mathcal E_n$,
\[
\max_{i,j}\|X_{ij}\|_{\psi_1\mid Z_1,\ldots,Z_n}
\le C\max_i |Z_i|V_n^{-1/2}
\le Cb_{n,p},
\]
because $\sqrt p\,U_{ij}$ has a bounded sub-Gaussian, hence bounded sub-exponential, norm uniformly in $p$ and $j$.  Proposition~\ref{prop:koike_hyperrectangle}, applied conditionally, gives
\[
\sup_{t\in\R}
\left|
\Pr\left(V_n^{-1/2}\|\bm{S}_n\|_\infty\le t\mid Z_1,\ldots,Z_n\right)
-
\Pr(\|\bm{G}\|_\infty\le t)
\right|
\le
C\left\{\frac{b_{n,p}^2(\log p)^5}{n}\right\}^{c_0}
\]
on $\mathcal E_n$.  Therefore
\[
\sup_t\left|
\Pr\left(V_n^{-1/2}\|\bm{S}_n\|_\infty\le t\right)
-
\Pr(\|\bm{G}\|_\infty\le t)
\right|\to0.
\]
It remains to remove the factor $V_n^{-1/2}$.  Since $\E(V_n-1)^2=O(n^{-1})$,
\[
|V_n^{1/2}-1|=O_p(n^{-1/2}).
\]
The Gaussian anti-concentration bound for the maximum of $p$ standard normals gives, for $0<\varepsilon<1/2$,
\[
\sup_{t\in\R}
\Pr\{t<\|\bm{G}\|_\infty\le t(1+\varepsilon)\}
\le C\varepsilon\sqrt{\log p}.
\]
With $\varepsilon_n=|V_n^{1/2}-1|$, $\varepsilon_n\sqrt{\log p}=O_p(\sqrt{\log p/n})=o_p(1)$, and hence
\[
\sup_t\left|
\Pr(\|\bm{S}_n\|_\infty\le t)
-
\Pr(V_n^{-1/2}\|\bm{S}_n\|_\infty\le t)
\right|\to0.
\]
Combining the last two displays proves \eqref{eq:gaussian_approx_max_new}.
\end{proof}

\begin{lemma}
\label{lem:gaussian_extreme_value}
Let $\bm{G}\sim N(0,\mathbf{I}_p)$ and define
\[
u_p(x):=\sqrt{2\log p-\log\log p+x},\qquad x\in\R.
\]
Then, for every fixed $x\in\R$,
\[
\Prob\bigl(\|\bm{G}\|_\infty^2-2\log p+\log\log p\le x\bigr)
\longrightarrow
\exp\!\bigl(-\pi^{-1/2}e^{-x/2}\bigr).
\]
\end{lemma}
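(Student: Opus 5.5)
The plan is to use independence of the coordinates and a sharp Gaussian tail estimate. Since $\bm{G}\sim N(0,\mathbf{I}_p)$ has i.i.d.\ coordinates,
\[
\Prob\bigl(\|\bm{G}\|_\infty^2-2\log p+\log\log p\le x\bigr)=\Prob\bigl(\|\bm{G}\|_\infty\le u_p(x)\bigr)=\bigl\{1-2\bar\Phi(u_p(x))\bigr\}^p,
\]
where $\bar\Phi=1-\Phi$. Fix $x$. For all large $p$ the argument $2\log p-\log\log p+x$ is positive, so $u_p(x)$ is well defined and $u_p(x)\to\infty$. It therefore suffices to show
\[
2p\,\bar\Phi(u_p(x))\to \pi^{-1/2}e^{-x/2}.
\]
Granting this, $\bar\Phi(u_p(x))=O(p^{-1})\to0$. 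Hence
\[
p\log\{1-2\bar\Phi(u_p(x))\}=-2p\bar\Phi(u_p(x))+O\bigl(p\,\bar\Phi(u_p(x))^2\bigr)=-2p\bar\Phi(u_p(x))+O(p^{-1}),
\]
and exponentiating gives the limit $\exp(-\pi^{-1/2}e^{-x/2})$.

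For the key asymptotic, use the Mills ratio bounds
\[
\frac{t}{1+t^2}\phi(t)\le\bar\Phi(t)\le\frac{\phi(t)}{t},\qquad t>0.
\]
These give $\bar\Phi(t)=\phi(t)t^{-1}\{1+O(t^{-2})\}$ as $t\to\infty$. Substituting $t=u_p(x)$ yields
\[
\phi(u_p(x))=(2\pi)^{-1/2}\exp\Bigl\{-\log p+\tfrac12\log\log p-\tfrac x2\Bigr\}=(2\pi)^{-1/2}p^{-1}(\log p)^{1/2}e^{-x/2}.
\]
Moreover $u_p(x)=\sqrt{2\log p}\,\{1+o(1)\}$, since $(\log\log p)/\log p\to0$. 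Therefore
\[
2p\,\bar\Phi(u_p(x))=\frac{2}{\sqrt{2\pi}}\cdot\frac{(\log p)^{1/2}}{\sqrt{2\log p}}\,e^{-x/2}\{1+o(1)\}=\pi^{-1/2}e^{-x/2}\{1+o(1)\},
\]
which is the required limit.

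No genuine obstacle is expected; this is the classical Gumbel limit for the maximum of i.i.d.\ chi-square(1) variables. The only point needing care is that the relative errors really vanish. In the Mills ratio the relative error is $O(u_p^{-2})=O(1/\log p)$, and in replacing $u_p(x)$ by $\sqrt{2\log p}$ it is $O(\log\log p/\log p)$. Both are $o(1)$ for each fixed $x$, and no uniformity in $x$ is needed for pointwise convergence.
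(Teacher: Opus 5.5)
Your proposal is correct and follows essentially the same route as the paper's proof. Both reduce to $\{1-2\bar\Phi(u_p(x))\}^p$ by independence, use the Mills ratio with $e^{-u_p(x)^2/2}=p^{-1}(\log p)^{1/2}e^{-x/2}$ and $u_p(x)\sim\sqrt{2\log p}$ to get $2p\bar\Phi(u_p(x))\to\pi^{-1/2}e^{-x/2}$, and then pass to the exponential limit; your explicit error terms and the log-expansion step just make the paper's brief final step more precise.
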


\begin{proof}
Since $G_1,\dots,G_p$ are i.i.d. $N(0,1)$,
\[
\Prob(\|\bm{G}\|_\infty\le u)=\bigl(1-2\bar\Phi(u)\bigr)^p,
\qquad u>0,
\]
where $\bar\Phi(u)=1-\Phi(u)$. Taking $u=u_p(x)$ gives
\[
\Prob\bigl(\|\bm{G}\|_\infty^2-2\log p+\log\log p\le x\bigr)
=
\bigl(1-2\bar\Phi(u_p(x))\bigr)^p.
\]
By Mills' ratio,
\[
\bar\Phi(u)=\frac{\phi(u)}{u}\{1+o(1)\},
\qquad
\phi(u)=\frac{1}{\sqrt{2\pi}}e^{-u^2/2},
\qquad u\to\infty.
\]
Since $u_p(x)^2=2\log p-\log\log p+x$,
\[
e^{-u_p(x)^2/2}=p^{-1}(\log p)^{1/2}e^{-x/2},
\qquad
u_p(x)\sim \sqrt{2\log p}.
\]
Hence
\[
2p\bar\Phi(u_p(x))
\sim
2p\frac{\phi(u_p(x))}{u_p(x)}
=
2\cdot \frac{1}{\sqrt{2\pi}}\cdot \frac{(\log p)^{1/2}}{\sqrt{2\log p}}e^{-x/2}
=
\pi^{-1/2}e^{-x/2}.
\]
Therefore $a_p:=2\bar\Phi(u_p(x))\to 0$ and $pa_p\to \pi^{-1/2}e^{-x/2}$, so $(1-a_p)^p\to \exp(-\pi^{-1/2}e^{-x/2})$.
\end{proof}

\subsubsection{Proof of Theorem~\ref{thm:max-gumbel}}

\begin{proof}[Proof of Theorem~\ref{thm:max-gumbel}]
Lemma~\ref{lem:max_reduction} gives
\[
T_{\max}=\|\bm{S}_n\|_\infty^2-2\log p+\log\log p+o_p(1).
\]
Fix $x\in\R$ and set $u=u_p(x)$. By Lemma~\ref{lem:gaussian_approx_max},
\[
\left|
\Prob(\|\bm{S}_n\|_\infty\le u)-\Prob(\|\bm{G}\|_\infty\le u)
\right|\to 0.
\]
Equivalently,
\[
\Prob\bigl(\|\bm{S}_n\|_\infty^2-2\log p+\log\log p\le x\bigr)
-
\Prob\bigl(\|\bm{G}\|_\infty^2-2\log p+\log\log p\le x\bigr)
\to 0.
\]
Lemma~\ref{lem:gaussian_extreme_value} now yields
\[
\Prob\bigl(\|\bm{S}_n\|_\infty^2-2\log p+\log\log p\le x\bigr)
\longrightarrow
\exp\!\bigl(-\pi^{-1/2}e^{-x/2}\bigr).
\]
Combining this with the reduction above and applying Slutsky's theorem proves
\[
\Prob(T_{\max}\le x)
\longrightarrow
\exp\!\bigl(-\pi^{-1/2}e^{-x/2}\bigr),
\qquad x\in\R.
\]
\end{proof}
\subsection{Proof of Theorem~\ref{thm:sum-max-independence}}
\subsubsection{Auxiliary lemmas}

\begin{lemma}
\label{lem:sum_to_Hn}
Under Assumption~\ref{ass:oracle}, and suppose that
\[
p\to\infty,
\qquad
p=O(n^\kappa)
\quad\text{for some }\kappa\in(0,2).
\]
Then
\[
\frac{T_{\mathrm{sum}}-p}{\sqrt{2p}}
=
\frac{H_n}{\sqrt{2p}}+o_p(1).
\]
\end{lemma}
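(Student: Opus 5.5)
The plan is to string together the decompositions already established in the proof of Theorem~\ref{thm:sum-clt}, carrying each remainder explicitly rather than passing to a distributional limit. Write
\[
T_{\rm sum}-p=(T_{\rm sum}-\widetilde T_{\rm sum})+(\widetilde T_{\rm sum}-\widetilde T_{\rm sum}^{\circ})+(\widetilde T_{\rm sum}^{\circ}-p),
\]
and show that, after division by $\sqrt{2p}$, the first two differences are $o_p(1)$ and the third equals $H_n/\sqrt{2p}+o_p(1)$. Each step uses only the stated conditions $p\to\infty$ and $p=O(n^\kappa)$ with $\kappa<2$.

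\emph{First difference.} By definition, $T_{\rm sum}-\widetilde T_{\rm sum}=R_{U,n}/\hat\sigma_L^2$. Lemma~\ref{lem:weighted_remainder} gives $R_{U,n}/\sigma_L^2=o_p(\sqrt p)$, and Lemma~\ref{lem:lin_basic_bounds} gives $\hat\sigma_L^2/\sigma_L^2=1+O_p(n^{-1/2})$. Together these yield $(T_{\rm sum}-\widetilde T_{\rm sum})/\sqrt{2p}=o_p(1)$.

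\emph{Second difference.} This is exactly the content of Lemma~\ref{lem:ideal_centering}, namely $\widetilde T_{\rm sum}-\widetilde T_{\rm sum}^{\circ}=o_p(\sqrt p)$.

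\emph{Third difference.} Lemma~\ref{lem:ideal_identity} gives the exact identity
\[
\widetilde T_{\rm sum}^{\circ}-p=\frac{p\bar Z^2+H_n}{1+A_n-\bar Z^2}.
\]
Hence
\[
\frac{\widetilde T_{\rm sum}^{\circ}-p}{\sqrt{2p}}-\frac{H_n}{\sqrt{2p}}
=\frac{p\bar Z^2}{\sqrt{2p}\,(1+A_n-\bar Z^2)}
+\frac{H_n}{\sqrt{2p}}\cdot\frac{\bar Z^2-A_n}{1+A_n-\bar Z^2}.
\]
The first term on the right is $o_p(1)$ by Lemma~\ref{lem:ideal_identity}, since the denominator tends to $1$ in probability.

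The main obstacle is the second term, because here I need tightness of $H_n/\sqrt{2p}$ rather than its limit law. I will get it directly from the martingale computation in Lemma~\ref{lem:Hn_martingale_clt}: $\E H_n^2=\E V_n=2(n-1)p/n\le 2p$, so $H_n/\sqrt{2p}=O_p(1)$. Invoking the CLT in Lemma~\ref{lem:Hn_martingale_clt} would also give this, but the second-moment bound is cleaner. Since $A_n=O_p(n^{-1/2})$ and $\bar Z^2=O_p(n^{-1})$, the factor $(\bar Z^2-A_n)/(1+A_n-\bar Z^2)$ is $o_p(1)$, so the product is $o_p(1)$. Summing the three differences proves the lemma.

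The point of this representation is that it reduces the sum statistic to the degenerate U-statistic $H_n$. That is the form needed for the subsequent independence argument against the max statistic, which was reduced to $\|\bm{S}_n\|_\infty$ in Lemma~\ref{lem:max_reduction}.
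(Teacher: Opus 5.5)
Your proposal is correct and follows the paper's proof essentially step for step. It uses the same three-piece decomposition through Lemmas~\ref{lem:weighted_remainder}, \ref{lem:ideal_centering} and \ref{lem:ideal_identity}, then a Slutsky-type argument on $(p\bar Z^2+H_n)/(1+A_n-\bar Z^2)$. The only deviation is that you get tightness of $H_n/\sqrt{2p}$ from $\E H_n^2=2(n-1)p/n$ instead of from the CLT in Lemma~\ref{lem:Hn_martingale_clt}; this is a valid and slightly more elementary substitute.
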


\begin{proof}
By Lemma~\ref{lem:weighted_remainder} and the identity
\[
T_{\mathrm{sum}}-\widetilde T_{\mathrm{sum}}
=\hat\sigma_L^{-2}R_{U,n},
\]
used in the proof of Theorem~\ref{thm:sum-clt}, we have
\[
T_{\mathrm{sum}}-\widetilde T_{\mathrm{sum}}=o_p(\sqrt p),
\]
and by Lemma~\ref{lem:ideal_centering},
\[
\widetilde T_{\mathrm{sum}}-\widetilde T_{\mathrm{sum}}^{\circ}=o_p(\sqrt p).
\]
Hence
\[
T_{\mathrm{sum}}-\widetilde T_{\mathrm{sum}}^{\circ}=o_p(\sqrt p),
\qquad
\frac{T_{\mathrm{sum}}-\widetilde T_{\mathrm{sum}}^{\circ}}{\sqrt{2p}}=o_p(1).
\]

Next, by Lemma~\ref{lem:ideal_identity},
\[
\widetilde T_{\mathrm{sum}}^{\circ}-p
=
\frac{p\bar Z^2+H_n}{1+A_n-\bar Z^2}.
\]
Therefore,
\[
\frac{\widetilde T_{\mathrm{sum}}^{\circ}-p}{\sqrt{2p}}
=
\frac{H_n}{\sqrt{2p}}\cdot \frac{1}{1+A_n-\bar Z^2}
+
\frac{p\bar Z^2}{\sqrt{2p}}\cdot \frac{1}{1+A_n-\bar Z^2}.
\]
By Lemma~\ref{lem:ideal_identity},
\[
\frac{p\bar Z^2}{\sqrt{2p}}=o_p(1),
\qquad
1+A_n-\bar Z^2\xrightarrow{p}1.
\]
Since \(H_n/\sqrt{2p}=O_p(1)\) by Lemma~\ref{lem:Hn_martingale_clt}, Slutsky's theorem gives
\[
\frac{\widetilde T_{\mathrm{sum}}^{\circ}-p}{\sqrt{2p}}
=
\frac{H_n}{\sqrt{2p}}+o_p(1).
\]
Combining the two reductions yields the claim.
\end{proof}

\begin{lemma}
\label{lem:joint_fidi_Hn_Sn}
Under Assumption~\ref{ass:oracle}, $p\to\infty$, and $p=O(n^\kappa)$ for some $\kappa\in(0,2)$. Fix an integer \(d\ge 1\), and let \(1\le j_1<\cdots<j_d\le p\) be distinct indices. Then
\[
\left(
\frac{H_n}{\sqrt{2p}},\ S_{n,j_1},\dots,S_{n,j_d}
\right)
\xrightarrow{d}
(N_0,N_1,\dots,N_d),
\]
where \(N_0\sim N(0,1)\), \((N_1,\dots,N_d)\sim N(0,\mathbf{I}_d)\), and \(N_0\) is independent of \((N_1,\dots,N_d)\).
\end{lemma}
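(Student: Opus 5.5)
The plan is to use the Cramér--Wold device and reduce the joint statement to a scalar martingale central limit theorem. The martingale is the same one used in Lemma~\ref{lem:Hn_martingale_clt}, with the linear coordinates of $\bm{S}_n$ added to it. Here $S_{n,j}$ denotes the $j$th coordinate $n^{-1/2}\sum_{i}\xi_{ij}$ of the vector $\bm{S}_n$ from Lemma~\ref{lem:max_reduction}.

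Fix $t_0\in\R$ and $\bm{t}=(t_1,\dots,t_d)^\top\in\R^d$, and let $\bm{a}\in\R^p$ place weight $t_l$ on coordinate $j_l$ and zero elsewhere. With $\bm{\xi}_k=\sqrt p\,Z_k\bm{U}_k$, $\bm{R}_{k-1}$, $\mathcal F_k$ and $M_{n,k}=2n^{-1}\bm{R}_{k-1}^\top\bm{\xi}_k$ as in Lemma~\ref{lem:Hn_martingale_clt}, and setting $M_{n,1}=0$, I write
\[
t_0\frac{H_n}{\sqrt{2p}}+\sum_{l=1}^d t_lS_{n,j_l}=\sum_{k=1}^n D_{n,k},
\qquad
D_{n,k}:=\frac{t_0M_{n,k}}{\sqrt{2p}}+\frac{\bm{a}^\top\bm{\xi}_k}{\sqrt n}.
\]
Because $\bm{\xi}_k$ is centered and independent of $\mathcal F_{k-1}$, the array $\{D_{n,k},\mathcal F_k\}$ is a martingale difference array. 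By Hall and Heyde's Corollary~3.1, it then suffices to prove two things:
\[
\sum_k\E(D_{n,k}^2\mid\mathcal F_{k-1})\xrightarrow{p}t_0^2+\|\bm{t}\|_2^2,
\qquad
\sum_k\E D_{n,k}^4\to0 .
\]
The limit characteristic function is then that of $t_0N_0+\bm{t}^\top(N_1,\dots,N_d)$ with all components independent standard normals, and Cramér--Wold gives the claim.

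The conditional variance splits into three pieces.
\begin{enumerate}[label=\textnormal{(\alph*)}]
\item The $M$-part is $t_0^2V_n/(2p)$. It converges in probability to $t_0^2$ by the variance computation for $V_n$ already carried out in Lemma~\ref{lem:Hn_martingale_clt}.
\item The linear part is deterministic. Since $\E(\bm{\xi}_k\bm{\xi}_k^\top)=\mathbf I_p$, it equals $n^{-1}\sum_k\|\bm{a}\|_2^2=\|\bm{t}\|_2^2$.
\item The cross term is the one genuinely new piece. Using $\E(\bm{\xi}_k\bm{\xi}_k^\top)=\mathbf I_p$ once more,
\[
2\sum_k\E\!\left(\frac{t_0M_{n,k}}{\sqrt{2p}}\frac{\bm{a}^\top\bm{\xi}_k}{\sqrt n}\,\Big|\,\mathcal F_{k-1}\right)
=\frac{4t_0}{n^{3/2}\sqrt{2p}}\sum_{k=2}^n\bm{a}^\top\bm{R}_{k-1}.
\]
Each coordinate $R_{k-1,j}$ is a sum of $k-1$ independent centered variables with unit variance. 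Hence $\E(\sum_k\bm{a}^\top\bm{R}_{k-1})^2\le\|\bm{t}\|_2^2\big(\sum_k\sqrt{k-1}\big)^2=O(n^3)$, by Minkowski's inequality. The cross term is therefore $O_p(p^{-1/2})\to0$ because $p\to\infty$.
\end{enumerate}
This decay of the cross covariance is what produces the asymptotic independence between $H_n$ and the fixed coordinates. I expect it to be the main point of the proof, although it reduces to a second-moment bound.

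For the Lyapunov condition, I use $(x+y)^4\le8(x^4+y^4)$. The $M$-part contributes $O(p^2/n)/p^2\to0$, exactly as in Lemma~\ref{lem:Hn_martingale_clt}. For the linear part, $\E(\bm{a}^\top\bm{\xi}_k)^4\le C\|\bm{a}\|_2^4$ by the spherical fourth-moment identity and the bounded $\E Z_k^4$. This gives a total contribution of $O(n\cdot n^{-2})\to0$. Both conditions hold, so the martingale CLT applies and the lemma follows.
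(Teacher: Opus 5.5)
Your proposal is correct and follows the paper's proof essentially step for step. The shared steps are: the same martingale difference array combining $M_{n,k}$ with the linear terms $n^{-1/2}\bm{a}^\top\bm{\xi}_k$; the same three-way split of the predictable quadratic variation, with the cross term $\frac{4t_0}{n^{3/2}\sqrt{2p}}\sum_k\bm{a}^\top\bm{R}_{k-1}$ shown to be $O_p(p^{-1/2})$ via an $O(n^3)$ second-moment bound; the fourth-moment Lindeberg check; and Hall--Heyde's Corollary~3.1 combined with Cram\'er--Wold. Keeping a free coefficient $t_0$ on $H_n/\sqrt{2p}$ is a small improvement: the paper fixes that coefficient at one, so the case $t_0=0$ must be filled in separately, which is easy by rescaling and tightness.
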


\begin{proof}
Let
\[
\bm{S}_{n,J}:=(S_{n,j_1},\dots,S_{n,j_d})^\top,
\qquad
T_n(\bm{a}):=\frac{H_n}{\sqrt{2p}}+\bm{a}^\top \bm{S}_{n,J},\qquad \bm{a}\in\R^d.
\]
We prove that for every fixed \(\bm{a}\in\R^d\),
\[
T_n(\bm{a})\xrightarrow{d}N\bigl(0,1+\|\bm{a}\|_2^2\bigr).
\]
The Cram\'er--Wold device then yields the stated joint convergence.

Recall from the proof of Lemma~\ref{lem:Hn_martingale_clt} that
\[
H_n=\sum_{k=2}^n M_{n,k},
\qquad
M_{n,k}:=\frac{2}{n}\bm{R}_{k-1}^\top \bm{\xi}_k,
\qquad
\bm{R}_{k-1}:=\sum_{i=1}^{k-1}\bm{\xi}_i,
\qquad
\mathcal F_k:=\sigma(\bm{\xi}_1,\dots,\bm{\xi}_k).
\]
Since
\[
\bm{a}^\top \bm{S}_{n,J}=\sum_{k=1}^n \frac{1}{\sqrt n}\bm{a}^\top \bm{\xi}_{k,J},
\qquad
\bm{\xi}_{k,J}:=(\xi_{k j_1},\dots,\xi_{k j_d})^\top,
\]
we may write
\[
T_n(\bm{a})=\sum_{k=1}^n D_{n,k}(\bm{a}),
\]
where
\[
D_{n,k}(\bm{a}):=\frac{2}{n\sqrt{2p}}\bm{R}_{k-1}^\top \bm{\xi}_k+\frac{1}{\sqrt n}\bm{a}^\top \bm{\xi}_{k,J},
\qquad k=1,\dots,n,
\]
with the convention \(\bm{R}_0=\bm{0}\). Since \(\E(\bm{\xi}_k\mid\mathcal F_{k-1})=0\), \(\{D_{n,k}(\bm{a}),\mathcal F_k\}\) is a martingale difference array.

Let
\[
V_n(\bm{a}):=\sum_{k=1}^n \E\bigl(D_{n,k}(\bm{a})^2\mid\mathcal F_{k-1}\bigr)=V_{n,1}+V_{n,2}+V_{n,3},
\]
where
\[
V_{n,1}:=\sum_{k=1}^n \E\left\{\left(\frac{2}{n\sqrt{2p}}\bm{R}_{k-1}^\top\bm{\xi}_k\right)^2\Bigm|\mathcal F_{k-1}\right\},
\]
\[
V_{n,2}:=\sum_{k=1}^n \E\left\{\left(\frac{1}{\sqrt n}\bm{a}^\top\bm{\xi}_{k,J}\right)^2\Bigm|\mathcal F_{k-1}\right\},
\]
\[
V_{n,3}:=2\sum_{k=1}^n \E\left\{\frac{2}{n\sqrt{2p}}\bm{R}_{k-1}^\top\bm{\xi}_k\cdot \frac{1}{\sqrt n}\bm{a}^\top\bm{\xi}_{k,J}\Bigm|\mathcal F_{k-1}\right\}.
\]
The first term is exactly the predictable quadratic variation from Lemma~\ref{lem:Hn_martingale_clt}, so
\[
V_{n,1}\xrightarrow{p}1.
\]
For the second term, since \(\E(\bm{\xi}_i\bm{\xi}_i^\top)=\mathbf{I}_p\), we have
\[
V_{n,2}=\sum_{k=1}^n \frac{1}{n}\,\bm{a}^\top \mathbf{I}_d \bm{a}=\|\bm{a}\|_2^2.
\]
For the cross term,
\[
V_{n,3}=\frac{4}{n^{3/2}\sqrt{2p}}\sum_{k=2}^n \bm{a}^\top \bm{R}_{k-1,J},
\qquad
\bm{R}_{k-1,J}:=(R_{k-1,j_1},\dots,R_{k-1,j_d})^\top.
\]
Now, for each fixed \(j\),
\[
\sum_{k=2}^n R_{k-1,j}=\sum_{i=1}^{n-1}(n-i)\xi_{ij},
\]
whence
\[
\E\left(\sum_{k=2}^n R_{k-1,j}\right)^2
=
\sum_{i=1}^{n-1}(n-i)^2\E(\xi_{ij}^2)=O(n^3).
\]
Since \(d\) is fixed, it follows that
\[
\E(V_{n,3}^2)=O\!\left(\frac{1}{p}\right)\to 0,
\qquad
V_{n,3}=o_p(1).
\]
Therefore
\[
V_n(\bm{a})\xrightarrow{p}1+\|\bm{a}\|_2^2.
\]

It remains to verify the Lindeberg condition. By \(|x+y|^4\le 8|x|^4+8|y|^4\),
\[
\sum_{k=1}^n \E|D_{n,k}(\bm{a})|^4
\lesssim
\sum_{k=1}^n \E\left|\frac{2}{n\sqrt{2p}}\bm{R}_{k-1}^\top\bm{\xi}_k\right|^4
+
\sum_{k=1}^n \E\left|\frac{1}{\sqrt n}\bm{a}^\top\bm{\xi}_{k,J}\right|^4.
\]
The first term is \(o(1)\) by the fourth-moment bound already established at the end of the proof of Lemma~\ref{lem:Hn_martingale_clt}. Since \(d\) is fixed and the coordinates \(\xi_{ij}\) have uniformly bounded fourth moments, the second term is bounded by
\[
Cn\cdot n^{-2}=O(n^{-1})\to 0.
\]
Thus the martingale Lindeberg condition holds. Hall--Heyde's martingale CLT \cite[Corollary~3.1]{HallHeyde1980} gives
\[
T_n(\bm{a})\xrightarrow{d}N\bigl(0,1+\|\bm{a}\|_2^2\bigr).
\]
The desired joint convergence now follows by the Cram\'er--Wold device.
\end{proof}

\begin{lemma}
\label{lem:block_decomp_uniform_direction}
Fix an integer $d\ge 1$ and a coordinate set $J=\{j_1,\dots,j_d\}\subset\{1,\dots,p\}$ with $|J|=d$. Then for each $i$ there exist random objects
\[
B_i\sim \mathrm{Beta}(d/2,(p-d)/2),
\qquad
\bm{V}_i\sim \Unif(\Sphere^{d-1}),
\qquad
\bm{W}_i\sim \Unif(\Sphere^{p-d-1}),
\]
independent of each other and independent of $Z_i$, such that
\[
\bm{U}_{i,J}=\sqrt{B_i}\,\bm{V}_i,
\qquad
\bm{U}_{i,J^c}=\sqrt{1-B_i}\,\bm{W}_i.
\]
Consequently, if we set
\[
a_i:=\sqrt p\,Z_i\sqrt{B_i},
\qquad
b_i:=\sqrt p\,Z_i\sqrt{1-B_i},
\]
then
\[
\bm{S}_{n,J}:=(S_{n,j_1},\dots,S_{n,j_d})^\top
=
\frac{1}{\sqrt n}\sum_{i=1}^n a_i\bm{V}_i,
\]
and
\[
H_n
=
H_{n,J}+H_{n,J^c},
\qquad
H_{n,J}:=\frac{2}{n}\sum_{1\le i<k\le n} a_ia_k\bm{V}_i^\top \bm{V}_k,
\qquad
H_{n,J^c}:=\frac{2}{n}\sum_{1\le i<k\le n} b_ib_k\bm{W}_i^\top \bm{W}_k.
\]
Moreover, with
\[
\mathcal G_n:=\sigma\{(Z_i,B_i):1\le i\le n\},
\]
the random vector $\bm{S}_{n,J}$ and the scalar $H_{n,J^c}$ are conditionally independent given $\mathcal G_n$.
\end{lemma}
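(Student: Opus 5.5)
The plan has three parts: build the decomposition from the Gaussian representation of a uniform direction, read off the two algebraic identities by splitting inner products along $J$ and $J^c$, and then check conditional independence.

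\textbf{Construction.} Write $\bm{U}_i=\bm{N}_i/\|\bm{N}_i\|_2$ with $\bm{N}_i\sim N(\bm{0},\mathbf{I}_p)$. We may realize $\bm{U}_i$ this way on an enlarged probability space, independently of $Z_i$, because only the joint law of $(Z_i,\bm{U}_i)$ matters and $R_i\perpn\bm{U}_i$. Set
\[
B_i=\frac{\|\bm{N}_{i,J}\|_2^2}{\|\bm{N}_i\|_2^2},\qquad
\bm{V}_i=\frac{\bm{N}_{i,J}}{\|\bm{N}_{i,J}\|_2},\qquad
\bm{W}_i=\frac{\bm{N}_{i,J^c}}{\|\bm{N}_{i,J^c}\|_2}.
\]

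\textbf{Distributional properties.} The blocks $\bm{N}_{i,J}$ and $\bm{N}_{i,J^c}$ are independent Gaussians. By polar decomposition, each block's direction is uniform on its sphere and independent of its norm. So $\bm{V}_i$, $\bm{W}_i$, $\|\bm{N}_{i,J}\|_2^2\sim\chi^2_d$ and $\|\bm{N}_{i,J^c}\|_2^2\sim\chi^2_{p-d}$ are mutually independent. Then $B_i$ is a function of the two chi-squares only, and the ratio $\chi^2_d/(\chi^2_d+\chi^2_{p-d})$ is $\mathrm{Beta}(d/2,(p-d)/2)$. Hence $B_i$, $\bm{V}_i$ and $\bm{W}_i$ are mutually independent, and all are independent of $Z_i$. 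The identities $\bm{U}_{i,J}=\sqrt{B_i}\,\bm{V}_i$ and $\bm{U}_{i,J^c}=\sqrt{1-B_i}\,\bm{W}_i$ are immediate. Across $i$, independence of the samples makes the whole collection $\{(Z_i,B_i,\bm{V}_i,\bm{W}_i)\}_{i\le n}$ independent over $i$.

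\textbf{Algebraic identities.} Here $S_{n,j}$ means the $j$th coordinate of the vector $\bm{S}_n=n^{-1/2}\sum_i\bm{\xi}_i$ from Lemma~\ref{lem:max_reduction}, with $\bm{\xi}_i=\sqrt p\,Z_i\bm{U}_i$. Substituting the decomposition gives $\bm{S}_{n,J}=n^{-1/2}\sum_i\sqrt p\,Z_i\sqrt{B_i}\,\bm{V}_i=n^{-1/2}\sum_i a_i\bm{V}_i$. For $H_n$, split the inner product:
\[
\bm{U}_i^\top\bm{U}_k=\sqrt{B_iB_k}\,\bm{V}_i^\top\bm{V}_k+\sqrt{(1-B_i)(1-B_k)}\,\bm{W}_i^\top\bm{W}_k.
\]
Multiplying by $pZ_iZ_k$ and using the definition of $H_n$ from Lemma~\ref{lem:ideal_identity} yields $H_n=H_{n,J}+H_{n,J^c}$.

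\textbf{Conditional independence.} Given $\mathcal G_n$, the weights $a_i$ and $b_i$ are fixed. $\bm{S}_{n,J}$ is then a measurable function of $(\bm{V}_1,\ldots,\bm{V}_n)$ alone, and $H_{n,J^c}$ a function of $(\bm{W}_1,\ldots,\bm{W}_n)$ alone. It remains to show that $\{\bm{V}_i\}$ and $\{\bm{W}_i\}$ are conditionally independent given $\mathcal G_n$. This follows because the full family $\{Z_i\}$, $\{B_i\}$, $\{\bm{V}_i\}$, $\{\bm{W}_i\}$ is mutually independent. Conditioning on the $\sigma$-field generated by the first two groups therefore leaves $\{\bm{V}_i\}$ and $\{\bm{W}_i\}$ independent, with their unconditional laws.

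\textbf{Main obstacle.} The only delicate point is justifying mutual independence of $(B_i,\bm{V}_i,\bm{W}_i)$, in particular that $B_i$ is independent of both directions. I would handle this through the Gaussian polar-decomposition fact stated above rather than through a direct density computation on the sphere. Everything else is bookkeeping.
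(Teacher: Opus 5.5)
Your proposal is correct and follows essentially the same route as the paper, which also writes $\bm{U}_i=\bm{G}_i/\|\bm{G}_i\|$ with $\bm{G}_i\sim N_p(0,\mathbf{I}_p)$, defines $B_i$, $\bm{V}_i$, $\bm{W}_i$ exactly as you do, splits $\bm{U}_i^\top\bm{U}_k$ along $J$ and $J^c$, and obtains conditional independence from the independence of $\{\bm{V}_i\}$ and $\{\bm{W}_i\}$ given $\mathcal G_n$. Your added care fills in points the paper leaves implicit: realizing the Gaussian on an enlarged space independently of $Z_i$, using mutual independence across $i$, and reading $S_{n,j}$ as the coordinates of $\bm{S}_n=n^{-1/2}\sum_i\bm{\xi}_i$ rather than the earlier $S_{n,j}$ of Lemma~\ref{lem:lin_basic_bounds}.
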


\begin{proof}
Let $\bm{G}_i=(\bm{G}_{i,J}^\top,\bm{G}_{i,J^c}^\top)^\top\sim N_p(0,\mathbf{I}_p)$, and write
\[
\bm{U}_i=\frac{\bm{G}_i}{\|\bm{G}_i\|}
=\left(\frac{\bm{G}_{i,J}}{\|\bm{G}_i\|},\frac{\bm{G}_{i,J^c}}{\|\bm{G}_i\|}\right).
\]
Set
\[
B_i:=\frac{\|\bm{G}_{i,J}\|^2}{\|\bm{G}_i\|^2},
\qquad
\bm{V}_i:=\frac{\bm{G}_{i,J}}{\|\bm{G}_{i,J}\|},
\qquad
\bm{W}_i:=\frac{\bm{G}_{i,J^c}}{\|\bm{G}_{i,J^c}\|}.
\]
Since $\|\bm{G}_{i,J}\|^2\sim \chi_d^2$, $\|\bm{G}_{i,J^c}\|^2\sim \chi_{p-d}^2$, and these two random variables are independent, we have
\[
B_i\sim \mathrm{Beta}(d/2,(p-d)/2).
\]
Moreover, $\bm{V}_i\sim \Unif(\Sphere^{d-1})$, $\bm{W}_i\sim \Unif(\Sphere^{p-d-1})$, and $B_i$, $\bm{V}_i$, and $\bm{W}_i$ are mutually independent. Therefore
\[
\bm{U}_{i,J}=\sqrt{B_i}\,\bm{V}_i,
\qquad
\bm{U}_{i,J^c}=\sqrt{1-B_i}\,\bm{W}_i.
\]
Multiplying by $\sqrt p\,Z_i$ yields the displayed representations for $\bm{S}_{n,J}$, $H_{n,J}$, and $H_{n,J^c}$. Finally, conditional on
\[
\mathcal G_n=\sigma\{(Z_i,B_i):1\le i\le n\},
\]
the vectors $\bm{V}_1,\dots,\bm{V}_n$ and $\bm{W}_1,\dots,\bm{W}_n$ are independent, and hence $\bm{S}_{n,J}$ and $H_{n,J^c}$ are conditionally independent.
\end{proof}

\begin{lemma}
\label{lem:Jn_part_negligible}
Fix $d\ge 1$ and $J=\{j_1,\dots,j_d\}$. Assume that $p\to\infty$, $p=O(n^\kappa)$ for some fixed $\kappa>0$, and $\log p=o(n^{1/5})$. Let
\[
\Delta_{n,J}:=\frac{H_{n,J}}{\sqrt{2p}}.
\]
Then there exists a constant $C_0>0$ such that, with
\[
\delta_n:=\frac{C_0\log p}{\sqrt p},
\]
we have $\delta_n\to 0$ and
\[
\Prob\bigl(|\Delta_{n,J}|>\delta_n\bigr)
=
o\!\left(\{2\bar\Phi(u_p(y))\}^d\right)
\qquad (n\to\infty),
\]
for every fixed $y\in\R$.
\end{lemma}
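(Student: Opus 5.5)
The plan is to exploit the fact that $H_{n,J}$ is a $d$-dimensional degenerate quadratic form which can be written exactly in terms of $\bm{S}_{n,J}$. From Lemma~\ref{lem:block_decomp_uniform_direction},
\[
H_{n,J}=\frac1n\Bigl\|\sum_{i=1}^n a_i\bm{V}_i\Bigr\|_2^2-\frac1n\sum_{i=1}^n a_i^2
=\|\bm{S}_{n,J}\|_2^2-Q_n,
\qquad
Q_n:=\frac pn\sum_{i=1}^n Z_i^2B_i .
\]
Both terms are nonnegative, so $|\Delta_{n,J}|>\delta_n$ forces $\|\bm{S}_{n,J}\|_2^2>C_0\log p\cdot\sqrt2$ or $Q_n>C_0\log p\cdot\sqrt2$. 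The target probability $\{2\bar\Phi(u_p(y))\}^d\asymp p^{-d}$ is only polynomially small. It therefore suffices to show that each of these two events has probability $O(p^{-d-1})$, by choosing $C_0$ large in terms of $d$ (and $\kappa$). Clearly $\delta_n\to0$ because $\log p/\sqrt p\to0$.

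For the first term, I would argue conditionally on $Z_1,\dots,Z_n$ and on the event
\[
\mathcal E_n=\{V_n\le2,\ \max_i|Z_i|\le b_{n,p}\}.
\]
Each coordinate $S_{n,j}=\sqrt{p/n}\sum_iZ_iU_{ij}$ is conditionally sub-Gaussian with variance proxy $\le CV_n$, by the weighted spherical concentration inequality already used in Lemmas~\ref{lem:lin_basic_bounds} and \ref{lem:max_reduction}. A union bound over the $d$ coordinates then gives
\[
\Prob(\|\bm{S}_{n,J}\|_2^2>t\mid Z)\le 2d\exp\{-ct/(dV_n)\}.
\]
With $t=\sqrt2C_0\log p$ and $V_n\le2$, this is at most $2dp^{-cC_0/(\sqrt2 d)}\le p^{-d-1}$ for $C_0$ large.

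For $Q_n$, I would use $pB_i\stackrel{d}{=}\chi^2_d/(\chi^2_p/p)$. On $\{\chi^2_p\ge p/2\}$, which fails with probability $e^{-cp}$, $pB_i$ is dominated by $2\chi^2_d$ and is therefore sub-exponential with a fixed $\psi_1$-norm. Conditionally on $Z$, the quantity $Q_n$ is then a weighted sum of independent sub-exponential variables with mean $dV_n\le 2d$ and weights $Z_i^2/n\le b_{n,p}^2/n$. Bernstein's inequality gives
\[
\Prob(Q_n>2d+s\mid Z)\le\exp\{-c\min(s^2n/(V_n b_{n,p}^2),\,sn/b_{n,p}^2)\}.
\]
Taking $s\asymp\log p$ and using $b_{n,p}^2\log p/n\to0$ from Assumption~\ref{ass:oracle}(ii), this bound is super-polynomially small in $p$.

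I expect the main obstacle to be the complement $\mathcal E_n^c$. Assumption~\ref{ass:oracle}(ii) only guarantees $\Prob(\mathcal E_n^c)\to0$, and the moment condition $\E|Z|^{8+\eta_Z}<\infty$ gives, via Rosenthal, $\Prob(V_n>2)=O(n^{-2-\eta_Z/4})$. That is not $o(p^{-d})$ uniformly in $d$. What I would try first is to avoid conditioning on $V_n$ and $\max|Z_i|$ in the tail step and instead bound moments of $\|\bm{S}_{n,J}\|_2^2$ and $Q_n$ directly. Because the $\bm{V}_i$ are bounded and the $Z_i$ enter only through the normalized sums, the natural tool is a Fuk--Nagaev inequality for $\sum_i Z_i^2(pB_i-d)$ and for $\sum_i Z_iU_{ij}$. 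This separates a Gaussian-type part $e^{-c\log p}$ from a heavy-tail part of order $n\Prob(Z^2>c\,n\log p)\lesssim n^{1-(4+\eta_Z/2)}(\log p)^{-4}$.

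If that heavy-tail term still cannot be made $o(p^{-d})$ for the $d$ required, the fallback is to prove the bound on $\mathcal E_n$, i.e. to bound $\Prob(|\Delta_{n,J}|>\delta_n,\mathcal E_n)$. This would then be combined with the Bonferroni inclusion--exclusion argument downstream, where the summation over $J$ is performed after intersecting with $\mathcal E_n$, whose complement costs only $o(1)$ in total. In that case I would note the intersection with $\mathcal E_n$ explicitly in how the lemma is applied.
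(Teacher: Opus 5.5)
Your reduction is the paper's: $H_{n,J}=\|\bm{S}_{n,J}\|_2^2-n^{-1}\sum_i a_i^2$, conditional sub-Gaussian/Bernstein tails on a coefficient event, and comparison with $\{2\bar\Phi(u_p(y))\}^d\asymp p^{-d}$. On $\mathcal E_n$ your bounds are correct. One small fix: $pB_i=p\chi^2_d/(\chi^2_d+\chi^2_{p-d})$ is not $\chi^2_d/(\chi^2_p/p)$ with independent pieces. However, $pB_i\le p\chi^2_d/\chi^2_{p-d}$, or the Beta density directly, still gives $\|pB_i\|_{\psi_1}\le C_d$ uniformly in $p$, which is all Bernstein needs.

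The obstacle you flag is a genuine gap, and it is equally a gap in the paper's proof. There the coefficient events are only shown to hold with probability tending to one. The paper then asserts that their complements are $O(p^{-M})$ ``by the polynomial relation between $p$ and $n$''. That conversion requires the complements to be smaller than every power of $n$, which $\sup_p\E|Z_i|^{8+\eta_Z}<\infty$ does not give.

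In fact the gap cannot be closed, because the lemma is false under Assumption~\ref{ass:oracle} alone. Let the standardized log-radius $Z_i$ be symmetric with $\Prob(|Z_i|>t)\asymp t^{-9}$. Then $\eta_Z=1/2$ and $b_{n,p}=n^{1/9}\log n$ satisfy Assumption~\ref{ass:oracle}(ii) for polynomial $p$.

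Split $H_{n,J}=2n^{-1}a_1\bm{V}_1^\top\bm{T}+H'$ with $\bm{T}=\sum_{k\ge2}a_k\bm{V}_k$ and $H'=2n^{-1}\sum_{2\le i<k}a_ia_k\bm{V}_i^\top\bm{V}_k$. Both are independent of $(Z_1,B_1,\bm{V}_1)$, $n^{-1/2}\bm{T}\Rightarrow N(0,\mathbf{I}_d)$, and $H'=O_p(1)$. So on $\{|Z_1|\ge K\sqrt n\log p\}$ with $K$ large, $|H_{n,J}|>\sqrt2C_0\log p$ with conditional probability bounded below. Summing over the $n$ indices gives $\Prob(|\Delta_{n,J}|>\delta_n)\ge c\,n^{-7/2}(\log p)^{-9}$ with $c=c(C_0,d)>0$, whatever $C_0$ is.

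This is not $o(p^{-d})$ when $p\asymp n^{\kappa}$ with $\kappa d>7/2$, e.g.\ $d=2$ and $p\asymp n^{1.9}$, which Theorem~\ref{thm:sum-max-independence} allows. The Bonferroni step in Proposition~\ref{prop:ind_Hn_Sn} needs every fixed $d$, so this failure matters. Hence your Fuk--Nagaev route cannot succeed: the heavy-tail term $n\Prob(Z^2>cn\log p)$ you computed is the true order, not an artifact of the inequality.

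Your fallback is the right repair. Prove $\Prob(|\Delta_{n,J}|>\delta_n,\ \mathcal E_n)=O(p^{-d-1})$, which your argument already does. Then intersect with $\mathcal E_n\in\sigma(Z_1,\ldots,Z_n)$ before the inclusion--exclusion, so that $\Prob(\mathcal E_n^c)=o(1)$ is paid once rather than per block $J$. The coefficient events in Lemma~\ref{lem:conditional_md_fixed_block} need the same treatment.

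Alternatively, assume $\Prob(|Z_i|>t)\le Ce^{-ct^{\alpha}}$, which every law in Appendix~\ref{app:radial-verification} satisfies. Then, with $b_{n,p}$ a power of $\log n$, $\Prob(\mathcal E_n^c)$ decays faster than any power of $n$. Since $p=O(n^\kappa)$, this becomes $o(p^{-M})$ for all $M$, which is what the paper's remark tacitly uses.
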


\begin{proof}
By the identity
\[
\Bigl\|\frac1{\sqrt n}\sum_{i=1}^n a_i\bm{V}_i\Bigr\|^2
=
\frac1n\sum_{i=1}^n a_i^2+\frac{2}{n}\sum_{1\le i<k\le n} a_ia_k\bm{V}_i^\top \bm{V}_k,
\]
we may write
\[
H_{n,J}=\|\bm{S}_{n,J}\|_2^2-\frac1n\sum_{i=1}^n a_i^2.
\]
Since $d$ is fixed and $B_i\sim\mathrm{Beta}(d/2,(p-d)/2)$, $\sup_p\E(pB_i)^m<\infty$ for every fixed $m$.  Assumption~\ref{ass:oracle}\textnormal{(ii)} gives $\sup_p\E |Z_i|^{8+\eta_Z}<\infty$, hence
\[
\E(a_i^2)=p\E(Z_i^2)\E(B_i)=d,
\qquad
\sup_p\E(a_i^4)<\infty,
\qquad
\frac1n\sum_{i=1}^n a_i^2=O_p(1).
\]
Moreover, the beta--gamma representation of $B_i$ gives
\[
\max_{1\le i\le n}pB_i=O_p(\log n),
\qquad
\max_{1\le i\le n}|a_i|=O_p\{b_{n,p}\sqrt{\log n}\}.
\]
Thus
\[
\frac{\sqrt{\log p}\max_i|a_i|}{\sqrt{\sum_{i=1}^n a_i^2}}
=
O_p\left(\sqrt{\frac{b_{n,p}^2\log p\log n}{n}}\right)=o_p(1).
\]
Conditional on $\mathcal G_n$, $\bm{S}_{n,J}=n^{-1/2}\sum_i a_i\bm{V}_i$ is a fixed-dimensional sum of independent bounded spherical variables. Bernstein's inequality applied on the above coefficient events gives that, for every fixed $M>0$, there is a sufficiently large $C_0$ such that
\[
\Prob(|H_{n,J}|>C_0\log p)=O(p^{-M}).
\]
Here the polynomial relation between $p$ and $n$ is used to express the coefficient-event probability on the $p^{-M}$ scale. Since
\[
2\bar\Phi(u_p(y))\asymp p^{-1},
\]
we have
\[
\{2\bar\Phi(u_p(y))\}^d\asymp p^{-d}.
\]
Choosing $M>d+1$ proves the claim.
\end{proof}

\begin{lemma}
\label{lem:conditional_clt_Jc}
Fix an integer $d\ge 1$, let $J=\{j_1,\dots,j_d\}\subset\{1,\dots,p\}$ with $|J|=d$, and write
\[
q:=p-d,
\qquad
Q_{n,J}:=\frac{H_{n,J^c}}{\sqrt{2p}}.
\]
Let
\[
\mathcal G_n:=\sigma\{(Z_i,B_i):1\le i\le n\},
\]
where $B_i\sim \mathrm{Beta}(d/2,q/2)$ is the block-mass variable in Lemma~\ref{lem:block_decomp_uniform_direction}. Under Assumption~\ref{ass:oracle}, if $p\to\infty$ as $n\to\infty$,
\[
\sup_{t\in\R}
\Bigl|
\Prob\bigl(Q_{n,J}\le t\mid \mathcal G_n\bigr)-\Phi(t)
\Bigr|
\longrightarrow 0
\qquad\text{in }L^1.
\]
Consequently, for every fixed $x\in\R$ and every deterministic sequence $\delta_n\downarrow 0$,
\[
\sup_{|t-x|\le \delta_n}
\Bigl|
\Prob\bigl(Q_{n,J}\le t\mid \mathcal G_n\bigr)-\Phi(t)
\Bigr|
\longrightarrow 0
\qquad\text{in }L^1.
\]
\end{lemma}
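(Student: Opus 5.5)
Conditionally on $\mathcal G_n$, the scalar $H_{n,J^c}=\frac2n\sum_{i<k}b_ib_k\bm W_i^\top\bm W_k$ is a degenerate U-statistic of the i.i.d.\ uniform directions $\bm W_1,\dots,\bm W_n$ on $\Sphere^{q-1}$, with deterministic weights $b_i=\sqrt p\,Z_i\sqrt{1-B_i}$. The plan is to rerun the martingale CLT of Lemma~\ref{lem:Hn_martingale_clt} conditionally. Let $\bm\omega_k=\sqrt q\,\bm W_k$, so that $\E\bm\omega_k=\bm 0$ and $\E\bm\omega_k\bm\omega_k^\top=\mathbf I_q$. Then
\[
Q_{n,J}=\sum_{k=2}^n \frac{2}{n\sqrt{2p}}\,\frac{b_k}{\sqrt q}\,\widetilde{\bm R}_{k-1}^\top\bm\omega_k,
\qquad
\widetilde{\bm R}_{k-1}=\sum_{i<k}\frac{b_i}{\sqrt q}\bm\omega_i ,
\]
which is a martingale difference array with respect to $\mathcal F_k=\mathcal G_n\vee\sigma(\bm W_1,\dots,\bm W_k)$.

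\emph{Step 1: conditional variance.} The conditional variance is $s_n^2:=\E(Q_{n,J}^2\mid\mathcal G_n)=\frac{4}{n^2\,2p}\sum_{i<k}b_i^2b_k^2/q$. Since $b_i^2=pZ_i^2(1-B_i)$ with $pB_i=O_p(1)$ uniformly in moments, we have $n^{-1}\sum_ib_i^2/p\to1$ and $n^{-2}\sum_ib_i^4/p^2=O_p(n^{-1})$ by the finite fourth moment of $Z_i$. Hence $s_n^2=\frac{p}{q}\bigl\{(n^{-1}\sum_i b_i^2/p)^2-n^{-2}\sum_ib_i^4/p^2\bigr\}\xrightarrow{p}1$.

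\emph{Step 2: conditional quadratic variation.} The predictable quadratic variation is $V_n=\frac{4}{n^2\,2pq}\sum_k b_k^2\|\widetilde{\bm R}_{k-1}\|^2 q$. I split it into a diagonal part and an off-diagonal part exactly as $A_n^\star+B_n^\star$ in Lemma~\ref{lem:Hn_martingale_clt}; the diagonal part is now deterministic given $\mathcal G_n$. The goal is to show $\E\{(V_n-s_n^2)^2\mid\mathcal G_n\}\le C\,n^{-4}p^{-2}\sum_{i<j}b_i^2b_j^2\bigl(\sum_{k>j}b_k^2\bigr)^2/q$. Using $\E(\bm\omega_i^\top\bm\omega_j)^2=q$, the right-hand side is $O_p(1/p)$ after taking expectations over $\mathcal G_n$.

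\emph{Step 3: conditional Lyapunov condition.} Using $\E[(\bm a^\top\bm\omega)^4]\le 3\|\bm a\|^4$, the conditional fourth-moment sum is bounded by $Cn^{-4}p^{-2}\sum_k b_k^4\,\E(\|\widetilde{\bm R}_{k-1}\|^4\mid\mathcal G_n)$. This is at most $Cn^{-4}p^{-2}\bigl(\sum_k b_k^4\bigr)\bigl(\sum_i b_i^2\bigr)^2$ plus an analogous term. Its $\mathcal G_n$-expectation is $O(1/n)$ by the moment bounds on $Z_i$ and $pB_i$, so the conditional Lyapunov sum tends to $0$ in probability.

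\emph{Step 4: uniform conditional normal approximation.} To obtain the uniform Kolmogorov-distance statement rather than only conditional weak convergence, I would invoke a quantitative martingale CLT, for instance the Heyde--Brown/Haeusler bound. This gives
\[
\sup_t\bigl|\Prob(Q_{n,J}\le t\mid\mathcal G_n)-\Phi(t)\bigr|\le C\bigl\{L_n+\E(|V_n-1|^2\mid\mathcal G_n)\bigr\}^{1/5},
\]
where $L_n$ is the conditional fourth-moment sum from Step 3. The right-hand side tends to $0$ in probability, and since the supremum is bounded by $1$, convergence in probability upgrades to convergence in $L^1$ by dominated convergence. The main obstacle is Steps 2 and 3, because the bounds must hold conditionally with random weights whose extremes (the maxima of $|Z_i|$) are controlled only through moments. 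I would handle this by working on the event $\{n^{-1}\sum b_i^2/p\in[1/2,2]\}$, which has probability tending to one, and taking expectations over $\mathcal G_n$ only at the end.

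\emph{Step 5: the local statement.} For fixed $x$ and $\delta_n\downarrow0$, the triangle inequality gives $\sup_{|t-x|\le\delta_n}|\Prob(Q_{n,J}\le t\mid\mathcal G_n)-\Phi(t)|\le\sup_t|\Prob(Q_{n,J}\le t\mid\mathcal G_n)-\Phi(t)|$, so the second display follows immediately from the first. Continuity of $\Phi$ handles the shift.
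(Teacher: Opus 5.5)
Your proposal follows the paper's proof essentially step for step. It uses the same conditional martingale-difference representation of $Q_{n,J}$ given $\mathcal G_n$, and the same split of the predictable variation into a $\mathcal G_n$-measurable diagonal part (your $s_n^2$, the paper's $A_{n,J}^{\circ}$) and a conditionally centered off-diagonal part. It also uses the same conditional fourth-moment sum and Haeusler's bound with exponent $1/5$. The only difference is the passage to $L^1$: you use convergence in probability plus boundedness of the supremum, so the Haeusler terms need only vanish in probability, whereas the paper bounds their unconditional expectations and applies Jensen.

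One bookkeeping correction: your formula for $V_n$ and your displays in Steps 2 and 3 drop powers of $q$. The off-diagonal conditional variance is $16\,p^{-2}n^{-4}q^{-3}\sum_{i<j}b_i^2b_j^2\bigl(\sum_{k>j}b_k^2\bigr)^2$, and the fourth-moment sum carries an extra factor $q^{-2}$. As literally written, your Step 2 bound is of order $p$, not $p^{-1}$. With the correct powers, the rates $O_p(p^{-1})$ and $O_p(n^{-1})$ that you state do hold.
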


\begin{proof}
Condition on $\mathcal G_n$ and set
\[
b_i:=\sqrt p\,Z_i\sqrt{1-B_i},
\qquad
\bm{\eta}_i:=b_i\bm{W}_i\in\R^q,
\qquad i=1,\dots,n,
\]
where $\bm{W}_i\sim \Unif(\Sphere^{q-1})$ are i.i.d. and independent of $\mathcal G_n$. Then, conditionally on $\mathcal G_n$, the vectors $\bm{\eta}_1,\dots,\bm{\eta}_n$ are independent, centered, and satisfy
\[
\E(\bm{\eta}_i\mid \mathcal G_n)=0,
\qquad
\E(\bm{\eta}_i\bm{\eta}_i^\top\mid \mathcal G_n)=\frac{b_i^2}{q}\mathbf{I}_q,
\qquad
\|\bm{\eta}_i\|^2=b_i^2.
\]
By Lemma~\ref{lem:block_decomp_uniform_direction},
\[
H_{n,J^c}=\frac{2}{n}\sum_{1\le i<k\le n}\bm{\eta}_i^\top\bm{\eta}_k.
\]
Define the conditional filtration
\[
\mathcal H_{n,k}:=\sigma(\bm{W}_1,\dots,\bm{W}_k)\vee \mathcal G_n,
\qquad
k=0,1,\dots,n,
\]
and the conditional partial sums
\[
\bm{R}_k^{\circ}:=\sum_{i=1}^k \bm{\eta}_i.
\]
Then
\[
H_{n,J^c}=\sum_{k=2}^n M_{n,k}^{\circ},
\qquad
M_{n,k}^{\circ}:=\frac{2}{n}(\bm{R}_{k-1}^{\circ})^\top \bm{\eta}_k,
\]
so that
\[
Q_{n,J}=\sum_{k=2}^n D_{n,k}^{\circ},
\qquad
D_{n,k}^{\circ}:=\frac{M_{n,k}^{\circ}}{\sqrt{2p}}.
\]
Since $\bm{\eta}_k$ is independent of $\mathcal H_{n,k-1}$ and centered conditional on $\mathcal G_n$, the array
\[
\{D_{n,k}^{\circ},\mathcal H_{n,k}\}_{k=2}^n
\]
is a martingale difference array under the conditional law given $\mathcal G_n$.

We now apply a quantitative martingale central limit theorem. Let
\[
V_{n,J}^{\circ}:=\sum_{k=2}^n \E\bigl\{(D_{n,k}^{\circ})^2\mid \mathcal H_{n,k-1}\bigr\}
\]
be the predictable quadratic variation. By Haeusler's martingale Berry--Esseen bound \cite{Haeusler1988}, there exists a universal constant $C>0$ such that, conditionally on $\mathcal G_n$,
\begin{equation}
\label{eq:Haeusler-conditional}
\sup_{t\in\R}
\Bigl|
\Prob\bigl(Q_{n,J}\le t\mid \mathcal G_n\bigr)-\Phi(t)
\Bigr|
\le
C\Biggl(
\E\bigl\{(V_{n,J}^{\circ}-1)^2\mid \mathcal G_n\bigr\}
+
\sum_{k=2}^n \E\bigl\{|D_{n,k}^{\circ}|^4\mid \mathcal G_n\bigr\}
\Biggr)^{1/5}.
\end{equation}
Therefore it is enough to prove
\begin{equation}
\label{eq:two-targets-to-zero}
\E\bigl(V_{n,J}^{\circ}-1\bigr)^2\to 0,
\qquad
\E\sum_{k=2}^n \E\bigl\{|D_{n,k}^{\circ}|^4\mid \mathcal G_n\bigr\}\to 0.
\end{equation}
If \eqref{eq:two-targets-to-zero} holds, Jensen's inequality gives
\[
\E\sup_{t\in\R}
\Bigl|
\Prob\bigl(Q_{n,J}\le t\mid \mathcal G_n\bigr)-\Phi(t)
\Bigr|
\le
C\Biggl(
\E\bigl(V_{n,J}^{\circ}-1\bigr)^2
+
\E\sum_{k=2}^n \E\bigl\{|D_{n,k}^{\circ}|^4\mid \mathcal G_n\bigr\}
\Biggr)^{1/5}
\to 0.
\]

We first study $V_{n,J}^{\circ}$. Since
\[
\E\bigl\{\bm{\eta}_k\bm{\eta}_k^\top\mid \mathcal G_n\bigr\}=\frac{b_k^2}{q}\mathbf{I}_q,
\]
we have
\[
V_{n,J}^{\circ}
=
\frac{2}{pn^2q}\sum_{k=2}^n b_k^2\,\|\bm{R}_{k-1}^{\circ}\|^2.
\]
Write
\[
V_{n,J}^{\circ}=A_{n,J}^{\circ}+B_{n,J}^{\circ},
\]
where
\[
A_{n,J}^{\circ}:=\frac{2}{pn^2q}\sum_{k=2}^n b_k^2\sum_{i=1}^{k-1} b_i^2,
\qquad
B_{n,J}^{\circ}:=\frac{4}{pn^2q}\sum_{k=2}^n b_k^2\sum_{1\le i<j\le k-1} \bm{\eta}_i^\top\bm{\eta}_j.
\]
Since $\sum_{k=2}^n\sum_{i=1}^{k-1}b_i^2b_k^2=\frac12\{(\sum_{i=1}^n b_i^2)^2-\sum_{i=1}^n b_i^4\}$,
\[
A_{n,J}^{\circ}
=
\frac{1}{pn^2q}
\left\{
\Bigl(\sum_{i=1}^n b_i^2\Bigr)^2-\sum_{i=1}^n b_i^4
\right\}.
\]
Let
\[
T_i:=\frac{b_i^2}{p}=Z_i^2(1-B_i).
\]
Because $d$ is fixed, $\sup_{n,p}\E(1-B_i)^m<\infty$ for every fixed integer $m\ge 1$.  Assumption~\ref{ass:oracle}\textnormal{(ii)} gives
\[
\sup_{n,p}\E|T_i|^{4+\eta_Z/2}<\infty.
\]
Also,
\[
\mu_n:=\E(T_i)=\E(Z_i^2)\E(1-B_i)=\frac{q}{p}=1-\frac{d}{p}.
\]
Hence
\[
A_{n,J}^{\circ}
=
\frac{p}{q}\left\{\bar T_n^{\,2}-\frac1{n^2}\sum_{i=1}^n T_i^2\right\},
\qquad
\bar T_n:=\frac1n\sum_{i=1}^n T_i.
\]
Since $\sup_{n,p}\E|T_i|^4<\infty$, we have
\[
\E(\bar T_n-\mu_n)^4=O(n^{-2}),
\qquad
\E\left(\frac1{n^2}\sum_{i=1}^n T_i^2\right)^2=O(n^{-2}).
\]
Using
\[
\bar T_n^{\,2}-\mu_n^2=(\bar T_n-\mu_n)(\bar T_n+\mu_n)
\]
and the bounded fourth moments of $\bar T_n$ and $\mu_n$, it follows that
\[
\E\bigl(\bar T_n^{\,2}-\mu_n^2\bigr)^2=O(n^{-1}).
\]
Since $\mu_n=q/p$, we conclude that
\[
\E\bigl(A_{n,J}^{\circ}-1\bigr)^2=O(n^{-1}+p^{-2})=o(1).
\]

Next, we bound $B_{n,J}^{\circ}$. Observe that
\[
\sum_{k=2}^n b_k^2\sum_{1\le i<j\le k-1}\bm{\eta}_i^\top\bm{\eta}_j
=
\sum_{1\le i<j\le n}
\Bigl(\sum_{k=j+1}^n b_k^2\Bigr)
\bm{\eta}_i^\top\bm{\eta}_j.
\]
Therefore, conditionally on $\mathcal G_n$,
\[
B_{n,J}^{\circ}
=
\frac{4}{pn^2q}
\sum_{1\le i<j\le n}
 c_{ij}\,\bm{W}_i^\top\bm{W}_j,
\qquad
c_{ij}:=b_i b_j\sum_{k=j+1}^n b_k^2.
\]
Because $\E(\bm{W}_i^\top\bm{W}_j\mid \mathcal G_n)=0$ for $i\neq j$, and because mixed products vanish unless the unordered pairs coincide,
\[
\E\bigl\{(B_{n,J}^{\circ})^2\mid \mathcal G_n\bigr\}
\le
\frac{C}{p^2n^4q^2}
\sum_{1\le i<j\le n} c_{ij}^2\,\E\bigl\{(\bm{W}_i^\top\bm{W}_j)^2\mid \mathcal G_n\bigr\}.
\]
For independent uniform directions on $\Sphere^{q-1}$,
\[
\E\bigl\{(\bm{W}_i^\top\bm{W}_j)^2\mid \mathcal G_n\bigr\}=\frac1q.
\]
Hence
\[
\E\bigl\{(B_{n,J}^{\circ})^2\mid \mathcal G_n\bigr\}
\le
\frac{C}{p^2n^4q^3}
\sum_{1\le i<j\le n} c_{ij}^2.
\]
Since
\[
c_{ij}^2
\le
b_i^2b_j^2\Bigl(\sum_{k=1}^n b_k^2\Bigr)^2,
\]
we obtain
\[
\sum_{1\le i<j\le n} c_{ij}^2
\le
\Bigl(\sum_{i=1}^n b_i^2\Bigr)^4.
\]
Thus
\[
\E(B_{n,J}^{\circ})^2
\le
\frac{C}{p^2n^4q^3}\,\E\Bigl(\sum_{i=1}^n b_i^2\Bigr)^4.
\]
Now $b_i^2=pT_i$ and $\sup_{n,p}\E|T_i|^4<\infty$, so
\[
\E\Bigl(\sum_{i=1}^n b_i^2\Bigr)^4=O(n^4p^4).
\]
Consequently,
\[
\E(B_{n,J}^{\circ})^2=O(p^{-1})=o(1).
\]
Combining the bounds for $A_{n,J}^{\circ}$ and $B_{n,J}^{\circ}$ yields
\[
\E\bigl(V_{n,J}^{\circ}-1\bigr)^2
\le
2\E\bigl(A_{n,J}^{\circ}-1\bigr)^2+2\E(B_{n,J}^{\circ})^2
=o(1).
\]
This proves the first part of \eqref{eq:two-targets-to-zero}.

We turn to the fourth-moment term. Since $\bm{\eta}_k=b_k\bm{W}_k$, for every deterministic $\bm{a}\in\R^q$,
\[
\E\bigl\{(\bm{a}^\top\bm{\eta}_k)^4\mid \mathcal G_n\bigr\}
=
 b_k^4\E(\bm{a}^\top \bm{W}_k)^4
=
\frac{3b_k^4}{q(q+2)}\|\bm{a}\|^4.
\]
Hence
\[
\E\bigl\{(M_{n,k}^{\circ})^4\mid \mathcal G_n\bigr\}
\le
\frac{C b_k^4}{n^4q^2}\,
\E\bigl(\|\bm{R}_{k-1}^{\circ}\|^4\mid \mathcal G_n\bigr).
\]
Now
\[
\bm{R}_{k-1}^{\circ}=\sum_{i=1}^{k-1}\bm{\eta}_i,
\qquad
\|\bm{\eta}_i\|^2=b_i^2,
\qquad
\E(\bm{\eta}_i^\top\bm{\eta}_j)^2\mid \mathcal G_n=\frac{b_i^2b_j^2}{q}\le b_i^2b_j^2.
\]
Expanding $\|\bm{R}_{k-1}^{\circ}\|^4$ and using independence gives
\[
\E\bigl(\|\bm{R}_{k-1}^{\circ}\|^4\mid \mathcal G_n\bigr)
\le
C\Bigl(\sum_{i=1}^{k-1} b_i^2\Bigr)^2.
\]
Therefore
\[
\sum_{k=2}^n \E\bigl\{|D_{n,k}^{\circ}|^4\mid \mathcal G_n\bigr\}
\le
\frac{C}{p^2n^4q^2}
\sum_{k=2}^n b_k^4\Bigl(\sum_{i=1}^{k-1} b_i^2\Bigr)^2
\le
\frac{C}{p^2n^4q^2}
\Bigl(\sum_{k=1}^n b_k^4\Bigr)
\Bigl(\sum_{i=1}^n b_i^2\Bigr)^2.
\]
Taking expectations and applying Cauchy--Schwarz,
\begin{align*}
\E\sum_{k=2}^n \E\bigl\{|D_{n,k}^{\circ}|^4\mid \mathcal G_n\bigr\}
&\le
\frac{C}{p^2n^4q^2}
\Bigl\{\E\Bigl(\sum_{k=1}^n b_k^4\Bigr)^2\Bigr\}^{1/2}
\Bigl\{\E\Bigl(\sum_{i=1}^n b_i^2\Bigr)^4\Bigr\}^{1/2} \\
&=
\frac{C}{p^2n^4q^2}
\cdot O(np^2)\cdot O(n^2p^2)
=O(n^{-1})=o(1).
\end{align*}
Here we used $\sup_{n,p}\E|T_i|^8<\infty$, which implies
\[
\E\Bigl(\sum_{k=1}^n b_k^4\Bigr)^2=O(n^2p^4),
\qquad
\E\Bigl(\sum_{i=1}^n b_i^2\Bigr)^4=O(n^4p^4).
\]
This proves the second part of \eqref{eq:two-targets-to-zero}.

We have therefore shown that the right-hand side of \eqref{eq:Haeusler-conditional} tends to zero in $L^1$, and hence
\[
\sup_{t\in\R}
\Bigl|
\Prob\bigl(Q_{n,J}\le t\mid \mathcal G_n\bigr)-\Phi(t)
\Bigr|
\longrightarrow 0
\qquad\text{in }L^1.
\]
The second displayed statement follows immediately, because
\[
\sup_{|t-x|\le \delta_n}
\Bigl|
\Prob\bigl(Q_{n,J}\le t\mid \mathcal G_n\bigr)-\Phi(t)
\Bigr|
\le
\sup_{t\in\R}
\Bigl|
\Prob\bigl(Q_{n,J}\le t\mid \mathcal G_n\bigr)-\Phi(t)
\Bigr|.
\]
This completes the proof.
\end{proof}

\begin{lemma}
\label{lem:conditional_coeff_regularity}
Fix $d\ge 1$ and a coordinate set $J$ with $|J|=d$. Let
\[
a_i=\sqrt p\,Z_i\sqrt{B_i},
\qquad
s_n^2:=\frac1{dn}\sum_{i=1}^n a_i^2,
\qquad
u_p(y):=\sqrt{2\log p-\log\log p+y}.
\]
Under Assumption~\ref{ass:oracle}, if $p\to\infty$, $p=O(n^\kappa)$ for some fixed $\kappa>0$, and $\log p=o(n^{1/5})$, then
\[
s_n^2=1+O_p(n^{-1/2}),
\qquad
u_p(y)^2|s_n-1|=o_p(1),
\]
and
\[
\frac{u_p(y)\max_{1\le i\le n}|a_i|}{\sqrt{\sum_{i=1}^n a_i^2}}
=o_p(1).
\]
\end{lemma}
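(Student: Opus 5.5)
The plan is to prove the three claims in order, each from elementary moment bounds together with the beta--gamma representation of $B_i$ from Lemma~\ref{lem:block_decomp_uniform_direction}.

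\textbf{Step 1: $s_n^2=1+O_p(n^{-1/2})$.} Since $d s_n^2 = n^{-1}\sum_i pZ_i^2B_i$, the summands are i.i.d. with mean $p\E(Z_i^2)\E(B_i)=p\cdot 1\cdot d/p=d$. Their variance is at most $\E(Z_i^4)\E\{(pB_i)^2\}$. This is bounded uniformly in $p$, because $\sup_p\E|Z_i|^{8+\eta_Z}<\infty$ by Assumption~\ref{ass:oracle}(ii) and $\E(pB_i)^2=p^2\,d(d+2)/\{p(p+2)\}\le d(d+2)$. Chebyshev's inequality then gives $s_n^2-1=O_p(n^{-1/2})$. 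Because $|s_n-1|\le |s_n^2-1|$, we also get $|s_n-1|=O_p(n^{-1/2})$.

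\textbf{Step 2: $u_p(y)^2|s_n-1|=o_p(1)$.} For fixed $y$ we have $u_p(y)^2\le 2\log p+|y|$, so $u_p(y)^2|s_n-1|=O_p(\log p/\sqrt n)$. Since $\log p=o(n^{1/5})$, this is $o_p(1)$.

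\textbf{Step 3: the max ratio.} By Step 1, $\sum_i a_i^2=dn\,s_n^2$ is of exact order $n$ in probability, so it suffices to show $\max_i|a_i|=o_p(\sqrt{n/\log p})$. Write $|a_i|\le \max_i|Z_i|\cdot\max_i\sqrt{pB_i}$.
\begin{itemize}
\item On the event from Assumption~\ref{ass:oracle}(ii), $\max_i|Z_i|\le b_{n,p}$ with probability tending to one.
\item For $pB_i$, use $B_i=\chi^2_d/(\chi^2_d+\chi^2_{p-d})$. With probability $1-o(1)$ we have $\min_i\chi^2_{p-d,i}\ge (p-d)/2$; this follows from a Laurent--Massart lower-tail bound and a union bound over $i\le n$, using $\log n=o(p)$ (which holds since $\log p=o(n^{1/5})$ together with $p\to\infty$; if needed one also uses the polynomial relation $p=O(n^\kappa)$). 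Hence $pB_i\le 2p\chi^2_{d,i}/(p-d)\le 3\chi^2_{d,i}$.
\item Since $d$ is fixed, $\max_i\chi^2_{d,i}=O_p(\log n)$ by a sub-exponential union bound.
\end{itemize}
Together these give $\max_i|a_i|=O_p(b_{n,p}\sqrt{\log n})$. Therefore
\[
\frac{u_p(y)\max_i|a_i|}{\sqrt{\sum_i a_i^2}}=O_p\left(\sqrt{\frac{b_{n,p}^2\log p\log n}{n}}\right),
\]
which is $o_p(1)$ by the condition $b_{n,p}^2\log p\log n/n\to0$ in Assumption~\ref{ass:oracle}(ii).

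\textbf{Main obstacle.} The delicate part is the uniform control of $\max_i pB_i$, which requires the chi-square lower tail to be uniform over $n$ indices. I would handle this with the explicit bound $\Prob(\chi^2_{m}\le m/2)\le e^{-m/16}$, so the union bound contributes $ne^{-(p-d)/16}$. This tends to zero whenever $\log n=o(p)$. If that relation is not automatic in some regime, one can instead bound $\E\max_i(pB_i)^k\le n\,\E(pB_i)^k$ with $k$ chosen via the polynomial relation; this gives $\max_i\sqrt{pB_i}=O_p(n^{1/(2k)})$, and the powers can be absorbed in the same way.
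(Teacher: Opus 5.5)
Your argument is the paper's proof step for step. It uses the same Chebyshev bound on $n^{-1}\sum_i a_i^2$ from $\sup_p\E Z_i^4<\infty$ and bounded moments of $pB_i$, and the same $O_p(\log p/\sqrt n)$ bound for $u_p(y)^2|s_n-1|$. It also uses the same factorization $\max_i|a_i|\le\max_i|Z_i|\,\max_i\sqrt{pB_i}=O_p(b_{n,p}\sqrt{\log n})$, which leads to the rate $\sqrt{b_{n,p}^2\log p\log n/n}$.

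The one place to fix is your justification of $\max_i pB_i=O_p(\log n)$.
\begin{itemize}
\item \emph{The claimed relation does not follow.} The hypotheses $\log p=o(n^{1/5})$ and $p=O(n^\kappa)$ only bound $p$ from above, so $\log n=o(p)$ is not implied. For example, $p=\lceil\log\log n\rceil$ satisfies every hypothesis of the lemma.
\item \emph{The fallback does not rescue it.} A moment bound gives only $\max_i pB_i=O_p(n^{1/k})$. Assumption~\ref{ass:oracle}(ii) controls $b_{n,p}^2\log p\log n/n$, not $b_{n,p}^2\log p\,n^{1/k}/n$, so the powers cannot be absorbed.
\end{itemize}

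The repair is immediate, and either of two arguments works.
\begin{itemize}
\item \emph{Case split.} Since $B_i\le 1$, we have $\max_i pB_i\le p$. When $p-d<32\log n$, the bound $O(\log n)$ is therefore trivial. When $p-d\ge 32\log n$, your Laurent--Massart union bound costs at most $ne^{-(p-d)/16}\le n^{-1}$.
\item \emph{Uniform sub-exponential bound.} Write $pB_i=\|\sqrt p\,\bm{U}_{i,J}\|_2^2=\sum_{j\in J}pU_{ij}^2$. Each $\sqrt p\,U_{ij}$ is sub-Gaussian uniformly in $p$, as the paper already uses in Lemma~\ref{lem:gaussian_approx_max}. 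So $pB_i$ has uniformly bounded $\psi_1$-norm, and a union bound over $i\le n$ gives $O_p(\log n)$ in every regime.
\end{itemize}
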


\begin{proof}
Because $B_i\sim {\rm Beta}(d/2,(p-d)/2)$ and is independent of $Z_i$,
\[
\E(a_i^2)=p\,\E(Z_i^2)\E(B_i)=d.
\]
For fixed $d$, the beta moments satisfy $\sup_p \E(pB_i)^m<\infty$ for each fixed integer $m\ge 1$.  Assumption~\ref{ass:oracle}\textnormal{(ii)} gives $\sup_p\E Z_i^4<\infty$. Hence
\[
\Var(a_i^2)\le C_d<\infty,
\]
and the ordinary variance bound gives
\[
s_n^2-1
=
\frac1{dn}\sum_{i=1}^n(a_i^2-d)
=
O_p(n^{-1/2}).
\]
It follows that $s_n\to 1$ in probability and, because $u_p(y)^2\asymp \log p$ and $\log p=o(n^{1/5})$,
\[
u_p(y)^2|s_n-1|
=
O_p\!\left(\frac{\log p}{\sqrt n}\right)
=o_p(1).
\]

It remains to control the largest coefficient.  By Assumption~\ref{ass:oracle}\textnormal{(ii)} and the beta--gamma representation of $B_i$,
\[
\max_{1\le i\le n}|Z_i|=O_p(b_{n,p}),
\qquad
\max_{1\le i\le n} pB_i=O_p(\log n).
\]
Therefore
\[
\max_{1\le i\le n}|a_i|
=
\max_{1\le i\le n}\sqrt p\,|Z_i|\sqrt{B_i}
=
O_p\{b_{n,p}\sqrt{\log n}\}.
\]
Since $\sum_{i=1}^n a_i^2=dn\{1+o_p(1)\}$ and $u_p(y)^2\asymp\log p$, we obtain
\[
\frac{u_p(y)\max_{1\le i\le n}|a_i|}{\sqrt{\sum_{i=1}^n a_i^2}}
=
O_p\left(\sqrt{\frac{b_{n,p}^2\log p\log n}{n}}\right)
=o_p(1).
\]
This proves the lemma.
\end{proof}

\begin{lemma}
\label{lem:conditional_md_fixed_block}
Fix $d\ge 1$, $J=\{j_1,\dots,j_d\}$, and $y\in\R$. Let
\[
E_{n,J}(y):=\bigcap_{\ell=1}^d\{|S_{n,j_\ell}|>u_p(y)\}.
\]
Then, under Assumption~\ref{ass:oracle}, $p\to\infty$, $p=O(n^\kappa)$ for some fixed $\kappa>0$, and $\log p=o(n^{1/5})$,
\[
\frac{\Prob(E_{n,J}(y)\mid \mathcal G_n)}{\{2\bar\Phi(u_p(y))\}^d}
\longrightarrow 1
\qquad\text{in }L^1.
\]
In particular,
\[
\Prob(E_{n,J}(y))
=
\{2\bar\Phi(u_p(y))\}^d\{1+o(1)\}.
\]
\end{lemma}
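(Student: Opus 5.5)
Conditionally on $\mathcal G_n$, Lemma~\ref{lem:block_decomp_uniform_direction} gives $\bm{S}_{n,J}=n^{-1/2}\sum_{i=1}^n a_i\bm{V}_i$, a normalized sum of independent, centered, fixed-dimensional vectors with deterministic weights $a_i$ and i.i.d.\ directions $\bm{V}_i\sim\Unif(\Sphere^{d-1})$. Its conditional covariance is $n^{-1}\sum_i a_i^2 d^{-1}\mathbf{I}_d=s_n^2\mathbf{I}_d$. The plan is a conditional Cram\'er-type moderate-deviation argument for this $d$-dimensional sum at the level $u_p(y)\asymp\sqrt{2\log p}$. We then pass from $s_n$ back to $1$ using Lemma~\ref{lem:conditional_coeff_regularity}, and unconditionalize by dominated convergence.

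\textbf{Step 1.} Standardize by setting $\tilde{\bm S}=\bm{S}_{n,J}/s_n$, so that
\[
E_{n,J}(y)=\bigcap_{\ell}\{|\tilde S_{\ell}|>u_p(y)/s_n\}.
\]
On the event of Lemma~\ref{lem:conditional_coeff_regularity} we have $u_p(y)^2|s_n-1|=o_p(1)$. The Mills-ratio computation used in Lemma~\ref{lem:gaussian_extreme_value} then gives $\bar\Phi(u_p/s_n)/\bar\Phi(u_p)\to1$, since the ratio is $\exp\{-u_p^2(s_n^{-2}-1)/2\}(1+o(1))$. It therefore suffices to show that, conditionally,
\[
\Prob\Bigl(\bigcap_\ell\{|\tilde S_\ell|>x_n\}\Bigm|\mathcal G_n\Bigr)=\{2\bar\Phi(x_n)\}^d\{1+o(1)\}
\]
uniformly for $x_n$ in a shrinking neighbourhood of $u_p(y)$.

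\textbf{Step 2 (main obstacle).} This is the multivariate moderate deviation for weighted sums. I would use a conditional exponential tilting (Cram\'er) argument. The summands are bounded by $\max_i|a_i|$, and the Lyapunov-type ratio $u_p\max_i|a_i|/\sqrt{\sum a_i^2}=o_p(1)$ from Lemma~\ref{lem:conditional_coeff_regularity} is exactly the condition under which the tilted cumulant expansion has remainder $O(x^3\sum|a_i|^3/(\sum a_i^2)^{3/2})$. Here $x^3\sum|a_i|^3/(\sum a_i^2)^{3/2}\le x^3\max_i|a_i|/\sqrt{\sum a_i^2}\cdot\sqrt{1/n}\cdot O(1)$, and this must be $o(1)$. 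The third cumulant vanishes by the symmetry of $\bm V_i$, so the first correction is actually quartic. This should give the needed $o(1)$ under $b_{n,p}^2\log p\log n/n\to0$ and $\log p=o(n^{1/5})$.

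Concretely, I would tilt each of the $2^d$ orthant events $\{\pm\tilde S_\ell>x_n\ \forall\ell\}$ in the direction $x_n(\pm1,\dots,\pm1)$. I would then apply a local Berry--Esseen bound for the tilted, fixed-dimensional sum to compare it with the Gaussian orthant probability $\bar\Phi(x_n)^d$. Alternatively, Sakhanenko or Zaitsev strong approximation in dimension $d$ at rate $\max_i|a_i|\log n/\sqrt n$ works, since $x_n\cdot$(coupling error)$\to0$ and the Gaussian orthant density changes by a factor $1+o(1)$ under shifts of that size. Summing the $2^d$ orthants yields $\{2\bar\Phi(x_n)\}^d(1+o(1))$ on an event $\Omega_n$ with $\Prob(\Omega_n)\to1$.

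\textbf{Step 3.} Write $Q_n=\Prob(E_{n,J}\mid\mathcal G_n)/\{2\bar\Phi(u_p)\}^d$. Then $Q_n\to1$ in probability, and $\E Q_n=\Prob(E_{n,J})/\{2\bar\Phi(u_p)\}^d$. For $L^1$ convergence I need uniform integrability. Off $\Omega_n$ I would bound $Q_n$ by a conditional Bernstein/Hoeffding bound, $\Prob(|\tilde S_\ell|>x)\le2\exp(-x^2/2\cdot(1-o(1)))$, valid given the coefficient bound. Alternatively, I would show $\limsup\E Q_n\le1$ via the unconditional tail bound of Lemma~\ref{lem:Jn_part_negligible} type at polynomial scale $p^{-M}$ for the bad coefficient events. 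Then $Q_n\ge0$, $Q_n\to_p1$ and $\E Q_n\to1$ give $L^1$ convergence by Scheff\'e's lemma, and the unconditional statement follows.
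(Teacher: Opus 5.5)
Your Steps~1--2 follow the paper's route. The paper also conditions on $\mathcal G_n$, applies a fixed-dimensional Cram\'er moderate deviation orthant by orthant (it cites Shao--Zhang--Zhang where you sketch tilting or a Zaitsev coupling), and removes $s_n$ by Mills' ratio. This correctly gives $Q_n\to1$ in probability. There is one slip: $\sum_i|a_i|^3/(\sum_i a_i^2)^{3/2}\le\max_i|a_i|/(\sum_i a_i^2)^{1/2}$ has no extra factor $n^{-1/2}$. Control the cubic term instead by $\sum_i|a_i|^3=O_p(n)$, which gives $O_p(x^3n^{-1/2})=o_p(1)$.

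\textbf{The genuine gap is Step~3.} On the exceptional coefficient event, $Q_n$ can be as large as $\{2\bar\Phi(u_p(y))\}^{-d}\asymp p^d$. Either of your options therefore needs $\E[Q_n\mathbf 1_{\Omega_n^c}]\to0$, essentially $\Prob(\Omega_n^c)=o(p^{-d})$ for every fixed $d$. Assumption~\ref{ass:oracle} gives only $8+\eta_Z$ moments of $Z_i$, and $\Prob(\max_i|Z_i|>b_{n,p})\to0$ at no specified rate. So the events $\{\max_i|a_i|\ \text{large}\}$ and $\{|s_n^2-1|\ \text{not small}\}$ are only polynomially small in $n$. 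The Hoeffding fallback cannot help: the typical exceptional configuration is a single $|a_i|\gtrsim u_p(y)\sqrt n$, and there $\Prob(E_{n,J}(y)\mid\mathcal G_n)$ is bounded below by a constant depending only on $d$. Your other option, a $p^{-M}$ bound for the bad coefficient events, is exactly what the paper's proof asserts (``by high-moment concentration \ldots\ and the union-tail bounds''), and it is unjustified there too.

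In fact no argument can close Step~3 under the stated hypotheses, because the conclusion fails. Take a fixed law for $Z_i$ with $\Prob(|Z_i|>t)\asymp t^{-9}$, so $\eta_Z=1/2$ works and $b_{n,p}=n^{1/9}\log n$ is admissible. Take $p=\lfloor n^{1/2}\rfloor$ and $d=8$. For each $i$, the event $\{|Z_i|>4u_p(y)\sqrt n\}\cap\{\min_{j\in J}\sqrt p\,|U_{ij}|\ge1/2\}\cap\{\max_{j\in J}|S_{n,j}-n^{-1/2}\xi_{ij}|\le u_p(y)\}$ is contained in $E_{n,J}(y)$. Its three pieces are independent, with probabilities $\asymp(n\log n)^{-9/2}$, bounded below by some $c_8>0$, and tending to one. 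These events are nearly disjoint over $i$, so $\Prob(E_{n,J}(y))\gtrsim n^{-7/2}(\log n)^{-9/2}\gg n^{-4}\asymp\{2\bar\Phi(u_p(y))\}^{8}$, and $\E Q_n\to\infty$.

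The missing ingredient is therefore a hypothesis, not a technique. You have two ways to supply it:
\begin{itemize}
\item Assume tails for $Z_i$ under which the coefficient-regularity events have complement probability $O(p^{-M})$ for every $M$. Uniform sub-exponential tails suffice, and they hold for every law in Appendix~\ref{app:radial-verification}.
\item Prove the statement for $\Prob(E_{n,J}(y)\cap\mathcal T_n)$ with $\mathcal T_n=\{\max_i|Z_i|\le b_{n,p}\}$. This event is $\mathcal G_n$-measurable with $\Prob(\mathcal T_n)\to1$, so it can be intersected into the Bonferroni argument of Proposition~\ref{prop:ind_Hn_Sn} from the start. On $\mathcal T_n$, the remaining exceptional events (for $\max_i pB_i$ and for $s_n^2$) can be given probability $O(p^{-M})$ via gamma-type tails and Bernstein's inequality.
\end{itemize}
With either fix, your Fatou--Scheff\'e argument goes through.
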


\begin{proof}
Condition on $\mathcal G_n$. We have
\[
\bm{S}_{n,J}=\frac1{\sqrt n}\sum_{i=1}^n a_i\bm{V}_i,
\qquad
\Cov(\bm{S}_{n,J}\mid \mathcal G_n)=s_n^2\mathbf{I}_d,
\qquad
s_n^2:=\frac1{dn}\sum_{i=1}^n a_i^2.
\]
By Lemma~\ref{lem:conditional_coeff_regularity},
\[
s_n^2=1+O_p(n^{-1/2}),
\qquad
u_p(y)^2|s_n-1|=o_p(1),
\qquad
\frac{u_p(y)\max_i |a_i|}{\sqrt{\sum_i a_i^2}}=o_p(1).
\]
Hence, on an event whose probability tends to one, the fixed coefficient array $\{a_i:1\le i\le n\}$ and the independent bounded summands $a_i\bm{V}_i$ satisfy the normalization and maximal-coefficient conditions required in the fixed-dimensional Cram\'er-type moderate deviation theorem of Shao, Zhang and Zhang~\cite[Theorem~2.1]{ShaoZhangZhang2021}. Since
\[
u_p(y)\asymp\sqrt{\log p}=o(n^{1/6}),
\]
we use its fixed-dimensional rectangular-tail consequence.  Equivalently, apply the theorem to each of the finitely many orthants and combine the resulting tail probabilities by inclusion--exclusion.  This gives, conditionally on $\mathcal G_n$,
\[
\Prob(E_{n,J}(y)\mid \mathcal G_n)
=
\bigl\{2\bar\Phi(u_p(y)/s_n)\bigr\}^d\{1+o_p(1)\}.
\]
It remains only to replace $s_n$ by one in the normal tail. By Mills' ratio, uniformly whenever $u_p(y)^2|s_n-1|=o(1)$,
\[
\frac{\bar\Phi(u_p(y)/s_n)}{\bar\Phi(u_p(y))}
=
\frac{u_p(y)}{u_p(y)/s_n}
\exp\!\left[-\frac12\{u_p(y)^2/s_n^2-u_p(y)^2\}\right]\{1+o(1)\}
=
1+o(1).
\]
Therefore
\[
\bar\Phi(u_p(y)/s_n)
=
\bar\Phi(u_p(y))\{1+o_p(1)\}.
\]
The coefficient regularity events used above may be chosen with complement probability $O(p^{-M})$ for every fixed $M>0$, by high-moment concentration for the average of $a_i^2$ and the union-tail bounds for $\max_i |a_i|$.  Taking $M>d+1$ and using $\{2\bar\Phi(u_p(y))\}^d\asymp p^{-d}$ shows that the contribution of the exceptional coefficient event is negligible after division by $\{2\bar\Phi(u_p(y))\}^d$.  Thus the conditional ratio convergence holds in $L^1$.  Taking expectations then gives the unconditional display.
\end{proof}

\begin{proposition}
\label{prop:fixed_order_exceedance}
Under Assumption~\ref{ass:oracle}, $p\to\infty$, $p=O(n^\kappa)$ for some fixed $\kappa\in(0,2)$, and $\log p=o(n^{1/5})$. Fix an integer $d\ge 1$, and let $1\le j_1<\cdots<j_d\le p$ be distinct indices. Define
\[
u_p(y):=\sqrt{2\log p-\log\log p+y},\qquad y\in\R.
\]
Then, for every fixed $x,y\in\R$,
\[
\Prob\!\left(
\frac{H_n}{\sqrt{2p}}\le x,
\bigcap_{\ell=1}^d \{|S_{n,j_\ell}|>u_p(y)\}
\right)
=
\Phi(x)\bigl(2\bar\Phi(u_p(y))\bigr)^d\{1+o(1)\}.
\]
Moreover,
\[
\Prob\!\left(
\bigcap_{\ell=1}^d \{|S_{n,j_\ell}|>u_p(y)\}
\right)
=
\bigl(2\bar\Phi(u_p(y))\bigr)^d\{1+o(1)\}.
\]
\end{proposition}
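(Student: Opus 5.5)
The plan is to use the block decomposition of Lemma~\ref{lem:block_decomp_uniform_direction}: write $H_n/\sqrt{2p}=Q_{n,J}+\Delta_{n,J}$ with $Q_{n,J}=H_{n,J^c}/\sqrt{2p}$ and $\Delta_{n,J}=H_{n,J}/\sqrt{2p}$. The second display of the proposition is exactly the unconditional statement of Lemma~\ref{lem:conditional_md_fixed_block}, so only the joint statement needs work. The strategy is to first remove $\Delta_{n,J}$ at the cost of shifting $x$ by $\delta_n=C_0\log p/\sqrt p$, and then factorize the remaining joint probability by conditioning on $\mathcal G_n$.

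\emph{Removing $\Delta_{n,J}$.} By Lemma~\ref{lem:Jn_part_negligible},
\[
\Prob(|\Delta_{n,J}|>\delta_n)=o\bigl(\{2\bar\Phi(u_p(y))\}^d\bigr).
\]
Hence, with $E=E_{n,J}(y)$,
\[
\Prob(Q_{n,J}\le x-\delta_n,\,E)-o(\cdot)\le \Prob(H_n/\sqrt{2p}\le x,\,E)\le \Prob(Q_{n,J}\le x+\delta_n,\,E)+o(\cdot),
\]
where $o(\cdot)$ denotes $o(\{2\bar\Phi(u_p(y))\}^d)$. It therefore suffices to show that
\[
\Prob(Q_{n,J}\le t_n,\,E)=\Phi(x)\{2\bar\Phi(u_p(y))\}^d\{1+o(1)\}
\]
for any deterministic $t_n\to x$. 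Since $\Phi$ is continuous, $\Phi(t_n)\to\Phi(x)$.

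\emph{Conditional factorization.} By Lemma~\ref{lem:block_decomp_uniform_direction}, $Q_{n,J}$ and $\bm S_{n,J}$ are conditionally independent given $\mathcal G_n$. Therefore
\[
\Prob(Q_{n,J}\le t_n,\,E)=\E\bigl[\Prob(Q_{n,J}\le t_n\mid\mathcal G_n)\,\Prob(E\mid\mathcal G_n)\bigr].
\]
Define $\pi_n=\{2\bar\Phi(u_p(y))\}^d$, $X_n=\Prob(E\mid\mathcal G_n)/\pi_n$ and $Y_n=\Prob(Q_{n,J}\le t_n\mid\mathcal G_n)$. By Lemma~\ref{lem:conditional_md_fixed_block}, $X_n\to1$ in $L^1$. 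By Lemma~\ref{lem:conditional_clt_Jc}, $Y_n-\Phi(t_n)\to0$ in $L^1$, and $0\le Y_n\le1$. We want $\E(X_nY_n)\to\Phi(x)$. Write
\[
X_nY_n-\Phi(t_n)=(X_n-1)Y_n+(Y_n-\Phi(t_n)).
\]
The first term has $L^1$ norm at most $\E|X_n-1|\to0$, because $|Y_n|\le1$. The second term tends to zero in $L^1$. Hence $\E(X_nY_n)=\Phi(x)+o(1)$, and multiplying by $\pi_n$ gives the claim. Since $\Phi(x)>0$, the additive $o(1)$ converts into the multiplicative factor $\{1+o(1)\}$. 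Combining this with the sandwich, applied with $t_n=x\pm\delta_n$, finishes the proof.

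\emph{Main obstacle.} The delicate point is that every error must be small relative to the tiny probability $\pi_n\asymp p^{-d}$, not merely $o(1)$. That is why I would rely on the ratio form $X_n\to1$ in $L^1$, rather than on an additive bound for $\Prob(E\mid\mathcal G_n)$. The boundedness of $Y_n$ is what lets the product argument go through without needing uniform integrability of $X_n$ beyond its $L^1$ convergence. The tail bound for $\Delta_{n,J}$ on the $p^{-d}$ scale is taken directly from Lemma~\ref{lem:Jn_part_negligible}. If that bound were too weak, I would instead condition on $\mathcal G_n$ and the $\bm V_i$, and control $H_{n,J}=\|\bm S_{n,J}\|^2-n^{-1}\sum_i a_i^2$ on the event $E$. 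On $E$, however, $\|\bm S_{n,J}\|^2\ge d\,u_p(y)^2\asymp\log p$, so $\Delta_{n,J}=O(\log p/\sqrt p)$ still holds there provided $\|\bm S_{n,J}\|$ does not greatly exceed $u_p$. This is consistent with the choice of $\delta_n$.
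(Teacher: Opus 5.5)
Your proposal is correct and matches the paper's proof essentially step for step: the $Q_{n,J}\le x\pm\delta_n$ sandwich via Lemma~\ref{lem:Jn_part_negligible}, the factorization $\E[\Prob(Q_{n,J}\le t\mid\mathcal G_n)\Prob(E_{n,J}(y)\mid\mathcal G_n)]$ from conditional independence given $\mathcal G_n$, and the bounded-times-$L^1$ product argument combining Lemmas~\ref{lem:conditional_clt_Jc} and~\ref{lem:conditional_md_fixed_block}, with the second display read off directly from Lemma~\ref{lem:conditional_md_fixed_block}. The fallback sketched in your final paragraph is not needed, since the paper also takes the $o(\{2\bar\Phi(u_p(y))\}^d)$ tail bound for $\Delta_{n,J}$ directly from Lemma~\ref{lem:Jn_part_negligible}.
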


\begin{proof}
Fix $J=\{j_1,\dots,j_d\}$ and write
\[
E_{n,J}(y):=\bigcap_{\ell=1}^d \{|S_{n,j_\ell}|>u_p(y)\},
\qquad
\Delta_{n,J}:=\frac{H_{n,J}}{\sqrt{2p}},
\qquad
Q_{n,J}:=\frac{H_{n,J^c}}{\sqrt{2p}}.
\]
By Lemma~\ref{lem:block_decomp_uniform_direction},
\[
\frac{H_n}{\sqrt{2p}}=Q_{n,J}+\Delta_{n,J},
\]
and, conditional on $\mathcal G_n$, the variables $Q_{n,J}$ and $E_{n,J}(y)$ are independent.

Set
\[
\beta_{n,d}(y):=\{2\bar\Phi(u_p(y))\}^d.
\]
By Lemma~\ref{lem:Jn_part_negligible},
\[
\Prob\bigl(|\Delta_{n,J}|>\delta_n\bigr)=o\bigl(\beta_{n,d}(y)\bigr).
\]
Therefore,
\begin{align*}
&\Prob\!\left(\frac{H_n}{\sqrt{2p}}\le x,\ E_{n,J}(y)\right)\\
&\qquad=
\Prob\!\left(Q_{n,J}+\Delta_{n,J}\le x,\ E_{n,J}(y),\ |\Delta_{n,J}|\le \delta_n\right)+o\bigl(\beta_{n,d}(y)\bigr).
\end{align*}
On the event $\{|\Delta_{n,J}|\le \delta_n\}$ we have
\[
\{Q_{n,J}\le x-\delta_n\}\subseteq \{Q_{n,J}+\Delta_{n,J}\le x\}\subseteq \{Q_{n,J}\le x+\delta_n\},
\]
so conditional independence gives the two-sided bound
\begin{align*}
&E\!\left[\Prob\bigl(Q_{n,J}\le x-\delta_n\mid \mathcal G_n\bigr)\Prob\bigl(E_{n,J}(y)\mid \mathcal G_n\bigr)\right]-o\bigl(\beta_{n,d}(y)\bigr)\\
&\qquad\le
\Prob\!\left(\frac{H_n}{\sqrt{2p}}\le x,\ E_{n,J}(y)\right)\\
&\qquad\le
E\!\left[\Prob\bigl(Q_{n,J}\le x+\delta_n\mid \mathcal G_n\bigr)\Prob\bigl(E_{n,J}(y)\mid \mathcal G_n\bigr)\right]+o\bigl(\beta_{n,d}(y)\bigr).
\end{align*}
Now Lemma~\ref{lem:conditional_clt_Jc} yields
\[
A_n^{\pm}:=
\Prob\bigl(Q_{n,J}\le x\pm \delta_n\mid \mathcal G_n\bigr)-\Phi(x)
\longrightarrow 0
\qquad\text{in }L^1,
\]
while Lemma~\ref{lem:conditional_md_fixed_block} gives, with
\[
R_{n,J}:=\frac{\Prob(E_{n,J}(y)\mid \mathcal G_n)}{\beta_{n,d}(y)},
\]
that $R_{n,J}\to 1$ in $L^1$.  Therefore
\begin{align*}
&E\!\left[\Prob\bigl(Q_{n,J}\le x\pm\delta_n\mid\mathcal G_n\bigr)
\Prob\bigl(E_{n,J}(y)\mid\mathcal G_n\bigr)\right] \\
&\qquad=
\beta_{n,d}(y)E\{(\Phi(x)+A_n^{\pm})R_{n,J}\} \\
&\qquad=
\Phi(x)\beta_{n,d}(y)\{1+o(1)\},
\end{align*}
because $|A_n^{\pm}|\le 1$ and
$E|A_n^{\pm}|R_{n,J}\le E|A_n^{\pm}|+E|R_{n,J}-1|\to0$.
The upper and lower bounds therefore match and imply
\[
\Prob\!\left(
\frac{H_n}{\sqrt{2p}}\le x,
E_{n,J}(y)
\right)
=
\Phi(x)\beta_{n,d}(y)\{1+o(1)\},
\]
which is the first claim.

The second claim is exactly the unconditional statement from Lemma~\ref{lem:conditional_md_fixed_block}.
\end{proof}

\begin{proposition}
\label{prop:ind_Hn_Sn}
Under Assumption~\ref{ass:oracle}. Suppose
\[
p\to\infty,
\qquad
p=O(n^\kappa)\quad\text{for some }\kappa\in(0,2),
\qquad
\log p=o(n^{1/5}).
\]
Then, for every fixed \(x,y\in\R\),
\[
\Prob\!\left(\frac{H_n}{\sqrt{2p}}\le x,\ \|\bm{S}_n\|_\infty\le u_p(y)\right)
\to
\Phi(x)\exp\!\bigl(-\pi^{-1/2}e^{-y/2}\bigr),
\]
where
\[
u_p(y):=\sqrt{2\log p-\log\log p+y}.
\]
Consequently,
\[
\left(\frac{H_n}{\sqrt{2p}},\ \|\bm{S}_n\|_\infty\right)
\]
are asymptotically independent.
\end{proposition}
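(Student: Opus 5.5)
The plan is the standard inclusion--exclusion (Bonferroni) argument for asymptotic independence of a sum-type and a max-type statistic, with Proposition~\ref{prop:fixed_order_exceedance} as the main input. Fix $x,y\in\R$ and write $u=u_p(y)$. For $j=1,\ldots,p$ let $E_j=\{|S_{n,j}|>u\}$ and $A=\{H_n/\sqrt{2p}\le x\}$. Then
\[
\Prob\bigl(A,\ \|\bm{S}_n\|_\infty\le u\bigr)=\Prob(A)-\Prob\Bigl(A\cap\bigcup_{j=1}^pE_j\Bigr).
\]
Lemma~\ref{lem:Hn_martingale_clt} gives $\Prob(A)\to\Phi(x)$. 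So it suffices to show
\[
\Prob\Bigl(A\cap\bigcup_jE_j\Bigr)\to\Phi(x)\bigl\{1-\exp(-\pi^{-1/2}e^{-y/2})\bigr\}.
\]

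For each fixed integer $k\ge1$, Bonferroni's inequalities give
\[
\sum_{d=1}^{2k}(-1)^{d-1}\mathcal T_d\le\Prob\Bigl(A\cap\bigcup_jE_j\Bigr)\le\sum_{d=1}^{2k-1}(-1)^{d-1}\mathcal T_d,
\qquad
\mathcal T_d=\sum_{j_1<\cdots<j_d}\Prob\Bigl(A\cap\bigcap_{\ell=1}^dE_{j_\ell}\Bigr).
\]
By exchangeability of the coordinates of $\bm{U}_i$ under the uniform law, every summand in $\mathcal T_d$ has the same distribution as the one for $J=\{1,\ldots,d\}$. Hence $\mathcal T_d=\binom pd\Prob(A\cap E_{n,J}(y))$. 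Proposition~\ref{prop:fixed_order_exceedance} then gives
\[
\mathcal T_d=\binom pd\Phi(x)\{2\bar\Phi(u)\}^d\{1+o(1)\}.
\]
From the proof of Lemma~\ref{lem:gaussian_extreme_value}, $2p\bar\Phi(u)\to\lambda:=\pi^{-1/2}e^{-y/2}$. Since $\binom pd\sim p^d/d!$, it follows that $\mathcal T_d\to\Phi(x)\lambda^d/d!$ for each fixed $d$.

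Take $\limsup$ and $\liminf$ in the Bonferroni bounds and then let $k\to\infty$. Because $\sum_{d\ge1}(-1)^{d-1}\lambda^d/d!=1-e^{-\lambda}$, the even and odd truncations converge to $1-e^{-\lambda}$ from both sides. This yields the required limit for $\Prob(A\cap\bigcup_jE_j)$, and hence
\[
\Prob\bigl(A,\ \|\bm{S}_n\|_\infty\le u\bigr)\to\Phi(x)-\Phi(x)(1-e^{-\lambda})=\Phi(x)\exp(-\pi^{-1/2}e^{-y/2}).
\]
The marginal limit of $\|\bm{S}_n\|_\infty$ comes from Lemma~\ref{lem:gaussian_approx_max} together with Lemma~\ref{lem:gaussian_extreme_value}, or equivalently from the same argument with $x=\infty$ using the second display of Proposition~\ref{prop:fixed_order_exceedance}. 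Since the joint distribution function factorizes at every continuity point, the two statistics are asymptotically independent.

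The main obstacle is making the $o(1)$ in Proposition~\ref{prop:fixed_order_exceedance} compatible with summing over $\binom pd$ index sets. This is exactly why I reduce to a single representative $J$ by exchangeability: the relative error is then the same for every term and is not accumulated over $p^d$ sets. Exchangeability holds because permuting the coordinates of $\bm{U}_i$ leaves the joint law of $(Z_i,\bm{U}_i)$ invariant. That invariance also leaves $H_n$ unchanged, since $H_n$ depends on the $\bm{U}_i$ only through the inner products $\bm{U}_i^\top\bm{U}_k$.

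The only remaining care is the order of limits: first $n\to\infty$ with $k$ fixed, then $k\to\infty$. This is legitimate because each fixed-$d$ term converges individually.
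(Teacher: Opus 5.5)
Your proposal is correct and follows essentially the same route as the paper. Both use Bonferroni bounds whose $d$-th terms are reduced by coordinate exchangeability to $\binom{p}{d}$ times a single fixed-block probability, evaluate that probability via Proposition~\ref{prop:fixed_order_exceedance} and Mills' ratio, and take $n\to\infty$ before the truncation order; the only cosmetic difference is that you bound $\Prob(A\cap\bigcup_j E_j)$ and subtract it from $\Prob(A)\to\Phi(x)$, whereas the paper applies Bonferroni directly to the intersection of complements with $H_{n,0}=\Prob(A_n(x))$, which is equivalent.
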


\begin{proof}
Fix $x,y\in\R$. Write
\[
A_n(x):=\left\{\frac{H_n}{\sqrt{2p}}\le x\right\},
\qquad
B_{n,j}(y):=\{|S_{n,j}|>u_p(y)\},
\qquad
L_n(y):=\{\|\bm{S}_n\|_\infty\le u_p(y)\}.
\]
Then \(L_n(y)=\bigcap_{j=1}^p B_{n,j}(y)^c\).

For each fixed integer \(d\ge1\), define
\[
H_{n,d}(x,y)
:=
\sum_{1\le j_1<\cdots<j_d\le p}
\Prob\!\left(A_n(x)\cap\bigcap_{\ell=1}^d B_{n,j_\ell}(y)\right),
\]
and set \(H_{n,0}(x,y)=\Prob(A_n(x))\).

By Bonferroni's inequalities, for every fixed \(m\ge1\),
\[
\sum_{d=0}^{2m+1}(-1)^d H_{n,d}(x,y)
\le
\Prob(A_n(x)\cap L_n(y))
\le
\sum_{d=0}^{2m}(-1)^d H_{n,d}(x,y).
\]
Hence it suffices to show that, for each fixed \(d\),
\[
H_{n,d}(x,y)\to \Phi(x)\frac{\lambda(y)^d}{d!},
\qquad
\lambda(y):=\pi^{-1/2}e^{-y/2}.
\]

To this end, fix \(d\) and distinct coordinates \(j_1,\dots,j_d\). By Proposition~\ref{prop:fixed_order_exceedance},
\[
\Prob\!\left(A_n(x)\cap\bigcap_{\ell=1}^d B_{n,j_\ell}(y)\right)
=
\Phi(x)\bigl(2\bar\Phi(u_p(y))\bigr)^d\{1+o(1)\}.
\]
Since the coordinates are exchangeable under Assumption~\ref{ass:oracle}, the left-hand side does not depend on the specific choice of \(\{j_1,\dots,j_d\}\). Therefore
\[
H_{n,d}(x,y)
=
\binom{p}{d}\Phi(x)\bigl(2\bar\Phi(u_p(y))\bigr)^d\{1+o(1)\}.
\]
By Mills' ratio,
\[
2p\,\bar\Phi(u_p(y))\to \lambda(y)=\pi^{-1/2}e^{-y/2},
\]
hence
\[
\binom{p}{d}\bigl(2\bar\Phi(u_p(y))\bigr)^d
\to \frac{\lambda(y)^d}{d!}.
\]
Thus
\[
H_{n,d}(x,y)\to \Phi(x)\frac{\lambda(y)^d}{d!}.
\]

Now let \(n\to\infty\) in the Bonferroni bounds, and then let \(m\to\infty\). Since
\[
\sum_{d=0}^\infty (-1)^d\frac{\lambda(y)^d}{d!}=e^{-\lambda(y)},
\]
we conclude that
\[
\Prob\!\left(\frac{H_n}{\sqrt{2p}}\le x,\ \|\bm{S}_n\|_\infty\le u_p(y)\right)
\to
\Phi(x)e^{-\lambda(y)}
=
\Phi(x)\exp\!\bigl(-\pi^{-1/2}e^{-y/2}\bigr),
\]
which proves the desired asymptotic independence.
\end{proof}

\subsubsection{Proof of Theorem~\ref{thm:sum-max-independence}}

\begin{proof}[Proof of Theorem~\ref{thm:sum-max-independence}]
By Lemma~\ref{lem:sum_to_Hn},
\[
\frac{T_{\mathrm{sum}}-p}{\sqrt{2p}}
=
\frac{H_n}{\sqrt{2p}}+o_p(1).
\]
By Lemma~\ref{lem:max_reduction},
\[
T_{\max}=\|\bm{S}_n\|_\infty^2-2\log p+\log\log p+o_p(1).
\]
Proposition~\ref{prop:ind_Hn_Sn} therefore yields
\[
\left(
\frac{T_{\mathrm{sum}}-p}{\sqrt{2p}},\ T_{\max}
\right)
\xrightarrow{d}
(Z_1,Z_2),
\]
where \(Z_1\sim N(0,1)\), \(Z_2\) has cdf \(F(x)=\exp(-\pi^{-1/2}e^{-x/2})\), and \(Z_1\) is independent of \(Z_2\).

For the p-values, the continuous mapping theorem gives
\[
(P_{\rm sum},P_{\max})
\xrightarrow{d}
(U_1,U_2),
\]
where \(U_1,U_2\) are independent \(U(0,1)\) random variables. Consequently,
\[
\tan\{\pi(1/2-U_1)\}
\quad\text{and}\quad
\tan\{\pi(1/2-U_2)\}
\]
are independent standard Cauchy random variables. By the stability of the Cauchy distribution, their equally weighted average is again standard Cauchy. Thus
\[
T_{\rm cau}
=
\frac12\tan\!\bigl(\pi(1/2-P_{\rm sum})\bigr)
+
\frac12\tan\!\bigl(\pi(1/2-P_{\max})\bigr)
\]
converges in distribution to a standard Cauchy random variable; see also Liu and Xie \cite{LiuXie2020}. Hence
\[
P_{\rm cau}
:=
\frac12-\frac{1}{\pi}\arctan(T_{\rm cau})
\xrightarrow{d}U(0,1).
\]
This completes the proof.
\end{proof}

\subsection{Proof of Theorem~\ref{thm:plugin-sum}}

\begin{proof}[Proof of Theorem~\ref{thm:plugin-sum}]
Set
\[
\bm{\delta}_n=\hat{\bm{g}}_n-\bm{g}_n^{\rm or},
\qquad
\bm{u}_n=\tilde\kappa_p\bar{\bm{U}}+\kappa_p n^{-1/2}\bm{C}_n.
\]
Proposition~\ref{prop:yan-to-plugin} gives
\begin{equation}
\label{eq:app-plugin-sum-delta}
\bm{\delta}_n=\bm{u}_n+\bm{b}_n^{\rm hard}+\bm{a}_n.
\end{equation}
Since $\zeta_1\asymp p^{-1/2}$,
\begin{equation}
\label{eq:kappa-tilde-relation}
|\kappa_p|=|\tilde\kappa_p|\zeta_1\le C|\tilde\kappa_p|p^{-1/2}.
\end{equation}
The oracle bounds used below are
\begin{equation}
\label{eq:app-sum-oracle-bounds}
\begin{aligned}
\|\bm{g}_n^{\rm or}\|_1&=O_p(pn^{-1/2}),
&
\frac n{\sqrt p}|(\bm{g}_n^{\rm or})^\top\bar{\bm{U}}|
&=O_p\left(n^{-1/2}+p^{-1/2}+\frac{\log p}{\sqrt n}\right),                                      \\
\frac n{\sqrt p}\|\bar{\bm{U}}\|_2^2&=O_p(p^{-1/2}),
&
\|\bm{C}_n\|_\infty&=O_p(\mathfrak c_n).
\end{aligned}
\end{equation}
Equations \eqref{eq:kappa-tilde-relation}--\eqref{eq:app-sum-oracle-bounds} imply
\begin{equation}
\label{eq:app-sum-un}
\begin{aligned}
\frac n{\sqrt p}|(\bm{g}_n^{\rm or})^\top\bm{u}_n|
&\le
|\tilde\kappa_p|\frac n{\sqrt p}|(\bm{g}_n^{\rm or})^\top\bar{\bm{U}}|
+\frac n{\sqrt p}|\kappa_p|n^{-1/2}\|\bm{g}_n^{\rm or}\|_1\|\bm{C}_n\|_\infty                                      \\
&=O_p\left[
|\tilde\kappa_p|\left(n^{-1/2}+p^{-1/2}+\frac{\log p}{\sqrt n}\right)
+|\tilde\kappa_p|\mathfrak c_n
\right],                                                                     \\
\frac n{\sqrt p}\|\bm{u}_n\|_2^2
&\le
\frac{2n}{\sqrt p}|\tilde\kappa_p|^2\|\bar{\bm{U}}\|_2^2
+\frac{2n}{\sqrt p}|\kappa_p|^2n^{-1}\|\bm{C}_n\|_2^2                                                   \\
&\le
O_p\left(\frac{|\tilde\kappa_p|^2}{\sqrt p}\right)
+\frac{C|\tilde\kappa_p|^2}{p\sqrt p}\|\bm{C}_n\|_2^2                                                       \\
&\le
O_p\left(\frac{|\tilde\kappa_p|^2}{\sqrt p}\right)
+O_p\left(\frac{|\tilde\kappa_p|^2\mathfrak c_n^2}{\sqrt p}\right).
\end{aligned}
\end{equation}
By Proposition~\ref{prop:yan-to-plugin},
\begin{align*}
\frac n{\sqrt p}
\left\{|(\bm{g}_n^{\rm or})^\top\bm{b}_n^{\rm hard}|+
\|\bm{b}_n^{\rm hard}\|_2^2\right\}
&=O_p\left(\tau_n+\frac{\sqrt p}{n}+\frac{\tau_n^2}{\sqrt p}\right),                                      \\
\frac n{\sqrt p}
\left\{|(\bm{g}_n^{\rm or})^\top\bm{a}_n|+|\bm{d}_n^\top\bm{a}_n|+\|\bm{a}_n\|_2^2\right\}
&=O_p(\mathfrak A_{S,n}).
\end{align*}
The cross products satisfy
\begin{equation}
\label{eq:app-cross-products}
\begin{aligned}
\frac n{\sqrt p}|\bm{u}_n^\top\bm{b}_n^{\rm hard}|
&\le
\frac12\frac n{\sqrt p}\|\bm{u}_n\|_2^2
+\frac12\frac n{\sqrt p}\|\bm{b}_n^{\rm hard}\|_2^2,                                                   \\
\frac n{\sqrt p}|\bm{u}_n^\top\bm{a}_n|
&\le
\frac12\frac n{\sqrt p}\|\bm{u}_n\|_2^2
+\frac12\frac n{\sqrt p}\|\bm{a}_n\|_2^2 .
\end{aligned}
\end{equation}
Combining \eqref{eq:app-sum-un}--\eqref{eq:app-cross-products},
\[
\frac n{\sqrt p}
\left\{|(\bm{g}_n^{\rm or})^\top\bm{\delta}_n|+
\|\bm{\delta}_n\|_2^2\right\}
=O_p(\mathfrak A_{S,n}).
\]
The identity
\begin{align*}
\hat T_{\rm sum}-T_{\rm sum}
&=n\|\bm{g}_n^{\rm or}+\bm{\delta}_n\|_2^2-n\|\bm{g}_n^{\rm or}\|_2^2                                      \\
&=2n(\bm{g}_n^{\rm or})^\top\bm{\delta}_n+n\|\bm{\delta}_n\|_2^2
\end{align*}
gives
\begin{equation}
\label{eq:app-plugin-sum-final-rate}
\left|\frac{\hat T_{\rm sum}-T_{\rm sum}}{\sqrt{2p}}\right|
\le
\sqrt2\frac n{\sqrt p}|(\bm{g}_n^{\rm or})^\top\bm{\delta}_n|
+\frac1{\sqrt2}\frac n{\sqrt p}\|\bm{\delta}_n\|_2^2
=O_p(\mathfrak A_{S,n}).
\end{equation}
Because $\mathfrak A_{S,n}\to0$ and Theorem~\ref{thm:sum-clt} gives $(T_{\rm sum}-p)/\sqrt{2p}\xrightarrow{d} N(0,1)$,
\[
\frac{\hat T_{\rm sum}-p}{\sqrt{2p}}
=
\frac{T_{\rm sum}-p}{\sqrt{2p}}
+
O_p(\mathfrak A_{S,n})
\xrightarrow{d} N(0,1).
\]
\end{proof}

\subsection{Proof of Theorem~\ref{thm:plugin-max}}

\begin{proof}[Proof of Theorem~\ref{thm:plugin-max}]
Use the same $\bm{\delta}_n$ and $\bm{u}_n$ as in \eqref{eq:app-plugin-sum-delta}.  From Assumption~\ref{ass:oracle}, Proposition~\ref{prop:yan-to-plugin} and \eqref{eq:kappa-tilde-relation},
\begin{align*}
\sqrt n\|\bm{\delta}_n\|_\infty
&\le
|\tilde\kappa_p|\sqrt n\|\bar{\bm{U}}\|_\infty
+|\kappa_p|\|\bm{C}_n\|_\infty
+\sqrt n\|\bm{b}_n^{\rm hard}\|_\infty
+\sqrt n\|\bm{a}_n\|_\infty                                                        \\
&=O_p\left(
|\tilde\kappa_p|\sqrt{\frac{\log p}{p}}
+\frac{|\tilde\kappa_p|\mathfrak c_n}{\sqrt p}
+\tau_n\sqrt{\log p}
+\sqrt{\frac{\log p}{n}}
+\frac{\mathfrak A_{M,n}}{\sqrt{\log p}}
\right)                                                                            \\
&=O_p\left(\frac{\mathfrak A_{M,n}}{\sqrt{\log p}}\right).
\end{align*}
The oracle maximum satisfies
\[
\sqrt n\|\bm{g}_n^{\rm or}\|_\infty=O_p(\sqrt{\log p}).
\]
For any two vectors $\bm{x},\bm{y}\in\R^p$,
\[
\left|\|\bm{x}\|_\infty^2-\|\bm{y}\|_\infty^2\right|
\le
2\|\bm{y}\|_\infty\|\bm{x}-\bm{y}\|_\infty
+\|\bm{x}-\bm{y}\|_\infty^2.
\]
With $\bm{x}=\sqrt n\hat{\bm{g}}_n$ and $\bm{y}=\sqrt n\bm{g}_n^{\rm or}$,
\begin{equation}
\label{eq:app-plugin-max-final-rate}
\begin{aligned}
|\hat T_{\max}-T_{\max}|
&\le
2\sqrt n\|\bm{g}_n^{\rm or}\|_\infty\sqrt n\|\bm{\delta}_n\|_\infty
+n\|\bm{\delta}_n\|_\infty^2                                                                 \\
&=O_p(\sqrt{\log p})O_p\left(\frac{\mathfrak A_{M,n}}{\sqrt{\log p}}\right)
+O_p\left(\frac{\mathfrak A_{M,n}^2}{\log p}\right)                                    \\
&=O_p(\mathfrak A_{M,n}).
\end{aligned}
\end{equation}
Since $\mathfrak A_{M,n}\to0$, \eqref{eq:app-plugin-max-final-rate} and Theorem~\ref{thm:max-gumbel} imply, for every $x\in\R$,
\[
\Prob(\hat T_{\max}\le x)
=
\Prob(T_{\max}+O_p(\mathfrak A_{M,n})\le x)
\to
\exp\{-\pi^{-1/2}e^{-x/2}\}.
\]
\end{proof}

\subsection{Proof of Corollary~\ref{cor:plugin-cauchy}}

\begin{proof}[Proof of Corollary~\ref{cor:plugin-cauchy}]
Equations \eqref{eq:app-plugin-sum-final-rate} and \eqref{eq:app-plugin-max-final-rate} give
\[
\frac{\hat T_{\rm sum}-p}{\sqrt{2p}}
=
\frac{T_{\rm sum}-p}{\sqrt{2p}}+O_p(\mathfrak A_{S,n}),
\qquad
\hat T_{\max}=T_{\max}+O_p(\mathfrak A_{M,n}).
\]
Together with $\mathfrak A_{S,n}\to0$, $\mathfrak A_{M,n}\to0$ and Theorem~\ref{thm:sum-max-independence},
\[
\left(
\frac{\hat T_{\rm sum}-p}{\sqrt{2p}},\hat T_{\max}
\right)
\xrightarrow{d} (Z,G),
\qquad
Z\perpn G.
\]
The p-value transformations satisfy
\[
\hat P_{\rm sum}=1-\Phi\left(\frac{\hat T_{\rm sum}-p}{\sqrt{2p}}\right),
\qquad
\hat P_{\max}=1-F_G(\hat T_{\max}),
\]
and therefore
\[
(\hat P_{\rm sum},\hat P_{\max})\xrightarrow{d} (U_1,U_2),
\qquad
U_1\perpn U_2,
\qquad
U_1,U_2\sim U(0,1).
\]
Hence
\[
\frac12\tan\{\pi(1/2-\hat P_{\rm sum})\}
+
\frac12\tan\{\pi(1/2-\hat P_{\max})\}
\xrightarrow{d} C_0,
\qquad
C_0\sim \text{standard Cauchy},
\]
and
\[
\hat P_{\rm cau}
=
\frac12-\frac1\pi\arctan\left[
\frac12\tan\{\pi(1/2-\hat P_{\rm sum})\}
+
\frac12\tan\{\pi(1/2-\hat P_{\max})\}
\right]
\xrightarrow{d} U(0,1).
\]
\end{proof}

\subsection{Proof of Proposition~\ref{prop:h1-hr-behavior}}

\begin{proof}
From \eqref{eq:alternative-radius}--\eqref{eq:alternative-observed},
\[
\mathbf{\Sigma}^{-1/2}(\bm{X}_i-\bm{\mu})
=
R_{1i}\bm{U}_{0i},
\qquad
R_{1i}>0,
\qquad
\mathcal U\{\mathbf{\Sigma}^{-1/2}(\bm{X}_i-\bm{\mu})\}
=
\bm{U}_{0i}.
\]
Since \(\bm{U}_{0i}\sim\Unif(\Sphere^{p-1})\),
\[
\E\bm{U}_{0i}=\bm{0},
\qquad
\E(\bm{U}_{0i}\bm{U}_{0i}^{\top})=p^{-1}\mathbf{I}_p,
\qquad
p\E(\bm{U}_{0i}\bm{U}_{0i}^{\top})=\mathbf{I}_p.
\]
Thus
\[
\E\mathcal U\{\mathbf{\Sigma}^{-1/2}(\bm{X}_i-\bm{\mu})\}=\bm{0},
\qquad
p\E\left[
\mathcal U\{\mathbf{\Sigma}^{-1/2}(\bm{X}_i-\bm{\mu})\}
\mathcal U\{\mathbf{\Sigma}^{-1/2}(\bm{X}_i-\bm{\mu})\}^{\top}
\right]=\mathbf{I}_p.
\]

For any unit vector \(\bm{x}\), because \(R_{1i}=R_{0i}\exp\{\delta_ns_A(\bm{U}_{0i})\}\) and \(R_{0i}\perpn\bm{U}_{0i}\),
\[
\bm{x}^{\top}\mathbf{A}_{1,n}\bm{x}
=
\E(R_{0i}^{-1})
\E\left[
e^{-\delta_ns_A(\bm{U}_{0i})}
\{1-(\bm{x}^{\top}\bm{U}_{0i})^2\}
\right].
\]
The spherical moment generating bound
\[
\E\exp\{t s_A(\bm{U}_{0i})\}
\le \exp(Ct^2/p),
\qquad t\in\R,
\]
and \(\delta_n^2/p\le C_\delta\) imply
\[
c\le \E e^{-\delta_ns_A(\bm{U}_{0i})}\le C,
\qquad
\E e^{-2\delta_ns_A(\bm{U}_{0i})}\le C.
\]
Also,
\[
\E(\bm{x}^{\top}\bm{U}_{0i})^2=p^{-1},
\qquad
\E(\bm{x}^{\top}\bm{U}_{0i})^4=\frac{3}{p(p+2)}.
\]
Hence
\[
\E\left[
e^{-\delta_ns_A(\bm{U}_{0i})}
(\bm{x}^{\top}\bm{U}_{0i})^2
\right]
\le
\{\E e^{-2\delta_ns_A(\bm{U}_{0i})}\}^{1/2}
\{\E(\bm{x}^{\top}\bm{U}_{0i})^4\}^{1/2}
\le Cp^{-1},
\]
and
\[
c p^{-1/2}
\le
\lambda_{\min}(\mathbf{A}_{1,n})
\le
\lambda_{\max}(\mathbf{A}_{1,n})
\le
C p^{-1/2},
\qquad
\|\mathbf{A}_{1,n}^{-1}\|_{op}\le C\sqrt p.
\]
At the true HR target,
\[
\E\{\bm{U}_{0i}\bm{U}_{0i}^{\top}\}=p^{-1}\mathbf{I}_p,
\qquad
\mathcal U(\bm{X}_i-\bm{\mu})
=
\mathcal U(\mathbf{\Sigma}^{1/2}\bm{U}_{0i}).
\]
Let
\[
\mathbf{S}_{0,1}
=
\E\left[
\mathcal U(\mathbf{\Sigma}^{1/2}\bm{U}_{0i})
\mathcal U(\mathbf{\Sigma}^{1/2}\bm{U}_{0i})^{\top}
\right].
\]
Since the radius cancels in every spatial sign,
\[
\mathbf{S}_{0,1,H_1}=\mathbf{S}_{0,H_0},
\qquad
\|\hat{\mathbf{S}}_0-\mathbf{S}_{0,1}\|_{\max}
=
O_p\left(\sqrt{\frac{\log p}{n}}+p^{-1/2}\right).
\]
The graphical-lasso objective in Algorithm~\ref{alg:hd-hr} and Assumption~\ref{ass:hr-rate}\textnormal{(iii)}--\textnormal{(v)} therefore give
\[
\|\hat{\mathbf{\Omega}}-\mathbf{\Omega}\|_{L_1}
\le
C s_0(p)\lambda_n^{1-q}
=
O_p(\tau_n),
\qquad
\|\hat{\mathbf{\Omega}}-\mathbf{\Omega}\|_{op}
=
O_p(\tau_n).
\]
Thus
\[
\mathbf{B}_1
=
\hat{\mathbf{\Sigma}}^{-1/2}\mathbf{\Sigma}^{1/2}-\mathbf{I}_p,
\qquad
\|\mathbf{B}_1\|_{op}=O_p(\tau_n).
\]

Let
\[
\hat{\bm{Y}}_i
=
\hat{\mathbf{\Sigma}}^{-1/2}(\bm{X}_i-\hat{\bm{\mu}})
=
(\mathbf{I}_p+\mathbf{B}_1)R_{1i}\bm{U}_{0i}-\bm{v}_1
=
R_{1i}\bm{U}_{0i}+\bm{h}_i,
\]
where
\[
\bm{h}_i=R_{1i}\mathbf{B}_1\bm{U}_{0i}-\bm{v}_1.
\]
For \(R>0\), \(\bm{u}\in\Sphere^{p-1}\) and \(\|\bm{h}\|_2\le R/2\),
\[
\mathcal U(R\bm{u}+\bm{h})
=
\bm{u}
+
R^{-1}(\mathbf{I}_p-\bm{u}\bm{u}^{\top})\bm{h}
+
\bm{q}(R,\bm{u},\bm{h}),
\]
\[
\|\bm{q}(R,\bm{u},\bm{h})\|_\infty
\le
C R^{-2}\|\bm{h}\|_2^2
\left(\|\bm{u}\|_\infty+R^{-1}\|\bm{h}\|_\infty\right).
\]
The HR location equation gives
\[
\bm{0}
=
\bar{\bm{U}}_0
+
\frac1n\sum_{i=1}^n
(\mathbf{I}_p-\bm{U}_{0i}\bm{U}_{0i}^{\top})\mathbf{B}_1\bm{U}_{0i}
-
\mathbf{Q}_{1,n}\bm{v}_1
+
\bar{\bm{q}}_n,
\]
where
\[
\mathbf{Q}_{1,n}
=
\frac1n\sum_{i=1}^n
R_{1i}^{-1}(\mathbf{I}_p-\bm{U}_{0i}\bm{U}_{0i}^{\top}),
\qquad
\bar{\bm{q}}_n=\frac1n\sum_{i=1}^n\bm{q}(R_{1i},\bm{U}_{0i},\bm{h}_i).
\]
The assumed moment bounds and spherical concentration inequalities imply
\[
\|\mathbf{Q}_{1,n}-\mathbf{A}_{1,n}\|_{\max}
=
O_p\left(p^{-1/2}\sqrt{\frac{\log p}{n}}+p^{-3/2}\sqrt{\frac{\log p}{n}}\right),
\]
\[
\|\bar{\bm{U}}_0\|_\infty
=
O_p\left(\sqrt{\frac{\log p}{np}}\right),
\]
\[
\left\|
\frac1n\sum_{i=1}^n
(\mathbf{I}_p-\bm{U}_{0i}\bm{U}_{0i}^{\top})\mathbf{B}_1\bm{U}_{0i}
\right\|_\infty
=
O_p\left(\tau_n\sqrt{\frac{\log p}{np}}\right).
\]
Since
\[
\mathbf{Q}_{1,n}^{-1}
=
\mathbf{A}_{1,n}^{-1}
+
\mathbf{A}_{1,n}^{-1}(\mathbf{A}_{1,n}-\mathbf{Q}_{1,n})\mathbf{A}_{1,n}^{-1}
+
\mathbf{R}_{Q,n},
\]
\[
\|\mathbf{R}_{Q,n}\|_{op}
\le
C\|\mathbf{A}_{1,n}^{-1}\|_{op}^3
\|\mathbf{Q}_{1,n}-\mathbf{A}_{1,n}\|_{op}^2,
\]
we obtain
\[
\bm{v}_1
=
\mathbf{A}_{1,n}^{-1}\bar{\bm{U}}_0
+
n^{-1/2}\bm{C}_{1,n},
\qquad
\|\bm{C}_{1,n}\|_\infty=O_p(\mathfrak c_n).
\]

Define
\[
\bm{\phi}_1(\bm{y})
=
(\log\|\bm{y}\|_2-m_{1,n})\frac{\bm{y}}{\|\bm{y}\|_2},
\qquad
\bm{\phi}_{1i}
=
(L_{1i}-m_{1,n})\bm{U}_{0i}.
\]
For \(\bm{y}=R\bm{u}\),
\[
D\bm{\phi}_1(R\bm{u})[\bm{h}]
=
\left(\frac{\bm{u}^{\top}\bm{h}}R\right)\bm{u}
+
(L-m_{1,n})R^{-1}(\mathbf{I}_p-\bm{u}\bm{u}^{\top})\bm{h}.
\]
Substituting \(\bm{h}_i=R_{1i}\mathbf{B}_1\bm{U}_{0i}-\bm{v}_1\) yields
\[
D\bm{\phi}_1(R_{1i}\bm{U}_{0i})[\bm{h}_i]
=
(L_{1i}-m_{1,n})(\mathbf{I}_p-\bm{U}_{0i}\bm{U}_{0i}^{\top})\mathbf{B}_1\bm{U}_{0i}
+
\bm{U}_{0i}(\bm{U}_{0i}^{\top}\mathbf{B}_1\bm{U}_{0i})
-
\mathbf{N}_{1i}\bm{v}_1,
\]
where
\[
\mathbf{N}_{1i}
=
R_{1i}^{-1}
\left\{
(L_{1i}-m_{1,n})\mathbf{I}_p+
[1-(L_{1i}-m_{1,n})]\bm{U}_{0i}\bm{U}_{0i}^{\top}
\right\}.
\]
The Taylor remainder satisfies
\[
\begin{aligned}
&\left\|
\bm{\phi}_1(R_{1i}\bm{U}_{0i}+\bm{h}_i)
-\bm{\phi}_{1i}
-D\bm{\phi}_1(R_{1i}\bm{U}_{0i})[\bm{h}_i]
\right\|_\infty                                                \\
&\qquad\le
C(1+|L_{1i}-m_{1,n}|)R_{1i}^{-2}\|\bm{h}_i\|_2^2
\{\|\bm{U}_{0i}\|_\infty+R_{1i}^{-1}\|\bm{h}_i\|_\infty\}.
\end{aligned}
\]
Consequently,
\[
\hat{\bm{g}}_{n}^{(1)}-\bm{g}_{n}^{(1),\rm or}
=
-(\sigma_{1,n}\sigma_U)^{-1}\mathbf{M}_{1,n}\mathbf{A}_{1,n}^{-1}\bar{\bm{U}}_0
+\bm{\Delta}_{L,n}
+\bm{\Delta}_{\Sigma,n}
+\bm{\Delta}_{R,n},
\]
where
\[
\bm{\Delta}_{L,n}
=
-(\sigma_{1,n}\sigma_U)^{-1}\mathbf{M}_{1,n}n^{-1/2}\bm{C}_{1,n}
-(\sigma_{1,n}\sigma_U)^{-1}
\left(\frac1n\sum_{i=1}^n\mathbf{N}_{1i}-\mathbf{M}_{1,n}\right)\bm{v}_1,
\]
\[
\bm{\Delta}_{\Sigma,n}
=
(\sigma_{1,n}\sigma_U)^{-1}
\frac1n\sum_{i=1}^n
\left[
(L_{1i}-m_{1,n})(\mathbf{I}_p-\bm{U}_{0i}\bm{U}_{0i}^{\top})\mathbf{B}_1\bm{U}_{0i}
+
\bm{U}_{0i}(\bm{U}_{0i}^{\top}\mathbf{B}_1\bm{U}_{0i})
\right],
\]
and \(\bm{\Delta}_{R,n}\) contains the Taylor, centering and empirical-standardization remainders.  The leading term is \(\mathbf{H}_{1,n}\bar{\bm{U}}_0\), and therefore
\[
\|\mathbf{H}_{1,n}\bar{\bm{U}}_0\|_2
\le
\|\mathbf{H}_{1,n}\|_{op}\|\bar{\bm{U}}_0\|_2
=
O_p(\mathfrak H_{2,n}n^{-1/2}),
\]
\[
\|\mathbf{H}_{1,n}\bar{\bm{U}}_0\|_\infty
\le
\|\mathbf{H}_{1,n}\|_{L_1}\|\bar{\bm{U}}_0\|_\infty
=
O_p\left(\mathfrak H_{\infty,n}\sqrt{\frac{\log p}{np}}\right).
\]
The \(\bm{C}_{1,n}\) term gives
\[
\|(\sigma_{1,n}\sigma_U)^{-1}\mathbf{M}_{1,n}n^{-1/2}\bm{C}_{1,n}\|_2
\le
n^{-1/2}\mathfrak h_{2,n}\sqrt p\,\|\bm{C}_{1,n}\|_\infty
=
O_p\left(\frac{\sqrt p\,\mathfrak h_{2,n}\mathfrak c_n}{\sqrt n}\right),
\]
\[
\|(\sigma_{1,n}\sigma_U)^{-1}\mathbf{M}_{1,n}n^{-1/2}\bm{C}_{1,n}\|_\infty
\le
n^{-1/2}\mathfrak h_{\infty,n}\|\bm{C}_{1,n}\|_\infty
=
O_p\left(\frac{\mathfrak h_{\infty,n}\mathfrak c_n}{\sqrt n}\right).
\]
The empirical fluctuation of \(n^{-1}\sum_i\mathbf{N}_{1i}\) gives
\[
\left\|
(\sigma_{1,n}\sigma_U)^{-1}
\left(\frac1n\sum_{i=1}^n\mathbf{N}_{1i}-\mathbf{M}_{1,n}\right)\bm{v}_1
\right\|_2
=
O_p\left((1+\rho_{1,n})\frac{\tau_n}{\sqrt n}+(1+\rho_{1,n})^2\frac pn\right),
\]
\[
\left\|
(\sigma_{1,n}\sigma_U)^{-1}
\left(\frac1n\sum_{i=1}^n\mathbf{N}_{1i}-\mathbf{M}_{1,n}\right)\bm{v}_1
\right\|_\infty
=
O_p\left((1+\rho_{1,n})\tau_n\sqrt{\frac{\log p}{n}}+(1+\rho_{1,n})^2\frac{\log p}{n}\right).
\]
The non-null mean of the shape derivative is
\[
\E\left[
(L_{1i}-m_{1,n})(\mathbf{I}_p-\bm{U}_{0i}\bm{U}_{0i}^{\top})\mathbf{B}_1\bm{U}_{0i}
\mid \mathbf{B}_1
\right]
=
\delta_n
\E\left[
s_A(\bm{U}_{0i})(\mathbf{I}_p-\bm{U}_{0i}\bm{U}_{0i}^{\top})\mathbf{B}_1\bm{U}_{0i}
\mid \mathbf{B}_1
\right],
\]
\[
\left\|
\E\left[
s_A(\bm{U}_{0i})(\mathbf{I}_p-\bm{U}_{0i}\bm{U}_{0i}^{\top})\mathbf{B}_1\bm{U}_{0i}
\mid \mathbf{B}_1
\right]\right\|_2
\le
Cp^{-1}\|\mathbf{B}_1\|_{op},
\]
and
\[
\E\{\bm{U}_{0i}(\bm{U}_{0i}^{\top}\mathbf{B}_1\bm{U}_{0i})\mid \mathbf{B}_1\}=\bm{0}.
\]
Thus
\[
\|\bm{\Delta}_{\Sigma,n}\|_2
=
O_p\left((1+\rho_{1,n})\left\{\frac{\delta_n\tau_n}{\sqrt p}+\frac{\tau_n}{\sqrt n}\right\}\right),
\]
\[
\|\bm{\Delta}_{\Sigma,n}\|_\infty
=
O_p\left(
(1+\rho_{1,n})\frac{\delta_n\tau_n}{\sqrt p}
+(1+\rho_{1,n})\tau_n\sqrt{\frac{\log p}{n}}
\right).
\]
The Taylor, centering and empirical-standardization terms obey
\[
\|\bm{\Delta}_{R,n}\|_2
=
O_p\left((1+\rho_{1,n})^2\left\{\tau_n^2+\frac pn\right\}\right),
\qquad
\|\bm{\Delta}_{R,n}\|_\infty
=
O_p\left((1+\rho_{1,n})^2\left\{\tau_n^2+\frac{\log p}{n}\right\}\right).
\]
Combining the preceding displays gives
\[
\|\hat{\bm{g}}_{n}^{(1)}-\bm{g}_{n}^{(1),\rm or}\|_2
=O_p(\mathfrak R_{2,n}^{(1)}),
\qquad
\|\hat{\bm{g}}_{n}^{(1)}-\bm{g}_{n}^{(1),\rm or}\|_\infty
=O_p(\mathfrak R_{\infty,n}^{(1)}).
\]
\end{proof}

\subsection{Proof of Theorem~\ref{thm:h1-consistency}}

\begin{proof}
Let
\[
S_A(\bm{u})=s_n^{-1/2}\sum_{k\in A}u_k,
\qquad
L_{1i}=L_{0i}+\delta_nS_A(\bm{U}_{0i}),
\qquad
m_{0,n}=\E L_{0i}.
\]
For \(\bm{U}_{0i}\sim\Unif(\Sphere^{p-1})\),
\[
\E U_{0ij}=0,
\qquad
\E(U_{0ij}U_{0ik})=p^{-1}\mathbf 1(j=k),
\qquad
\E S_A(\bm{U}_{0i})=0,
\]
\[
\Var\{S_A(\bm{U}_{0i})\}
=s_n^{-1}\sum_{j,k\in A}\E(U_{0ij}U_{0ik})
=p^{-1},
\]
\[
\Cov\{S_A(\bm{U}_{0i}),U_{0ij}\}
=s_n^{-1/2}\sum_{k\in A}\E(U_{0ik}U_{0ij})
=s_n^{-1/2}p^{-1}\mathbf 1(j\in A).
\]
Since \(L_{0i}\perpn\bm{U}_{0i}\),
\[
\E L_{1i}=m_{0,n},
\qquad
\sigma_{1,n}^2
=\Var(L_{0i})+\delta_n^2\Var\{S_A(\bm{U}_{0i})\}
=\sigma_{0,n}^2+\delta_n^2p^{-1},
\]
\[
\Cov(L_{1i},U_{0ij})
=
\delta_n\Cov\{S_A(\bm{U}_{0i}),U_{0ij}\}
=
\delta_ns_n^{-1/2}p^{-1}\mathbf 1(j\in A).
\]
Because \(\sigma_U^2=p^{-1}\),
\[
\gamma_j
=
\frac{\Cov(L_{1i},U_{0ij})}{\sigma_{1,n}\sigma_U}
=
\frac{\delta_n\mathbf 1(j\in A)}{\sqrt{s_np}\,\sigma_{1,n}},
\]
\[
\|\bm{\gamma}\|_2^2
=
s_n\frac{\delta_n^2}{s_np\sigma_{1,n}^2}
=
\frac{\delta_n^2}{p\sigma_{0,n}^2+\delta_n^2},
\qquad
\|\bm{\gamma}\|_\infty^2
=
\frac{\delta_n^2}{s_n(p\sigma_{0,n}^2+\delta_n^2)}.
\]

Put
\[
\bm{V}_i=\frac{(L_{1i}-m_{0,n})\bm{U}_{0i}}{\sigma_{1,n}\sigma_U},
\qquad
\bar{\bm{V}}=n^{-1}\sum_{i=1}^n\bm{V}_i,
\qquad
\bm{Z}_n=\sqrt n(\bar{\bm{V}}-\bm{\gamma}).
\]
The centered empirical correlation satisfies
\[
\hat\gamma_j^{\rm or}
=
\frac{
\bar V_j-B_{L,n}B_{U,n,j}}
{\{(1+A_{L,n}-B_{L,n}^2)(1+A_{U,n,j}-B_{U,n,j}^2)\}^{1/2}},
\]
where
\[
A_{L,n}=n^{-1}\sum_{i=1}^n\frac{(L_{1i}-m_{0,n})^2}{\sigma_{1,n}^2}-1,
\qquad
B_{L,n}=\frac{\bar L_1-m_{0,n}}{\sigma_{1,n}},
\]
\[
A_{U,n,j}=p n^{-1}\sum_{i=1}^nU_{0ij}^2-1,
\qquad
B_{U,n,j}=\sqrt p\,\bar U_{0j}.
\]
The moment and truncation conditions for the standardized log-radius, together with the spherical concentration inequalities, give
\[
A_{L,n}=O_p(n^{-1/2}),
\qquad
B_{L,n}=O_p(n^{-1/2}),
\]
\[
\max_{1\le j\le p}|A_{U,n,j}|
=O_p\left(\sqrt{\frac{\log p}{n}}+\frac{\log p}{n}\right),
\qquad
\max_{1\le j\le p}|B_{U,n,j}|
=O_p\left(\sqrt{\frac{\log p}{n}}\right),
\]
\[
\max_{1\le j\le p}|\bar V_j-\gamma_j|
=O_p\left(\sqrt{\frac{\log p}{n}}\right).
\]
Consequently,
\[
\|\hat{\bm{g}}_n^{(1),\rm or}-\bm{\gamma}-n^{-1/2}\bm{Z}_n\|_\infty
=
O_p\left(\frac{\log p}{n}+\|\bm{\gamma}\|_\infty\sqrt{\frac{\log p}{n}}\right),
\]
\[
\|\hat{\bm{g}}_n^{(1),\rm or}-\bm{\gamma}-n^{-1/2}\bm{Z}_n\|_2
=
O_p\left(\frac{\sqrt p\log p}{n}+\|\bm{\gamma}\|_2\sqrt{\frac{\log p}{n}}\right).
\]
Furthermore,
\[
\E\bm{Z}_n=\bm{0},
\qquad
\E\|\bm{Z}_n\|_2^2
=
p-\|\bm{\gamma}\|_2^2,
\qquad
\E(\bm{\gamma}^{\top}\bm{Z}_n)^2
\le C\|\bm{\gamma}\|_2^2,
\]
\[
\Var(\|\bm{Z}_n\|_2^2)
\le
C\left(p+\frac{p^2}{n}\right).
\]
Thus
\[
\|\bm{Z}_n\|_2^2
=
p+O_p\left(\sqrt p+\frac p{\sqrt n}\right),
\qquad
\bm{\gamma}^{\top}\bm{Z}_n
=
O_p(\|\bm{\gamma}\|_2).
\]
Consequently,
\[
T_{\rm sum}
=
n\|\bm{g}_n^{(1),\rm or}\|_2^2
=
\mathcal S_n
+
p
+
O_p\left(\sqrt n\|\bm{\gamma}\|_2+\sqrt p+\frac p{\sqrt n}\right),
\]
and
\[
\frac{\mathcal S_n}{\sqrt p+p/\sqrt n}\to\infty
\quad\Longrightarrow\quad
\frac{T_{\rm sum}-p}{\sqrt{2p}}\xrightarrow{p}\infty,
\quad
P_{\rm sum}\xrightarrow{p}0.
\]

For the maximum statistic,
\[
\|\bm{g}_n^{(1),\rm or}-\bm{\gamma}\|_\infty
=
O_p\left(\sqrt{\frac{\log p}{n}}\right).
\]
Hence
\[
T_{\max}
=
n\|\bm{g}_n^{(1),\rm or}\|_\infty^2-2\log p+\log\log p
\ge
n\left(\|\bm{\gamma}\|_\infty
-
O_p\left(\sqrt{\frac{\log p}{n}}\right)\right)^2
-2\log p+\log\log p.
\]
Therefore
\[
\frac{\mathcal M_n}{\log p}\to\infty
\quad\Longrightarrow\quad
T_{\max}\xrightarrow{p}\infty,
\quad
0\le P_{\max}=1-F_G(T_{\max})
\le \pi^{-1/2}\exp(-T_{\max}/2)\xrightarrow{p}0.
\]

By Proposition~\ref{prop:h1-hr-behavior},
\[
\|\hat{\bm{g}}_{n}^{(1)}-\bm{g}_{n}^{(1),\rm or}\|_2
=O_p(\mathfrak R_{2,n}^{(1)}),
\qquad
\|\bm{g}_{n}^{(1),\rm or}\|_2
=
O_p\left(\|\bm{\gamma}\|_2+\sqrt{\frac pn}\right).
\]
Thus
\[
|\hat T_{\rm sum}-T_{\rm sum}|
=
n\left|
\|\hat{\bm{g}}_{n}^{(1)}\|_2^2
-
\|\bm{g}_{n}^{(1),\rm or}\|_2^2
\right|
\]
\[
\le
2n\|\bm{g}_{n}^{(1),\rm or}\|_2
\|\hat{\bm{g}}_{n}^{(1)}-\bm{g}_{n}^{(1),\rm or}\|_2
+
n\|\hat{\bm{g}}_{n}^{(1)}-\bm{g}_{n}^{(1),\rm or}\|_2^2
\]
\[
=
O_p\left(
n\left[
\left(\|\bm{\gamma}\|_2+\sqrt{\frac pn}\right)\mathfrak R_{2,n}^{(1)}
+
\{\mathfrak R_{2,n}^{(1)}\}^2
\right]\right).
\]
The condition
\[
\frac{n\{(\|\bm{\gamma}\|_2+\sqrt{p/n})\mathfrak R_{2,n}^{(1)}+[\mathfrak R_{2,n}^{(1)}]^2\}}{\mathcal S_n}\to0
\]
gives
\[
\hat T_{\rm sum}-p
=
\mathcal S_n
+
O_p\left(\sqrt n\|\bm{\gamma}\|_2+\sqrt p+\frac p{\sqrt n}\right)
+
o_p(\mathcal S_n),
\]
and hence \(\hat P_{\rm sum}\xrightarrow{p}0\).

Similarly,
\[
\|\bm{g}_{n}^{(1),\rm or}\|_\infty
=
O_p\left(\|\bm{\gamma}\|_\infty+\sqrt{\frac{\log p}{n}}\right),
\]
\[
|\hat T_{\max}-T_{\max}|
\le
2n\|\bm{g}_{n}^{(1),\rm or}\|_\infty
\|\hat{\bm{g}}_{n}^{(1)}-\bm{g}_{n}^{(1),\rm or}\|_\infty
+
n\|\hat{\bm{g}}_{n}^{(1)}-\bm{g}_{n}^{(1),\rm or}\|_\infty^2
\]
\[
=
O_p\left(
n\left[
\left(\|\bm{\gamma}\|_\infty+\sqrt{\frac{\log p}{n}}\right)\mathfrak R_{\infty,n}^{(1)}
+
\{\mathfrak R_{\infty,n}^{(1)}\}^2
\right]\right).
\]
The condition
\[
\frac{n\{(\|\bm{\gamma}\|_\infty+\sqrt{\log p/n})\mathfrak R_{\infty,n}^{(1)}+[\mathfrak R_{\infty,n}^{(1)}]^2\}}{\mathcal M_n}\to0
\]
gives
\[
\hat T_{\max}\xrightarrow{p}\infty,
\qquad
\hat P_{\max}\xrightarrow{p}0.
\]
If \(P_a\xrightarrow{p}0\), \(P_b\le 1-\eta\) with probability tending to one for some \(\eta\in(0,1)\), then
\[
\frac12\tan\{\pi(1/2-P_a)\}
+
\frac12\tan\{\pi(1/2-P_b)\}
\ge
\frac12\cot(\pi P_a)
+
\frac12\tan\{\pi(\eta-1/2)\}
\xrightarrow{p}\infty,
\]
\[
P_{\rm cau}
=
\frac12-\frac1\pi\arctan(T_{\rm cau})
\xrightarrow{p}0.
\]
The same display with \(P_a,P_b\) replaced by the plug-in p-values gives
\[
\hat P_{\rm cau}\xrightarrow{p}0.
\]
\end{proof}

\subsection{Proof of Theorem~\ref{thm:h1-sum-max-independence}}

\begin{proof}
Put
\[
S_A(\bm U_{0i})=s_n^{-1/2}\sum_{k\in A_n}U_{0ik},
\qquad
Z_{0i}=\frac{L_{0i}-m_{0,n}}{\sigma_{0,n}},
\qquad
Z_{1i}=\frac{L_{1i}-m_{0,n}}{\sigma_{1,n}}.
\]
Then
\[
Z_{1i}
=
\frac{\sigma_{0,n}}{\sigma_{1,n}}Z_{0i}
+
\frac{\delta_n}{\sigma_{1,n}}S_A(\bm U_{0i}),
\]
\[
\bm{W}_i=\sqrt p Z_{1i}\bm U_{0i},
\qquad
\bm{\gamma}=\E\bm{W}_i,
\qquad
\bm{a}_n=\sqrt n \bm{\gamma},
\qquad
\bm{S}_n=n^{-1/2}\sum_{i=1}^n(\bm{W}_i-\bm{\gamma}).
\]
For \(j=1,\ldots,p\),
\[
\gamma_j
=
\frac{\delta_n\mathbf 1(j\in A_n)}{\sqrt{s_np}\sigma_{1,n}},
\qquad
a_{nj}=\sqrt n\gamma_j=\mu_n\mathbf 1(j\in A_n).
\]
The calibration of \(\mu_n\) gives
\[
\frac{\delta_n^2}{p\sigma_{1,n}^2}
=
\frac{s_n\mu_n^2}{n},
\qquad
\frac{\sigma_{0,n}^2}{\sigma_{1,n}^2}
=
1-\frac{s_n\mu_n^2}{n},
\qquad
\|\bm{\gamma}\|_2^2=\frac{s_n\mu_n^2}{n}.
\]
Because
\[
\log s_n=\frac12\log p-\log\log p+\log\ell+o(1),
\]
\[
\mu_n=u_p(0)-u_{s_n}(\eta)
=(\sqrt2-1)\sqrt{\log p}+O\left(\frac{\log\log p}{\sqrt{\log p}}\right),
\]
\[
\mu_n^2=(3-2\sqrt2)\log p+O(\log\log p),
\qquad
\frac{s_n\mu_n^2}{\sqrt{2p}}\to
\theta_S=\frac{(3-2\sqrt2)\ell}{\sqrt2},
\]
\[
\frac{s_n\mu_n^2}{n}=O\left(\frac{\sqrt p}{n}\right)=o(1),
\qquad
\frac{\delta_n^2}{p\sigma_{0,n}^2}
=
\frac{s_n\mu_n^2}{n-s_n\mu_n^2}\to0.
\]
Let
\[
\bm{v}_A=s_n^{-1/2}\sum_{j\in A_n}\bm e_j,
\qquad
S_A(\bm U)=\bm{v}_A^\top\bm U,
\qquad
\bm U\sim\Unif(\Sphere^{p-1}).
\]
Then
\[
\E\{S_A(\bm U)^2\bm U\bm U^\top\}
=
\frac{\mathbf I_p+2\bm{v}_A\bm{v}_A^\top}{p(p+2)}.
\]
Hence
\[
\E(\bm{W}_i-\bm{\gamma})(\bm{W}_i-\bm{\gamma})^\top
=
\frac{\sigma_{0,n}^2}{\sigma_{1,n}^2}\mathbf I_p
+
\frac{\delta_n^2}{\sigma_{1,n}^2}
\frac{\mathbf I_p+2\bm{v}_A\bm{v}_A^\top}{p+2}
-
\bm{\gamma}\bm{\gamma}^\top,
\]
\[
\left\|
\E(\bm{W}_i-\bm{\gamma})(\bm{W}_i-\bm{\gamma})^\top-\mathbf I_p
\right\|_{op}
\le
C\frac{s_n\mu_n^2}{n}\to0,
\]
\[
\left|
\tr\E(\bm{W}_i-\bm{\gamma})(\bm{W}_i-\bm{\gamma})^\top-p
\right|
\le
\|\bm{\gamma}\|_2^2
=
O\left(\frac{\sqrt p}{n}\right).
\]
Define
\[
\bm{\xi}_i=\sqrt p Z_{0i}\bm U_{0i},
\qquad
\bm{S}_n^0=n^{-1/2}\sum_{i=1}^n\bm{\xi}_i,
\]
\[
\bm{D}_i
=
\bm{W}_i-\bm{\gamma}-\bm{\xi}_i
=
\left(\frac{\sigma_{0,n}}{\sigma_{1,n}}-1\right)\bm{\xi}_i
+
\frac{\delta_n}{\sigma_{1,n}}
\left\{\sqrt p S_A(\bm U_{0i})\bm U_{0i}-\frac{\bm{v}_A}{\sqrt p}\right\},
\]
\[
\bm{\Delta}_n=n^{-1/2}\sum_{i=1}^n\bm{D}_i,
\qquad
\bm{S}_n=\bm{S}_n^0+\bm{\Delta}_n.
\]
Since
\[
\E\left\|
\sqrt p S_A(\bm U_{0i})\bm U_{0i}-\frac{\bm{v}_A}{\sqrt p}
\right\|_2^2
=
p\E S_A(\bm U_{0i})^2-\frac{\|\bm{v}_A\|_2^2}{p}
=
1-p^{-1},
\]
\[
\E\|\bm{D}_i\|_2^2
\le
C\left(\frac{s_n\mu_n^2}{n}\right)^2p
+
C\frac{\delta_n^2}{\sigma_{1,n}^2}
\le
C\frac{s_np\mu_n^2}{n},
\]
\[
\|\bm{\Delta}_n\|_2
=
O_p\left(\sqrt{\frac{s_np\mu_n^2}{n}}\right),
\qquad
\|\bm{\Delta}_n\|_\infty
=
O_p\left(\sqrt{\frac{s_n\mu_n^2\log p}{n}}\right).
\]
Therefore, for each fixed \(y\in\mathbb R\),
\[
u_p(y)\|\bm{\Delta}_n\|_\infty
=
O_p\left(\sqrt{\frac{s_n\mu_n^2(\log p)^2}{n}}\right)
=
O_p\left(\frac{p^{1/4}\log p}{\sqrt n}\right)=o_p(1),
\]
\[
\frac{\|\bm{S}_n^0\|_2\|\bm{\Delta}_n\|_2}{\sqrt p}
=O_p\left(\sqrt{\frac{s_np\mu_n^2}{n}}\right)
=O_p\left(\frac{p^{3/4}}{\sqrt n}\right)=o_p(1),
\]
\[
\frac{\|\bm{a}_n\|_2\|\bm{\Delta}_n\|_2}{\sqrt p}
=O_p\left(\frac{s_n\mu_n^2}{\sqrt n}\right)
=O_p\left(\frac{\sqrt p}{\sqrt n}\right)=o_p(1),
\]
\[
\frac{\|\bm{\Delta}_n\|_2^2}{\sqrt p}
=O_p\left(\frac{s_n\sqrt p \mu_n^2}{n}\right)
=O_p\left(\frac pn\right)=o_p(1).
\]
The empirical correlation expansion gives
\[
\sqrt n \bm{g}_n^{(1),\rm or}
=
\bm{a}_n+\bm{S}_n+\bm{e}_n,
\]
\[
\|\bm{e}_n\|_\infty
=
O_p\left(\frac{\log p}{\sqrt n}+\mu_n\sqrt{\frac{\log p}{n}}\right),
\]
\[
\|\bm{e}_n\|_2
=
O_p\left(\frac{\sqrt p\log p}{\sqrt n}
+\|\bm{a}_n\|_2\sqrt{\frac{\log p}{n}}\right).
\]
The growth conditions imply
\[
u_p(y)\|\bm{e}_n\|_\infty=o_p(1),
\]
\[
\frac{\|\bm{S}_n^0\|_2\|\bm{e}_n\|_2+\|\bm{a}_n\|_2\|\bm{e}_n\|_2+\|\bm{e}_n\|_2^2}{\sqrt p}=o_p(1).
\]
Consequently,
\[
\frac{T_{\rm sum}-p}{\sqrt{2p}}
=
\frac{\|\bm{S}_n^0+\bm{a}_n\|_2^2-p}{\sqrt{2p}}+o_p(1),
\]
\[
T_{\max}
=
\|\bm{S}_n^0+\bm{a}_n\|_\infty^2-2\log p+\log\log p+o_p(1).
\]
Let
\[
H_n^0=\frac2n\sum_{1\le i<k\le n}\bm{\xi}_i^\top\bm{\xi}_k.
\]
Then
\[
\|\bm{S}_n^0\|_2^2-p
=
H_n^0+\frac1n\sum_{i=1}^n(\|\bm{\xi}_i\|_2^2-p),
\]
\[
\frac1{\sqrt p n}\sum_{i=1}^n(\|\bm{\xi}_i\|_2^2-p)
=
O_p\left(\sqrt{\frac pn}\right)=o_p(1),
\]
\[
\frac{2\bm{a}_n^\top\bm{S}_n^0}{\sqrt{2p}}
=
O_p\left(\sqrt{\frac{\|\bm{a}_n\|_2^2}{p}}\right)
=O_p(p^{-1/4})=o_p(1).
\]
Thus
\[
\frac{T_{\rm sum}-p}{\sqrt{2p}}
=
\frac{H_n^0}{\sqrt{2p}}+
\frac{\|\bm{a}_n\|_2^2}{\sqrt{2p}}+o_p(1)
=
\frac{H_n^0}{\sqrt{2p}}+\theta_S+o_p(1).
\]
For fixed \(y\in\mathbb R\), put
\[
B_{0j}(y)=\{|S_{n,j}^0|>u_p(y)\},
\qquad
B_{1j}(y)=\{|S_{n,j}^0+\mu_n|>u_p(y)\}.
\]
For fixed \(d_0,d_1\ge0\) and fixed disjoint coordinate sets
\[
J_0\subset A_n^c,
\qquad
J_1\subset A_n,
\qquad
|J_0|=d_0,
\qquad
|J_1|=d_1,
\]
define
\[
E_{J_0,J_1}(y)
=
\bigcap_{j\in J_0}B_{0j}(y)
\cap
\bigcap_{j\in J_1}B_{1j}(y).
\]
Put
\[
q_{0n}(y)=2\bar\Phi(u_p(y)),
\qquad
q_{1n}(y)=\bar\Phi(u_p(y)-\mu_n),
\qquad
J=J_0\cup J_1,
\qquad d=d_0+d_1.
\]
For the fixed block $J$, use the representation
\[
\bm U_{0i,J}=\sqrt{B_i}\,\bm V_i,
\qquad
\bm U_{0i,J^c}=\sqrt{1-B_i}\,\bm W_i,
\qquad
B_i\sim{\rm Beta}(d/2,(p-d)/2),
\]
where $\bm V_i\sim\Unif(\Sphere^{d-1})$, $\bm W_i\sim\Unif(\Sphere^{p-d-1})$, and $B_i$, $\bm V_i$, $\bm W_i$, $Z_{0i}$ are mutually independent.  With
\[
a_i=\sqrt p Z_{0i}\sqrt{B_i},
\qquad
b_i=\sqrt p Z_{0i}\sqrt{1-B_i},
\qquad
\mathcal G_{n,J}=\sigma\{(Z_{0i},B_i):1\le i\le n\},
\]
\[
\bm S_{n,J}^0=n^{-1/2}\sum_{i=1}^n a_i\bm V_i,
\qquad
Q_{n,J}=\frac{2}{n\sqrt{2p}}\sum_{1\le i<k\le n}b_i b_k\bm W_i^\top\bm W_k,
\]
\[
H_n^0/\sqrt{2p}=Q_{n,J}+\Delta_{n,J},
\qquad
\Delta_{n,J}=\frac{1}{n\sqrt{2p}}\sum_{1\le i<k\le n}2a_i a_k\bm V_i^\top\bm V_k.
\]
Conditional on $\mathcal G_{n,J}$, $\bm S_{n,J}^0$ and $Q_{n,J}$ are independent.  For every fixed $d_0,d_1$,
\[
\E\Delta_{n,J}^2\le C_d p^{-1},
\qquad
\sup_t\left|\Prob(Q_{n,J}\le t)-\Phi(t)\right|
\le C_d\left(n^{-1/2}+p^{-1/2}\right),
\]
\[
\left|
\frac{\Prob\{E_{J_0,J_1}(y)\mid \mathcal G_{n,J}\}}
{q_{0n}(y)^{d_0}q_{1n}(y)^{d_1}}
-1
\right|
\le
C_d\left\{
\frac{(\log p)^3}{\sqrt n}
+\frac{(\log p)^2}{p}
+\frac{[b_{n,p}^{(0)}]^2\log p\log n}{n}
+e^{-2\mu_n u_p(y)}
\right\}
\]
with probability tending to one.  Hence, with
\[
\varepsilon_{n,d}
=
C_d\left\{
\frac{(\log p)^3}{\sqrt n}
+\frac{(\log p)^2}{p}
+\frac{[b_{n,p}^{(0)}]^2\log p\log n}{n}
+e^{-2\mu_n u_p(y)}
+p^{-1/2}+n^{-1/2}
\right\},
\]
\[
\varepsilon_{n,d}\to0,
\qquad
\sup_{J_0,J_1}
\left|
\frac{\Prob\left(H_n^0/\sqrt{2p}\le x,E_{J_0,J_1}(y)\right)}
{\Phi(x)q_{0n}(y)^{d_0}q_{1n}(y)^{d_1}}
-1
\right|
\le \varepsilon_{n,d}+o(1).
\]
Equivalently,
\[
\Prob\left(
\frac{H_n^0}{\sqrt{2p}}\le x,
E_{J_0,J_1}(y)
\right)
=
\Phi(x)
q_{0n}(y)^{d_0}q_{1n}(y)^{d_1}
\{1+r_{n,d_0,d_1}\},
\]
\[
\sup_{J_0,J_1}|r_{n,d_0,d_1}|\le \varepsilon_{n,d}+o(1)\to0.
\]
The active lower tail is negligible:
\[
\bar\Phi(u_p(y)+\mu_n)
\le
\exp\{-2\mu_n u_p(y)\}\bar\Phi(u_p(y)-\mu_n),
\qquad
s_n\bar\Phi(u_p(y)+\mu_n)\to0.
\]
Moreover,
\[
u_p(y)-\mu_n=u_{s_n}(\eta)+u_p(y)-u_p(0),
\]
\[
\{u_p(y)-\mu_n\}^2
=u_{s_n}(\eta)^2
+2u_{s_n}(\eta)\{u_p(y)-u_p(0)\}
+\{u_p(y)-u_p(0)\}^2,
\]
\[
2u_{s_n}(\eta)\{u_p(y)-u_p(0)\}
=
2u_{s_n}(\eta)\frac{y}{u_p(y)+u_p(0)}
\to \frac{y}{\sqrt2},
\qquad
\{u_p(y)-u_p(0)\}^2\to0.
\]
For every $t\to\infty$,
\[
\bar\Phi(t)=\frac{1}{t\sqrt{2\pi}}\exp(-t^2/2)\{1+O(t^{-2})\}.
\]
Thus
\[
s_n\bar\Phi(u_p(y)-\mu_n)
=
\frac{s_n}{\{u_p(y)-\mu_n\}\sqrt{2\pi}}
\exp\left[-\frac12\{u_p(y)-\mu_n\}^2\right]\{1+O((\log p)^{-1})\},
\]
\[
s_n\bar\Phi(u_p(y)-\mu_n)
\to
\lambda_1(y;\eta)
=
\frac1{2\sqrt\pi}
\exp\left(-\frac\eta2-\frac{y}{2\sqrt2}\right),
\]
\[
(p-s_n)2\bar\Phi(u_p(y))
\to
\lambda_0(y)=\pi^{-1/2}e^{-y/2}.
\]
Set
\[
A_n(x)=\left\{\frac{H_n^0}{\sqrt{2p}}\le x\right\},
\qquad
L_n(y)=\left\{\max_{1\le j\le p}|S_{n,j}^0+a_{nj}|\le u_p(y)\right\}.
\]
For fixed \(m\ge1\), Bonferroni's inequalities give
\[
\sum_{d=0}^{2m+1}(-1)^d\sum_{d_0+d_1=d}C_{n,d_0,d_1}(x,y)
\le
\Prob\{A_n(x)\cap L_n(y)\}
\le
\sum_{d=0}^{2m}(-1)^d\sum_{d_0+d_1=d}C_{n,d_0,d_1}(x,y),
\]
where
\[
C_{n,d_0,d_1}(x,y)
=
\sum_{\substack{J_0\subset A_n^c, |J_0|=d_0}}
\sum_{\substack{J_1\subset A_n, |J_1|=d_1}}
\Prob\{A_n(x)\cap E_{J_0,J_1}(y)\}.
\]
For fixed \(d_0,d_1\),
\[
C_{n,d_0,d_1}(x,y)
\to
\Phi(x)
\frac{\lambda_0(y)^{d_0}}{d_0!}
\frac{\lambda_1(y;\eta)^{d_1}}{d_1!}.
\]
Therefore,
\[
\Prob\{A_n(x)\cap L_n(y)\}
\to
\Phi(x)
\sum_{d_0=0}^\infty\sum_{d_1=0}^\infty
(-1)^{d_0+d_1}
\frac{\lambda_0(y)^{d_0}}{d_0!}
\frac{\lambda_1(y;\eta)^{d_1}}{d_1!},
\]
\[
\Prob\{A_n(x)\cap L_n(y)\}
\to
\Phi(x)\exp\{-\lambda_0(y)-\lambda_1(y;\eta)\}.
\]
Thus
\[
\left(
\frac{H_n^0}{\sqrt{2p}},
\|\bm{S}_n^0+\bm{a}_n\|_\infty^2-2\log p+\log\log p
\right)
\xrightarrow{d}(Z,G_\eta),
\qquad
Z\perpn G_\eta.
\]
The reductions above imply
\[
\left(
\frac{T_{\rm sum}-p}{\sqrt{2p}},
T_{\max}
\right)
\xrightarrow{d}(Z+\theta_S,G_\eta),
\qquad
Z\perpn G_\eta.
\]
By Proposition~\ref{prop:h1-hr-behavior},
\[
|\hat T_{\rm sum}-T_{\rm sum}|
\le
2n\|\bm{g}_n^{(1),\rm or}\|_2
\|\hat{\bm g}_n^{(1)}-\bm{g}_n^{(1),\rm or}\|_2
+n\|\hat{\bm g}_n^{(1)}-\bm{g}_n^{(1),\rm or}\|_2^2,
\]
\[
\frac{|\hat T_{\rm sum}-T_{\rm sum}|}{\sqrt p}
=
O_p\left(
\frac{n\{(\|\bm{\gamma}\|_2+\sqrt{p/n})\mathfrak R_{2,n}^{(1)}+[\mathfrak R_{2,n}^{(1)}]^2\}}{\sqrt p}
\right)=o_p(1),
\]
\[
|\hat T_{\max}-T_{\max}|
\le
2n\|\bm{g}_n^{(1),\rm or}\|_\infty
\|\hat{\bm g}_n^{(1)}-\bm{g}_n^{(1),\rm or}\|_\infty
+n\|\hat{\bm g}_n^{(1)}-\bm{g}_n^{(1),\rm or}\|_\infty^2,
\]
\[
|\hat T_{\max}-T_{\max}|
=
O_p\left(
n\{(\|\bm{\gamma}\|_\infty+\sqrt{\log p/n})\mathfrak R_{\infty,n}^{(1)}+[\mathfrak R_{\infty,n}^{(1)}]^2\}
\right)=o_p(1).
\]
Hence
\[
\left(
\frac{\hat T_{\rm sum}-p}{\sqrt{2p}},
\hat T_{\max}
\right)
\xrightarrow{d}(Z+\theta_S,G_\eta),
\qquad
Z\perpn G_\eta.
\]
Let \(c_\alpha=F_G^{-1}(1-\alpha)\) and \(z_{1-\alpha}=\Phi^{-1}(1-\alpha)\). Since
\[
\Prob(Z+\theta_S>z_{1-\alpha})=1-\Phi(z_{1-\alpha}-\theta_S),
\]
\[
\Prob(G_\eta>c_\alpha)=1-F_\eta(c_\alpha),
\qquad
F_G(c_\alpha)=\exp\{-\lambda_0(c_\alpha)\}=1-\alpha,
\]
\[
F_\eta(c_\alpha)=F_G(c_\alpha)\exp\{-\lambda_1(c_\alpha;\eta)\}<1-\alpha,
\]
we obtain
\[
\alpha<1-\Phi(z_{1-\alpha}-\theta_S)<1,
\qquad
\alpha<1-F_\eta(c_\alpha)<1.
\]
\end{proof}

\section{Verification of Assumption~\ref{ass:oracle} for Common Elliptical Radial Laws}
\label{app:radial-verification}

This appendix verifies Assumption~\ref{ass:oracle}\textnormal{(ii)} for the radial laws used in the simulations and for several standard elliptical models.  Assumption~\ref{ass:oracle}\textnormal{(i)} is part of the elliptical construction in all examples below.  Write
\[
H_{n,p}=(\log p)^5+\log p\log n .
\]
The constants denoted by $C,c,C_b$ are positive and do not depend on $n$ or $p$; $C_b$ is taken sufficiently large.

Let $G_a\sim\Gamma(a,1)$ and
\[
W_a=\frac{\log G_a-\psi(a)}{\sqrt{\psi_1(a)}},
\]
where $\psi$ and $\psi_1$ are the digamma and trigamma functions.  Since
\[
\E\log G_a=\psi(a),
\qquad
\Var(\log G_a)=\psi_1(a),
\qquad
\psi_1(a)=a^{-1}+O(a^{-2}),
\]
$W_a$ has mean zero and variance one.  For $0\le \lambda\le a/2$,
\[
\E\exp\{\lambda(\log G_a-\psi(a))\}
=\exp\{\log\Gamma(a+\lambda)-\log\Gamma(a)-\lambda\psi(a)\},
\]
\[
\log\Gamma(a+\lambda)-\log\Gamma(a)-\lambda\psi(a)
=\int_0^\lambda(\lambda-t)\psi_1(a+t)\,dt
\le C\lambda^2/a,
\]
and similarly, for $0\le \lambda\le a/2$,
\[
\E\exp\{-\lambda(\log G_a-\psi(a))\}
=\exp\{\log\Gamma(a-\lambda)-\log\Gamma(a)+\lambda\psi(a)\}
\le \exp(C\lambda^2/a).
\]
Therefore, for all $x\ge0$,
\[
\Pr\{|\log G_a-\psi(a)|>x\}
\le C\exp\{-c\min(a x^2,a x)\}.
\]
With $x=t\sqrt{\psi_1(a)}$,
\[
\Pr(|W_a|>t)
\le C\exp\{-c\min(t^2,\sqrt a\,t)\},
\qquad t\ge0.
\]
Consequently, for every fixed $q>0$,
\[
\E|W_a|^q
=q\int_0^\infty t^{q-1}\Pr(|W_a|>t)\,dt
\le C_q.
\]
If $a\ge c p$, then the deterministic envelope
\[
 b^{(G)}_{n,p}=C_b\left\{\sqrt{\log(en)}+\frac{\log(en)}{\sqrt p}\right\}
\]
satisfies
\[
\Pr\left(\max_{1\le i\le n}|W_{a,i}|>b^{(G)}_{n,p}\right)
\le C n\exp\left[-c\min\{(b^{(G)}_{n,p})^2,\sqrt p\,b^{(G)}_{n,p}\}\right]
\to0,
\]
provided $C_b$ is large enough.  Moreover,
\[
\frac{(b^{(G)}_{n,p})^2H_{n,p}}{n}
\le
\frac{C\{\log(en)+p^{-1}(\log(en))^2\}H_{n,p}}{n}.
\]
Thus the gamma-log radial laws below satisfy Assumption~\ref{ass:oracle}\textnormal{(ii)} whenever
\[
\frac{\{\log(en)+p^{-1}(\log(en))^2\}H_{n,p}}{n}\to0.
\]

For the Gaussian elliptical model,
\[
\bm{Y}\sim N(0,\mathbf I_p),
\qquad
R^2\sim\chi_p^2=2G_{p/2},
\qquad
L=\log R=2^{-1}\log 2+2^{-1}\log G_{p/2}.
\]
Hence
\[
 m_L=2^{-1}\{\log2+\psi(p/2)\},
\qquad
 \sigma_L^2=4^{-1}\psi_1(p/2)\asymp p^{-1},
\qquad
Z=W_{p/2}.
\]
Therefore one may take
\[
 b_{n,p}=C_b\left\{\sqrt{\log(en)}+\frac{\log(en)}{\sqrt p}\right\}.
\]

For the Kotz-type power-exponential radial law used in the simulations,
\[
R_0=(2G_{p/(2\beta)})^{1/(2\beta)},
\qquad
R=c_{p,\beta}R_0,
\qquad
\E R^2=p,
\]
where $\beta>0$ is fixed and $c_{p,\beta}$ is deterministic.  Since deterministic scaling only adds a constant to $L=\log R$,
\[
L=\log c_{p,\beta}+\frac{1}{2\beta}\log2+\frac{1}{2\beta}\log G_{p/(2\beta)},
\]
\[
 \sigma_L^2=\frac{1}{4\beta^2}\psi_1\{p/(2\beta)\}\asymp p^{-1},
\qquad
 Z=W_{p/(2\beta)}.
\]
Thus the same envelope is admissible:
\[
 b_{n,p}=C_b\left\{\sqrt{\log(en)}+\frac{\log(en)}{\sqrt p}\right\}.
\]

For the bounded radial law,
\[
R^2=2pB,
\qquad
B\sim\mathrm{Beta}(p/2,p/2).
\]
Let $G_1,G_2\iid\Gamma(p/2,1)$ and $B=G_1/(G_1+G_2)$.  Then
\[
L=2^{-1}\log(2p)+2^{-1}\log B,
\qquad
m_L=2^{-1}\log(2p)+2^{-1}\{\psi(p/2)-\psi(p)\},
\]
\[
\sigma_L^2
=4^{-1}\{\psi_1(p/2)-\psi_1(p)\}
\asymp p^{-1}.
\]
Moreover,
\[
\log B-\E\log B
=\{\log G_1-\psi(p/2)\}-\{\log(G_1+G_2)-\psi(p)\},
\]
where $G_1+G_2\sim\Gamma(p,1)$.  Hence
\[
\Pr(|Z|>t)
\le
\Pr(|W_{p/2}|>c t)+\Pr(|W_p|>c t)
\le C\exp\{-c\min(t^2,\sqrt p\,t)\},
\]
and one may again take
\[
 b_{n,p}=C_b\left\{\sqrt{\log(en)}+\frac{\log(en)}{\sqrt p}\right\}.
\]

For the multivariate $t_\nu$ elliptical model with fixed $\nu>0$,
\[
\bm{Y}=\frac{\bm{Z}}{\sqrt{G_\nu/\nu}},
\qquad
\bm{Z}\sim N(0,\mathbf I_p),
\qquad
G_\nu\sim\chi_\nu^2,
\]
with $\bm{Z}$ independent of $G_\nu$.  Then
\[
L=2^{-1}\log\chi_p^2-2^{-1}\log(G_\nu/\nu),
\]
\[
\sigma_L^2
=4^{-1}\psi_1(p/2)+4^{-1}\psi_1(\nu/2)
\asymp 1.
\]
The first term has the gamma-log bound above.  The second term is a fixed-dimensional log-gamma variable.  Hence, for constants depending only on $\nu$,
\[
\Pr(|Z|>t)
\le C_\nu e^{-c_\nu t}+C\exp\{-c\min(t^2,\sqrt p\,t)\},
\qquad t\ge0.
\]
Therefore
\[
\sup_p\E|Z|^{8+\eta_Z}<\infty
\]
for every fixed $\eta_Z>0$, and
\[
 b_{n,p}=C_b\log(en)
\]
is admissible.  The truncation-rate requirement becomes
\[
\frac{(\log(en))^2H_{n,p}}{n}\to0.
\]

For a bounded non-degenerate Gaussian scale mixture,
\[
\bm{Y}=S\bm{Z},
\qquad
\bm{Z}\sim N(0,\mathbf I_p),
\qquad
0<s_-\le S\le s_+<\infty,
\qquad
\Var(\log S)>0,
\]
with $S$ independent of $\bm{Z}$,
\[
L=\log S+2^{-1}\log\chi_p^2,
\]
\[
\sigma_L^2=\Var(\log S)+4^{-1}\psi_1(p/2)\asymp1.
\]
Since $\log S$ is bounded and the Gaussian log-radius part satisfies the gamma-log bound,
\[
\sup_p\E|Z|^{8+\eta_Z}<\infty,
\qquad
 b_{n,p}=C_b\log(en)
\]
is admissible, again under
\[
\frac{(\log(en))^2H_{n,p}}{n}\to0.
\]
The two-point mixture used in the simulations, $\Pr(S=1)=0.9$ and $\Pr(S=3)=0.1$, is a special case.

The following table summarizes the resulting choices of $b_{n,p}$.

\begin{table}[H]
\centering
\caption{Admissible deterministic truncation envelopes for common radial laws.}
\footnotesize
\setlength{\tabcolsep}{4pt}
\renewcommand{\arraystretch}{1.15}
\begin{tabular}{lll}
\toprule
Radial law & $\sigma_L^2=\Var(\log R)$ & admissible $b_{n,p}$ \\
\midrule
Gaussian, $R^2\sim\chi_p^2$ & $4^{-1}\psi_1(p/2)\asymp p^{-1}$ & $C_b\{\sqrt{\log(en)}+p^{-1/2}\log(en)\}$ \\
Kotz, fixed $\beta$ & $(4\beta^2)^{-1}\psi_1\{p/(2\beta)\}\asymp p^{-1}$ & $C_b\{\sqrt{\log(en)}+p^{-1/2}\log(en)\}$ \\
Bounded radial, $R^2=2pB$ & $4^{-1}\{\psi_1(p/2)-\psi_1(p)\}\asymp p^{-1}$ & $C_b\{\sqrt{\log(en)}+p^{-1/2}\log(en)\}$ \\
$t_\nu$, fixed $\nu$ & $4^{-1}\psi_1(p/2)+4^{-1}\psi_1(\nu/2)\asymp1$ & $C_b\log(en)$ \\
Bounded scale mixture & $\Var(\log S)+4^{-1}\psi_1(p/2)\asymp1$ & $C_b\log(en)$ \\
\bottomrule
\end{tabular}
\end{table}

\section{Complete Simulation Tables}
\label{app:simulation-tables}

This appendix reports the full numerical simulation tables using the notation of Section~\ref{sec:simulation}.  The size table is generated under $H_0$ from $\bm{X}_i=\mathbf{\Sigma}^{1/2}R_{0i}\bm{U}_{0i}$ and includes all radial laws, dimensions and shape structures.  The power tables are generated under $H_1$ from $R_{1i}=R_{0i}\exp\{\delta_ns_A(\bm{U}_{0i})\}$ and are split by radial law.  The method labels S, M, C and W denote HR-Sum, HR-Max, HR-Cauchy and WL, respectively.  The case $\delta_n=0$ is the null model and is summarized in the size table.

\begin{table}[H]
\centering
\caption{Complete empirical sizes (\%) of all null settings with $n=200$. Columns are grouped by dimension $p$ and shape structure. The entries use the analytic null calibration without bootstrap correction.}
\label{tab:app-size-complete}
\renewcommand{\arraystretch}{1.05}
\setlength{\tabcolsep}{3.2pt}
\resizebox{0.8\textwidth}{!}{%
\begin{tabular}{lcccccccccccc}
\toprule
 & \multicolumn{3}{c}{$p=50$} & \multicolumn{3}{c}{$p=100$} & \multicolumn{3}{c}{$p=200$} & \multicolumn{3}{c}{$p=400$} \\
\cmidrule(lr){2-4}\cmidrule(lr){5-7}\cmidrule(lr){8-10}\cmidrule(lr){11-13}
Test & $\mathbf{I}_p$ & AR & SP & $\mathbf{I}_p$ & AR & SP & $\mathbf{I}_p$ & AR & SP & $\mathbf{I}_p$ & AR & SP \\
\midrule
\multicolumn{13}{c}{Gaussian} \\
HR-Sum & 4.4 & 5.2 & 4.8 & 5.4 & 3.0 & 4.2 & 5.2 & 4.4 & 5.8 & 4.6 & 4.2 & 4.0 \\
HR-Max & 3.6 & 4.0 & 4.4 & 4.0 & 4.6 & 3.8 & 4.2 & 6.4 & 6.0 & 4.4 & 4.4 & 5.0 \\
HR-Cauchy & 4.8 & 5.8 & 5.0 & 4.6 & 4.8 & 5.4 & 5.6 & 5.8 & 5.8 & 4.8 & 4.8 & 4.0 \\
WL & 5.8 & 5.4 & 4.6 & 5.4 & 4.4 & 5.2 & 5.8 & 5.8 & 6.6 & 6.0 & 3.8 & 5.8 \\
\addlinespace[1pt]
\multicolumn{13}{c}{Kotz $\beta=2$} \\
HR-Sum & 5.8 & 5.8 & 5.8 & 5.4 & 5.8 & 5.2 & 6.2 & 5.2 & 5.0 & 4.4 & 5.6 & 4.8 \\
HR-Max & 4.4 & 4.6 & 4.0 & 4.4 & 4.8 & 3.6 & 4.6 & 5.0 & 3.4 & 5.2 & 3.2 & 4.8 \\
HR-Cauchy & 5.4 & 5.6 & 5.6 & 5.8 & 6.4 & 4.0 & 6.0 & 5.4 & 5.0 & 4.6 & 4.6 & 5.4 \\
WL & 7.0 & 3.6 & 5.6 & 6.0 & 8.0 & 9.2 & 5.0 & 4.2 & 4.6 & 4.6 & 7.0 & 4.2 \\
\addlinespace[1pt]
\multicolumn{13}{c}{Bounded radial} \\
HR-Sum & 5.8 & 3.8 & 6.0 & 5.0 & 5.8 & 4.6 & 5.4 & 5.0 & 5.0 & 4.0 & 3.6 & 4.4 \\
HR-Max & 4.4 & 5.0 & 5.6 & 4.4 & 5.6 & 4.8 & 5.4 & 4.8 & 4.8 & 4.6 & 4.6 & 3.2 \\
HR-Cauchy & 5.4 & 5.8 & 6.2 & 5.4 & 6.0 & 5.0 & 5.0 & 5.0 & 6.0 & 4.4 & 4.6 & 4.6 \\
WL & 6.2 & 5.8 & 6.0 & 7.0 & 4.4 & 5.4 & 4.6 & 6.2 & 7.6 & 6.4 & 4.2 & 5.4 \\
\addlinespace[1pt]
\multicolumn{13}{c}{Mixture normal} \\
HR-Sum & 5.2 & 4.2 & 7.0 & 6.4 & 6.8 & 6.6 & 7.0 & 6.6 & 7.4 & 9.4 & 7.4 & 8.4 \\
HR-Max & 5.4 & 4.8 & 4.4 & 3.8 & 3.0 & 4.8 & 5.0 & 6.2 & 4.0 & 3.0 & 4.8 & 3.4 \\
HR-Cauchy & 6.0 & 5.2 & 6.0 & 5.4 & 6.0 & 5.8 & 6.2 & 6.0 & 4.6 & 5.6 & 6.2 & 6.8 \\
WL & 5.6 & 4.4 & 5.2 & 4.8 & 6.8 & 4.8 & 5.6 & 4.2 & 3.8 & 6.0 & 4.2 & 4.8 \\
\addlinespace[1pt]
\multicolumn{13}{c}{$t_{10}$} \\
HR-Sum & 6.6 & 5.2 & 5.2 & 5.8 & 5.8 & 4.8 & 8.2 & 7.6 & 8.6 & 7.6 & 9.0 & 8.2 \\
HR-Max & 4.8 & 5.0 & 4.2 & 4.4 & 5.4 & 5.0 & 4.2 & 3.6 & 4.8 & 4.2 & 5.4 & 4.6 \\
HR-Cauchy & 5.2 & 5.6 & 4.6 & 6.0 & 5.6 & 5.4 & 5.8 & 6.2 & 6.4 & 6.8 & 7.6 & 7.0 \\
WL & 3.4 & 5.0 & 4.6 & 5.0 & 5.2 & 4.8 & 4.8 & 5.0 & 9.0 & 5.0 & 2.6 & 1.2 \\
\bottomrule
\end{tabular}}
\end{table}

The complete size table shows that the proposed max component is particularly stable under non-concentrated radial laws, while the sum component can be mildly liberal for the $p=400$ mixture-normal and $t_{10}$ settings when the purely analytic calibration is used.  This is a finite-sample effect of estimating many weak coordinate correlations under strong radial variability.  The Cauchy combination moderates this behavior and remains closer to the nominal level in most heavy-tailed cases.  These entries motivate the optional bootstrap mean--variance correction in Section~\ref{subsec:finite-sample-calibration}, but they do not change the main qualitative conclusion: the radial--directional tests maintain usable calibration across the five null radial laws and the three shape structures.  The table also separates two sources of difficulty.  Changing $\mathbf{\Sigma}$ affects the quality of the plug-in standardization, whereas changing the radial law affects the variability of the log-radius.  The fact that the empirical sizes remain broadly comparable across both dimensions supports the modular structure of the method: HR standardization handles the unknown affine geometry, and the correlation statistic handles the radial--directional diagnostic.

\subsection{Complete empirical power tables}
\label{app:complete-power-tables}
Tables in this subsection report the complete power results.  The method labels S, M, C and W denote HR-Sum, HR-Max, HR-Cauchy and WL, respectively.  The rows exclude $\delta_n=0$, which corresponds to the null model and is summarized in Table~\ref{tab:app-size-complete}.  The five radial laws produce different values of $\sigma_{0,n}^2=\Var(\log R_{0i})$, so the same numerical value of $\delta_n$ does not imply the same signal-to-noise ratio.  This is why the Gaussian, Kotz-type and bounded-radial baselines are easier than the mixture-normal and $t_{10}$ baselines at the same $p$ and $\delta_n$.

\begin{table}[p]
\centering
\caption{Complete empirical powers (\%) for the Gaussian baseline under \eqref{eq:alternative-radius}.}\label{tab:app-power-normal}
\begingroup
\scriptsize
\setlength{\tabcolsep}{1.5pt}
\renewcommand{\arraystretch}{0.72}
\begin{adjustbox}{width=\textwidth,max totalheight=0.82\textheight,center}
\begin{tabular}{@{}lll*{12}{r}@{}}
\toprule
 & & & \multicolumn{4}{c}{$p=100$} & \multicolumn{4}{c}{$p=200$} & \multicolumn{4}{c}{$p=400$} \\
\cmidrule(lr){4-7}\cmidrule(lr){8-11}\cmidrule(lr){12-15}
Shape & Active set & $\delta_n$ & S & M & C & W & S & M & C & W & S & M & C & W \\
\midrule
$\mathbf{I}_p$ & $A_{\rm sp}$ & 0.5 & 97 & 100 & 100 & 5 & 82 & 100 & 100 & 8 & 45 & 100 & 100 & 7 \\
 &  & 1 & 100 & 100 & 100 & 11 & 100 & 100 & 100 & 14 & 98 & 100 & 100 & 8 \\
 &  & 2 & 100 & 100 & 100 & 55 & 100 & 100 & 100 & 47 & 100 & 100 & 100 & 25 \\
 &  & 3 & 100 & 100 & 100 & 81 & 100 & 100 & 100 & 72 & 100 & 100 & 100 & 56 \\
\addlinespace[0.5pt]
 & $A_{0.2}$ & 0.5 & 95 & 42 & 94 & 8 & 78 & 19 & 73 & 11 & 42 & 7 & 33 & 8 \\
 &  & 1 & 100 & 92 & 100 & 16 & 100 & 45 & 100 & 12 & 97 & 16 & 93 & 9 \\
 &  & 2 & 100 & 100 & 100 & 54 & 100 & 64 & 100 & 41 & 100 & 20 & 100 & 26 \\
 &  & 3 & 100 & 100 & 100 & 80 & 100 & 71 & 100 & 72 & 100 & 19 & 100 & 56 \\
\addlinespace[0.5pt]
 & $A_{\rm all}$ & 0.5 & 97 & 19 & 93 & 5 & 83 & 9 & 75 & 11 & 49 & 5 & 36 & 6 \\
 &  & 1 & 100 & 33 & 100 & 12 & 100 & 13 & 100 & 13 & 96 & 5 & 92 & 7 \\
 &  & 2 & 100 & 47 & 100 & 51 & 100 & 17 & 100 & 42 & 100 & 8 & 100 & 27 \\
 &  & 3 & 100 & 48 & 100 & 80 & 100 & 17 & 100 & 74 & 100 & 6 & 100 & 57 \\
\addlinespace[1.5pt]
AR & $A_{\rm sp}$ & 0.5 & 98 & 100 & 100 & 4 & 82 & 100 & 100 & 8 & 42 & 100 & 100 & 9 \\
 &  & 1 & 100 & 100 & 100 & 11 & 100 & 100 & 100 & 10 & 98 & 100 & 100 & 10 \\
 &  & 2 & 100 & 100 & 100 & 50 & 100 & 100 & 100 & 41 & 100 & 100 & 100 & 38 \\
 &  & 3 & 100 & 100 & 100 & 78 & 100 & 100 & 100 & 71 & 100 & 100 & 100 & 60 \\
\addlinespace[0.5pt]
 & $A_{0.2}$ & 0.5 & 95 & 37 & 92 & 4 & 70 & 15 & 61 & 6 & 20 & 7 & 16 & 11 \\
 &  & 1 & 100 & 91 & 100 & 10 & 100 & 36 & 100 & 9 & 63 & 10 & 51 & 15 \\
 &  & 2 & 100 & 99 & 100 & 54 & 100 & 59 & 100 & 40 & 92 & 14 & 86 & 31 \\
 &  & 3 & 100 & 100 & 100 & 81 & 100 & 60 & 100 & 71 & 96 & 14 & 92 & 62 \\
\addlinespace[0.5pt]
 & $A_{\rm all}$ & 0.5 & 95 & 16 & 92 & 4 & 67 & 10 & 56 & 6 & 17 & 5 & 15 & 8 \\
 &  & 1 & 100 & 26 & 100 & 11 & 100 & 9 & 100 & 12 & 65 & 7 & 50 & 13 \\
 &  & 2 & 100 & 32 & 100 & 49 & 100 & 10 & 100 & 40 & 92 & 3 & 82 & 35 \\
 &  & 3 & 100 & 36 & 100 & 81 & 100 & 10 & 100 & 70 & 96 & 4 & 89 & 63 \\
\addlinespace[1.5pt]
SP & $A_{\rm sp}$ & 0.5 & 99 & 100 & 100 & 7 & 80 & 100 & 100 & 7 & 35 & 100 & 100 & 7 \\
 &  & 1 & 100 & 100 & 100 & 16 & 100 & 100 & 100 & 12 & 94 & 100 & 100 & 11 \\
 &  & 2 & 100 & 100 & 100 & 59 & 100 & 100 & 100 & 39 & 100 & 100 & 100 & 30 \\
 &  & 3 & 100 & 100 & 100 & 81 & 100 & 100 & 100 & 66 & 100 & 100 & 100 & 58 \\
\addlinespace[0.5pt]
 & $A_{0.2}$ & 0.5 & 98 & 40 & 95 & 7 & 85 & 18 & 82 & 7 & 53 & 8 & 44 & 7 \\
 &  & 1 & 100 & 94 & 100 & 15 & 100 & 49 & 100 & 9 & 99 & 16 & 97 & 11 \\
 &  & 2 & 100 & 100 & 100 & 57 & 100 & 71 & 100 & 39 & 100 & 24 & 100 & 31 \\
 &  & 3 & 100 & 100 & 100 & 79 & 100 & 73 & 100 & 69 & 100 & 29 & 100 & 58 \\
\addlinespace[0.5pt]
 & $A_{\rm all}$ & 0.5 & 97 & 18 & 94 & 5 & 84 & 11 & 73 & 8 & 55 & 6 & 44 & 7 \\
 &  & 1 & 100 & 35 & 100 & 12 & 100 & 18 & 100 & 11 & 98 & 7 & 97 & 9 \\
 &  & 2 & 100 & 54 & 100 & 57 & 100 & 20 & 100 & 40 & 100 & 9 & 100 & 28 \\
 &  & 3 & 100 & 59 & 100 & 78 & 100 & 20 & 100 & 73 & 100 & 8 & 100 & 61 \\
\bottomrule
\end{tabular}
\end{adjustbox}
\endgroup
\end{table}

\begin{table}[p]
\centering
\caption{Complete empirical powers (\%) for the $t_{10}$ baseline under \eqref{eq:alternative-radius}.}\label{tab:app-power-t10}
\begingroup
\scriptsize
\setlength{\tabcolsep}{1.5pt}
\renewcommand{\arraystretch}{0.72}
\begin{adjustbox}{width=\textwidth,max totalheight=0.82\textheight,center}
\begin{tabular}{@{}lll*{12}{r}@{}}
\toprule
 & & & \multicolumn{4}{c}{$p=100$} & \multicolumn{4}{c}{$p=200$} & \multicolumn{4}{c}{$p=400$} \\
\cmidrule(lr){4-7}\cmidrule(lr){8-11}\cmidrule(lr){12-15}
Shape & Active set & $\delta_n$ & S & M & C & W & S & M & C & W & S & M & C & W \\
\midrule
$\mathbf{I}_p$ & $A_{\rm sp}$ & 0.5 & 18 & 26 & 28 & 4 & 13 & 9 & 14 & 5 & 10 & 4 & 8 & 2 \\
 &  & 1 & 61 & 98 & 97 & 5 & 26 & 56 & 58 & 4 & 14 & 19 & 21 & 2 \\
 &  & 2 & 100 & 100 & 100 & 11 & 84 & 100 & 100 & 7 & 35 & 92 & 92 & 3 \\
 &  & 3 & 100 & 100 & 100 & 28 & 100 & 100 & 100 & 12 & 65 & 100 & 100 & 5 \\
\addlinespace[0.5pt]
 & $A_{0.2}$ & 0.5 & 18 & 6 & 15 & 5 & 11 & 7 & 11 & 3 & 11 & 6 & 10 & 2 \\
 &  & 1 & 60 & 16 & 54 & 5 & 27 & 7 & 22 & 5 & 14 & 4 & 11 & 1 \\
 &  & 2 & 100 & 63 & 99 & 12 & 78 & 15 & 73 & 9 & 29 & 6 & 23 & 3 \\
 &  & 3 & 100 & 90 & 100 & 29 & 99 & 32 & 99 & 17 & 61 & 9 & 52 & 5 \\
 &  & 4 & -- & -- & -- & -- & -- & -- & -- & -- & 88 & 10 & 81 & 10 \\
 &  & 5 & -- & -- & -- & -- & -- & -- & -- & -- & 97 & 16 & 94 & 11 \\
\addlinespace[0.5pt]
 & $A_{\rm all}$ & 0.5 & 18 & 7 & 14 & 5 & 11 & 6 & 9 & 5 & 9 & 5 & 8 & 3 \\
 &  & 1 & 59 & 12 & 50 & 6 & 22 & 5 & 19 & 6 & 15 & 5 & 12 & 2 \\
 &  & 2 & 100 & 28 & 99 & 11 & 76 & 13 & 69 & 8 & 33 & 6 & 27 & 4 \\
 &  & 3 & 100 & 40 & 100 & 25 & 99 & 15 & 97 & 15 & 66 & 9 & 57 & 4 \\
 &  & 4 & -- & -- & -- & -- & -- & -- & -- & -- & 86 & 8 & 80 & 9 \\
 &  & 5 & -- & -- & -- & -- & -- & -- & -- & -- & 97 & 7 & 95 & 15 \\
\addlinespace[1.5pt]
AR & $A_{\rm sp}$ & 0.5 & 16 & 25 & 29 & 5 & 10 & 8 & 13 & 3 & 12 & 8 & 12 & 2 \\
 &  & 1 & 61 & 96 & 97 & 8 & 25 & 62 & 61 & 4 & 15 & 17 & 22 & 3 \\
 &  & 2 & 100 & 100 & 100 & 13 & 82 & 100 & 100 & 8 & 36 & 96 & 94 & 5 \\
 &  & 3 & 100 & 100 & 100 & 32 & 99 & 100 & 100 & 12 & 65 & 100 & 100 & 6 \\
\addlinespace[0.5pt]
 & $A_{0.2}$ & 0.5 & 14 & 6 & 13 & 5 & 10 & 6 & 8 & 3 & 10 & 3 & 7 & 2 \\
 &  & 1 & 54 & 16 & 47 & 6 & 21 & 6 & 15 & 5 & 12 & 4 & 10 & 2 \\
 &  & 2 & 100 & 62 & 100 & 19 & 72 & 17 & 64 & 8 & 19 & 6 & 17 & 3 \\
 &  & 3 & 100 & 90 & 100 & 35 & 96 & 23 & 93 & 15 & 37 & 7 & 27 & 5 \\
 &  & 4 & -- & -- & -- & -- & -- & -- & -- & -- & 55 & 9 & 46 & 9 \\
 &  & 5 & -- & -- & -- & -- & -- & -- & -- & -- & 77 & 9 & 67 & 17 \\
\addlinespace[0.5pt]
 & $A_{\rm all}$ & 0.5 & 18 & 6 & 14 & 7 & 11 & 5 & 9 & 4 & 10 & 4 & 9 & 3 \\
 &  & 1 & 56 & 13 & 49 & 7 & 21 & 7 & 19 & 4 & 14 & 6 & 11 & 3 \\
 &  & 2 & 100 & 26 & 100 & 14 & 68 & 10 & 57 & 7 & 18 & 5 & 15 & 2 \\
 &  & 3 & 100 & 33 & 100 & 33 & 98 & 9 & 95 & 12 & 37 & 5 & 26 & 6 \\
 &  & 4 & -- & -- & -- & -- & -- & -- & -- & -- & 58 & 7 & 46 & 9 \\
 &  & 5 & -- & -- & -- & -- & -- & -- & -- & -- & 75 & 6 & 64 & 15 \\
\addlinespace[1.5pt]
SP & $A_{\rm sp}$ & 0.5 & 19 & 25 & 29 & 7 & 10 & 8 & 12 & 4 & 12 & 6 & 12 & 2 \\
 &  & 1 & 59 & 97 & 98 & 7 & 25 & 59 & 58 & 5 & 16 & 16 & 20 & 3 \\
 &  & 2 & 100 & 100 & 100 & 13 & 79 & 100 & 100 & 10 & 29 & 95 & 95 & 3 \\
 &  & 3 & 100 & 100 & 100 & 30 & 99 & 100 & 100 & 16 & 56 & 100 & 100 & 2 \\
\addlinespace[0.5pt]
 & $A_{0.2}$ & 0.5 & 19 & 6 & 16 & 6 & 11 & 4 & 11 & 4 & 12 & 6 & 11 & 2 \\
 &  & 1 & 59 & 17 & 52 & 6 & 27 & 6 & 23 & 5 & 15 & 6 & 13 & 3 \\
 &  & 2 & 100 & 66 & 100 & 13 & 81 & 20 & 76 & 6 & 37 & 6 & 29 & 3 \\
 &  & 3 & 100 & 90 & 100 & 26 & 99 & 32 & 98 & 13 & 73 & 8 & 65 & 4 \\
 &  & 4 & -- & -- & -- & -- & -- & -- & -- & -- & 91 & 14 & 86 & 7 \\
 &  & 5 & -- & -- & -- & -- & -- & -- & -- & -- & 99 & 15 & 97 & 13 \\
\addlinespace[0.5pt]
 & $A_{\rm all}$ & 0.5 & 14 & 5 & 12 & 5 & 12 & 5 & 11 & 4 & 13 & 6 & 11 & 2 \\
 &  & 1 & 59 & 14 & 53 & 6 & 29 & 6 & 24 & 4 & 16 & 5 & 12 & 2 \\
 &  & 2 & 100 & 30 & 100 & 12 & 83 & 10 & 77 & 7 & 36 & 6 & 28 & 2 \\
 &  & 3 & 100 & 44 & 100 & 27 & 100 & 15 & 99 & 14 & 78 & 9 & 70 & 5 \\
 &  & 4 & -- & -- & -- & -- & -- & -- & -- & -- & 93 & 9 & 88 & 9 \\
 &  & 5 & -- & -- & -- & -- & -- & -- & -- & -- & 98 & 10 & 97 & 11 \\
\bottomrule
\end{tabular}
\end{adjustbox}
\endgroup
\end{table}

\begin{table}[p]
\centering
\caption{Complete empirical powers (\%) for the Kotz-type baseline with $\beta=2$ under \eqref{eq:alternative-radius}.}\label{tab:app-power-kotz-beta2}
\begingroup
\scriptsize
\setlength{\tabcolsep}{1.5pt}
\renewcommand{\arraystretch}{0.72}
\begin{adjustbox}{width=\textwidth,max totalheight=0.82\textheight,center}
\begin{tabular}{@{}lll*{12}{r}@{}}
\toprule
 & & & \multicolumn{4}{c}{$p=100$} & \multicolumn{4}{c}{$p=200$} & \multicolumn{4}{c}{$p=400$} \\
\cmidrule(lr){4-7}\cmidrule(lr){8-11}\cmidrule(lr){12-15}
Shape & Active set & $\delta_n$ & S & M & C & W & S & M & C & W & S & M & C & W \\
\midrule
$\mathbf{I}_p$ & $A_{\rm sp}$ & 0.5 & 100 & 100 & 100 & 6 & 99 & 100 & 100 & 3 & 77 & 100 & 100 & 6 \\
 &  & 1 & 100 & 100 & 100 & 11 & 100 & 100 & 100 & 6 & 100 & 100 & 100 & 5 \\
 &  & 2 & 100 & 100 & 100 & 54 & 100 & 100 & 100 & 36 & 100 & 100 & 100 & 20 \\
 &  & 3 & 100 & 100 & 100 & 83 & 100 & 100 & 100 & 70 & 100 & 100 & 100 & 54 \\
\addlinespace[0.5pt]
 & $A_{0.2}$ & 0.5 & 100 & 76 & 100 & 4 & 97 & 26 & 96 & 4 & 72 & 12 & 64 & 4 \\
 &  & 1 & 100 & 99 & 100 & 13 & 100 & 56 & 100 & 6 & 100 & 15 & 100 & 5 \\
 &  & 2 & 100 & 100 & 100 & 55 & 100 & 73 & 100 & 30 & 100 & 24 & 100 & 20 \\
 &  & 3 & 100 & 100 & 100 & 83 & 100 & 77 & 100 & 68 & 100 & 24 & 100 & 54 \\
 &  & 4 & -- & -- & -- & -- & -- & -- & -- & -- & 100 & 22 & 100 & 75 \\
 &  & 5 & -- & -- & -- & -- & -- & -- & -- & -- & 100 & 22 & 100 & 83 \\
\addlinespace[0.5pt]
 & $A_{\rm all}$ & 0.5 & 100 & 25 & 100 & 4 & 98 & 13 & 96 & 3 & 75 & 6 & 62 & 5 \\
 &  & 1 & 100 & 39 & 100 & 13 & 100 & 14 & 100 & 8 & 99 & 8 & 98 & 6 \\
 &  & 2 & 100 & 51 & 100 & 53 & 100 & 16 & 100 & 40 & 100 & 6 & 100 & 22 \\
 &  & 3 & 100 & 54 & 100 & 81 & 100 & 19 & 100 & 69 & 100 & 6 & 100 & 58 \\
 &  & 4 & -- & -- & -- & -- & -- & -- & -- & -- & 100 & 6 & 100 & 78 \\
 &  & 5 & -- & -- & -- & -- & -- & -- & -- & -- & 100 & 6 & 100 & 83 \\
\addlinespace[1.5pt]
AR & $A_{\rm sp}$ & 0.5 & 100 & 100 & 100 & 7 & 100 & 100 & 100 & 5 & 70 & 100 & 100 & 7 \\
 &  & 1 & 100 & 100 & 100 & 15 & 100 & 100 & 100 & 9 & 100 & 100 & 100 & 9 \\
 &  & 2 & 100 & 100 & 100 & 57 & 100 & 100 & 100 & 37 & 100 & 100 & 100 & 24 \\
 &  & 3 & 100 & 100 & 100 & 83 & 100 & 100 & 100 & 69 & 100 & 100 & 100 & 54 \\
\addlinespace[0.5pt]
 & $A_{0.2}$ & 0.5 & 100 & 71 & 100 & 7 & 96 & 24 & 93 & 4 & 33 & 8 & 26 & 8 \\
 &  & 1 & 100 & 99 & 100 & 14 & 100 & 50 & 100 & 9 & 82 & 10 & 72 & 5 \\
 &  & 2 & 100 & 100 & 100 & 54 & 100 & 65 & 100 & 39 & 97 & 14 & 93 & 24 \\
 &  & 3 & 100 & 100 & 100 & 84 & 100 & 66 & 100 & 69 & 98 & 17 & 97 & 56 \\
 &  & 4 & -- & -- & -- & -- & -- & -- & -- & -- & 98 & 13 & 95 & 76 \\
 &  & 5 & -- & -- & -- & -- & -- & -- & -- & -- & 100 & 16 & 97 & 87 \\
\addlinespace[0.5pt]
 & $A_{\rm all}$ & 0.5 & 100 & 20 & 100 & 5 & 95 & 10 & 90 & 3 & 34 & 5 & 22 & 7 \\
 &  & 1 & 100 & 31 & 100 & 15 & 100 & 10 & 100 & 9 & 79 & 2 & 66 & 8 \\
 &  & 2 & 100 & 34 & 100 & 61 & 100 & 11 & 100 & 40 & 96 & 3 & 87 & 28 \\
 &  & 3 & 100 & 38 & 100 & 84 & 100 & 14 & 100 & 68 & 98 & 2 & 91 & 56 \\
 &  & 4 & -- & -- & -- & -- & -- & -- & -- & -- & 99 & 4 & 93 & 77 \\
 &  & 5 & -- & -- & -- & -- & -- & -- & -- & -- & 99 & 3 & 94 & 86 \\
\addlinespace[1.5pt]
SP & $A_{\rm sp}$ & 0.5 & 100 & 100 & 100 & 5 & 99 & 100 & 100 & 6 & 60 & 100 & 100 & 3 \\
 &  & 1 & 100 & 100 & 100 & 10 & 100 & 100 & 100 & 9 & 99 & 100 & 100 & 5 \\
 &  & 2 & 100 & 100 & 100 & 56 & 100 & 100 & 100 & 35 & 100 & 100 & 100 & 28 \\
 &  & 3 & 100 & 100 & 100 & 83 & 100 & 100 & 100 & 69 & 100 & 100 & 100 & 53 \\
\addlinespace[0.5pt]
 & $A_{0.2}$ & 0.5 & 100 & 76 & 100 & 6 & 99 & 28 & 98 & 7 & 79 & 12 & 74 & 3 \\
 &  & 1 & 100 & 99 & 100 & 11 & 100 & 65 & 100 & 8 & 100 & 21 & 100 & 7 \\
 &  & 2 & 100 & 100 & 100 & 55 & 100 & 79 & 100 & 36 & 100 & 25 & 100 & 25 \\
 &  & 3 & 100 & 100 & 100 & 79 & 100 & 81 & 100 & 71 & 100 & 27 & 100 & 52 \\
 &  & 4 & -- & -- & -- & -- & -- & -- & -- & -- & 100 & 29 & 100 & 75 \\
 &  & 5 & -- & -- & -- & -- & -- & -- & -- & -- & 100 & 31 & 100 & 84 \\
\addlinespace[0.5pt]
 & $A_{\rm all}$ & 0.5 & 100 & 28 & 100 & 4 & 99 & 15 & 99 & 6 & 80 & 6 & 72 & 4 \\
 &  & 1 & 100 & 45 & 100 & 11 & 100 & 14 & 100 & 9 & 100 & 7 & 99 & 5 \\
 &  & 2 & 100 & 56 & 100 & 57 & 100 & 25 & 100 & 41 & 100 & 7 & 100 & 21 \\
 &  & 3 & 100 & 67 & 100 & 81 & 100 & 26 & 100 & 68 & 100 & 9 & 100 & 51 \\
 &  & 4 & -- & -- & -- & -- & -- & -- & -- & -- & 100 & 10 & 100 & 72 \\
 &  & 5 & -- & -- & -- & -- & -- & -- & -- & -- & 100 & 10 & 100 & 84 \\
\bottomrule
\end{tabular}
\end{adjustbox}
\endgroup
\end{table}

\begin{table}[p]
\centering
\caption{Complete empirical powers (\%) for the bounded-radial baseline under \eqref{eq:alternative-radius}.}\label{tab:app-power-bounded-radial}
\begingroup
\scriptsize
\setlength{\tabcolsep}{1.5pt}
\renewcommand{\arraystretch}{0.72}
\begin{adjustbox}{width=\textwidth,max totalheight=0.82\textheight,center}
\begin{tabular}{@{}lll*{12}{r}@{}}
\toprule
 & & & \multicolumn{4}{c}{$p=100$} & \multicolumn{4}{c}{$p=200$} & \multicolumn{4}{c}{$p=400$} \\
\cmidrule(lr){4-7}\cmidrule(lr){8-11}\cmidrule(lr){12-15}
Shape & Active set & $\delta_n$ & S & M & C & W & S & M & C & W & S & M & C & W \\
\midrule
$\mathbf{I}_p$ & $A_{\rm sp}$ & 0.5 & 100 & 100 & 100 & 4 & 99 & 100 & 100 & 5 & 73 & 100 & 100 & 5 \\
 &  & 1 & 100 & 100 & 100 & 12 & 100 & 100 & 100 & 10 & 100 & 100 & 100 & 4 \\
 &  & 2 & 100 & 100 & 100 & 54 & 100 & 100 & 100 & 42 & 100 & 100 & 100 & 22 \\
 &  & 3 & 100 & 100 & 100 & 80 & 100 & 100 & 100 & 71 & 100 & 100 & 100 & 52 \\
\addlinespace[0.5pt]
 & $A_{0.2}$ & 0.5 & 100 & 78 & 100 & 6 & 99 & 30 & 98 & 7 & 73 & 10 & 63 & 7 \\
 &  & 1 & 100 & 100 & 100 & 12 & 100 & 57 & 100 & 7 & 100 & 15 & 99 & 7 \\
 &  & 2 & 100 & 100 & 100 & 54 & 100 & 74 & 100 & 39 & 100 & 23 & 100 & 23 \\
 &  & 3 & 100 & 100 & 100 & 82 & 100 & 76 & 100 & 72 & 100 & 25 & 100 & 55 \\
 &  & 4 & -- & -- & -- & -- & -- & -- & -- & -- & 100 & 25 & 100 & 76 \\
 &  & 5 & -- & -- & -- & -- & -- & -- & -- & -- & 100 & 24 & 100 & 85 \\
\addlinespace[0.5pt]
 & $A_{\rm all}$ & 0.5 & 100 & 26 & 100 & 3 & 98 & 15 & 95 & 4 & 75 & 5 & 61 & 6 \\
 &  & 1 & 100 & 48 & 100 & 8 & 100 & 18 & 100 & 10 & 100 & 8 & 99 & 4 \\
 &  & 2 & 100 & 51 & 100 & 54 & 100 & 20 & 100 & 38 & 100 & 8 & 100 & 25 \\
 &  & 3 & 100 & 55 & 100 & 83 & 100 & 22 & 100 & 75 & 100 & 7 & 100 & 51 \\
 &  & 4 & -- & -- & -- & -- & -- & -- & -- & -- & 100 & 7 & 100 & 70 \\
 &  & 5 & -- & -- & -- & -- & -- & -- & -- & -- & 100 & 7 & 100 & 82 \\
\addlinespace[1.5pt]
AR & $A_{\rm sp}$ & 0.5 & 100 & 100 & 100 & 5 & 99 & 100 & 100 & 6 & 75 & 100 & 100 & 5 \\
 &  & 1 & 100 & 100 & 100 & 12 & 100 & 100 & 100 & 12 & 100 & 100 & 100 & 5 \\
 &  & 2 & 100 & 100 & 100 & 60 & 100 & 100 & 100 & 43 & 100 & 100 & 100 & 19 \\
 &  & 3 & 100 & 100 & 100 & 82 & 100 & 100 & 100 & 70 & 100 & 100 & 100 & 52 \\
\addlinespace[0.5pt]
 & $A_{0.2}$ & 0.5 & 100 & 76 & 100 & 8 & 94 & 27 & 93 & 9 & 31 & 8 & 27 & 5 \\
 &  & 1 & 100 & 99 & 100 & 14 & 100 & 52 & 100 & 9 & 82 & 11 & 71 & 6 \\
 &  & 2 & 100 & 100 & 100 & 59 & 100 & 63 & 100 & 37 & 98 & 14 & 93 & 21 \\
 &  & 3 & 100 & 100 & 100 & 85 & 100 & 69 & 100 & 70 & 97 & 12 & 94 & 54 \\
 &  & 4 & -- & -- & -- & -- & -- & -- & -- & -- & 99 & 16 & 96 & 80 \\
 &  & 5 & -- & -- & -- & -- & -- & -- & -- & -- & 98 & 13 & 97 & 88 \\
\addlinespace[0.5pt]
 & $A_{\rm all}$ & 0.5 & 100 & 24 & 100 & 7 & 96 & 9 & 93 & 7 & 35 & 5 & 25 & 5 \\
 &  & 1 & 100 & 28 & 100 & 17 & 100 & 9 & 100 & 9 & 82 & 3 & 67 & 6 \\
 &  & 2 & 100 & 33 & 100 & 61 & 100 & 11 & 100 & 40 & 97 & 3 & 88 & 21 \\
 &  & 3 & 100 & 40 & 100 & 81 & 100 & 10 & 100 & 71 & 97 & 2 & 91 & 54 \\
 &  & 4 & -- & -- & -- & -- & -- & -- & -- & -- & 99 & 3 & 93 & 75 \\
 &  & 5 & -- & -- & -- & -- & -- & -- & -- & -- & 99 & 5 & 94 & 83 \\
\addlinespace[1.5pt]
SP & $A_{\rm sp}$ & 0.5 & 100 & 100 & 100 & 5 & 99 & 100 & 100 & 9 & 60 & 100 & 100 & 5 \\
 &  & 1 & 100 & 100 & 100 & 13 & 100 & 100 & 100 & 12 & 99 & 100 & 100 & 8 \\
 &  & 2 & 100 & 100 & 100 & 55 & 100 & 100 & 100 & 43 & 100 & 100 & 100 & 20 \\
 &  & 3 & 100 & 100 & 100 & 80 & 100 & 100 & 100 & 68 & 100 & 100 & 100 & 56 \\
\addlinespace[0.5pt]
 & $A_{0.2}$ & 0.5 & 100 & 77 & 100 & 7 & 97 & 31 & 97 & 7 & 79 & 12 & 72 & 7 \\
 &  & 1 & 100 & 100 & 100 & 12 & 100 & 63 & 100 & 10 & 100 & 21 & 99 & 7 \\
 &  & 2 & 100 & 100 & 100 & 49 & 100 & 80 & 100 & 38 & 100 & 28 & 100 & 23 \\
 &  & 3 & 100 & 100 & 100 & 82 & 100 & 83 & 100 & 69 & 100 & 26 & 100 & 55 \\
 &  & 4 & -- & -- & -- & -- & -- & -- & -- & -- & 100 & 28 & 100 & 80 \\
 &  & 5 & -- & -- & -- & -- & -- & -- & -- & -- & 100 & 29 & 100 & 85 \\
\addlinespace[0.5pt]
 & $A_{\rm all}$ & 0.5 & 100 & 28 & 100 & 5 & 100 & 14 & 98 & 5 & 84 & 7 & 74 & 6 \\
 &  & 1 & 100 & 42 & 100 & 10 & 100 & 21 & 100 & 9 & 99 & 9 & 99 & 6 \\
 &  & 2 & 100 & 61 & 100 & 49 & 100 & 24 & 100 & 39 & 100 & 10 & 100 & 23 \\
 &  & 3 & 100 & 65 & 100 & 81 & 100 & 26 & 100 & 72 & 100 & 11 & 100 & 56 \\
 &  & 4 & -- & -- & -- & -- & -- & -- & -- & -- & 100 & 11 & 100 & 76 \\
 &  & 5 & -- & -- & -- & -- & -- & -- & -- & -- & 100 & 9 & 100 & 84 \\
\bottomrule
\end{tabular}
\end{adjustbox}
\endgroup
\end{table}

\begin{table}[p]
\centering
\caption{Complete empirical powers (\%) for the bounded Gaussian scale-mixture baseline under \eqref{eq:alternative-radius}.}\label{tab:app-power-mixnorm}
\begingroup
\scriptsize
\setlength{\tabcolsep}{1.5pt}
\renewcommand{\arraystretch}{0.72}
\begin{adjustbox}{width=\textwidth,max totalheight=0.82\textheight,center}
\begin{tabular}{@{}lll*{12}{r}@{}}
\toprule
 & & & \multicolumn{4}{c}{$p=100$} & \multicolumn{4}{c}{$p=200$} & \multicolumn{4}{c}{$p=400$} \\
\cmidrule(lr){4-7}\cmidrule(lr){8-11}\cmidrule(lr){12-15}
Shape & Active set & $\delta_n$ & S & M & C & W & S & M & C & W & S & M & C & W \\
\midrule
$\mathbf{I}_p$ & $A_{\rm sp}$ & 0.5 & 12 & 13 & 15 & 6 & 10 & 6 & 10 & 7 & 12 & 4 & 10 & 7 \\
 &  & 1 & 34 & 72 & 71 & 18 & 18 & 27 & 29 & 13 & 14 & 7 & 13 & 9 \\
 &  & 2 & 94 & 100 & 100 & 37 & 53 & 96 & 97 & 32 & 20 & 59 & 57 & 19 \\
 &  & 3 & 100 & 100 & 100 & 52 & 89 & 100 & 100 & 48 & 39 & 98 & 96 & 31 \\
\addlinespace[0.5pt]
 & $A_{0.2}$ & 0.5 & 12 & 6 & 11 & 7 & 9 & 5 & 8 & 11 & 10 & 6 & 9 & 5 \\
 &  & 1 & 35 & 15 & 34 & 17 & 16 & 6 & 12 & 15 & 11 & 5 & 10 & 10 \\
 &  & 2 & 93 & 37 & 90 & 38 & 53 & 10 & 44 & 32 & 21 & 7 & 17 & 16 \\
 &  & 3 & 100 & 72 & 100 & 54 & 89 & 18 & 83 & 43 & 33 & 5 & 26 & 32 \\
 &  & 4 & -- & -- & -- & -- & -- & -- & -- & -- & 59 & 10 & 52 & 42 \\
 &  & 5 & -- & -- & -- & -- & -- & -- & -- & -- & 84 & 9 & 77 & 55 \\
\addlinespace[0.5pt]
 & $A_{\rm all}$ & 0.5 & 11 & 6 & 10 & 6 & 10 & 5 & 8 & 8 & 12 & 5 & 9 & 4 \\
 &  & 1 & 33 & 9 & 28 & 16 & 19 & 6 & 15 & 17 & 7 & 4 & 5 & 8 \\
 &  & 2 & 94 & 18 & 92 & 39 & 46 & 6 & 40 & 27 & 21 & 6 & 18 & 19 \\
 &  & 3 & 100 & 29 & 99 & 55 & 89 & 12 & 83 & 47 & 40 & 7 & 32 & 30 \\
 &  & 4 & -- & -- & -- & -- & -- & -- & -- & -- & 61 & 7 & 53 & 42 \\
 &  & 5 & -- & -- & -- & -- & -- & -- & -- & -- & 82 & 8 & 74 & 50 \\
\addlinespace[1.5pt]
AR & $A_{\rm sp}$ & 0.5 & 12 & 10 & 14 & 6 & 11 & 4 & 9 & 6 & 10 & 5 & 11 & 7 \\
 &  & 1 & 39 & 71 & 71 & 17 & 18 & 24 & 27 & 12 & 12 & 8 & 13 & 7 \\
 &  & 2 & 95 & 100 & 100 & 43 & 49 & 95 & 95 & 28 & 24 & 57 & 58 & 25 \\
 &  & 3 & 100 & 100 & 100 & 57 & 87 & 100 & 100 & 42 & 41 & 98 & 97 & 35 \\
\addlinespace[0.5pt]
 & $A_{0.2}$ & 0.5 & 10 & 5 & 10 & 9 & 9 & 4 & 9 & 8 & 10 & 3 & 8 & 5 \\
 &  & 1 & 29 & 10 & 25 & 18 & 15 & 5 & 12 & 11 & 12 & 6 & 10 & 10 \\
 &  & 2 & 93 & 37 & 91 & 40 & 44 & 8 & 36 & 30 & 16 & 7 & 13 & 21 \\
 &  & 3 & 100 & 68 & 100 & 53 & 81 & 17 & 77 & 40 & 25 & 4 & 18 & 33 \\
 &  & 4 & -- & -- & -- & -- & -- & -- & -- & -- & 34 & 8 & 28 & 45 \\
 &  & 5 & -- & -- & -- & -- & -- & -- & -- & -- & 53 & 11 & 45 & 53 \\
\addlinespace[0.5pt]
 & $A_{\rm all}$ & 0.5 & 11 & 7 & 10 & 11 & 8 & 5 & 7 & 8 & 12 & 6 & 11 & 8 \\
 &  & 1 & 28 & 7 & 26 & 16 & 13 & 5 & 10 & 11 & 10 & 3 & 7 & 7 \\
 &  & 2 & 93 & 17 & 88 & 40 & 36 & 7 & 29 & 33 & 15 & 6 & 13 & 23 \\
 &  & 3 & 100 & 28 & 100 & 57 & 76 & 9 & 71 & 43 & 22 & 9 & 19 & 36 \\
 &  & 4 & -- & -- & -- & -- & -- & -- & -- & -- & 31 & 5 & 26 & 45 \\
 &  & 5 & -- & -- & -- & -- & -- & -- & -- & -- & 55 & 5 & 45 & 55 \\
\addlinespace[1.5pt]
SP & $A_{\rm sp}$ & 0.5 & 12 & 13 & 15 & 6 & 9 & 7 & 11 & 9 & 11 & 4 & 10 & 6 \\
 &  & 1 & 31 & 69 & 66 & 15 & 15 & 24 & 26 & 15 & 12 & 7 & 11 & 8 \\
 &  & 2 & 97 & 100 & 100 & 32 & 46 & 95 & 94 & 32 & 18 & 56 & 53 & 19 \\
 &  & 3 & 100 & 100 & 100 & 56 & 87 & 100 & 100 & 44 & 36 & 97 & 96 & 31 \\
\addlinespace[0.5pt]
 & $A_{0.2}$ & 0.5 & 11 & 7 & 11 & 5 & 10 & 3 & 6 & 8 & 12 & 6 & 12 & 6 \\
 &  & 1 & 37 & 11 & 31 & 12 & 17 & 6 & 15 & 14 & 13 & 4 & 10 & 7 \\
 &  & 2 & 96 & 40 & 91 & 39 & 52 & 11 & 47 & 28 & 24 & 5 & 20 & 18 \\
 &  & 3 & 100 & 68 & 100 & 55 & 91 & 21 & 87 & 44 & 46 & 7 & 37 & 32 \\
 &  & 4 & -- & -- & -- & -- & -- & -- & -- & -- & 73 & 9 & 64 & 44 \\
 &  & 5 & -- & -- & -- & -- & -- & -- & -- & -- & 90 & 11 & 85 & 53 \\
\addlinespace[0.5pt]
 & $A_{\rm all}$ & 0.5 & 11 & 6 & 10 & 5 & 11 & 5 & 9 & 8 & 10 & 4 & 8 & 7 \\
 &  & 1 & 30 & 9 & 27 & 14 & 18 & 5 & 15 & 13 & 13 & 4 & 10 & 8 \\
 &  & 2 & 94 & 19 & 92 & 37 & 53 & 9 & 47 & 29 & 25 & 7 & 21 & 18 \\
 &  & 3 & 100 & 34 & 100 & 51 & 90 & 11 & 84 & 42 & 47 & 7 & 35 & 32 \\
 &  & 4 & -- & -- & -- & -- & -- & -- & -- & -- & 71 & 8 & 63 & 40 \\
 &  & 5 & -- & -- & -- & -- & -- & -- & -- & -- & 88 & 7 & 80 & 46 \\
\bottomrule
\end{tabular}
\end{adjustbox}
\endgroup
\end{table}

The complete power tables confirm the main message of Table~\ref{tab:power-main}.  In sparse alternatives, M is usually the leading component, especially when $A=A_{\rm sp}$.  In moderately dense and fully dense alternatives, S dominates M because the coordinatewise correlations are individually weaker but accumulate across many active coordinates.  The Cauchy statistic is rarely far from the stronger of S and M and is therefore the most robust default choice.  Across radial laws, the proposed tests consistently exploit the targeted radial--directional signal more effectively than W, whose projection-based construction is not tailored to the dependence pattern in \eqref{eq:alternative-radius}.  The detailed rows further show that the shape structure changes the absolute power only moderately after HR standardization, whereas the active-set size and radial variability have the dominant effects.  This supports the interpretation of S and M as complementary diagnostics rather than competing versions of the same statistic.

\section{Additional Power Figures}
\label{app:power-figures}

Figures in this section give the single-radial-law power summaries.  Each figure averages over the three shape structures and displays the same three alternatives and three dimensions used in the power study.  The horizontal axis is the signal strength $\delta_n$ in \eqref{eq:alternative-radius}.  These plots are intended to complement the tables by showing how quickly each component reacts as radial--directional dependence increases.

\begin{figure}[H]
\centering
\wideportraitfigure{figures/normal_power_sigma_avg_landscape_paper_dn.pdf}
\caption{Additional empirical power curves for the Gaussian baseline.  The concentrated log-radius makes the radial--directional signal strong, so S and C rise quickly for dense alternatives and M rises quickly for sparse alternatives.}
\label{fig:app-power-normal}
\end{figure}

\begin{figure}[H]
\centering
\wideportraitfigure{figures/t10_power_sigma_avg_landscape_paper_dn.pdf}
\caption{Additional empirical power curves for the $t_{10}$ baseline.  Heavy radial variability makes this setting harder, but the same sparse-versus-dense pattern is visible as $\delta_n$ increases.}
\label{fig:app-power-t10}
\end{figure}

\begin{figure}[H]
\centering
\wideportraitfigure{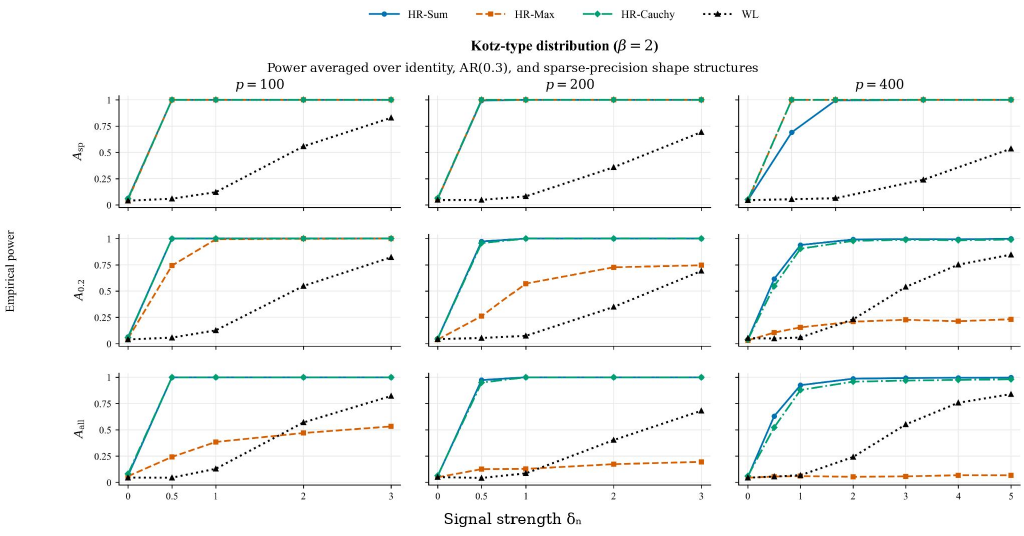}
\caption{Additional empirical power curves for the Kotz-type baseline with $\beta=2$.  This concentrated radial law behaves similarly to the Gaussian baseline after log-radius standardization.}
\label{fig:app-power-kotz}
\end{figure}

\begin{figure}[H]
\centering
\wideportraitfigure{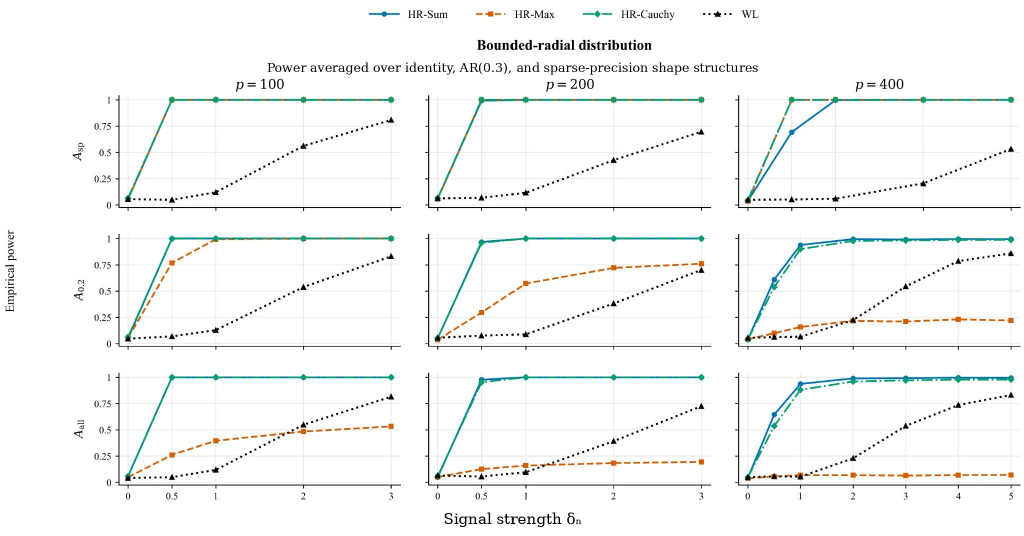}
\caption{Additional empirical power curves for the bounded-radial baseline.  The results show that the tests do not rely on Gaussian radial tails and remain powerful when the radius has bounded support.}
\label{fig:app-power-bounded}
\end{figure}

\begin{figure}[H]
\centering
\wideportraitfigure{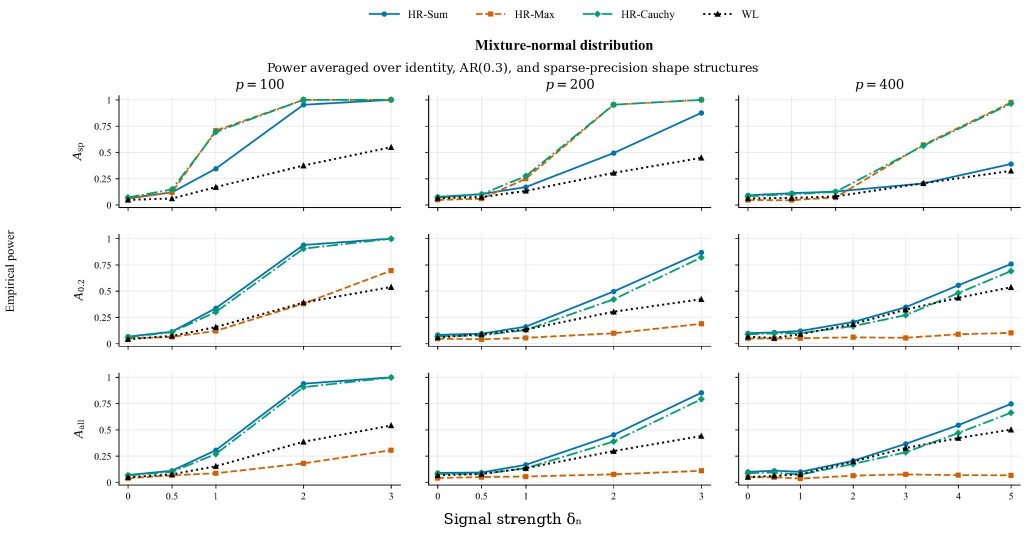}
\caption{Additional empirical power curves for the mixture-normal baseline.  The non-concentrated scale mixture reduces the signal-to-noise ratio, but C continues to adapt between the dense S component and the sparse M component.}
\label{fig:app-power-mixnorm}
\end{figure}

\setcounter{section}{4}
\section{Additional Application Results}
\label{app:application-results}

Table~\ref{tab:appendix-applications} gives selected supplementary application blocks used to support the discussion in Section~\ref{sec:application}.  The supplementary screening uses the same preprocessing principle as the main real-data analysis: variables are extracted within a fixed block, non-finite or constant columns are removed if necessary, and the remaining columns are standardized before the tests are computed.  For Arcene, the blocks are variance-screened feature sets from the pooled training and validation samples of the UCI mass-spectrometry data \citep{GuyonEtAl2004,UCIArcene2008}.  For cookie, the blocks are high-wavelength tail windows of the biscuit-dough NIR spectra \citep{BrownFearnVannucci2001,OsborneFearnMillerDouglas1984}.  For glass, the blocks are local EPXMA spectral channels from the archaeological glass data used in robust sparse PCA studies \citep{LembergeEtAl2000,HubertReynkensSchmittVerdonck2016}.  The table emphasizes blocks where the radial--directional tests provide additional local or sparse-versus-dense interpretation.

\begin{table}[!htbp]
\centering
\caption{Selected supplementary application blocks. The last two columns report the strongest coordinate-level radial--directional association in each block.}
\label{tab:appendix-applications}
\footnotesize
\setlength{\tabcolsep}{3.2pt}
\renewcommand{\arraystretch}{1.05}
\resizebox{\textwidth}{!}{%
\begin{tabular}{llrrrrrrrl}
\toprule
Data & Block & $n$ & $p$ & HR-Sum & HR-Max & HR-Cauchy & WL & Max axis & $n\hat\gamma^2_{\max}$ \\
\midrule
Gasoline & 900--998nm & 60 & 50 & 0.011 & 0.046 & 0.018 & 0.031 & -- & -- \\
Gasoline & 1050--1148nm & 60 & 50 & 0.180 & 0.008 & 0.015 & 0.976 & -- & -- \\
Gasoline & 1100--1198nm & 60 & 50 & 0.078 & 0.006 & 0.011 & $2.0\times10^{-4}$ & -- & -- \\
Arcene & top 300 & 200 & 300 & 0.002 & 0.002 & 0.002 & 0.293 & 5456 & 131.7 \\
Arcene & top 400 & 200 & 400 & 0.002 & 0.002 & 0.002 & 0.051 & 3111 & 73.0 \\
Arcene & top 500 & 200 & 500 & 0.094 & 0.012 & 0.021 & 0.076 & 7159 & 11.6 \\
Cookie & 2350--2498nm & 72 & 75 & 0.002 & 0.002 & 0.002 & 0.061 & 2350 & 30.8 \\
Cookie & 2360--2498nm & 72 & 70 & 0.002 & 0.002 & 0.002 & 0.224 & 2364 & 27.3 \\
Cookie & 2360--2498nm, no outliers & 70 & 70 & 0.002 & 0.002 & 0.002 & 0.130 & 2498 & 24.0 \\
Glass & 501--550 & 180 & 50 & 0.004 & 0.130 & 0.008 & 0.512 & 550 & 8.8 \\
\bottomrule
\end{tabular}}
\end{table}

\bibliographystyle{apa}
\bibliography{refe}

\end{document}